\let\oldminitoc\minitoc
\renewcommand{\minitoc}{\oldminitoc\vspace*{0.5cm}}
\definecolor{Prune}{RGB}{99,0,60} 
\definecolor{B1}{RGB}{49,62,72} 
\definecolor{C1}{RGB}{124,135,143}
\definecolor{D1}{RGB}{213,218,223}
\definecolor{A2}{RGB}{198,11,70}
\definecolor{B2}{RGB}{237,20,91}
\definecolor{C2}{RGB}{238,52,35}
\definecolor{D2}{RGB}{243,115,32}
\definecolor{A3}{RGB}{124,42,144}
\definecolor{B3}{RGB}{125,106,175}
\definecolor{C3}{RGB}{198,103,29}
\definecolor{D3}{RGB}{254,188,24}
\definecolor{A4}{RGB}{0,78,125}
\definecolor{B4}{RGB}{14,135,201}
\definecolor{C4}{RGB}{0,148,181}
\definecolor{D4}{RGB}{70,195,210}
\definecolor{A5}{RGB}{0,128,122}
\definecolor{B5}{RGB}{64,183,105}
\definecolor{C5}{RGB}{140,198,62}
\definecolor{D5}{RGB}{213,223,61}
\newenvironment{stratProposer}{\begin{tcolorbox}[colback=orange!10!white]}{\end{tcolorbox}}
\newenvironment{stratAttester}{\begin{tcolorbox}[colback=blue!10!white]}{\end{tcolorbox}}
\algrenewcommand{\algorithmiccomment}[1]{\hfill\hspace{1cm}\textcolor{gray}{$\triangleright$ #1}}
\DeclareMathOperator*{\argmax}{arg\,max}
\renewcommand*{\backref}[1]{}
\renewcommand*{\backrefalt}[4]{%
  \ifcase #1 %
    (Not cited.)%
  \or
    (Cited on page~#2)%
  \else
    (Cited on pages~#2)%
  \fi%
}
\renewcommand*{\backrefxxx}[3]{%
  \ifx\relax#1\relax
  \else
    \hyperlink{cite.#2}{\textcolor{blue}{$\uparrow$~#1}}%
  \fi
}
\newcounter{mycite}
\let\oldcite\cite
\renewcommand{\cite}[1]{%
  \refstepcounter{mycite}%
  \hypertarget{cite.\themycite}{}%
  \oldcite{#1}%
}
\newtcolorbox{myquote}{colback=gray!10, colframe=white, 
    boxrule=0.5pt, 
    arc=4pt, 
    left=6pt, right=6pt, top=6pt, bottom=6pt}
\renewenvironment{quote}
  {\begin{myquote}}
  {\end{myquote}}
\newcommand{\chapterLettrine}[2]{\lettrine[lines=4,depth=-1,nindent=0em,lraise=0.2]{#1}{#2}} 
\newcommand{\mynote}[3]{
     \fbox{\sffamily\scriptsize#1}
       {\small$\blacktriangleright$\textcolor{#3}{{ #2}}}}}
\newcommand{\mynote}[2]{}}
\def\barroman#1{\sbox0{#1}\dimen0=\dimexpr\wd0+1pt\relax
  \makebox[\dimen0]{\rlap{\vrule width\dimen0 height 0.06ex depth 0.06ex}%
    \rlap{\vrule width\dimen0 height\dimexpr\ht0+0.03ex\relax 
            depth\dimexpr-\ht0+0.09ex\relax}%
    \kern.5pt#1\kern.5pt}}
\newtheorem{theorem}{Theorem}[chapter]
\newtheorem{lemma}{Lemma}[chapter]
\newtheorem{definition}{Definition}[chapter]
\newtheorem{property}{Property}[chapter]
\newtheorem{observation}{Observation}[chapter]
\newtheorem{remark}{Remark}[chapter]
\newtheorem*{remark*}{Remark}
\newcommand{\isFrench}{false}
\begin{document}

\begin{titlepage}

\newgeometry{left=6cm,bottom=2cm, top=1cm, right=1cm}

\tikz[remember picture,overlay] \node[opacity=1,inner sep=0pt] at (-13mm,-135mm){\includegraphics{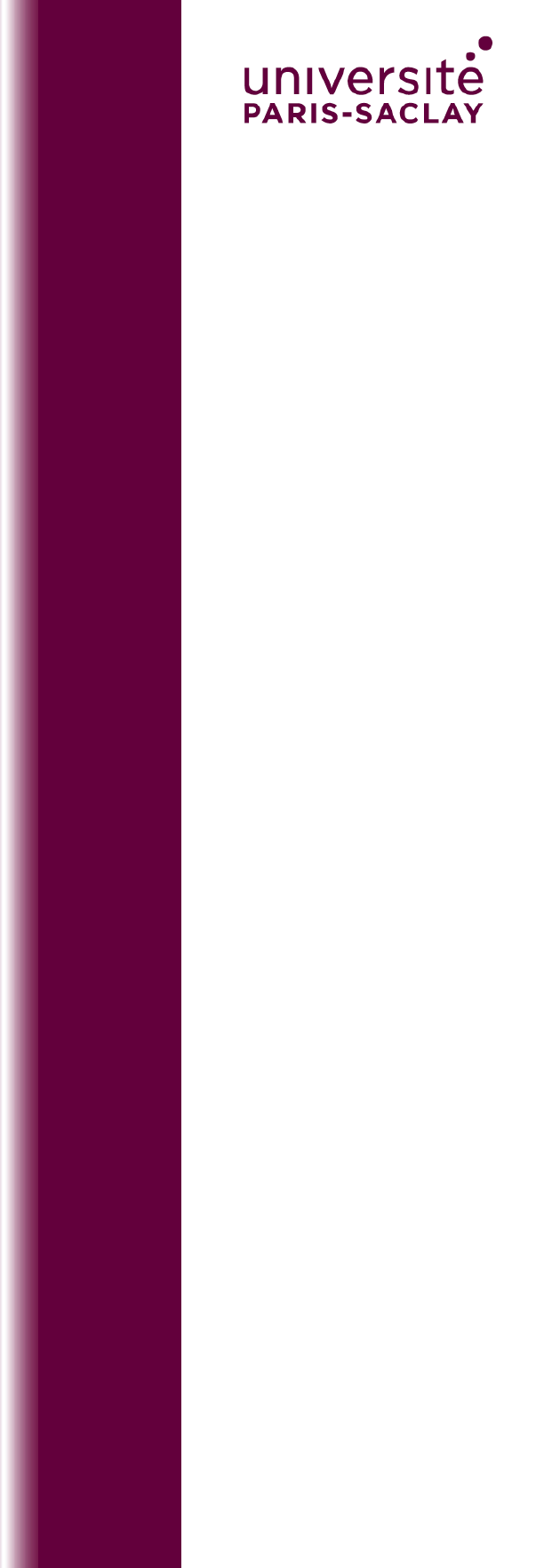}};

\newcommand{\bilingualThesis}{
    \ifthenelse{\boolean{\isFrench}}
        {THESE DE DOCTORAT}
        {PhD THESIS}
}

\color{white}

\begin{picture}(0,0)
\put(-152,-743){\rotatebox{90}{\Large \textsc{\bilingualThesis}}} \\
\put(-120,-743){\rotatebox{90}{NNT : 2024UPASG071}}
\end{picture}
 
\vspace{2cm}


\flushright
\vspace{10mm} 
\color{Prune}

\newcommand{\bilingualTitle}{
    \ifthenelse{\boolean{\isFrench}}
        {Approche Théorie des jeux pour l'\raisebox{0.3ex}{\'{}}\kern-0.5em Etude de la Robustesse des Blockchains}
        {A Game-Theoretic Approach to the Study of Blockchain's Robustness}
}

\fontsize{22}{26}\selectfont
  \Huge \bilingualTitle \\

\color{black}
\ifthenelse{\boolean{\isFrench}}
        {\normalsize
\Large{\textit{A Game-Theoretic Approach to the Study\\ of Blockchain's Robustness}} \\}
        {}


\fontsize{8}{12}\selectfont

\vspace{1.5cm}

\newcommand{\bilingualUniversity}{
    \ifthenelse{\boolean{\isFrench}}
        {Thèse de doctorat de l'université Paris-Saclay}
        {PhD Thesis from Paris-Saclay University}
}

\normalsize
\textbf{\bilingualUniversity} \\

\vspace{6mm}

\ifthenelse{\boolean{\isFrench}}
{
\small École doctorale n$^{\circ}$ 580 - Sciences et Technologies de l’Information et de la Communication (STIC)\\
\small Spécialité de doctorat: Informatique\\
\small Graduate School : Informatique. Référent : Faculté des sciences d'Orsay. \\
}{}
\vspace{6mm}

\ifthenelse{\boolean{\isFrench}}{
\footnotesize Thèse préparée dans l'unité de recherche \textbf{LIST}  (CEA, Université Paris Saclay),\\ sous la direction de \textbf{Sara TUCCI-PIERGIOVANNI}, cheffe de laboratoire, \\ et le co-encadrement de \textbf{Yackolley AMOUSSOU-GUENOU}, maître de conférences \\
}{
\footnotesize PhD thesis prepared at the research unit \textbf{LIST} (CEA, Paris-Saclay University),\\ under the supervision of \textbf{Sara TUCCI-PIERGIOVANNI}, laboratory head, \\ and co-supervised by \textbf{Yackolley AMOUSSOU-GUENOU}, associate professor. \\
}
\vspace{15mm}

\ifthenelse{\boolean{\isFrench}}{
\textbf{Thèse soutenue à Paris-Saclay, le 5 novembre 2024, par}\\
}{
\textbf{Thesis defended at Paris-Saclay, on November 5, 2024, by}\\
}
\bigskip
\Large {\color{Prune} \textbf{Ulysse PAVLOFF}} 

\vspace{\fill} 

\bigskip

\ifthenelse{\boolean{\isFrench}}{
\flushleft
\small {\color{Prune} \textbf{Composition du jury}}\\
{\color{Prune} \scriptsize {Membres du jury avec voix délibérative}} \\
\vspace{2mm}
\scriptsize
\begin{tabular}{|p{7cm}l}
\arrayrulecolor{Prune}
\textbf{Sylvain CONCHON} &  Président du jury  \\ 
Professeur des universités, Université Paris-Saclay   &   \\ 
\textbf{Sonia BEN MOKTHAR} &  Rapporteur \& Examinatrice \\ 
Directrice de recherche, LIRIS CNRS  &   \\ 
\textbf{Sébastien TIXEUIL} &  Rapporteur \& Examinateur \\ 
Professeur des universités, Sorbonne Université   &   \\ 
\textbf{Daniel AUGOT} &  Examinateur \\ 
Directeur de recherche, INRIA   &   \\ 
\textbf{Sophie CHABRIDON} &  Examinatrice \\ 
Professeure, Telecom SudParis   &   \\  

\end{tabular} 
}{
\flushleft
\small {\color{Prune} \textbf{Jury Composition}}\\
{\color{Prune} \scriptsize {Jury members with deliberative voting rights}} \\
\vspace{2mm}
\scriptsize
\begin{tabular}{|p{7cm}l}
\arrayrulecolor{Prune}
\textbf{Sylvain CONCHON} &  Jury President  \\ 
University Professor, Paris-Saclay University   &   \\ 
\textbf{Sonia BEN MOKTHAR} &  Reviewer \& Examiner \\ 
Director of Research, LIRIS CNRS  &   \\ 
\textbf{Sébastien TIXEUIL} &  Reviewer \& Examiner \\ 
University Professor, Sorbonne University   &   \\ 
\textbf{Daniel AUGOT} &  Examiner \\ 
Director of Research, INRIA   &   \\ 
\textbf{Sophie CHABRIDON} &  Examiner \\ 
Professor, Telecom SudParis   &   \\  

\end{tabular}

}

\end{titlepage}

\thispagestyle{empty}
\newgeometry{top=1.5cm, bottom=1.25cm, left=2cm, right=2cm}

\noindent 
\ifthenelse{\boolean{\isFrench}}{
\includegraphics[height=2.45cm]{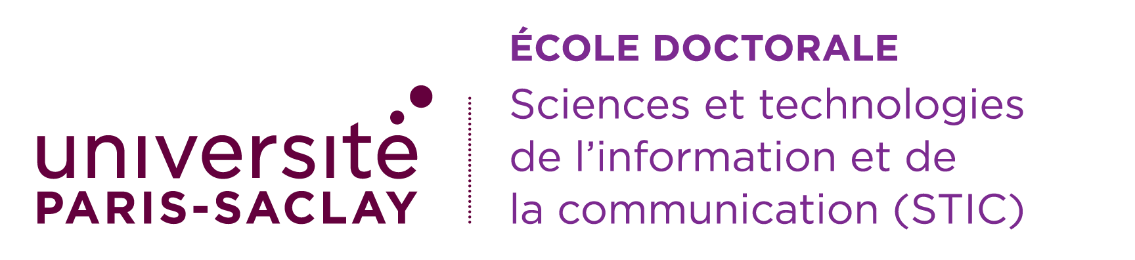}

\small

\vspace*{\fill}
\begin{mdframed}[linecolor=Prune,linewidth=1]

\textbf{Titre:} Approche Théorie des jeux pour l'Étude de la Robustesse des Blockchains

\noindent \textbf{Mots clés:} Blockchain, Ethereum, Systèmes Distribués, Théorie des Jeux

\begin{multicols}{2}
\hspace*{\fill} \textbf{Résumé} \hspace*{\fill} \\
Les blockchains ont suscité un intérêt mondial ces dernières années, prenant de plus en plus d'importance à mesure qu'elles influencent les technologies et la finance. Cette thèse explore la robustesse des protocoles blockchain, en se concentrant spécifiquement sur Ethereum Proof-of-Stake (PoS), un pilier des technologies blockchain. Nous définissons la robustesse en termes de deux propriétés essentielles : la sécurité, qui garantit qu'il n'y aura pas de blocs conflictuels permanents dans la blockchain, et la vivacité, qui assure l'ajout continu de nouveaux blocs permanents.

Notre recherche aborde l'écart entre les approches traditionnelles des systèmes distribués, qui classifient les agents comme étant soit honnêtes, soit byzantins (i.e., malveillants ou défaillants), et les modèles de théorie des jeux qui considèrent les agents rationnels motivés par des incitations. Nous explorons comment les incitations impactent la robustesse en utilisant les deux approches.

La thèse est composé de trois analyses distinctes. Nous commençons par formaliser le protocole Ethereum PoS, définissant ses propriétés et examinant les vulnérabilités potentielles du point de vue des systèmes distribués. Nous identifions certaines attaques qui peuvent compromettre la robustesse du système. Ensuite, nous analysons le mécanisme de fuite d'inactivité, une caractéristique clé d'Ethereum PoS, en soulignant son rôle dans le maintien de la vivacité du système lors de perturbations du réseau, mais au détriment de la sécurité. Enfin, nous utilisons des modèles de théorie des jeux pour étudier les stratégies des validateurs rationnels au sein d'Ethereum PoS, en identifiant les conditions dans lesquelles ces agents pourraient s'écarter du protocole prescrit pour maximiser leurs récompenses.

Nos résultats contribuent à une meilleure compréhension de l'importance des mécanismes d'incitation pour la robustesse des blockchains et donnent des pistes pour concevoir des protocoles blockchain plus résilients.

\end{multicols}

\end{mdframed}

\vspace*{\fill}

\newpage }{}

\includegraphics[height=2.45cm]{imgOfficial/logo_usp_STIC.png}

\vspace*{\fill}

\pagenumbering{gobble}

\begin{mdframed}[linecolor=Prune,linewidth=1]

\textbf{Title:} A game-theoretic approach to the study of Blockchain's Robustness

\noindent \textbf{Keywords:} Blockchain, Ethereum, Distributed Systems, Game Theory

\begin{multicols}{2}
\hspace*{\fill} \textbf{Abstract} \hspace*{\fill} \\
Blockchains have sparked global interest in recent years, gaining importance as they increasingly influence technology and finance.
This thesis investigates the robustness of blockchain protocols, specifically focusing on Ethereum Proof-of-Stake. We define robustness in terms of two critical properties: Safety, which ensures that the blockchain will not have permanent conflicting blocks, and Liveness, which guarantees the continuous addition of new reliable blocks.

Our research addresses the gap between traditional distributed systems approaches, which classify agents as either honest or Byzantine (i.e., malicious or faulty), and game-theoretic models that consider rational agents driven by incentives. We explore how incentives impact the robustness with both approaches.

The thesis comprises three distinct analyses. First, we formalize the Ethereum PoS protocol, defining its properties and examining potential vulnerabilities through a distributed systems perspective. We identify that certain attacks can undermine the system's robustness. Second, we analyze the inactivity leak mechanism, a critical feature of Ethereum PoS, highlighting its role in maintaining system liveness during network disruptions but at the cost of safety. Finally, we employ game-theoretic models to study the strategies of rational validators within Ethereum PoS, identifying conditions under which these agents might deviate from the prescribed protocol to maximize their rewards.

Our findings contribute to a deeper understanding of the importance of incentive mechanisms for blockchain robustness and provide insights into designing more resilient blockchain protocols.
\end{multicols}
\end{mdframed}
\vspace*{\fill}

\renewcommand{\thechapter}{\Roman{chapter}}
    
\titleformat{\chapter}[display]{\bfseries\Large\color{Prune}}
{   
    \vspace{-1cm}
    \centering 
    \begin{tikzpicture}
    \node (A) at (0,0) {};
    \node (B) at (2.5,0) {Chapter \thechapter };
    \node (C) at (5,0) {};
    \draw (A) -- (B);
    \draw (B) -- (C);
    \end{tikzpicture}
}
{.1ex}
{
    \centering
    \vspace{.7em}
    \LARGE
}
[\vspace{1cm}]
\titlespacing{\chapter}{0pc}{0ex}{0.5pc}

\titleformat{\section}[hang]{\bfseries\normalsize\color{Prune}}{\thesection\ .}{0.5pt}
{\vspace{0.1ex}
}
[\vspace{0.1ex}]
\titlespacing{\section}{1.5pc}{4ex plus .1ex minus .2ex}{.8pc}

\titleformat{\subsection}[hang]{\bfseries\small}{\thesubsection\ .}{1pt}
{\vspace{0.1ex}
}
[\vspace{0.1ex}]
\titlespacing{\subsection}{3pc}{2ex plus .1ex minus .2ex}{.1pc}

\frontmatter  
    \newgeometry{top=2cm, bottom=2.5cm, left=3cm, right=3cm}
    \vspace*{\fill}
\vspace{-4cm}

\begin{quote}
\hphantom{aa} ``Our time is such that those who feel certainty are stupid, while those with any imagination and understanding are filled with doubt and indecision." \\
\flushright - Bertrand Russell
\end{quote}






\vspace*{\fill}
    \chapter{Acknowledgement}

\pagenumbering{roman}

\ifthenelse{\boolean{\isFrench}}
{
I am deeply grateful to Sara Tucci-Piergiovanni and Yackolley Amoussou-Guenou for their invaluable support and guidance. Passer ces trois années à vos côtés a été incroyablement enrichissant. Vous avez su orienter mon intérêt et notre recherche pour toujours travailler sur des sujets stimulants. Merci pour tous ces échanges captivants et pour votre confiance durant toute cette aventure.

\

Merci aux rapporteurs Sonia Ben Mokhtar et Sébastien Tixeuil pour leur lecture attentive et aux retours détaillés sur mon travail.
Merci aux membres du jury, Sophie Chabridon, Daniel Augot et Sylvain Conchon, d'avoir accepté de faire partie de mon jury de thèse.

\

Je tiens également à remercier Jérôme Lang et Tristan Cazenave de m'avoir offert un aperçu de ce qu'était une thèse, et de m'avoir conforté dans l'idée d'en entreprendre une, grâce à l'excellente expérience que j'ai eue à vos côtés.

\

Merci à Ewa Pichon, qui m'a accompagné tout au long de ma thèse et a rendu possible d'aller échanger des connaissances scientifiques sans encombre. 

\

Un spécial merci à mes collègues de bureau pendant la plus grande partie de ma thèse, Alexandre Rapetti, Hector Roussille et Valentin Pasquale. Merci de m'avoir aidé dans les moments compliqués, mais surtout de vous être assurés que je ne travaille jamais trop dur trop longtemps.

\

Comment ne pas remercier mon colocataire, Léo Satgé, toujours présent pour m'épauler au quotidien, tout en me rappelant que faire une thèse ne me donne aucun avantage quand il s'agit de s'affronter sur des jeux de société ou vidéo.

\

Enfin, je tiens à remercier ma famille, qui m'a toujours soutenu durant ces trois années, et tout particulièrement mon père, Nicolas Pavloff. Les explications et questions qu'on s'est échangées m’ont été d’une aide précieuse. Merci à toi de m’avoir inspiré à entreprendre cette thèse et de m’avoir encouragé à chaque étape. 
}
{
I am deeply grateful to Sara Tucci-Piergiovanni and Yackolley Amoussou-Guenou for their invaluable support and guidance. Spending these three years alongside you has been incredibly enriching. You have skillfully guided my interests and our research to continuously focus on stimulating topics. Thank you for all these captivating discussions and for your trust throughout this journey.

\

Thank you to the reviewers, Sonia Ben Mokhtar and Sébastien Tixeuil, for their careful reading and detailed feedback on my work. Thank you to the jury members, Sophie Chabridon, Daniel Augot, and Sylvain Conchon, for agreeing to be part of my thesis defense committee.

\

I would also like to thank Jérôme Lang and Tristan Cazenave for giving me a glimpse of what a PhD journey entails and for reinforcing my desire to undertake one through the excellent experience I had working with you.

\

Thank you to Ewa Pichon, who supported me throughout my thesis and made it possible to share scientific knowledge seamlessly.

\

A special thanks to my office colleagues during most of my thesis, Alexandre Rapetti, Hector Roussille, and Valentin Pasquale. Thank you for helping me through challenging times and, above all, for making sure I never worked too hard for too long.

\

I must also thank my roommate, Léo Satgé, always there to support me daily while reminding me that being a PhD student doesn’t grant me any advantage when it comes to competing in board games or video games.

\

Finally, I would like to thank my family, who has always supported me throughout these three years, especially my father, Nicolas Pavloff. The explanations and questions we exchanged were of immense help. Thank you for inspiring me to undertake this PhD and for encouraging me at every step.
}

    \chapter{Reading tips}



Throughout the manuscript, every text in purple can be clicked to navigate to the indicated reference. Footnotes function in this manner as well\footnote{Clicking on the footnote's number will return you to the place of the footnote in the text.}. Moreover, citations are in blue and offer the same ability to take you back to the place of the citation in the text by clicking on the page number at the end of the citation 
\cite{attiya_synchronization_2023}.

    \newgeometry{top=2cm, bottom=2.5cm, left=2cm, right=2cm}
    \tableofcontents
    \newgeometry{top=2cm, bottom=2.5cm, left=3cm, right=3cm}
    
\mainmatter
    \chapter{Introduction} \label{chap:Introduction}

\minitoc

\chapterLettrine{T}{he} first block of the Bitcoin blockchain, the genesis block, contains a message taken from the title of the January 3, 2009 edition of The Times newspaper:\textit{"Chancellor on brink of second bailout for banks"}. 
This statement serves as a powerful symbol of what Bitcoin set out to challenge: a flawed financial system that has proven to be unworthy of trust.
Since then, blockchain technology has gained significant recognition and ignited widespread interest. 
To build a better understanding and start with a clear foundation of blockchains, we begin by outlining the core principles and essential knowledge for an informed discussion about blockchains.

Before explaining blockchains, let us be clear that blockchain and cryptocurrencies are different. Cryptocurrency is one application allowed by blockchains, but blockchains can be much more and this distinction is important.

\section{Blockchain}

Blockchain technology has revolutionized how we think about data storage, ownership, and decentralized systems. To better understand this technology, we break it down into three key components: what it achieves, how it functions, and its potential applications.

\subsection{Functionalities}

We begin with the functionalities brought by blockchains. What new capabilities does it offer that did not exist before? A simple way to describe blockchains is to say that \textbf{a blockchain is a decentralized computer}. Like a traditional computer, you can store data and run programs on it. The difference lies in its decentralized nature: it has no single owner or operator.

Imagine a global computer that anyone can access and use to store data and run programs without restriction, this is the promise of blockchain technology.
One interesting feature of blockchain is that, paradoxically, it is possible to create a strong sense of ownership over digital data on this public computer. Nakamoto first demonstrated that it was possible to enforce property rights over digital data with Bitcoin. His solution focused on digital currency, but this concept can be applied to any type of digital data stored on this decentralized "computer in the sky."

\subsection{Implementation}

The implementation refers to \textit{how} the functionalities are accomplished. What are the technological building blocks necessary to create this solution? This part is the most fascinating one, as it is where most of the research is conducted, and this manuscript is no exception.

Similar to how advancements in hardware have improved global communication and enabled new internet applications, blockchain technology is evolving.



The essential components used to build blockchains are:
\begin{itemize}
\item \textbf{Consensus Mechanism}: These are algorithms that enable the distributed network to agree on the state of the blockchain. Bitcoin uses \textit{Proof-of-Work}, also called \textit{Nakamoto Consensus}. Other mechanisms exist, such as \textit{Proof-of-Stake} and \textit{Byzantine Fault Tolerance}.
\item \textbf{Cryptography:} This branch of mathematics is essential for permitting ownership in a public environment through digital signatures. Cryptography is the backbone of blockchain's security, ensuring that transactions and data remain private, secure, and verifiable without a central authority. Numerous cryptographic tools are used in blockchain. One of these tools is the \textit{hashing}, which provides a unique fingerprint for any data. Cryptography is the reason why \textit{cryptocurrencies}—currencies that use blockchains—are named this way.
\item \textbf{Distributed Ledger}: 
This is the distributed database that maintains a continuously growing list of records, called blocks, literally the chain of blocks. In practice, this data is stored by every node, which are the computers working to maintain the blockchain. The consensus mechanism ensures that nodes agree on the data stored.
\end{itemize}

Research on blockchains aims to improve and create new solutions for any of these components. Our work focuses on the consensus mechanism.

\subsection{Purpose}

Now, the infamous question: “Okay, but what for?” Working in blockchain research, this might be the question I hear the most, even more than “What are blockchains?”

There are already compelling answers, ranging from traceability and decentralized money to digital ownership of goods and art. However, this is only the beginning.
As of the writing of this manuscript, blockchain has existed since 2009, about 15 years. What would the answer to this question have been for the internet 15 years after its development? At that time, the internet was mostly used for emails, you could send files, albeit very slowly, and web pages were still in their infancy. Now, the internet’s applications are almost limitless.

We are 15 years past the first use case of blockchain, it has only just begun. Like the internet, the reasons for using blockchain will become more numerous as time goes on. Just as the internet's potential became clearer over time, so too will blockchain applications expand and evolve.

\

The most well-known application of blockchain, cryptocurrency, which can be seen as decentralized money, is often deemed useless in the Western world. We in Western countries might not see the immediate need. However, millions of people in countries where the value of money is decreasing daily, where banks refuse to return what should be their money, have found value in blockchain. Blockchain provides a solution to actually own your money and use it without intermediaries. 

While these issues may not resonate with everyone now, as more people begin to understand and experience the benefits of true digital ownership and decentralized finance, the value of blockchain will become increasingly evident.

\

Understanding blockchain's functionalities, implementation, and potential applications is crucial as we continue to explore and refine this transformative technology. In the following section, we delve into the history of the first blockchain: Bitcoin.

\section{Bitcoin}
Bitcoin was described in a 2008 paper \cite{nakamoto_peer_2008} and implemented in 2009 by a mysterious author known as Satoshi Nakamoto. The technological components necessary to create Bitcoin had actually existed for about 15 years before its appearance. However, no solution had been found to make a system both decentralized and resistant to Sybil attacks and double spending which we detailed below. In this section, we will provide a historical overview of how Bitcoin came into existence and how it works.

Bitcoin answers the question: "How do you build a trustless system with an unknown number of participants?". The difficulty lies in the consensus mechanism required to make participants agree without knowing the number of participants. The problem of reaching consensus with a known number of participants has been addressed by the distributed systems field, notably through the Byzantine Generals Problem \cite{lamport_byzantine_1982}.

Several challenges arise when trying to create decentralized digital money. The first challenge is preventing double spending.

\paragraph{Double spending.} A major problem for digital money is that digital information can be copied effortlessly. Unlike a physical coin, which can only exist in one place at a time, digital assets can be copied, potentially allowing the same digital coin to be spent multiple times.

If you rely on a centralized authority such as a bank, this problem does not exist. The bank being the sole entity keeping track of your balance and authorizing transactions, it can prevent you from spending more than you own. This problem led the first digital money, \textit{eCash}, to rely on banks for this very reason. David Chaum founded Digicash in 1989, building on the work in \cite{chaum_untraceable_1988}, which enhanced his earlier work \cite{chaum_blind_1982}.

To mitigate this issue in a decentralized system, one solution is to have the participants vote on the state of balances after each transaction. However, this leads to the second problem:

\paragraph{Sybil attack.} A common issue when trying to reach consensus with an unknown set of participants is that some individuals could create multiple identities to increase their voting power, thereby gaining an unfair advantage in decision-making. This type of attack is called a Sybil attack, named after the book \textit{Sybil} by Schreiber, which tells the story of a woman with dissociative identity disorder who experienced having 16 different personalities.

Voting is crucial in a decentralized system, and several solutions prior to Bitcoin struggled with this problem. One such solution was Wei Dai's \textit{b-money}, proposed in 1998 \cite{wei_b-money_1998}. The ideas behind b-money are similar in many ways to Bitcoin, but to achieve consensus, Dai proposed using a fixed set of `trusted' servers to keep track of balances. However, choosing and trusting these servers contradicts the principle of decentralization. We have an email from Satoshi Nakamoto to Wei Dai, sent in 2008, that reads:
\begin{quote}
``I was very interested to read your b-money page. I'm getting ready to release a paper that expands on your ideas into a complete working system. Adam Back (hashcash.org) noticed the similarities and pointed me to your site.''
\end{quote}

Another problem mentioned by Dai in the b-money text is the fairness in the distribution of newly created money. While Digicash and b-money laid the groundwork for digital currencies, it was Bitcoin that finally solved the key issues, enabling the creation of a fully decentralized and secure currency system. The solution found by Nakamoto actually uses the work of Adam Back: the \textit{Proof-of-Work}.

\subsection{Bitcoin's Solution}

Bitcoin's solution to the problems of double spending and Sybil attacks is \textit{Proof-of-Work} (PoW). In 1992, a proposal to use a form of proof of work was presented by Dwork and Naor \cite{dwork_pricing_1992} as an anti-spam mechanism. They suggested requiring proof of computational work to send an email in order to prevent spam. Adam Back proposed a similar idea in 1997 \cite{back_hashcash_1997} and further developed it in 2002 \cite{back_hashcash_2002}, this time using cryptographic hashes. Let us explain PoW and hashes in more detail.

\paragraph{Hash.} A hash function is a cryptographic tool that creates a unique fingerprint for data. It takes any amount of data, scrambles it, and returns a short and unique result for that data.\footnote{To experience hash functions interactively, you can try this instructive website: \href{https://learnmeabitcoin.com/technical/cryptography/hash-function/}{online-SHA-256}.} Hash functions possess several desirable properties:
\begin{itemize}
\item \textbf{Deterministic:} For a given input, the function will always produce the same hash output.
\item \textbf{Irreversible:} Also known as pre-image resistance, this property ensures that given a hash value, it should be computationally infeasible to reverse-engineer the original input. The hash function is a one-way function.
\item \textbf{Avalanche effect:} Slightly different inputs should produce wildly different hash outputs.
\item \textbf{Collision resistance:} This property ensures that two different inputs do not produce the same hash output.
\end{itemize}

\

Now that we understand what a hash function is, we can explain the Bitcoin blockchain and its PoW mechanism.

Bitcoin is a distributed computer focused on saving data about digital money and transactions. To achieve this, blocks containing transactions are regularly added and saved by every participant. The order of these blocks is crucial since it serves as a historical record of transactions. It is important to verify that someone has received 10 bitcoins before they can spend them. Each block has a structure similar to the simplified representation shown in \autoref{fig:bitcoinBlock}. 

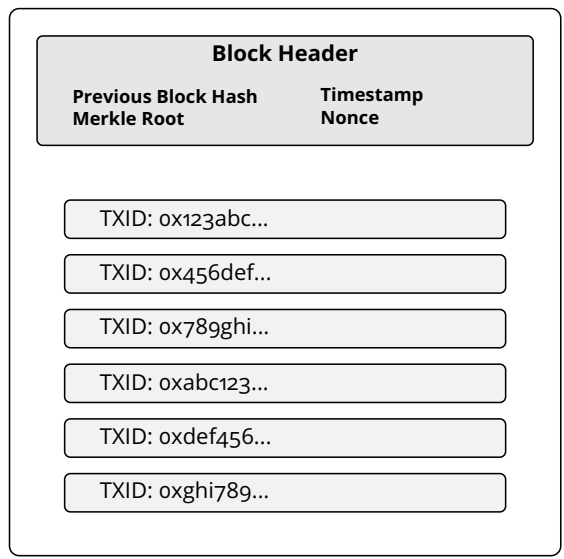
\begin{figure}
    \centering
    \resizebox{.5\linewidth}{!}{
        \begin{tikzpicture}[scale=1.0, every node/.style={scale=1.0}]

\draw[rounded corners=5pt, thick] (0,0) rectangle (10,10);

\draw[rounded corners=3pt, fill=gray!20, thick] (0.5,9.5) rectangle (9.5,7.5);
\node at (5,9.2) {\textbf{Block Header}};
\node[anchor=west] at (1,8.4) {\footnotesize \textbf{Previous Block Hash}};
\node[anchor=west] at (1,8) {\footnotesize \textbf{Merkle Root}};
\node[anchor=west] at (5.5, 8.4) {\footnotesize \textbf{Timestamp}};
\node[anchor=west] at (5.5, 8) {\footnotesize \textbf{Nonce}};

\newcommand{\transactionstart}{6.5} 
\newcommand{\transactionspacing}{1.0} 

\draw[rounded corners=3pt, fill=gray!10, thick] (1,\transactionstart) rectangle (9,\transactionstart-0.7);
\node[anchor=west] at (1.5,\transactionstart-0.35) { TXID: 0x123abc...};

\draw[rounded corners=3pt, fill=gray!10, thick] (1,\transactionstart-\transactionspacing) rectangle (9,\transactionstart-\transactionspacing-0.7);
\node[anchor=west] at (1.5,\transactionstart-\transactionspacing-0.35) { TXID: 0x456def...};

\draw[rounded corners=3pt, fill=gray!10, thick] (1,\transactionstart-2*\transactionspacing) rectangle (9,\transactionstart-2*\transactionspacing-0.7);
\node[anchor=west] at (1.5,\transactionstart-2*\transactionspacing-0.35) { TXID: 0x789ghi...};

\draw[rounded corners=3pt, fill=gray!10, thick] (1,\transactionstart-3*\transactionspacing) rectangle (9,\transactionstart-3*\transactionspacing-0.7);
\node[anchor=west] at (1.5,\transactionstart-3*\transactionspacing-0.35) { TXID: 0xabc123...};

\draw[rounded corners=3pt, fill=gray!10, thick] (1,\transactionstart-4*\transactionspacing) rectangle (9,\transactionstart-4*\transactionspacing-0.7);
\node[anchor=west] at (1.5,\transactionstart-4*\transactionspacing-0.35) { TXID: 0xdef456...};

\draw[rounded corners=3pt, fill=gray!10, thick] (1,\transactionstart-5*\transactionspacing) rectangle (9,\transactionstart-5*\transactionspacing-0.7);
\node[anchor=west] at (1.5,\transactionstart-5*\transactionspacing-0.35) { TXID: 0xghi789...};

\end{tikzpicture}
    }
    \caption{Simplified representation of a Bitcoin block.}
    \label{fig:bitcoinBlock}
\end{figure}

\paragraph{Bitcoin Block.} Each block contains a \textit{block header} that summarizes its information. The block header includes:
\begin{itemize}
\item \textbf{Previous Block Hash:} This links the blocks in order by referring to the hash of the preceding block, its unique fingerprint.
\item \textbf{Timestamp:} The time at which the block was created.
\item \textbf{Merkle Root:} The hash of the combined transactions (tx) in the block. This provides a single fingerprint representing all the transactions included in the block.
\item \textbf{Nonce:} A `number used once' that is useful for the Proof-of-Work mechanism, explained below.
\end{itemize}   

With this information, we can understand how Proof-of-Work operates.

\paragraph{Proof-of-Work (PoW).} To add a new block to the blockchain, the hash of its block header must begin with $x$ zeros. For instance, if $x=3$, the hash must start with three zeros, e.g., 00019... where only the first three digits matter. Since hash functions produce unpredictable results, the block proposer must generate many different blocks to find one whose hash starts with the required number of zeros. The probability of achieving a hash starting with three zeros is low, approximately 0.001. Therefore, successfully producing such a block serves as proof that a significant amount of computational work has been done.

The participants searching for valid blocks are called \textit{miners}. They generate numerous blocks by slightly modifying the nonce field in the block header, hoping to find a hash that meets the requirements. The value of $x$, which determines the difficulty of mining a block, varies such that a valid block is found approximately every 10 minutes. As more people mine for blocks, the value of $x$, the difficulty, increases. It is as if, while mining for gold in your own cave, you would find gold more easily when fewer people are mining for gold elsewhere and vice-versa.

\

Miners search for blocks and build on top of previous ones, but what happens if two miners find valid blocks simultaneously and attempt to add them to the blockchain at the same time?

\paragraph{Fork choice rule.} When several blocks share the same parent—meaning they reference the same previous block hash—we encounter a \textit{fork}. When multiple branches exist, the rule prescribed by Nakamoto is to build on the longest branch\footnote{In reality, the rule is to build on the branch of blocks that has consumed the most energy to build. In practice, this almost always corresponds to the longest chain.}, the one with the most blocks. If the branches are of the same length, you should add blocks to the one you saw first until, eventually, one becomes the longest.

\begin{quote}
    ``Each node must be prepared to maintain potentially several 'candidate' block chains, each of which may eventually turn out to become the longest one, the one which wins. Once a given block chain becomes sufficiently longer than a competitor, the shorter one can be deleted.'' \strut\hfill - Hal Finney \cite{finney_fork_2008}
\end{quote}


\ 
PoW solves all the problems of a decentralized computer. As stated in the Bitcoin white paper—the initial document explaining its functionality:

\begin{quote}
    ``The longest chain not only serves as proof of the sequence of events witnessed, but proof that it came from the largest pool of CPU power. As long as a majority of CPU power is controlled by nodes that are not cooperating to attack the network, they'll generate the longest chain and outpace attackers.''
\end{quote}

Thus, double-spending for a transaction becomes less likely as more blocks are built on top of the block containing the transaction. The current rule of thumb is to wait for six blocks to be built after a block to trust the transactions in it. This way, it is very unlikely that a new branch will appear and become the rightful chain while possibly excluding the transaction. The Sybil attack problem is solved by making one vote equal to one CPU. A miner cannot mine (or vote) on different chains at the same time. Every unit of energy used to search for a block on one chain is energy that cannot be used to search for a block on another chain. By making one CPU equal to one vote, as long as the majority of CPUs are owned by honest participants, malicious actors cannot control the network\footnote{The case in which a majority of CPUs are owned by malicious actors is referred to as the 51\% attack. Similar to democratic systems in general, if the majority agrees, the decision is theirs.}.

\

This clever mix of PoW and the longest chain rule is known as the \textit{Nakamoto Consensus}. The chain with the most accumulated proof of work (i.e., the longest chain) is considered the valid one. As blocks are built upon and become older, they become increasingly less likely to be reverted. This concept is known as \textit{probabilistic finality}, where a finalized block is one that will permanently remain in the chain.

\section{Ethereum}
After Bitcoin showed the way, more and more blockchains began to emerge. The one we are interested in, and which is the focus of this manuscript, is Ethereum.

In 2011, a young boy named Vitalik Buterin stumbled upon Bitcoin and found the idea fascinating. He fell down the rabbit hole and started to learn more about it. Around that time, he co-founded Bitcoin Magazine. After realizing that blockchains could offer more than just digital money, he tried to convince the community to improve Bitcoin. However, Bitcoin is very conservative and slow when it comes to changes. He then embarked on a mission to convince people to help him create a new blockchain—one that wasn't centered around a single application, such as cryptocurrency, but rather for any purpose, leaving the freedom to its users to decide how to use it. After gathering like-minded individuals interested in his project, they created Ethereum \cite{buterin_next_2014} in 2014.

Unlike Bitcoin, which is designed primarily for one application—digital money—Ethereum is a general-purpose blockchain. Ethereum is an open platform that allows people to build their own applications on top of it. Anything built on Ethereum is protected and secured, with every transaction checked by the entire network of millions of computers around the world that protect and verify every transaction on the blockchain.

\

Ethereum started as a PoW blockchain, with a fork choice rule similar to that of Bitcoin. From the start, the plan was to eventually transition to a different type of consensus called \textit{Proof-of-Stake} (PoS).

\paragraph{Proof-of-Stake (PoS).} The concept of PoS actually originated from a 2011 post on the BitcoinTalk forum \cite{quantum_proof_2011}. The post included the following thought:

\begin{quote}
    ``I am wondering if as bitcoins become more widely distributed, whether a transition from a proof of work based system to a proof of stake one might happen. What I mean by proof of stake is that instead of your "vote" on the accepted transaction history being weighted by the share of computing resources you bring to the network, it's weighted by the number of bitcoins you can prove you own, using your private keys.''
\end{quote}

As explained, the idea behind PoS is that voting power is not determined by computing resources (as in PoW), but by the number of digital coins one owns. In PoW, one CPU equals one vote; in PoS, one digital coin equals one vote.
This naturally raises many questions: How are the coins initially distributed? Does this system simply make the rich richer? How are blocks proposed? Different blockchains have answered these questions in various ways, and Ethereum has developed its own unique solution.

One thing to note is that, while Bitcoin is slow and conservative in its goals and development, Ethereum's ethos is to ‘move fast and break things’. Although it took eight years to transition from PoW to PoS, many of these changes can be questioned. It seems that the Ethereum community prefers to implement changes quickly and refine them as needed, rather than engage in prolonged debates over potential risks. This dynamic approach renders Ethereum an interesting blockchain to study as it fosters the emergence of new ideas. Plus, the challenge of ideas is welcomed by the community, which can lead to direct impacts on the blockchain.

We dive into a thorough explanation of the protocol in \autoref{chap:Eth-CModel-Analysis} as this is part of our contribution.

\section{Terminology}
Throughout this manuscript, we will utilize specific blockchain terminology. To ensure clarity and consistency, this section will serve as a glossary to explain recurring terms. Additionally, we will address instances where different terms have equivalent meanings.

\paragraph*{Protocol.} The protocol of a blockchain refers to the set of rules that users must follow to communicate and act within that blockchain. We may also refer to the \textit{specifications} of a protocol, which holds the same meaning. In practice, the protocol is defined by the code that blockchain participants use to send messages, propose blocks and transactions, and perform other necessary actions.

\paragraph*{Participants.} "Participants" is a term with many synonyms. Participants can also be called agents, nodes or processes interchangeably. There are different types of participants: those who take part in the blockchain consensus, and those who do not. A participant who only sends transactions is considered a user of the blockchain; they use the blockchain but do not have a role in its consensus. In Ethereum, participants involved in the consensus process are known as \textbf{validators}, while in Bitcoin, they are called miners. The terminology varies depending on their role, which is defined by the protocol. In practice, consensus participants are machines running programs to follow and engage in the blockchain protocol.

\paragraph{Byzantine.} A Byzantine participant can deviate arbitrarily from the prescribed protocol. This appellation stems from the Byzantine General Problem introduced by Lamport, Shostak, and Pease \cite{lamport_byzantine_1982}. This problem is an analogy to simulate how reliable computer systems must handle malfunctioning components that give conflicting information to different parts of the system. The problem is stated as follow:

\begin{quote}
``A group of generals of the Byzantine army camped with their troops around an enemy city. Communicating only by messenger, the generals must agree upon a common battle plan. However, one or more of them may be traitors who will try to confuse the others. The problem is to find an algorithm to ensure that the loyal generals will reach agreement.''
\end{quote}

This thought experiment outlines the challenges of reaching consensus if some members of the group are compromised. One of the solution proposed involves making the assumption that, among the $n$ participants there are only $f$ traitors\footnote{$f$ stands for faulty, we use this term interchangeably to talk about Byzantine participants.}, 
such that $f<n/3$. 

This threshold stems from the two constraints we have in order to reach consensus: (a) the honest participants $n-f$ should be able to make a decision even if the traitors do not respond, and (b) the traitors should not be able to cause honest participants to make two different decisions. For (a), this means that $n-f$ must constitute a majority. For (b), this means that half of the honest participants $\frac{n-f}{2}$ plus the traitors $f$ must not constitute a majority.
This translates in:
\begin{equation*}
\begin{split}
    n-f &> \frac{n-f}{2}+f \\
    \Leftrightarrow \qquad 2n-4f &> n-f \\
    \Leftrightarrow \qquad \frac{n}{3} &> f .
\end{split}
\end{equation*}

This solution is fundamental in proving that reaching consensus is possible even in adversarial settings. 
The requirement for two-thirds of participants to agree in order to reach consensus is often called a supermajority.
A protocol that tolerates the presence of a Byzantine adversary while maintaining its guarantees is deemed \textit{Byzantine Fault Tolerant} (BFT).

\paragraph*{Finalization.} In blockchains, we say that a block is finalized if it permanently belongs to the chain. Non-finalized blocks are not guaranteed to always belong to the chain. 

\paragraph*{Consensus.} The consensus holds the same meaning as in everyday life, this is an agreement among a set of agents. The only difference is that in blockchain systems, agreements concern blocks, and consensus is
repeatedly achieved to agree on increasingly larger sets of blocks. In this context, our formalization of consensus in blockchain, partly based on Dolev et al. \cite{dolev_minimal_1987}, is the following.
\begin{definition}[Consensus]
    A blockchain protocol achieves consensus if it satisfies the following three properties:
    \begin{itemize}
        \item \textbf{Safety}:  No concurrent blocks can be finalized.
        \item \textbf{Liveness}: The set of finalized blocks continuously grows.
        \item \textbf{Validity}: The blocks agreed upon must have been proposed by one of the participants.
    \end{itemize}
\end{definition}

The validity property is often taken for granted in blockchain and what we are really interested in are the Safety and Liveness properties. This was part of our work to define them adequately before beginning our analysis, the formal definitions are presented in \autoref{chap:Eth-CModel-Analysis}.

The consensus in blockchain can be summarized as we want to make sure we agree on the same thing and there should never be a point after which we cannot agree anymore.
The protocol must not reach a point where finalization stops, preventing any new blocks from being finalized. In this case, the protocol is considered live.

These two properties are more thoroughly described in \autoref{chap:Eth-CModel-Analysis} and constitute part of our contribution.

\section{Contribution}
Our research focuses on the analysis of blockchain robustness. By robustness, we mean the blockchain's ability to avoid unsolvable forks (ensuring \textit{Safety}) while always maintaining the possibility to add new blocks (ensuring \textit{Liveness}). Our work ranges from distributed systems to game theoretic analysis. 
Distributed systems consider two types of agents: honest and Byzantine. Honest participants follow the prescribed protocol while Byzantine participants deviate from it arbitrarily. This binary classification overlooks the nuances introduced by rational players of game theoretic models, who act based on incentives rather than strict adherence to the protocol or malicious intent. The lack of research in distributed systems regarding blockchains' incentives that are yet paramount for participants motivated our work. 

In particular, we focused our analysis on one blockchain: Ethereum. Ethereum is the second biggest blockchain in terms of market capitalization and changed from a PoW consensus to a PoS one at the beginning of our work (2021). This transition called \textit{The Merge} brought a lot of changes with it and motivated our work. We often refer to Ethereum protocol as Ethereum PoS protocol to emphasize this change of paradigm. This resulted in an involved protocol lacking study and explanation.
By focusing on Ethereum, we saw an opportunity to contribute meaningfully to the field while closely observing the transition and its impacts on the properties of the blockchain.

\

The organization of this manuscript follows the chronological progression of our research. After establishing the necessary properties and definitions, we first analyze the protocol from a distributed systems perspective, initially without considering rewards and penalties. We then extend this analysis by incorporating penalties. Finally, the last technical chapter before the conclusion examines the protocol from a game-theoretic perspective. Here is a more detailed overview of each chapter:

\begin{itemize}
    \item \autoref{chap:background} presents the remaining terms useful throughout the manuscript and gives the essential properties and model we use for our analysis.
    \item \autoref{chap:Eth-CModel-Analysis} is the first part of our work, focusing on understanding the Ethereum PoS protocol. We started from scratch by reviewing the code to extract its properties. This led to the publication of our first paper \cite{pavloff_ethereum_2023} that we extended for a ACM DLT journal recently accepted.
    \item \autoref{chap:Eth-CModel-Penalties-Analysis} pushes the protocol analysis further by taking into account penalties present in the protocol. This is uncommon in distributed system analysis. This work led to another publication \cite{pavloff_byzantine_2024}
    \item \autoref{chap:Eth-GTModel-Analysis} is the last technical chapter and contains elements of our last paper \cite{pavloff_incentive_2024}. We model the interactions between block proposers and attesters as a game. We investigate the most profitable behaviors for the players.
    \item \autoref{chap:Conclusion} summarizes our results and opens on possible future research linked to our work.
\end{itemize}

\section{Related Work on Blockchain Analysis} \label{sec:RelatedWork}

As mentioned, our work during this thesis revolves around the case study of Ethereum Proof-of-Stake. 
We contribute to the field by formalizing the protocol, defining properties, presenting attacks against these properties, and then analyzing the strategies of rational validators. In this chapter, we present related works corresponding to each of these efforts.

\

\subsection{Blockchain Formalization}

The category of papers that aim to formalize blockchains includes all the \emph{white papers}. These are articles explaining the main features of a protocol, often written by the team behind the blockchain. This trend started in 2008 with the first blockchain, Bitcoin \cite{nakamoto_peer_2008}, and has been followed almost religiously by subsequent protocols. Ethereum is no exception, having released its first white paper in 2014 \cite{buterin_next_2014} when the Ethereum blockchain operated with a PoW consensus.

Following the release of a white paper by a protocol's team, other papers have emerged to challenge or complement the blockchain's description. For Bitcoin, this is exemplified by the work of Garay et al. \cite{garay_bitcoin_2015}, who analyzed the protocol's pseudo-code and deduced some of its properties. Similarly, Amoussou-Guenou et al. \cite{amoussou_dissecting_2019} proved the correctness of the Tendermint protocol, thereby complementing the initial white paper \cite{buchman_latest_2018}.  Alturki et al. \cite{alturki_towards_2019} used a proof assistant to prove the safety of the Algorand’s blockchain \cite{chen_algorand_2019}. Similarly, García-Pérez and Schett \cite{garcia_deconstructing_2019} provided a formal correctness proof of the Stellar Consensus Protocol (SCP). Amores-Sesar et al. \cite{amores_security_2020} study revealed that the  Ripple protocol might violate both safety and liveness, challenging the initial claims of Ripple's Byzantine fault tolerance.

The first section of our work has a similar aim. We extract the pseudo-code from the Ethereum specifications \cite{github_specs}, which is the description of how to implement the protocol, to formalize its properties. The most recent Ethereum white paper was released to explain the new protocol following the transition from PoW to PoS \cite{buterin_combining_2020}.

Outside of this line of work focusing on specific protocols, other efforts have aimed to provide formal foundations for blockchains. Anceaume et al. \cite{anceaume_abstract_2019} proposed a formalization of blockchains and their evolutions as Block Trees. The work of Anceaume et al. \cite{anceaume_finality_2021} described the different ways blockchains can ensure that blocks permanently belong to the chain, finalizing them. 
We rely on the definitions of Block Tree and finality to express the properties of the Ethereum protocol. 

\newpage

\subsection{Attacks and Vulnerabilities}
A considerable amount of work has been focused on identifying protocol vulnerabilities. The most famous example is probably the seminal work of Eyal and Sirer \cite{eyal_majority_2018}, which presents the selfish mining attack on Bitcoin. Eyal and Sirer show that in Bitcoin (and proof-of-work in general), miners can benefit from deviating from the prescribed protocol by withholding blocks for a while, to the detriment of honest miners. Many other examples exist: for instance, Amoussou-Guenou et al. \cite{amoussou_correctness_2018} pointed out a liveness vulnerability in the Tendermint protocol, and Neuder et al. \cite{neuder_defending_2020} presented an attack where nodes can reorganize Tezos' Emmy+ chain and then perform a double-spend attack. The protocol has been updated since then for a more robust solution proposed by Astefanoaei et al. \cite{astefanoaei_ternderbake_2021}.

Our work focuses on Ethereum, which is no stranger to protocol attacks. Neu et al. \cite{neu_ebb_2021} exhibited a balancing attack, highlighting the shortcomings of a consensus mechanism divided into two layers (finality gadget and fork choice rule). Mitigation against this attack was proposed, but Neu et al. \cite{neu_two_2022} overcame this mitigation with a new balancing attack. Schwarz-Schilling et al. \cite{schwarz_three_2021} presented \textit{reorg attacks}, attacks where the chain is reorganized leaving previous blocks orphaned, revealing that proposers could gain from disturbing the protocol by releasing their blocks late. 

Nakamura \cite{nakamura_prevention_2019} presented an attack called \textit{splitting attack}, in which the adversary sends messages to split the set of validators. However, Nakamura assumes that the adversary needs to control and manipulate network delays, which is a strong and potentially unrealistic assumption. More recently, Schwarz-Schilling et al. \cite{schwarz_three_2021} demonstrated through experiments that attackers can predict the proportion of validators receiving a given message within a specific timeframe with sufficient accuracy. This contradicts Nakamura's claim that the attack necessitates the adversary to control network delay. 

We contribute to this line of work by identifying flaws on the Ethereum PoS protocol. First, we outline a flaw regarding the liveness of the current Ethereum Proof-of-Stake protocol, emphasizing the importance of reconciling availability and finality. Our approach differs from Galletta et al. \cite{galletta_resilience_2023}, who aim to formally verify the \textit{Hybrid Casper} protocol \cite{buterin_casper_2017}, focusing on an outdated version. We formalize the current implementation of the protocol through pseudo-code and expose a liveness attack on the protocol.
Our work presents a form of the splitting attack where a message received by honest validators at different times is differently perceived to be on time or too late, splitting the validators into two 'views'. This different perception greatly influences which chain they consider canonical. This attack is based on the assumption that the adversary knows the network delay (in line with Schwarz-Schilling et al. \cite{schwarz_three_2021}) but does not control it. 


\subsubsection{Incentives}

Very few efforts in the literature have taken the incentive mechanism of protocols into account to evaluate how Byzantine validators could exploit it.
Initial efforts were made to intertwine the study of incentives with considerations of liveness and safety properties of the Ethereum protocol \cite{buterin_incentives_2020}. However, this early exploration discussed a preliminary version of the protocol \cite{buterin_casper_2017} and did not include an analysis of the inactivity leak. The inactivity leak is the mechanism that penalizes inactive validators by reducing their stake. The most recent version of the protocol by its founder \cite{buterin_combining_2020} does not mention this mechanism. The inactivity leak still lacks a detailed examination, and our work aims to fill this gap.

While mechanisms similar to Ethereum's inactivity leak exist elsewhere (e.g., \cite{wood_polkadot_2016, goodman_tezos_2014}), to the best of our knowledge, there has very few analysis of the risk associated with potentially draining honest stake in a Byzantine-prone environment. An investigation linking penalties with the actions of Byzantine validators is presented by Zhang et al. \cite{zhang_attestation_2023}. This work demonstrates how Byzantine validators can maliciously cause attestation penalties for honest validators. 

Our work is similar to Zhang et al. in scope, however we focus on more substantial penalties, i.e., the inactivity penalties and slashing. During the inactivity period, attestation penalties tend to be less significant. We found that the penalties could be detrimental for the protocol's safety if exploited by Byzantine participants.

\subsection{Rational Agents in Blockchain}
Incentives are not often considered in the distributed systems field. Game theory precisely addresses this gap.

Following the influential work of Eyal and Sirer \cite{eyal_majority_2014, eyal_majority_2018}, subsequent works (e.g., \cite{grunspan_selfish_2020, nayak_stubborn_2016, sapirshtein_optimal_2016, bar_efficient_2020}) use game-theoretic tools to analyze the maximum gain a rational miner can achieve by selfish mining, i.e., deviating from the proof-of-work protocols by withholding found blocks to gain an advantage in mining subsequent blocks. 
Also considering only rational participants, the work of Biais et al. \cite{biais_blockchain_2019} proves that while playing the proof-of-work game and following the longest chain’s rule is an equilibrium -where no participant can improve their outcome by changing their strategy-, multiple other equilibria exist where forks may persist. 

At the intersection of distributed systems, which model agents as either honest or Byzantine, and game theory, which models agents as rational, a mixed model was proposed: BAR (Byzantine, Altruistic, and Rational) \cite{abraham_distributed_2011}. However, the complexity of the analysis rises quickly with so many types of agents. The work of Halpern and Vilaça \cite{halpern_rational_2016} considers rational participants who can fail by crashing. They prove that in such a setting, there is no ex-post Nash equilibrium solving the fair consensus problem. Amoussou et al. \cite{amoussou_rational_2020} consider agents being either rational or Byzantine, exhibiting different equilibrium depending on the proportion of each. 

These works do not apply to Ethereum since its PoS mechanism is too different from classic PoW or classic BFT. Regarding PoS in general, Saleh \cite{saleh_blockchain_2020} showed that the \textit{nothing-at-stake}, problem in which PoS participant can extend simultaneously different fork without cost was prevented due to the value of the blockchain and thus their stake being decreased by such actions. For the Algorand blockchain Fooladgar et al. \cite{fooladgar_incentive_2020} showed that the cost and rewards of the protocol did not create an equilibrium in which participants followed the protocol. They proposed an adjustment of the rewards to entice selfish participants to cooperate.
Comparing our work to the game theory literature on the Ethereum PoS \emph{consensus}, Roughgarden \cite{roughgarden_transaction_2020} conducted a game-theoretic analysis of an Ethereum Improvement Proposal (EIP) to evaluate its impact on transaction and proposer rewards considering rational agents. Many works focus on MEV (Maximal Extractable Value), which involves taking advantage of the transaction ordering in a block. 
In contrast, we focus on game-theoretic analyses affecting the consensus directly. We differ from works like \cite{bhudia_game_2023}, which consider the possibility of making ransom demands without being detected. Schwarz-Schilling \cite{schwarz_three_2021} studies the optimal timing for proposers to propose their blocks.

Our endeavor is closest to the work of Carlsten et al. \cite{carlsten_instability_2016} and Tsabary and Eyal \cite{tsabary_eyal_2018} that study selfish behavior in Bitcoin using game theory when the \emph{only} source of rewards is transaction fees (no more coinbase transactions). Tsabary and Eyal show that the Bitcoin blockchain becomes unstable since block miners fork the Bitcoin blockchain to obtain the most lucrative transactions. 
Ethereum PoS does not reward block proposers with coinbase transactions; however, the presence of attesters and a different fork choice rule than Bitcoin's makes the analysis more complex. We focus on analyzing the behavior of block proposers and attesters in Ethereum PoS using game theory and demonstrate that the protocol tends to stabilize, even though a proposer might gain more by deviating from the prescribed strategy due to the initial asynchronous setting. Moreover, transaction fees do not play a crucial role in our analysis due to rewards stemming from attestations.

    \chapter{Background} \label{chap:background}

\minitoc

\chapterLettrine{C}{onducting} science primarily involves two approaches: empirical and theoretical.
Empirical science observes systems and studies the data and results from experiments, inferring general laws and their functioning. Theoretical science uses models to represent problems and employs these models to derive theorems and rules about the subject of study.
When creating a model, you must make assumptions and hypotheses that define the constrained reality you aim to understand. 
In this chapter, we outline models relevant to blockchain protocols and consensus.

\

\section{Distributed Computing Model}

Analyzing the Ethereum PoS, we consider participants to be \emph{validators}, forming a finite set $\Pi$. There are a total of $n$ validators. Being in PoS system, each validator owns a \emph{stake}, which refers to the amount of cryptocurrency (ETH) they possess. This stake serves as a metric of their investment and influence in the consensus protocol. Throughout this manuscript, the term "proportion" is used concerning a validator set to denote the ratio of their combined stake to the total staked.
A validator is interested in owning a stake as it comes with responsibilities that are rewarded. 
Initially capped at 32 ETH, a validator's stake has the potential to decrease. 

\subsection{Fault Model}
In distributed systems, the goal of a protocol is to guarantee certain properties despite the presence of \emph{faulty} participants. In our analysis, faulty participants will always be Byzantine ones \cite{lamport_byzantine_1982}.

Following the well-known work of Castro and Liskov \cite{castro_practical_1999}, we consider the worst-case scenario in which all faulty nodes are controlled by a single adversary. This assumption of a strong adversary is crucial for ensuring the reliability of critical distributed systems, where certain guarantees are expected even in the event of unexpected failures. However, during our analyses the adversary does not manipulate message delays between honest validators.

Unlike the dynamic adversary model considered by Chen and Micali \cite{chen_algorand_2019}, where the adversary can change the set of faulty nodes during execution, the works presented in this manuscript consider a \emph{static} adversary. This means that the set of faulty participants is determined at the start of the analysis and does not change throughout. We denote the proportion of Byzantine validators, which is the ratio of the sum of the stake of all Byzantine validators over the stake of all validators, by $\beta$ with $\beta< 1/3$. When analyzing changes in the Byzantine proportion over time, we define the \emph{initial} proportion as $\beta_0 < 1/3$.

\ 

The Ethereum PoS protocol aims to achieve Byzantine Fault Tolerance (BFT), ensuring the preservation of both Safety and Liveness properties as long as the initial proportion of Byzantine validators ($\beta_0$) remains strictly below $1/3$.

\section{Synchronization and Communication}
\subsection{Time}
Each participant maintains its own clock to keep track of time. We assume that all clocks are synchronized and run at the same pace. Any discrepancies in clock synchronization are considered as part of the message delay. 

In the Ethereum PoS protocol, time is measured in periods of 12 seconds, called \emph{slots}, with a period of 32 slots making up an \emph{epoch}, which serves as the largest time unit in the protocol. These timeframes are used to assign specific roles to validators at particular moments.

\subsection{Network}
We assume a \emph{partially synchronous model} \cite{dwork_consensus_1988}, which consists of an asynchronous period of unknown length followed by a synchronous period:
\begin{itemize}
\item During the asynchronous period, there is no bound on message delay. A message sent during this period has no guarantee of reaching its recipient before the asynchronous period ends.
\item Conversely, in the synchronous period, there is a known bound $\Delta$, ensuring that any message sent at time $t$ is received by time $t + \Delta$ at the latest.
\end{itemize}
The partially synchronous model begins with an asynchronous period that lasts until a global stabilization time (\texttt{GST}), after which the synchronous period begins.

Studying protocols under this model is common to ensure resilience under both good and bad network conditions. It is important to note that even with synchronized clocks, the presence of an asynchronous network before \texttt{GST} still qualifies the system as partially synchronous.

Validators communicate through message passing. We assume the existence of an underlying broadcast primitive, which operates as a best-effort broadcast. This means that when an honest validator broadcasts a value, all honest validators eventually receive it. Messages are signed with a digital signature, providing a mechanism for cryptographic identification and validation within the protocol.




\section{Game Model} \label{sec:gameModel}
This manuscript presents three distinct analyses: the first two from a distributed systems perspective, and the last one from a game-theoretic perspective. The shift in focus necessitates a corresponding change in the modeling approach.

\

In \autoref{chap:Eth-GTModel-Analysis}, we model the Ethereum PoS consensus protocol as a game where each player\footnote{The players are the participants of the Ethereum PoS that we also consider as a game for our analysis.} acts either as a \emph{proposer} or an \emph{attester}. In the ideal scenario, proposers propose blocks, and attesters broadcast attestations. The game unfolds over $s$ sequential \emph{slots}. There is one proposer and $a \in \mathbb{N}$ attesters per slot, leading to a total of $s$ proposers and $as$ attesters. The total number of slots $s$ is unknown to the players.

\

Similar to the approach in \cite{carlsten_instability_2016}, our game is based on the following assumptions: (i) The game occurs during a synchronous period where the network is fully synchronous, meaning there is no latency. This implies that once information (such as a block, attestation, or transaction) is broadcast, all players immediately become aware of it. (ii) The synchronous period follows an asynchronous period, during which there may have been delays in information transmission. This assumption aligns with the Ethereum protocol’s network behavior hypothesis \cite{buterin_combining_2020}. 

We model the interactions between proposers and attesters during $n$ slots in Ethereum PoS as a dynamic game in which actions occur sequentially. In each slot, the sequence of events is as follows: (i) a block is proposed at the beginning of the slot, (ii) new transactions are proposed, and (iii) all attesters for the slot simultaneously send their attestations.
The actions, rewards, and strategies of validators will be thoroughly described and analyzed in \autoref{chap:Eth-GTModel-Analysis}.

    \chapter{Ethereum PoS Analysis under the Distributed Computing Model} \label{chap:Eth-CModel-Analysis}

\minitoc

\chapterLettrine{A}{ll} of our work revolves around the Ethereum protocol. We intend to study this blockchain, which was once a PoW blockchain and is now a PoS blockchain, to develop a general understanding using a famous case study. To do so, we start with the roots of any protocol: its code. The Ethereum Foundation and the inventor of Ethereum, Vitalik Buterin, have produced a paper \cite{buterin_combining_2020} to explain the protocol and prove its properties. The issue is threefold: the paper discusses an outdated version of the protocol, not the entirety of the protocol is taken into account, and their results seem to contradict the crucial CAP theorem \cite{brewer_towards_2000}. For all these reasons combined, we begin our analysis from scratch, starting from the code. 
We begin our contribution by defining crucial properties for blockchains. We then use pseudo-code that reflects the protocol and the properties newly defined to evaluate the guarantees provided by the protocol. This first analysis does not consider the incentives part of the protocol.

\section{Safety and Liveness Properties}\label{sec:safetyAndLivenessProperties}
Once the protocol has been laid out, we can investigate its properties. Despite their names, blockchains are closer to \emph{block trees}. Forks can occur and cause the blockchain to have several branches rather than a unique chain. We adopt the formalization of Anceaume et al. \cite{anceaume_abstract_2019} of blockchain data structure as a block tree. 
Indeed, the blockchain takes the form of a tree in which every node is a block pointing to its unique parent, and the tree's root is the \emph{genesis block}. 
Among the different branches of the block tree, the protocol indicates a unique branch, or chain, to build upon with a so-called fork choice rule (e.g., the longest chain rule in Bitcoin).
The selected chain is called the \emph{canonical chain}.

\begin{definition}[\textbf{Canonical chain}]
 We call \textbf{canonical chain} the chain designated as the one to build upon by the fork choice rule.
 Considering the view of the chain of an honest validator $i$, $i$'s associated canonical chain is noted $\mathcal{C}_i$. 
\end{definition}

The blocks in the canonical chain can be finalized or not.

\begin{definition}[\textbf{Finalized block}]
A block is finalized for a validator $i$ if and only if the block cannot be revoked, i.e., it permanently belongs to the canonical chain $\mathcal{C}_i$.
\end{definition}

\emph{Note:} It stems from the definition that all the predecessors of a finalized block are finalized. 

\begin{definition}[\textbf{Finalized chain}]
The finalized chain is the chain constituted of all the finalized blocks.
\end{definition}

\emph{Note:} The finalized chain $\mathcal{C}_{fi}$ is always a prefix of any canonical chain $\mathcal{C}_i$.\\

To analyze the protocol, one needs to examine the capability of the Ethereum Proof-of-Stake protocol to construct a consistent blockchain (safety), to allow validators to add blocks despite network partitions and failures (availability), and to make progress on the finalization of new blocks (liveness). These are paramount properties characterizing blockchains.
Safety, availability, and liveness are expressed as follows:

\setcounter{property}{0}
\begin{property}[\textbf{Safety}]
A blockchain is consistent or \textbf{safe} if, for any two honest validators with a finalized chain, one chain is necessarily the prefix of the other. More formally, for two validators $i$ and $j$ with respective finalized chains $\mathcal{C}_{fi}$ and $\mathcal{C}_{fj}$, $\mathcal{C}_{fi}$ is the prefix of $\mathcal{C}_{fj}$ or vice versa.
\end{property}

\begin{property}[\textbf{Liveness}]
A blockchain is \textbf{live} if the finalized chain is ever growing.
\end{property}

\begin{property}[\textbf{Availability}]
A blockchain is \textbf{available} if the following two conditions hold: (1) any honest validator is able to append a block to its canonical chain in bounded time, regardless of the failures of other validators and the network partitions; (2) the canonical chains of all honest validators are eventually growing, i.e., given a block $b_k$ added to a canonical chain at a distance $d$ from the genesis block $b_0$, where the distance is the number of blocks separating $b_k$ from $b_0$, eventually a block $b_l$ will be added to the canonical chain at a distance $d' > d$.
\end{property}


The key difference between the finalized chain and the canonical chain is that blocks in the finalized chain are permanent and cannot be revoked. In contrast, the canonical chain can switch branches, meaning blocks in the previously chosen branch can potentially be revoked. 
Availability, on the other hand, guarantees that adding blocks to the canonical chain is a wait-free operation whose time to complete does not depend on network failures or Byzantine behaviors. 
Availability also implies that blocks are constantly added in such a way that the height of the canonical chain eventually grows. This property avoids the pathological scenario in which all the blocks are added to the genesis block to form a star.

As in any distributed system, blockchains are faced with the dilemma brought by the CAP Theorem. This theorem states that no distributed system can satisfy these three properties at the same time: \emph{consistency}, \emph{availability}, and \emph{partition tolerance}. Indeed, if network partitions occur, either the system remains available at the expense of consistency, or it stops making progress until the network partition is resolved to guarantee consistency. 
This means that no blockchain can simultaneously be available and consistent. However, by maintaining the canonical and the finalized chain simultaneously, Ethereum Proof-of-Stake aims to offer both safety and availability. The canonical chain aims to be available but without guaranteeing consistency all the time, while the finalized chain falls on the other side of the spectrum, guaranteeing consistency without availability. Therefore, the finalized chain will finalize blocks only when it is safe to do so, whereas the canonical chain will still be available during network partitions (caused by network failures or attacks). 
The only caveat here is that the finalized chain grows by finalizing blocks of the canonical chain, which means that the properties of the two chains are interdependent. In particular, to assure liveness, it is necessary that the canonical chain steadily grows. 
This interdependence is a source of vulnerability as we show in the remainder of our analysis.



\section{Ethereum PoS protocol}\label{sec:EthPoSprotocol}

\subsection{Overview}
\label{subsec:overview}
The Ethereum Proof-of-Stake (PoS) protocol design is quite involved. We identify, similarly to \cite{neu_ebb_2021}, the objectives underlying its design as follows: (i) finalizing blocks and 
(ii) having an \emph{available} canonical chain that does not rely on block finality to grow. 
To this end, the Ethereum PoS protocol combines two blockchain designs: a Nakamoto-style protocol to build the tree of blocks containing the transactions and a BFT finalization protocol to progressively finalize blocks in the tree. The objective is to keep the blockchain creation process always available while guaranteeing the finalization of blocks through Byzantine-tolerant voting mechanisms. The finalization mechanism is a \emph{Finality Gadget} called \emph{Casper FFG}, and the fork choice rule to select canonical chains is \emph{LMD GHOST}.

Before introducing how the fork choice rule and the finality gadget work together, we will introduce the following basic concepts: (i) slots, epochs, and checkpoints, which set the pace of the protocol allowing validators to synchronize together on the different steps, (ii) committees formation and assignment of roles to validators as proposers and voters for each slot, and (iii) the different types of votes the validators must send in order to grow and maintain the canonical chain as well as the finalized chain.

In this section, we focus on providing a formalized version of the protocol through pseudo-code, following the specification given by the Ethereum Foundation \cite{github_specs}. Note that a description of an initial plan of the protocol was proposed by Buterin et al. in \cite{buterin_combining_2020}. We describe and formalize the current implementation of the protocol \cite{github_specs}.

\subsubsection{Slots, Epochs \& Checkpoints}
\label{subsubsec:time}
In proof-of-work protocols, such as originally described in \cite{nakamoto_peer_2008}, the average frequency of block creation is predetermined in the protocol, and the mining difficulty changes to maintain that pace. In contrast, in Ethereum PoS, it is assumed that validators have synchronized clocks to propose blocks at regular intervals. 
More specifically, the protocol uses time frames called \emph{slots} (12 seconds) and \emph{epochs} (6 minutes and 24 seconds). 
Each epoch is comprised of $32$ slots. Epoch $0$ contains slot $0$ to slot $31$, then epoch $1$ slot $32$ to slot $63$, and so on.
These slots and epochs allow associating the validators' roles to the corresponding time frame. 

An essential feature of epochs is the \emph{checkpoint}. A checkpoint is a pair (block, epoch) $(b,e)$ where $b$ is the block of the first slot\footnote{In the event of an epoch without a block for the first slot, the block used for the checkpoint is the last block in the canonical chain, belonging to a previous epoch. On the contrary, if the proposer of the first slot proposes multiple blocks, this will create multiple checkpoints for the other validators to choose from using the fork choice rule.} of epoch $e$. \autoref{fig:slotEpoch} represents the structure of an epoch in Ethereum PoS, with the checkpoint being represented by the hexagonal blue shape.

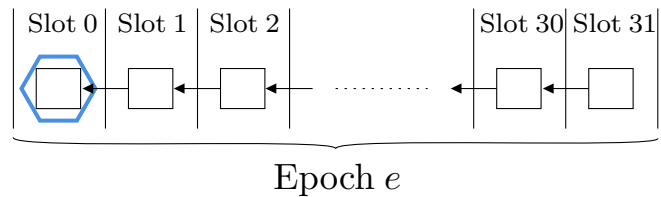
\begin{figure}
    \centering
    \resizebox{0.6\linewidth}{!}{
        \begin{tikzpicture}[x=0.75pt,y=0.75pt,yscale=-1,xscale=1]

\draw   (87.05,80) -- (109.05,80) -- (109.05,100) -- (87.05,100) -- cycle ;
\draw   (132.69,80) -- (154.69,80) -- (154.69,100) -- (132.69,100) -- cycle ;
\draw    (132.69,90) -- (112.5,90) ;
\draw [shift={(109.5,90)}, rotate = 360] [fill={rgb, 255:red, 0; green, 0; blue, 0 }  ][line width=0.08]  [draw opacity=0] (5.36,-2.57) -- (0,0) -- (5.36,2.57) -- cycle    ;
\draw   (269.62,80) -- (291.62,80) -- (291.62,100) -- (269.62,100) -- cycle ;
\draw   (315.26,80) -- (337.26,80) -- (337.26,100) -- (315.26,100) -- cycle ;
\draw    (269.62,90) -- (249.8,90) ;
\draw [shift={(246.8,90)}, rotate = 360] [fill={rgb, 255:red, 0; green, 0; blue, 0 }  ][line width=0.08]  [draw opacity=0] (5.36,-2.57) -- (0,0) -- (5.36,2.57) -- cycle    ;
\draw    (315.26,90) -- (295.5,90) ;
\draw [shift={(292.5,90)}, rotate = 360] [fill={rgb, 255:red, 0; green, 0; blue, 0 }  ][line width=0.08]  [draw opacity=0] (5.36,-2.57) -- (0,0) -- (5.36,2.57) -- cycle    ;
\draw  [dash pattern={on 0.84pt off 2.51pt}]  (189.75,90) -- (235.39,90) ;
\draw    (121.28,50) -- (121.28,110) ;
\draw    (166.93,50) -- (166.93,110) ;
\draw    (258.21,50) -- (258.21,110) ;
\draw    (303.85,50) -- (303.85,110) ;
\draw    (349.49,50) -- (349.49,110) ;
\draw    (29.96,114.4) .. controls (28.56,121.11) and (189.33,114.28) .. (189.87,120) ;
\draw    (189.87,120) .. controls (189.36,113.71) and (350.89,120.31) .. (349.49,114.88) ;
\draw    (178.34,90) -- (158,90) ;
\draw [shift={(155,90)}, rotate = 360] [fill={rgb, 255:red, 0; green, 0; blue, 0 }  ][line width=0.08]  [draw opacity=0] (5.36,-2.57) -- (0,0) -- (5.36,2.57) -- cycle    ;
\draw  [color={rgb, 255:red, 74; green, 144; blue, 226 }  ,draw opacity=1 ][line width=1.5]  (70.64,90.02) -- (61.48,106.02) -- (43.16,106.02) -- (34,90.02) -- (43.16,74.03) -- (61.48,74.03) -- cycle ;
\draw    (87.05,90) -- (67.23,90) ;
\draw [shift={(64.23,90)}, rotate = 360] [fill={rgb, 255:red, 0; green, 0; blue, 0 }  ][line width=0.08]  [draw opacity=0] (5.36,-2.57) -- (0,0) -- (5.36,2.57) -- cycle    ;
\draw    (75.64,50) -- (75.64,110) ;
\draw    (30,50) -- (30,110) ;
\draw   (41.32,80.02) -- (63.32,80.02) -- (63.32,100.02) -- (41.32,100.02) -- cycle ;

\draw (54.5,57.14) node  [font=\scriptsize,xscale=1.25,yscale=1.25]  {$\mathrm{Slot} \ 0$};
\draw (190.6,135.55) node  [font=\small,xscale=1.25,yscale=1.25]  {$\mathrm{Epoch} \ e$};
\draw (99,57.14) node  [font=\scriptsize,xscale=1.25,yscale=1.25]  {$\mathrm{Slot} \ 1$};
\draw (144.65,57.14) node  [font=\scriptsize,xscale=1.25,yscale=1.25]  {$\mathrm{Slot} \ 2$};
\draw (282.11,57.14) node  [font=\scriptsize,xscale=1.25,yscale=1.25]  {$\mathrm{Slot} \ 30$};
\draw (328.32,57.14) node  [font=\scriptsize,xscale=1.25,yscale=1.25]  {${\textstyle \mathrm{Slot} \ 31}$};

\end{tikzpicture}
    }
    \caption{Ethereum protocol Structure}
    \label{fig:slotEpoch}
\end{figure}

\subsubsection{Validators \& Committees}
\label{subsubsec:validator&committee}
Validators have two main roles: \emph{proposer} and \emph{attester}. The proposer's role consists of proposing a block during a specific slot\footnote{The current protocol specifications \cite{github_specs} indicate that honest validators should send their block proposition during the first third of their designated slot.}. This role is pseudo-randomly\footnote{Detailed explanation in \autoref{ssec:pseudo-randomness}.} assigned to 32 validators per epoch (one for each slot). The attester's role consists of producing an attestation sharing the validator's view of the chain. This role is assigned once per epoch to each validator.

In each epoch, a validator is assigned to exactly one committee (of attesters). A committee $\mathtt{C}_j$ is a subset of the whole set of validators. Each validator belongs to exactly one committee, i.e., $\forall j \neq k,  \mathtt{C}_j \bigcap \mathtt{C}_k = \emptyset $ and for each epoch $\bigcup_{i} \mathtt{C}_i = \Pi $. Each committee is associated with a slot. During this slot, each member of the committee will have to cast an \emph{attestation} to indicate its view of the chain.

In short, during an epoch, validators are all attesters once and have a small probability of being proposers ($32/n$). The roles of proposer and attester are entirely distinct, i.e., the proposer of a slot is not necessarily an attester of that slot.

\subsubsection{Vote \& Attestation}
\label{subsubsec:vote}
There are two types of votes in Ethereum PoS: the \emph{block vote}\footnote{Also called GHOST vote in \cite{buterin_combining_2020} and in the specifications \cite{github_specs}.} and the \emph{checkpoint vote}\footnote{Also called FFG vote in \cite{buterin_combining_2020} and in the specifications \cite{github_specs}.}. The message containing these two votes is called an \emph{attestation}. 
During an epoch, each validator must make one attestation. The attestation should be sent during a specific time slot, which is determined by the validator's committee. 
The two types of votes, checkpoint vote and block vote, have very distinct purposes. The checkpoint vote is used to finalize blocks and grow the finalized chain, while the block vote is used to determine the canonical chain. 
Although validators cast their two types of votes in one attestation, an important distinction must be made between the two. Indeed, the two types of votes do not require the same conditions to be taken into account. The checkpoint vote of an attestation is only considered when the attestation is included in a block. In contrast, the block vote is considered one slot after its emission, whether it is included in a block or not.

The code associated with the production of attestations is described in \autoref{algo:prepareAttestation} at \autoref{subsec:code}. We then describe in \autoref{algo:syncAttestation} how the reception of attestations is handled.

\subsubsection{Finality Gadget}
\label{subsubsec:finalityGadget}
The finality gadget is the mechanism that aims at finalizing blocks. The finality gadget grows the finalized chain independently of block production. This decoupling of the finality mechanism from block production permits block availability even when the finalizing process is slowed down. This differs from protocols like Tendermint \cite{buchman_latest_2018}, where a new block can be added to the chain only after being finalized.

The finality gadget works at the level of epochs. Instead of finalizing blocks one by one, the protocol uses checkpoint votes to finalize entire epochs. We now present in more detail how the finality gadget of Ethereum PoS grows the finalized chain.

Recall that to be taken into account, a checkpoint vote needs to be included in a block. The vote will then influence the behavior of validators regarding this particular branch. Thus, in \autoref{algo:Casper} of \autoref{subsec:code}, the function \texttt{countMatchingCheckpointVote} only counts the matching checkpoint votes of attestations included in a block.

\paragraph*{Justification} \label{paragraph:Justification}
The justification process is a step towards achieving finalization\footnote{The genesis checkpoint (i.e., the checkpoint of the first epoch) is the exception to this rule: it is justified and finalized by definition.}. It operates on checkpoints at the level of epochs. Justification occurs thanks to checkpoint votes. The checkpoint vote contains a pair of checkpoints: the checkpoint \emph{source} and the checkpoint \emph{target}. We can count with \texttt{countMatchingCheckpointVote} the sum of balances of the validators' checkpoint votes with the same source and target. If validators controlling more than two-thirds of the stake make the same checkpoint vote, then we say there is a \emph{supermajority link} from the checkpoint source to the checkpoint target. The checkpoint target of a supermajority link is said to be \emph{justified}.

More formally, a checkpoint vote is in the form of a pair of checkpoints: $\big((a,e_a),(b,e_b)\big)$, also noted $(a,e_a) \xrightarrow{} (b,e_b)$. For the checkpoint vote $(a,e_a) \xrightarrow{} (b,e_b)$, we call $(a,e_a)$ the checkpoint source and $(b,e_b)$ the checkpoint target. The checkpoint source is necessarily from an earlier epoch than the checkpoint target, i.e., $e_a<e_b$. In line with \cite{buterin_combining_2020}, we say there is a \emph{supermajority link} from checkpoint $(a,e_a)$ to checkpoint $(b,e_b)$ if validators controlling more than two-thirds of the stake cast an attestation with the checkpoint vote $(a,e_a) \xrightarrow{} (b,e_b)$. In this case, we write $(a,e_a) \xrightarrow{\texttt{J}} (b,e_b)$, and the checkpoint $(b,e_b)$ is justified.

\paragraph*{Finalization} \label{paragraph:Finalization}
The finalization process aims at finalizing checkpoints, thus growing the finalized chain. Checkpoints need to be justified before being finalized. Let us illustrate the finalization process with the two scenarios that can lead to finalization. The first case presents the main scenario in the synchronous setting. It shows how a checkpoint can be finalized in two epochs, the minimum number of epochs needed for finalization.

\subparagraph*{Case 1:} The scenario is depicted in \autoref{fig:finalizationCase1}.
\begin{enumerate}
    \item Let $A=(a,e)$ and $B=(b,e+1)$ be checkpoints of two consecutive epochs such that $A=(a,e)$ is justified.
    \item A supermajority link occurs between checkpoints $A$ and $B$ where $A$ is the source and $B$ the target. This justifies checkpoint $B$. Hence, we can write: $(a, e) \xrightarrow{\texttt{J}} (b, e+1)$ or equivalently $A \xrightarrow{\texttt{J}} B$.
    \item This leads to $A$ being finalized.
\end{enumerate}

\newcommand{\sizeFig}{0.5}
\begin{figure}
    \centering
    \resizebox{\sizeFig\linewidth}{!}{
    \begin{tikzpicture}[node distance = 1.5cm, 
        finalizednode/.style={regular polygon, regular polygon sides=6,, draw=black, fill=green, minimum size=7mm},        finalizednode2/.style={regular polygon, regular polygon sides=6,, draw=black, minimum size=5.5mm},
        justifiednode/.style={regular polygon, regular polygon sides=6,, draw=black, minimum size=7mm},         
        justifiednode2/.style={regular polygon, regular polygon sides=6,, draw=black, minimum size=7mm},
        checkpointnode/.style={regular polygon, regular polygon sides=6,, draw=black, minimum size=7mm},
        squarednode/.style={regular polygon, regular polygon sides=6,, draw=gray!60, fill=gray!5, minimum size=5mm},
        ]
        \node (Dots1)      {$\cdots$};
        \node[justifiednode]      (A)       [right of= Dots1]              {$A$};
        \node[justifiednode2]      (A2)       [right of= Dots1]              {};
        \node[checkpointnode]        (B)       [right of= A] {$B$};
         \node (Dots2)   [right of= B]   {$\cdots$};
        
        \draw[-] (A.east) -- (B.west);
        \draw[-] (Dots1.east) -- (A.west);
        \draw[-] (B.east) -- (Dots2.west);
        
\end{tikzpicture}  
    }\hspace{0cm}
    \resizebox{\sizeFig\linewidth}{!}{
    \begin{tikzpicture}[node distance = 1.5cm, 
        finalizednode/.style={regular polygon, regular polygon sides=6,, draw=black, fill=green, minimum size=7mm},        finalizednode2/.style={regular polygon, regular polygon sides=6,, draw=black, minimum size=5.5mm},
        justifiednode/.style={regular polygon, regular polygon sides=6,, draw=black, minimum size=7mm},      justifiednode2/.style={regular polygon, regular polygon sides=6,, draw=black, minimum size=7mm},
        checkpointnode/.style={regular polygon, regular polygon sides=6,, draw=black, minimum size=7mm},
        squarednode/.style={regular polygon, regular polygon sides=6,, draw=gray!60, fill=gray!5, minimum size=5mm},
        ]
        \node (Dots1)      {$\cdots$};
        \node[justifiednode]      (A)       [right of= Dots1]              {$A$};
        \node[justifiednode2]      (A2)       [right of= Dots1]              {};
        \node[justifiednode]        (B)       [right of= A] {$B$};
        \node[justifiednode2]        (B2)       [right of= A] {};
         \node (Dots2)   [right of= B]   {$\cdots$};
        
        \draw[-] (A.east) -- (B.west);
        \draw[-] (Dots1.east) -- (A.west);
        \draw[-] (B.east) -- (Dots2.west);
        \draw[->,double] (A.north) to[out=45]  (B.north);
        
\end{tikzpicture}   
    }
    \hspace{0cm}
    \resizebox{\sizeFig\linewidth}{!}{
    \begin{tikzpicture}[node distance = 1.5cm, 
        finalizednode/.style={regular polygon, regular polygon sides=6,, draw=black, fill=green, minimum size=7mm},
        finalizednode2/.style={regular polygon, regular polygon sides=6,, draw=black, minimum size=7mm},
        justifiednode/.style={regular polygon, regular polygon sides=6,, draw=black, minimum size=7mm},      justifiednode2/.style={regular polygon, regular polygon sides=6,, draw=black, minimum size=7mm},
        checkpointnode/.style={regular polygon, regular polygon sides=6,, draw=black, minimum size=7mm},
        ]
        \node (Dots1)      {$\cdots$};
        \node[finalizednode]      (A)       [right of= Dots1]              {$A$};
        \node[finalizednode2]      (A2)       [right of= Dots1]              {};
        \node[justifiednode]        (B)       [right of= A] {$B$};
        \node[justifiednode2]        (B2)       [right of= A] {};
         \node (Dots2)   [right of= B]   {$\cdots$};
        
        \draw[-] (A.east) -- (B.west);
        \draw[-] (Dots1.east) -- (A.west);
        \draw[-] (B.east) -- (Dots2.west);
        \draw[->,double] (A.north) to[out=45]  (B.north);
        
        \end{tikzpicture} 
    }
    \caption{The figure depicts the finalization scenario of \textbf{Case 1} with the 3 steps from top to bottom. We represent a checkpoint with a hexagon, a justified checkpoint with a double hexagon, and a finalized checkpoint with a colored double hexagon. The arrow between two checkpoints indicates a supermajority link.}
    \label{fig:finalizationCase1}
\end{figure}

The second case illustrates the scenario in which two consecutive checkpoints are justified but not finalized. This means that the current highest justified checkpoint (e.g., $B$ in \autoref{fig:finalizationCase2}) was not justified with a supermajority link having the previous checkpoint $A$ as its source. Then, a new justification occurs with the source and target being at the maximum distance (2 epochs) for the source to become finalized. It is important to note that there is no limit on the distance between two checkpoints for justification to be possible. This limit only exists for finalization.

\subparagraph*{Case 2:} The scenario is depicted in \autoref{fig:finalizationCase2}.
\begin{enumerate}
    \item Let $A=(a,e)$, $B=(b,e+1)$, and $C=(c,e+2)$ be checkpoints of consecutive epochs such that $A$ and $B$ are justified. Since there is no supermajority link between $A$ and $B$, $A$ cannot be finalized as in Case 1.
    \item Now, a supermajority link occurs between checkpoints $A$ and $C$ where $A$ is the source and $C$ the target. This justifies checkpoint $C$, i.e., $A \xrightarrow{\texttt{J}} C$.
    \item This leads to $A$ being finalized.
\end{enumerate}

\begin{figure}
    \centering
    \resizebox{\sizeFig\linewidth}{!}{
    \begin{tikzpicture}[node distance = 1.5cm, 
    finalizednode/.style={regular polygon, regular polygon sides=6,, draw=black, fill=green, minimum size=7mm},        finalizednode2/.style={regular polygon, regular polygon sides=6,, draw=black, minimum size=7mm},
    justifiednode/.style={regular polygon, regular polygon sides=6,, draw=black, minimum size=7mm},         justifiednode2/.style={regular polygon, regular polygon sides=6,, draw=black, minimum size=7mm},
    checkpointnode/.style={regular polygon, regular polygon sides=6,, draw=black, minimum size=7mm},
    squarednode/.style={regular polygon, regular polygon sides=6,, draw=gray!60, fill=gray!5, minimum size=5mm},
    ]
    \node (Dots1)      {$\cdots$};
    \node[justifiednode]      (A)       [right of= Dots1]              {$A$};
    \node[justifiednode2]      (A2)       [right of= Dots1]              {};
    \node[justifiednode]        (B)       [right of= A] {$B$};
    \node[justifiednode2]        (B2)       [right of= A] {};
    \node[checkpointnode]        (C)       [right of= B] {$C$};
     \node (Dots2)   [right of= C]   {$\cdots$};
    
    \draw[-] (A.east) -- (B.west);
    \draw[-] (Dots1.east) -- (A.west);
    \draw[-] (B.east) -- (C.west);
    \draw[-] (C.east) -- (Dots2.west);
    
    \end{tikzpicture}  
    }
    \hspace{0cm}
    \resizebox{\sizeFig\linewidth}{!}{
    \begin{tikzpicture}[node distance = 1.5cm,
    finalizednode/.style={regular polygon, regular polygon sides=6,, draw=black, fill=green, minimum size=7mm},        finalizednode2/.style={regular polygon, regular polygon sides=6,, draw=black, minimum size=7mm},
    justifiednode/.style={regular polygon, regular polygon sides=6,, draw=black, minimum size=7mm},         justifiednode2/.style={regular polygon, regular polygon sides=6,, draw=black, minimum size=7mm},
    checkpointnode/.style={regular polygon, regular polygon sides=6,, draw=black, minimum size=7mm},
    squarednode/.style={regular polygon, regular polygon sides=6,, draw=gray!60, fill=gray!5, minimum size=5mm},
    ]
    \node (Dots1)      {$\cdots$};
    \node[justifiednode]      (A)       [right of= Dots1]              {$A$};
    \node[justifiednode]        (B)       [right of= A] {$B$};
    \node[justifiednode]        (C)       [right of= B] {$C$};
    \node[justifiednode2]      (A2)       [right of= Dots1]              {};
    \node[justifiednode2]        (B2)       [right of= A] {};
    \node[justifiednode2]        (C2)       [right of= B] {};
     \node (Dots2)   [right of= C]   {$\cdots$};
    
    \draw[-] (A.east) -- (B.west);
    \draw[-] (Dots1.east) -- (A.west);
    \draw[-] (B.east) -- (C.west);
    \draw[-] (C.east) -- (Dots2.west);
    \draw[->,double] (A.north) to[out=45]  (C.north);
    \end{tikzpicture} 
    }
    \hspace{0cm}
    \resizebox{\sizeFig\linewidth}{!}{
    \begin{tikzpicture}[node distance = 1.5cm, 
    finalizednode/.style={regular polygon, regular polygon sides=6,, draw=black, fill=green, minimum size=7mm},        finalizednode2/.style={regular polygon, regular polygon sides=6,, draw=black, minimum size=7mm},
    justifiednode/.style={regular polygon, regular polygon sides=6,, draw=black, minimum size=7mm},         justifiednode2/.style={regular polygon, regular polygon sides=6,, draw=black, minimum size=7mm},
    checkpointnode/.style={regular polygon, regular polygon sides=6,, draw=black, minimum size=7mm},
    squarednode/.style={regular polygon, regular polygon sides=6,, draw=gray!60, fill=gray!5, minimum size=5mm},
    ]
    \node (Dots1)      {$\cdots$};
    \node[finalizednode]      (A)       [right of= Dots1]              {$A$};
    \node[finalizednode2]      (A2)       [right of= Dots1]              {};
    \node[justifiednode]        (B)       [right of= A] {$B$};
    \node[justifiednode]        (C)       [right of= B] {$C$};
    \node[justifiednode2]        (B2)       [right of= A] {};
    \node[justifiednode2]        (C2)       [right of= B] {};
     \node (Dots2)   [right of= C]   {$\cdots$};
    
    \draw[-] (A.east) -- (B.west);
    \draw[-] (Dots1.east) -- (A.west);
    \draw[-] (B.east) -- (C.west);
    \draw[-] (C.east) -- (Dots2.west);
    \draw[->,double] (A.north) to[out=45]  (C.north);
    \end{tikzpicture} 
    }
    \caption{The figure depicts the finalization scenario of \textbf{Case 2} with the 3 steps from top to bottom. We represent a checkpoint with a hexagon, a justified checkpoint with a double hexagon, and a finalized checkpoint with double hexagon coloured. 
    The arrow between two checkpoints indicates a supermajority link.}
  \label{fig:finalizationCase2}
\end{figure}
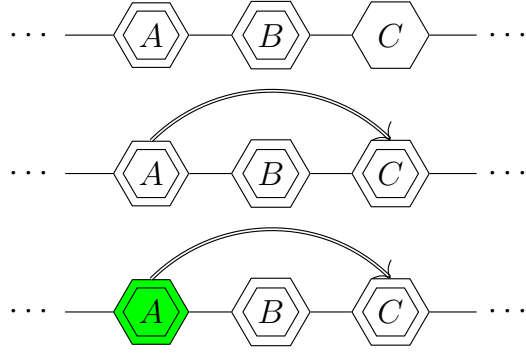

These two cases illustrate the fact that for a checkpoint to become finalized, it needs to be the source of a supermajority link between justified checkpoints. Once a checkpoint is finalized, all the blocks leading to it (including the block in the pair constituting the checkpoint) become finalized. 
We now describe the conditions for a checkpoint to be finalized more formally. Let $(a,e_a)$ and $(b,e_b)$ be two checkpoints such that $e_a < e_b$.
The checkpoint $(a,e_a)$ is finalized if the following conditions are respected: 
\begin{itemize}
    \item \textbf{Source justified:} The checkpoint $(a,e_a)$ is justified.
    \item \textbf{Supermajority link:} There exists a supermajority link $(a,e_a) \xrightarrow{\texttt{J}} (b,e_b)$.
    \item \textbf{Maximal gap:} $e_b - e_a \leq 2$.\footnote{This last condition necessitating the two checkpoints to be at most 2 epochs away from each other is also called \emph{2-finality} \cite{buterin_combining_2020}.} Moreover, if $e_b - e_a = 2$, then the checkpoint in between at epoch $e_a + 1$ ($= e_b - 1$) must be justified.
\end{itemize}
The importance of the last condition is illustrated by \autoref{fig:notFinalizationCase}. 
In practice, these three conditions are only applied to the last four epochs. As mentioned in \cite{buterin_combining_2020}, at the implementation level, checkpoints more than 4 epochs old are not considered for finalization. All the conditions for finalization are illustrated by the last 4 conditions of \autoref{algo:Casper} in \autoref{subsec:code}.

\begin{figure}
    \centering
    \resizebox{\sizeFig\linewidth}{!}{
    \begin{tikzpicture}[node distance = 1.5cm, 
    finalizednode/.style={regular polygon, regular polygon sides=6,, draw=black, fill=green, minimum size=7mm},        finalizednode2/.style={regular polygon, regular polygon sides=6,, draw=black, minimum size=7mm},
    justifiednode/.style={regular polygon, regular polygon sides=6, draw=black, minimum size=7mm},
    justifiednode2/.style={regular polygon, regular polygon sides=6,, draw=black, minimum size=7mm},
    checkpointnode/.style={regular polygon, regular polygon sides=6, draw=black, minimum size=7mm},
    squarednode/.style={regular polygon, regular polygon sides=6, draw=gray!60, fill=gray!5, minimum size=5mm},
    ]
    \node (Dots1)      {$\cdots$};
    \node[justifiednode]      (A)       [right of= Dots1]              {$A$};
    \node[justifiednode2]      (A2)       [right of= Dots1]              {};
    \node[checkpointnode]        (B)       [right of= A] {$B$};
    \node[checkpointnode]        (C)       [right of= B] {$C$};
     \node (Dots2)   [right of= C]   {$\cdots$};
    
    \draw[-] (A.east) -- (B.west);
    \draw[-] (Dots1.east) -- (A.west);
    \draw[-] (B.east) -- (C.west);
    \draw[-] (C.east) -- (Dots2.west);
    
    \end{tikzpicture}    
    }
    \hspace{0cm}
    \resizebox{\sizeFig\linewidth}{!}{
    \begin{tikzpicture}[node distance = 1.5cm,
    finalizednode/.style={regular polygon, regular polygon sides=6,, draw=black, fill=green, minimum size=7mm},        finalizednode2/.style={regular polygon, regular polygon sides=6,, draw=black, minimum size=7mm},
    justifiednode/.style={regular polygon, regular polygon sides=6,, draw=black, minimum size=7mm},         
    justifiednode2/.style={regular polygon, regular polygon sides=6,, draw=black, minimum size=7mm},
    checkpointnode/.style={regular polygon, regular polygon sides=6,, draw=black, minimum size=7mm},
    squarednode/.style={regular polygon, regular polygon sides=6,, draw=gray!60, fill=gray!5, minimum size=5mm},
    ]
    \node (Dots1)      {$\cdots$};
    \node[justifiednode]      (A)       [right of= Dots1]              {$A$};
    \node[justifiednode2]      (A2)       [right of= Dots1]              {};
    \node[checkpointnode]        (B)       [right of= A] {$B$};
    \node[justifiednode]        (C)       [right of= B] {$C$};
    \node[justifiednode2]        (C2)       [right of= B] {};
     \node (Dots2)   [right of= C]   {$\cdots$};
    \draw[-] (A.east) -- (B.west);
    \draw[-] (Dots1.east) -- (A.west);
    \draw[-] (B.east) -- (C.west);
    \draw[-] (C.east) -- (Dots2.west);
    \draw[->,double] (A.north) to[out=45]  (C.north);
    \end{tikzpicture} 
    }
    \caption{This figure illustrates the case of two checkpoints $A$ and $C$ respecting all the conditions for finalization but the one that stipulates that a checkpoint $B$ in-between must be justified for $A$ to be finalized.}
    \label{fig:notFinalizationCase}
\end{figure}
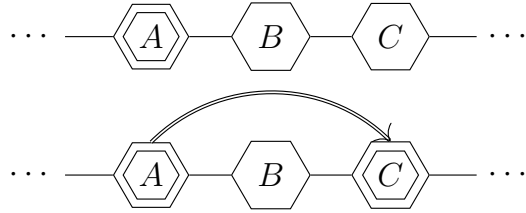

\subsubsection{Fork choice rule \& Block proposition}
\label{subsubsec:forkChoiceRule&blockProposition}
The fork choice rule is the mechanism that allows each validator to determine the canonical chain depending on their view of the BlockTree and the state of checkpoints. The Ethereum PoS fork choice rule is LMD GHOST. The LMD GHOST fork choice rule stems from the Greedy Heaviest-Observed Sub-Tree (GHOST) rule \cite{sompolinsky_secure_2015}, which considers only each participant’s most recent vote (Latest Message Driven). 
During an epoch, each validator must make one \emph{block vote} on the block considered as the head of the canonical chain according to its view.

To determine the head of the canonical chain, the fork choice rule does the following:
\begin{enumerate}
    \item Go through the list of validators and check the last block vote of each.
    \item For each block vote, add a weight to each block of the chain that has the block voted as a descendant. The weight added is proportional to the stake of the corresponding validator.
    \item Start from the block of the justified checkpoint with the highest epoch and continue the chain by following the block with the highest weight at each connection. Return the block without any child block. This block is the head of the canonical chain.
\end{enumerate}

The actual implementation is presented in \autoref{algo:GHOST} in \autoref{subsec:code}. This algorithm is similar to the one already presented in \cite{buterin_combining_2020}.
Albeit each \emph{block vote} being for a specific block, the fork choice rule considers all the chains leading to that block. This reflects the fact that a vote for a block is a vote for the chain leading to that block. \autoref{fig:forkChoiceRuleExample} offers an explanation with a visualization of how attestations influence the fork choice rule. At each chain intersection, the fork choice rule favors the chain with the most attestations.

\begin{figure}
    \centering
    \resizebox{\columnwidth}{!}{%
    \input{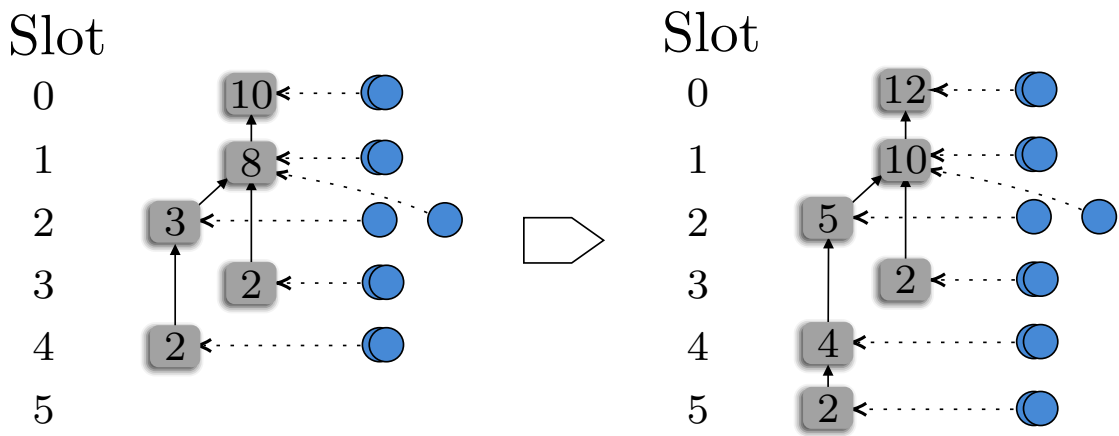}
    }
    \caption{Fork choice rule example observed from a validator $i$'s point of view. We represent block votes with blue circles. 
    Block votes point to specific blocks indicating the block considered as the \emph{headblock} of the candidate chain at the moment of the vote.
    Each block has a number representing the value attributed by the fork choice rule algorithm (cf. \autoref{algo:GHOST}) to determine the candidate chain - we assume for this example that each validator has the same stake of 1. 
    On the left we represent the chain at the end of slot 4, and on the right at the end of slot 5.
    On the left, $i$'s fork choice rule gives the block of slot 4 as $C_i$'s head.
    On the right, the fork choice rule designates the block of slot 5 as the head of the candidate chain.}
    \label{fig:forkChoiceRuleExample}
\end{figure}

\subsubsection{Pseudo-Randomness}\label{ssec:pseudo-randomness}

Ethereum's solution to incorporate randomness in the consensus is called \textit{RANDAO}. RANDAO is a mechanism that creates pseudo-random numbers in a decentralized fashion. It works by aggregating different pseudo-random sources and mixing them.

\paragraph{Seed creation.} Each epoch produces a seed. This seed is created with the help of the block proposers of the said epoch. Each valid block contains a field called \texttt{randao\_reveal}\footnote{See \autoref{subsec:code} for a detailed explanation of its use.}. The seed is the hash of an \texttt{XOR} of all the \texttt{randao\_reveal} values of an epoch plus the epoch number.

Each block's \texttt{randao\_reveal} must be the signature of specific data to prevent manipulation in the seed creation. The data to sign is the current epoch number. Anyone can then check that this signature is from the block proposer and for the correct data.

\paragraph{Seed utilization.}
The algorithm using the seed is called \texttt{compute\_shuffled\_index} (cf. \autoref{algo:computeShuffledIndex}). This algorithm stems from the \textit{swap-or-not} algorithm introduced by \cite{hoang_enciphering_2012}.
\texttt{compute\_shuffled\_index} shuffles the validators list and assigns new roles depending on their shuffled index. This pseudo-random shuffling function is used two times in the Ethereum PoS consensus algorithm: for the proposer selection and the committee selection. The proposer selection is described in \autoref{algo:getProposerIndex} and the committee selection in \autoref{algo:computeCommittee}.

\subsection{Pseudo Code}
\label{subsec:code}

This subsection can be skipped if the reader does not value the understanding on how we dissected the protocol. This step is as crucial as tedious, we outline the protocol based on the code that we formalized to be readable in the form of a pseudo-code.

In this section, we dive into a practical understanding of the mechanism behind the Ethereum PoS protocol. According to the specifications \cite{github_specs} and various implementations (such as Prysm \cite{prysm_code} and Teku \cite{teku_code}), we formalize the main functions of the protocol through pseudo-code for better understanding and analysis purposes.

Each validator $p$ runs an instance of this particular pseudo-code. For instance, when a validator $p$ proposes a block, they broadcast the following message: $\langle PROPOSE, ( slot,$ $\texttt{hash}(headBlock_p) ,$ $content) \rangle$, where $slot$ is the slot at which the proposer proposes the block, the hash of the $headBlock_p$ is the hash of the block considered to be the head of the canonical chain according to the fork choice rule (see \autoref{algo:GHOST}), and $content$ contains data used for pseudo-randomness, among other things that we will not detail here. We instead focus on the consensus protocol.

We describe in the following paragraphs the variables and functions used in the pseudo-code and the goal of these functions.

\paragraph*{Variables.} During the computation, each variable takes a value that is subjective and may depend on the validator. We indicate with $p$ that the value of variables depends on each process. The variable $tree_p$ is considered to be a graph of blocks with each block linked to its predecessor, representing the view of the blockchain (more precisely, $tree_p$ represents the view of all blocks received by the validator since the genesis of the system).
Each $tree_p$ starts with the genesis block. 
$role_p$ corresponds to the different roles a validator can have, which can be none (i.e., for each slot, the validator can be proposer, attester, or have no role). $role_p$ is a list containing the role(s) of the validator for the current slot.
The $slot_p$ is a measure of time. In particular, a slot corresponds to 12 seconds. $slot_p \in \mathbb{N}$. Slot 0 begins at the time of the genesis block and is incremented every 12 seconds. $headBlock_p$ is the head of the canonical chain according to $p$'s local view and the fork choice rule.

A checkpoint $C$ is a pair block-epoch that is used for finalization. $C$ has two attributes, $justified$ and $finalized$, which can be true or false (e.g., if $C$ is only justified, then $C.justified = \texttt{true}$ and $C.finalized = \texttt{false}$). $lastJustifiedCheckpoint_p$ is the \textit{justified} checkpoint with the highest epoch. $currentCheckpoint_p$ is the checkpoint of the current epoch. The list $attestation_p$ is a list of size $n$ (i.e., the total number of validators). This list is updated only to contain the latest messages of validators (of at most one epoch old). 
$CheckpointVote_p$ is a pair of checkpoints, so a pair of pairs, used to make a checkpoint vote. Let us stress the fact that all these variables are local, and at any time, two different validators may have different valuations of those variables.

\paragraph*{Functions.}
We describe the main functions of the protocol succinctly before providing the associated pseudo-code and a more detailed explanation:
\begin{itemize}
    \item \hyperref[algo:validatorMain]{\texttt{validatorMain}} is the primary function of the validator, which launches the execution of all subsidiary functions.
    \item \hyperref[algo:sync]{\texttt{sync}} is a function that runs in parallel with the \texttt{validatorMain} function and ensures the synchronization of the validator. It updates the slot, the role(s), and processes justification and finalization at the end of the epoch and when a new validator joins the system.
    \item \hyperref[algo:GHOST]{\texttt{getHeadBlock}} applies the fork choice rule. This function indicates the head block of the canonical chain.
    \item \hyperref[algo:Casper]{\texttt{justificationFinalization}} is the function that handles the justification and finalization of checkpoints.
\end{itemize}

We depict in \autoref{algo:validatorMain} the main procedure of the validator. This procedure initializes all the values necessary to run a validator. We consider the selection of validators already made to focus on the description of the consensus algorithm itself. The main procedure starts a routine called \texttt{sync} to run in parallel. Then there is an infinite loop that handles the call to an appropriate function when a validator needs to take action for its role(s).

\begin{algorithm}[th]
\caption{Main code for a validator $p$}
\label{algo:validatorMain}
\begin{algorithmic}[1]
\setlength{\lineskip}{3pt}
\Procedure{validatorMain}{\ }
\State $tree_p \gets nil$ \Comment{The tree represents the linked received blocks} 
\State $role_p \gets [\ ]$ \Comment{$role_p $ can be ROLE\_PROPOSER and/or ROLE\_ATTESTER when it is not empty}
\State $slot_p \gets 0$ \Comment{$slot_p \in \mathbb{N}$}
\State $lastJustifiedCheckpoint_p \gets (0, genesisBlock)$ \Comment{A checkpoint is a tuple (epoch, block)}
\State $attestation_p \gets [\ ]$ \Comment{List of latest attestations received for each validator}
\State $validatorIndex_p \gets $ index of the validator \Comment{Each validator has a unique index} 
\State $listValidator \gets [p_0,p_1,\dots,p_{N-1} ]$ \Comment{A list of the validators index}
\State $balances \gets [\ ]$ \Comment{A list of the balances of the validators, their stake}
\State 
\State \textbf{start} \hyperref[algo:sync]{\texttt{sync}($tree_p,$ $slot_p,$  $attestation_p,$  $lastJustifiedCheckpoint_p,$ $role_p,$ 
$balances$)} 
\State
\While{\texttt{true}}
\If{$role_p\neq \emptyset$}
\If{ROLE\_PROPOSER $\in role_p$ } 
\State \hyperref[algo:prepareBlock]{\texttt{prepareBlock()}}
\EndIf
\If{ROLE\_ATTESTER  $\in role_p$} 
\State \hyperref[algo:prepareAttestation]{\texttt{prepareAttestation()}}
\EndIf
\State $role_p \gets [\ ] $
\Else
\State no role assigned \Comment{No action required}
\EndIf
\EndWhile
\EndProcedure
\end{algorithmic}
\end{algorithm}

The roles performed by the validator when acting as proposer or attester are defined in \autoref{algo:prepareBlock} and \autoref{algo:prepareAttestation}, respectively.
The proposer of a block performs the following three tasks:
\begin{enumerate}
    \item Get the head of its canonical chain to have a block to build upon.
    \item Sign a predefined pair to participate in the process of pseudo-randomness.
    \item Broadcast a new block built on top of the head of the canonical chain.
\end{enumerate}

The attestation is composed of three parts: the slot, the block vote, and the checkpoint vote. The validator uses the fork choice rule presented in \autoref{algo:GHOST} to obtain the block chosen for the block vote. \autoref{algo:GHOST} and the one stemming from it, \autoref{algo:weight}, have already been defined in \cite{buterin_combining_2020}. We restate them here for the sake of completeness. For the checkpoint vote, an honest validator should always vote for the current epoch as the target and take the justified checkpoint with the highest epoch (i.e., $lastJustifiedCheckpoint$) as the source.

In order to broadcast this attestation, the attester must wait for one of two things: either a block has been proposed for this slot, or $1/3$ of the slot (i.e., $4$ seconds) has elapsed. This is ensured by the function \texttt{waitForBlockOrOneThird}.

\begin{algorithm}[th]
\caption{broadcast block}
\label{algo:prepareBlock}
\begin{algorithmic}[1]
\setlength{\lineskip}{3pt}
\Procedure{prepareBlock}{ } 
    \State $headBlock_p \gets$ 
    \hyperref[algo:GHOST]{\texttt{getHeadBlock()}} 
    \State $randaoReveal \gets $ \texttt{sign}( \texttt{epochOf($slot$)})
    \State \textbf{broadcast} $\langle PROPOSE, ( slot, \texttt{hash}(headBlock_p), randaoReveal_p,$ $ content) \rangle$
\EndProcedure
\end{algorithmic}
\end{algorithm}

\begin{algorithm}[th]
\caption{Broadcast Attestation}
\label{algo:prepareAttestation}
\begin{algorithmic}[1]
\setlength{\lineskip}{3pt}
\Procedure{prepareAttestation}{ } 
    \State \texttt{waitForBlockOrOneThird()}  \Comment{wait for a new block in this slot or $\frac{1}{3}$ of the slot
    }
    \State $headBlock_p \gets$ \hyperref[algo:GHOST]{\texttt{getHeadBlock()}}
    \State $currentCheckpoint_p \gets $ (first block of the epoch, \texttt{epochOf}($slot$) )
    \State $CheckpointVote_p \gets \big(lastJustifiedCheckpoint_p, currentCheckpoint_p \big) $
    \State \textbf{broadcast} $\langle ATTEST, (slot_p, \underbrace{\texttt{hash}(headBlock_p)}_\textrm{block vote}, \underbrace{CheckpointVote_p}_\textrm{checkpoint vote}) \rangle$
\EndProcedure
\end{algorithmic}
\end{algorithm}

The synchronization of the validator $p$ is handled by the function \texttt{sync} described in \autoref{algo:sync}. This algorithm allows the validator to update its view of the blockchain, particularly the current slot, the list of attestations, the last justified checkpoint, the validator's role, and the balances of all validators. To determine its role(s), the validator verifies the index of the designated validator for the current slot and the set of indexes forming the committee of the current slot.

In more detail, two conditions assign a role to a validator for the current slot. The first condition calls \autoref{algo:getProposerIndex} and assigns the validator $p$ the role of proposer if its index matches that of the current proposer. The second condition checks whether $p$ belongs to the committee of the current slot (see \autoref{algo:computeCommittee}). The roles of proposer and attester are entirely distinct, i.e., the proposer of a slot is not necessarily an attester.

The synchronization function also starts two other routines, \texttt{syncBlock} and \texttt{syncAttestation}, corresponding to \autoref{algo:syncBlock} and \autoref{algo:syncAttestation}, respectively. These routines are used to handle the broadcasts from proposers and attesters. In both functions, upon receiving a block or an attestation, the validator $p$ verifies its validity using the \texttt{isValid} function. It is important to note that upon receiving a block, a validator can update the last justified checkpoint only if the current epoch has not started more than $8$ slots ago. This particular condition is what the patch has introduced to prevent a liveness attack (see \autoref{subsec:probabilisticLiveness}).

\begin{algorithm}[th]
\caption{Sync}
\label{algo:sync}
\begin{algorithmic}[1]
\setlength{\lineskip}{3pt}
\Procedure{sync}{$tree, slot, attestation,$ $role,$ $lastJustifiedCheckpoint,$} 
    \State \textbf{start} \hyperref[algo:syncBlock]{\texttt{syncBlock($slot, tree$)}}
    \State \textbf{start} \hyperref[algo:syncAttestation]{\texttt{syncAttestation($attestation$)}}
    \Repeat
        \State $previousSlot \gets slot$
        \State $slot \gets \lfloor$ time in seconds since genesis block / 12 $\rfloor$
        \If{$previousSlot \neq slot$} \Comment{If we start a new slot}
            \State \textit{roleSlotDone} $ \gets $ false
            \If{$validatorIndex_p$ = \hyperref[algo:getProposerIndex]{\texttt{getProposerIndex}}(\hyperref[algo:getSeed]{\texttt{getSeed}}(current epoch), $slot$) }
            \State \texttt{append} ROLE\_PROPOSER to $role_p$
            \EndIf
            \If{$validatorIndex_p \in$ \hyperref[algo:computeCommittee]{\texttt{computeCommittee}}(\hyperref[algo:getSeed]{\texttt{getSeed}}(current epoch), $slot$)}
            \State \texttt{append} ROLE\_ATTESTER to $role_p$
            \EndIf
        \EndIf
        \If{$slot \pmod{32} = 0$} \Comment{First slot of an epoch}
            \State \hyperref[algo:Casper]{\texttt{jutificationFinalization}($tree,$ $lastJustifiedCheckpoint$)}
        \EndIf
    \Until{{validator exit}}
\EndProcedure
\end{algorithmic}
\end{algorithm}

\begin{algorithm}[th]
\caption{Sync Block}
\label{algo:syncBlock}
\begin{algorithmic}[1]
\setlength{\lineskip}{3pt}
\Procedure{syncBlock}{$slot, tree$}
    \Upon{$\langle PROPOSE, ( slot_i, \texttt{hash}(headBlock_i) , randaoReveal_i,$ $ content_i) \rangle$}{validator $i$}
        \State $block \gets \langle PROPOSE, ( slot_i, \texttt{hash}(headBlock_i) , $ $ randaoReveal_i,$ $ content_i) \rangle$
        \If{\texttt{isValid}($block$)} 
        \If{$slot \pmod{32} \leq 8$ }
            \State update justified checkpoint if necessary
        \EndIf
        \EndIf
    \EndUpon
\EndProcedure
\end{algorithmic}
\end{algorithm}

\begin{algorithm}[th]
\caption{Sync Attestation}
\label{algo:syncAttestation}
\begin{algorithmic}[1]
\setlength{\lineskip}{3pt}
\Procedure{syncAttestation}{$attestation$} 
    \Upon{$\langle ATTEST, (slot_i, headBlock_i, checkpointEdge_i) \rangle$}{validator $i$}
        \State $attestation_i \gets \langle ATTEST, (slot_i, headBlock_i, checkpointEdge_i) \rangle$
        \If{\texttt{isValid}($attestation_i$)}
            \State $attestation[i] \gets attestation_i$ 
        \EndIf
    \EndUpon
\EndProcedure
\end{algorithmic}
\end{algorithm}

\begin{algorithm}[th]
\caption{Get Head Block}
\label{algo:GHOST}
\begin{algorithmic}[1]
\setlength{\lineskip}{3pt}
\Procedure{getHeadBlock}{ }
\State $block \gets$ block of the justified checkpoint with the highest epoch
\While{$block$ has at least one child}
\State $block \gets \underset{b'\text{ child of }block}{\argmax}$ \hyperref[algo:weight]{\texttt{weight}}($tree, Attestation, b'$)  
\State (ties are broken by hash of the block header)
\EndWhile
\State \Return $block$
\EndProcedure
\end{algorithmic}
\end{algorithm}

\begin{algorithm}[th]
\caption{Weight}
\label{algo:weight}
\begin{algorithmic}[1]
\setlength{\lineskip}{3pt}
\Procedure{weight}{$tree, Attestation,block$}
\State $w \gets 0$
\For{every validator $v_i$}
\If{$\exists a \in \ Attestation$ an attestation of $v_i$ for $block$ or a descendant of $block$}
\State $w \gets w + $ stake of $v_i$
\EndIf 
\EndFor
\State \Return $w$
\EndProcedure
\end{algorithmic}
\end{algorithm}

\autoref{algo:Casper} can be considered the most intricate. This algorithm is responsible for justifying or finalizing the checkpoints at the end of each epoch. To do so, it counts the number of checkpoint votes with the same source and target. If this number corresponds to more than 2/3 of the stake of all validators, then the target is considered justified for the validator running this algorithm. The last four conditions concern finalization. They verify among the last four checkpoints which one fulfills the conditions to become finalized. The conditions to become finalized are formally described in \autoref{subsec:overview} and can be summarized as follows: the checkpoint must be the source of a supermajority link, and all the checkpoints between the source and target, inclusive, must be justified.

\begin{algorithm}[th]
\caption{Justification and Finalization}
\label{algo:Casper}
\begin{algorithmic}[1]
\setlength{\lineskip}{3pt}
\Procedure{jutificationFinalization}{$tree,$ $ lastJustifiedCheckpoint$}
\State $source \gets lastJustifiedCheckpoint$ 
\State $target \gets $ the current checkpoint
\State $nbCheckpointVote \gets $ \texttt{countMatchingCheckpointVote}($source,$ $target$)
\\
\hskip\algorithmicindent {\color{gray} $\triangleright$ \emph{justification process}:}
\If{ $nbCheckpointVote \geq \frac{2}{3} *$ total balance of validators }
\State $target.justified \gets $ \texttt{true}
\State $lastJustifiedCheckpoint \gets target$
\EndIf
\\
\hskip\algorithmicindent {\color{gray} $\triangleright$ \emph{finalization process}:}
\State $A,B,C,D \gets $ the last 4 checkpoints \Comment{With $D$ being the current checkpoint.}
\If{$A.justified$ $\land$ $B.justified$ $\land$ ($A \xrightarrow{\texttt{J}} C$)}
\State $A.finalized \gets $ \texttt{true} \Comment{Finalization of $A$}
\EndIf 
\If{$B.justified$ $\land$ ($B \xrightarrow{\texttt{J}} C$)}
\State $B.finalized \gets $ \texttt{true} \Comment{Finalization of $B$}
\EndIf 
\If{$B.justified$ $\land$ $C.justified$ $\land$ ($B \xrightarrow{\texttt{J}} D$)}
\State $B.finalized \gets $ \texttt{true} \Comment{Finalization of $B$}
\EndIf 
\If{$C.justified$ $\land$ ($C \xrightarrow{\texttt{J}} D$)} 
\State  $C.finalized \gets $ \texttt{true} \Comment{Finalization of $C$}
\EndIf 
\EndProcedure
\end{algorithmic}
\end{algorithm}


\begin{algorithm}
\caption{Get randao mix}
\label{algo:getRandaoMix}
\begin{algorithmic}[1]
\setlength{\lineskip}{3pt}
\Procedure{getRandaoMix}{$epoch$}
\State $mix \gets 0$
\State $headBlock \gets $ \hyperref[algo:GHOST]{\texttt{getHeadBlock}}()
\For{\textbf{each} $block$ parent of $headBlock$ and belonging to $epoch$}
\State $mix \gets mix \oplus \texttt{hash}(block.$randaoReveal)
\Comment{$\oplus$ is a bit-wise \texttt{XOR} operator}
\EndFor
\State \Return $mix$
\EndProcedure
\end{algorithmic}
\end{algorithm}

\begin{algorithm}
\caption{Get seed}
\label{algo:getSeed}
\begin{algorithmic}[1]
\setlength{\lineskip}{3pt}
\Procedure{getSeed}{$epoch$}
\State $mix \gets$ \hyperref[algo:getRandaoMix]{\texttt{getRandaoMix}}($epoch-2$) \Comment{The seed of an epoch $i$ is based on the randao mix of epoch $i-2$ }
\State \Return \texttt{hash}($epoch + mix$)
\EndProcedure
\end{algorithmic}
\end{algorithm}

The pseudo-randomness requires a different seed for each epoch to yield different results. This is ensured by hashing the RANDAO mix and the epoch number, as shown in \autoref{algo:getSeed}. Adding the epoch number is helpful if no block is proposed during an entire epoch. This corner case would always result in the same seed if it were not for the epoch number.

The RANDAO mix is computed in \autoref{algo:getRandaoMix}. The computation of the RANDAO mix for a given epoch consists of \textit{XORing} all the \textit{randaoReveal} values of the blocks in that particular epoch. We consider only the blocks of that particular epoch that belong to the canonical chain.

The RANDAO mix of epoch $e-2$ determines the role of validators in epoch $e$. Hence, with \autoref{algo:computeCommittee}, as soon as epoch $e-2$ is over, validators can know to which committee they belong at epoch $e$. \texttt{computeCommittee} (\autoref{algo:computeCommittee}) is the function that, given a seed and an epoch, returns the list of validator indices corresponding to the committee for the specified slot. The number of validators in each committee\footnote{In the actual implementation, committees have a maximum size of 2048 \cite{github_specs}.} is computed to be less than $N/32$ (with $n$ being the total number of validators). Then, using the shuffled index computed with \autoref{algo:computeShuffledIndex}, a committee of the given size is drawn according to the slot in question. All committee validators will have to perform the role of attester during this slot.

Since the balance can change until the previous epoch, block proposers are known at the end of epoch $e-1$ for epoch $e$. \autoref{algo:getProposerIndex} handles the selection of a proposer for a designated slot. It starts by creating a seed specifically for the slot in question. Then, a loop starts with a pseudo-random selection of the validator's index. The loop stops only when a validator meets the condition criteria. This condition is equivalent to being selected with a probability depending on the balance. Thus, the validator with index \textit{proposerIndex} is selected with probability $\frac{\textit{effectiveBalance}}{32}$, with \textit{effectiveBalance} being the stake of \textit{proposerIndex} capped at $32$, i.e., $\min(\textit{balance}, 32)$.

Both Algorithm \ref{algo:getProposerIndex} and \ref{algo:computeCommittee} use \autoref{algo:computeShuffledIndex} to imbue randomness in the proposer and committee selection. As mentioned in \autoref{sec:EthPoSprotocol}, \autoref{algo:computeShuffledIndex} stems from the algorithm \textit{swap-or-not} \cite{hoang_enciphering_2012}. Its name helps us understand the principle behind the algorithm: select a validator and its opposite (based on a pivot) and swap them or not. The selection of the validator and the swap depend on the value of a hash. An essential aspect of this algorithm is that it can get the index of validators in the shuffled list without having to compute the shuffling of the whole list of validators. This reduces unnecessary computation.

\begin{algorithm}
\caption{Compute shuffled index}
\label{algo:computeShuffledIndex}
\begin{algorithmic}[1]
\setlength{\lineskip}{3pt}
\Procedure{computeShuffledIndex}{$index, seed, nbValidators$}
\For{$i = 0 $ \textbf{to} $90$}
\State $pivot \gets $ \texttt{hash}($seed + i$) $(mod\;$ \textit{nbValidators}$)$
\State $ flip \gets pivot + nbValidators - index \; (mod\;$ \textit{nbValidators}$)$
\State $position \gets \max(index,\; flip)$
\State $bit \gets $ \texttt{hash}($seed+i+position$)$\pmod{2}$
\If{$bit = 0 \; (mod\;$ \textit{nbValidators}$)$}
\State $index \gets flip$
\EndIf
\EndFor
\State \Return $index$
\EndProcedure
\end{algorithmic}
\end{algorithm}

\begin{algorithm}
\caption{Get proposer index}
\label{algo:getProposerIndex}
\begin{algorithmic}[1]
\setlength{\lineskip}{3pt}
\Procedure{getProposerIndex}{$seed$, $slot$}
\State MAX\_RANDOM\_BYTE $\gets 2^8-1$
\State $i \gets 0$
\State $proposerSeed \gets $ \texttt{hash}($seed$+$slot$)
\State $nbValidators \gets $ \texttt{length}($listValidator$)
\While{true}
\State $proposerIndex \gets  listValidator$\hyperref[algo:computeShuffledIndex]{[\texttt{computeShuffledIndex}($i, seed, nbValidators$)]}
\State $randomByte \gets$ first byte of \texttt{hash}($proposerSeed + i \pmod{nbValidators}$) 
\State \textit{effectiveBalance} $\gets listValidators[proposerIndex]$.effectiveBalance
\If{\textit{effectiveBalance} $*$ MAX\_RANDOM\_BYTE $\geq $ MAX\_EFFECTIVE\_BALANCE $* randomByte$  }
\State \Return $proposerIndex$
\EndIf
\State $i \gets i + 1$
\EndWhile
\EndProcedure
\end{algorithmic}
\end{algorithm}

\begin{algorithm}
\caption{Compute Committee}
\label{algo:computeCommittee}
\begin{algorithmic}[1]
\setlength{\lineskip}{3pt}
\Procedure{computeCommittee}{$seed, slot$}
\State $committee \gets [\ ]$
\State \textit{nbValidatorByCommittee} $\gets \lceil \texttt{lenght}(listValidator) / 32 \rceil$
\For{$i = (slot \pmod{32}) *$\textit{nbValidatorByCommittee}  \textbf{to} $(slot+1 \pmod{32}) *$\textit{nbValidatorByCommittee} $-1$ }
\State $committee$.\texttt{append}($listValidator$[\hyperref[algo:computeShuffledIndex]{\texttt{computeShuffledIndex}($i, seed, nbValidators$)}])
\EndFor
\State \Return $committee$
\EndProcedure
\end{algorithmic}
\end{algorithm}

\section{Robustness Analysis}\label{sec:correctnessAnalysis}
We now have formalized the protocol and the blockchain properties necessary for our analysis. We will start by analyzing if the protocol is safe, and then if is live.

\subsection{Safety}\label{subsec:safety}

In order to prove the safety of the protocol, we begin by presenting lemmas concerning the justification of checkpoints. The first lemma rules out the possibility of two different justified checkpoints having the same epoch. New validators that want to join the set of validators must send the amount they wish to stake to a specific smart contract\footnote{Currently, 32 ETH is needed to become a validator.}. This transaction triggers the process for a validator to join the set of validators. The last step required for the activation of a validator (allowing it to send attestations and propose blocks) requires that the block adding the validator to the validator set gets finalized\footnote{The exact process involves placing the validator in the activation queue to be finalized. See more at \href{https://github.com/ethereum/consensus-specs/blob/80ba16283c9447db8aa04eeaf4a3940b56480758/specs/phase0/beacon-chain.md}{https://github.com/ethereum/consensus-specs/blob/dev/specs/phase0/beacon-chain.md}.}.
This means that the set of validators is fixed between two finalized checkpoints.

\begin{lemma} \label{firstLemma}
If checkpoints $C$ and $C'$ of the same epoch $e$ are justified, it must necessarily be that $C = C'$.
\end{lemma}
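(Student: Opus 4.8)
The plan is to argue by contradiction using a standard quorum-intersection argument, exploiting the fact that an honest validator casts a single attestation---hence a single checkpoint vote with a target in epoch $e$---per epoch. Suppose, for contradiction, that $C=(a,e)$ and $C'=(a',e)$ are both justified with $C \neq C'$ (so necessarily $a \neq a'$, since the two checkpoints share the epoch $e$). First I would unfold the definition of justification: since $C$ is justified, there is a supermajority link whose target is $C$, i.e.\ a set $S$ of validators controlling more than $2/3$ of the total stake, each of which cast an attestation whose checkpoint vote has target $C$. Likewise there is a set $S'$ controlling more than $2/3$ of the stake, each casting a checkpoint vote with target $C'$.

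Next I would intersect the two quorums. Because each of $S$ and $S'$ holds strictly more than $2/3$ of the stake, the stake held by $S \cap S'$ exceeds $\tfrac{2}{3} + \tfrac{2}{3} - 1 = \tfrac{1}{3}$ of the total. Since the Byzantine validators control only a proportion $\beta < 1/3$ of the stake, the intersection $S \cap S'$ must contain at least one honest validator $v$.

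Finally I would derive the contradiction from $v$'s behaviour. Being honest, $v$ follows the protocol and therefore emits exactly one attestation during epoch $e$ (cf.\ \autoref{algo:prepareAttestation}), carrying exactly one checkpoint vote whose target is the checkpoint of the current epoch $e$. Belonging to both $S$ and $S'$ would force $v$ to have cast a checkpoint vote with target $C$ and another with target $C'$, two distinct epoch-$e$ targets---impossible for a single honest attestation. Hence $C = C'$. The main obstacle, and the point deserving the most care, is the faithful translation of ``justified'' into the existence of a supermajority quorum on the \emph{target} of the link, together with the claim that honesty pins down a unique checkpoint-vote target per epoch; once these are in place the counting is routine. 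One should also note explicitly that the argument uses only $\beta < 1/3$, matching the protocol's fault-tolerance assumption.
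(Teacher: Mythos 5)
Your proof is correct and takes essentially the same route as the paper: a quorum-intersection argument, by contradiction, ending in an honest validator who would have to cast two distinct epoch-$e$ checkpoint-vote targets, contradicting the rule that an honest validator sends exactly one attestation per epoch. The only cosmetic difference is the counting: you intersect the full stake quorums ($>2/3$ each, so overlap $>1/3>\beta$, hence an honest member), while the paper first restricts to the honest members of each quorum and computes $(2n/3-f)+(2n/3-f)-(n-f)=n/3-f>0$; these are the same calculation.
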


\begin{proof}
By hypothesis, we know that Byzantine validators are at most $f < n/3$. For the sake of contradiction, let us assume that $C$ and $C'$ are different checkpoints. Let $V$ be the set of at least $2n/3 - f$ honest validators that cast a checkpoint vote for checkpoint $C$ in epoch $e$, and $V'$ be the set of at least $2n/3 - f$ honest validators that cast a checkpoint vote for checkpoint $C'$ in epoch $e$. The intersection of the two sets of honest validators is $|V \cap V'| \geq (2n/3 - f) + (2n/3 - f) - (n - f) = (n/3 - f) > 0$. $|V \cap V'| > 0$ implies that at least one honest validator voted for both checkpoint $C$ and checkpoint $C'$ in epoch $e$. This is a contradiction since, according to the protocol specification\footnote{This is specified in the specs \url{https://github.com/ethereum/consensus-specs/blob/dev/specs/phase0/validator.md\#attester-slashing}, and implemented in the actual client Prysm \url{https://github.com/prysmaticlabs/prysm/blob/0fd52539153e32cfbd0a27ee51f253f8f6bb71c4/validator/client/attest.go\#L140}. This corresponds to the only attestation done by an honest validator during an epoch, see \autoref{algo:sync}.}, an honest process signs at most one unique block per epoch. Therefore, $C = C'$. This proves there cannot be more than one justified checkpoint per epoch.
\end{proof}

The following lemma explains why the finalization of a checkpoint necessarily means that a checkpoint cannot be justified on a different chain afterward.

\begin{lemma}\label{secondLemma}
If a checkpoint $C$ of epoch $e$ is finalized on chain $c$, and a checkpoint $C'$ of epoch $e'$ is justified on chain $c'$ with $e' > e$, it necessarily means that $c$ and $c'$ have a common prefix until epoch $e$. 
\end{lemma}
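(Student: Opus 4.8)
The plan is to argue by contradiction and to reduce the statement to an accountable-safety argument in the style of Casper FFG, where the key slashable offense is a \emph{surround vote}. Suppose $C=(a,e)$ is finalized on $c$ and $C'=(a',e')$ is justified on $c'$ with $e'>e$, yet $c$ and $c'$ do \emph{not} share a common prefix up to epoch $e$ — that is, the epoch-$e$ checkpoint of $c'$ differs from $C$, so $C$ is not an ancestor of $C'$. First I would record what finalization of $C$ buys us: by the finalization conditions of \autoref{subsec:overview}, there is a supermajority link $(a,e)\xrightarrow{\texttt{J}}(d,e+k)$ with $k\in\{1,2\}$, and when $k=2$ the intermediate epoch-$(e+1)$ checkpoint on $c$ is also justified. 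Throughout I will use that a supermajority link is cast by validators holding more than $2/3$ of the stake, so more than $1/3$ of the honest stake votes it, and that honest validators cast a single attestation per epoch whose checkpoint source is a justified ancestor of its target.

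Next I would isolate a minimal counterexample: among all justified checkpoints lying on chains conflicting with $C$ and having epoch strictly greater than $e$, let $C^*=(a^*,e^*)$ be one of least epoch $e^*$ (it exists since $C'$ qualifies). As $C^*$ is justified and not the genesis, it is the target of a supermajority link $(s,e_s)\xrightarrow{\texttt{J}}(a^*,e^*)$ with $e_s<e^*$. The first real step is to prove $e_s<e$. Since more than $1/3$ of the honest stake casts this link and honest sources are justified ancestors of the target, $(s,e_s)$ is a justified checkpoint on $c'$. If $e_s=e$, then \autoref{firstLemma} forces $(s,e_s)=C$, making $C$ an ancestor of $C^*$ and contradicting conflict; if $e<e_s<e^*$, then $(s,e_s)$ is itself a justified conflicting checkpoint of epoch strictly between $e$ and $e^*$, contradicting the minimality of $e^*$. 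Hence $e_s<e$.

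I would then close the argument by comparing the two links $(s,e_s)\to(a^*,e^*)$ and $(a,e)\to(d,e+k)$. To obtain a genuine surround I must rule out $e<e^*\le e+k$: if $e^*=e+1$ the epoch-$(e+1)$ checkpoint on $c$ is justified (it is $(d,e+1)$ when $k=1$, or the intermediate checkpoint when $k=2$), so \autoref{firstLemma} identifies it with $C^*$ and contradicts conflict; and if $e^*=e+2$ (only when $k=2$) the same argument against $(d,e+2)$ applies. Therefore $e^*>e+k$, and together with $e_s<e$ this means $(s,e_s)\to(a^*,e^*)$ surrounds the finalizing link. Every validator casting \emph{both} links commits a surround violation, which no honest validator does; since each link is supported by more than $2/3$ of the stake, their supports intersect in more than $1/3$ of the stake, all of it Byzantine, contradicting $\beta<1/3$. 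This contradiction yields the common prefix up to epoch $e$.

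The hard part will be the step establishing $e_s<e$ and the boundary analysis $e^*\in\{e+1,e+2\}$: both hinge on using \autoref{firstLemma} correctly together with the minimality of the counterexample, and the $e^*=e+2$ case is precisely where the ``intermediate checkpoint justified'' clause of $2$-finality is indispensable — without it one could not invoke \autoref{firstLemma} at epoch $e+1$. Care is also needed to justify that honest checkpoint votes use a justified source that is an ancestor of the target, which is what lets the minimal-counterexample bookkeeping go through.
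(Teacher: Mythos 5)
Your proof is correct, but it takes a genuinely different route from the paper's. The paper argues directly about $C'$: it intersects the set $V$ of honest validators who cast the finalizing link with source $C$ against a set $V'$ of honest validators who, it asserts, must have justified a checkpoint on $c'$ using a source of epoch at most $e-1$, and it derives the contradiction from the fork choice rule (\autoref{algo:GHOST}): an honest validator that has seen $C$ justified at epoch $e$ never again casts a vote whose source has a lower epoch. Your argument instead extracts a minimal-epoch conflicting justified checkpoint $C^*$, uses \autoref{firstLemma} together with the intermediate-justified clause of $2$-finality to pin its justifying link $(s,e_s)\xrightarrow{\texttt{J}}(a^*,e^*)$ down to $e_s<e$ and $e^*>e+k$, and closes with the Casper-style surround argument on the more-than-$1/3$ stake common to both supermajorities. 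Two remarks on the comparison. First, your minimality step is a genuine gain in rigor: the paper's assertion that the link justifying $C'$ must have source epoch at most $e-1$ is precisely where a regress lurks — a priori the source could itself be a conflicting justified checkpoint of epoch greater than $e$ — and the paper glosses over this, whereas your descent on the least epoch $e^*$ disposes of it cleanly (your case $e<e_s<e^*$ is exactly the missing bookkeeping). Second, your final appeal to ``no honest validator casts a surround vote'' should not be treated as an axiom, since the paper never postulates slashing conditions in this chapter; it has to be derived from \autoref{algo:prepareAttestation}, where an honest source is always the justified checkpoint of highest epoch in a monotonically growing view — which is the very same fact the paper's last step invokes via the fork choice rule. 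So both proofs bottom out on the same honest-monotonicity property; yours buys rigor on the regress and, as a bonus, accountability (the intersecting stake is provably misbehaving, not merely non-honest), at the cost of a somewhat longer case analysis around $e^*\in\{e+1,e+2\}$.
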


\begin{proof}
$C'$ being justified on chain $c'$ means that at least $2n/3 - f$ honest validators must have cast a checkpoint vote with $C'$ as the checkpoint target for epoch $e'$.

For the sake of contradiction, let us say that $c$ and $c'$ have a common prefix until epoch $e-1$ at most. For a checkpoint to be justified on chain $c'$ at an epoch strictly superior to $e$, it implies that a set $V'$ of at least $2n/3 - f$ honest validators must have cast a checkpoint vote with a checkpoint target on chain $c'$ and a checkpoint source with an epoch less than $e-1$.

Checkpoint $C$ of epoch $e$ being finalized on chain $c$, we have two possibilities. Either the checkpoint at epoch $e+1$ on chain $c$ has been justified with checkpoint $C$ as the source, or the checkpoint at epoch $e+2$ on chain $c$ has been justified with checkpoint $C$ as the source, and the checkpoint at epoch $e+1$ is justified. Either way, a justification occurred on chain $c$ with checkpoint $C$ as the source, and no justification occurred on a different chain before its finalization.

Hence, we know that a set $V$ of at least $2n/3 - f$ honest validators have cast a checkpoint vote with $C$ as the checkpoint source before a justification on any other chain.

Seeing that $|V \cap V'| > 0$, at least one honest validator has cast a checkpoint vote with $C$ as the checkpoint source and then a checkpoint vote with a checkpoint source of at most epoch $e-1$ and a target on chain $c'$.

Therefore, at least one honest validator has cast a checkpoint vote with a checkpoint source from an epoch less than $e-1$ after seeing checkpoint $C$ at epoch $e$ justified. However, the fork choice rule of the protocol (cf. \autoref{algo:GHOST}) requires honest validators to vote on the chain with the highest justified checkpoint. This contradiction proves the lemma.
\end{proof}

We saw with \autoref{firstLemma} that two checkpoints of the same epoch could not be justified, hence finalized. We then showed with \autoref{secondLemma} that after a finalization on one chain, no checkpoints could become justified on any other chain. These are the conditions required to have safety, as we prove now.

\begin{theorem}[Safety]
There cannot be two finalized checkpoints on different chains in Ethereum PoS.
\end{theorem}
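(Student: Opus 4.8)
The plan is to argue by contradiction, assuming two finalized checkpoints $C$ and $C'$ sit on conflicting branches, and then to split into two cases according to whether they belong to the same epoch or to different epochs. The two preceding lemmas are tailored exactly to close these two cases, so the proof should reduce to careful bookkeeping on epochs together with a clean appeal to each lemma in turn.

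First I would record the observation that finalization implies justification: by the finalization conditions described in \autoref{subsec:overview}, a checkpoint can only be finalized if it is itself justified (it must be the source of a supermajority link between justified checkpoints, and the source must itself be justified). Hence both $C$, say of epoch $e$ on chain $c$, and $C'$, say of epoch $e'$ on chain $c'$, are justified on their respective chains. Making this explicit matters, since it is what licenses the use of \autoref{secondLemma}, whose hypothesis only asks for the second checkpoint to be justified.

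In the first case $e = e'$. Then $C$ and $C'$ are two justified checkpoints of the same epoch, so \autoref{firstLemma} forces $C = C'$, contradicting the assumption that they lie on distinct chains. In the second case, without loss of generality $e < e'$. Now $C$ is finalized on $c$ at epoch $e$ while $C'$ is justified on $c'$ at the strictly later epoch $e'$, so \autoref{secondLemma} applies and yields that $c$ and $c'$ share a common prefix up to epoch $e$. Since \autoref{firstLemma} makes the epoch-$e$ justified checkpoint unique, this common epoch-$e$ checkpoint is exactly $C$; therefore $C$ lies on the prefix of $c'$ and $C'$ is a descendant of $C$. This places $C$ and $C'$ on the same chain, again contradicting the hypothesis. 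Both cases being impossible, no two finalized checkpoints can lie on different chains.

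The conceptual difficulty has already been front-loaded into the two lemmas: \autoref{firstLemma} rules out a same-epoch split through the quorum-intersection counting argument, and \autoref{secondLemma} rules out a later-epoch justification drifting off the finalized branch via the fork-choice constraint. Consequently the only genuine obstacles at this stage are getting the epoch case distinction right, stating the finalized-implies-justified step so that \autoref{secondLemma} is legitimately applicable, and correctly reading the direction of the ``common prefix'' conclusion so that it indeed collapses the two checkpoints onto a single chain.
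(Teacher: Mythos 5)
Your proof is correct and follows essentially the same route as the paper's: \autoref{firstLemma} disposes of the same-epoch case, and \autoref{secondLemma} (applied after noting that finalization implies justification) disposes of the case $e < e'$. Your version is in fact slightly more careful than the paper's, since you explicitly unpack the ``common prefix'' conclusion of \autoref{secondLemma} to place $C'$ as a descendant of $C$, where the paper simply asserts that $C'$ cannot be justified on a different chain.
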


\begin{proof}
Thanks to \autoref{firstLemma}, we know that two different checkpoints $C$ and $C'$ of the same epoch cannot be justified, hence finalized. 

For the sake of contradiction, let us assume that two checkpoints $C$ and $C'$ are finalized on different chains $c$ and $c'$ at epochs $e$ and $e'$, respectively. We assume without loss of generality that $e < e'$. 
$C$ being finalized, we know thanks to \autoref{secondLemma} that $C'$ cannot be justified on a different chain $c'$, let alone be finalized.
\end{proof}

The blockchain preserves the property of safety at all times. The Ethereum PoS is safe.

\subsection{Probabilistic Liveness}\label{subsec:probabilisticLiveness}

Spoiler alert: the protocol is not guaranteed to be live; rather, it is probabilistically live. This means that as time goes on, the probability of it being live approaches 1. However, the probability that it is not live is not zero, although it remains very small. To prove this point, we will explain an attack that targets the protocol's liveness.

In order to explain this attack which is by no means simple we start by describing a simpler liveness attack called the \emph{bouncing attack}. This attack delays finality in a partially synchronous network after \texttt{GST}. Previous works also exhibit liveness attacks against the protocol using the intertwining of the fork choice rule and the finality gadget \cite{nakamura_analysis_2019, neu_ebb_2021}. To prevent this attack, the protocol now contains a "patch" \cite{pullRequest_bouncing_2022} suggested on the Ethereum research forum \cite{nakamura_prevention_2019}. We show that the implemented patch is insufficient, and this attack is still possible if certain conditions are met. This is a probabilistic liveness attack against the Ethereum Proof-of-Stake protocol. Our attack can happen with less than 1/3 of Byzantine validators, as discussed in \autoref{subsubsec:probabilisticBouncingAttack}. We also consider the adversary to be static because Byzantine validators are chosen before the computation.

\subsubsection{Bouncing Attack}
\label{subsubsec:bouncingAttack}

\begin{figure}
    \centering
    \begin{tikzpicture}[x=0.75pt,y=0.75pt,yscale=-1,xscale=1]

\draw  [draw opacity=0][fill={rgb, 255:red, 126; green, 211; blue, 33 }  ,fill opacity=1 ] (221.29,15.74) -- (221.29,29) -- (217.29,29) -- (217.29,15.74) -- cycle ;
\draw  [fill={rgb, 255:red, 132; green, 255; blue, 0 }  ,fill opacity=1 ] (52.64,96.08) -- (46.53,106.67) -- (34.31,106.67) -- (28.2,96.08) -- (34.31,85.5) -- (46.53,85.5) -- cycle ;
\draw   (50.05,96.27) -- (45.14,104.77) -- (35.32,104.77) -- (30.41,96.27) -- (35.32,87.76) -- (45.14,87.76) -- cycle ;
\draw [color={rgb, 255:red, 251; green, 193; blue, 96 }  ,draw opacity=1 ][line width=1.5]  [dash pattern={on 1.69pt off 2.76pt}]  (220,108) .. controls (220.47,103.35) and (220.07,107.78) .. (220.01,91.94) ;
\draw [shift={(220,88)}, rotate = 90] [fill={rgb, 255:red, 251; green, 193; blue, 96 }  ,fill opacity=1 ][line width=0.08]  [draw opacity=0] (10.92,-2.73) -- (0,0) -- (10.92,2.73) -- cycle    ;
\draw   (52.64,56.8) -- (46.53,67.38) -- (34.31,67.38) -- (28.2,56.8) -- (34.31,46.22) -- (46.53,46.22) -- cycle ;
\draw   (50.19,56.68) -- (45.28,65.19) -- (35.46,65.19) -- (30.56,56.68) -- (35.46,48.18) -- (45.28,48.18) -- cycle ;
\draw   (52.6,20.3) -- (46.49,30.88) -- (34.26,30.88) -- (28.15,20.3) -- (34.26,9.72) -- (46.49,9.72) -- cycle ;
\draw  [draw opacity=0][fill={rgb, 255:red, 208; green, 2; blue, 27 }  ,fill opacity=1 ] (222.92,52.24) -- (222.92,61.36) -- (218.92,61.36) -- (218.92,52.24) -- cycle ;
\draw  [draw opacity=0][fill={rgb, 255:red, 117; green, 214; blue, 11 }  ,fill opacity=1 ] (222.92,61.36) -- (221.98,61.36) -- (218.92,56.57) -- (218.92,54.25) -- (222.92,60.51) -- (222.92,61.36) -- cycle ;
\draw  [draw opacity=0][fill={rgb, 255:red, 117; green, 214; blue, 11 }  ,fill opacity=1 ] (218.92,61.36) -- (218.92,59.03) -- (220.41,61.36) -- (218.92,61.36) -- cycle ;
\draw  [draw opacity=0][fill={rgb, 255:red, 117; green, 214; blue, 11 }  ,fill opacity=1 ] (222.92,58.51) -- (218.92,52.24) -- (220.41,52.24) -- (222.92,56.19) -- (222.92,58.51) -- cycle ;
\draw  [draw opacity=0][fill={rgb, 255:red, 117; green, 214; blue, 11 }  ,fill opacity=1 ] (222.92,52.24) -- (222.92,54.3) -- (221.61,52.24) -- (222.92,52.24) -- cycle ;

\draw   (21.5,3) -- (455,3) -- (455,121) -- (21.5,121) -- cycle ;

\draw (68.42,20.3) node [anchor=west] [inner sep=0.75pt]   [align=left] {{\footnotesize checkpoint}};
\draw (68.42,56.8) node [anchor=west] [inner sep=0.75pt]   [align=left] {{\footnotesize justified checkpoint}};
\draw (68.42,96.08) node [anchor=west] [inner sep=0.75pt]   [align=left] {{\footnotesize finalizied checkpoint}};
\draw (231.42,20.3) node [anchor=west] [inner sep=0.75pt]  [font=\footnotesize] [align=left] {attestation from honest validateur};
\draw (232.92,56.8) node [anchor=west] [inner sep=0.75pt]  [font=\footnotesize] [align=left] {attestation from Byantine validateur};
\draw (234.92,98.77) node [anchor=west] [inner sep=0.75pt]  [font=\footnotesize] [align=left] {designate which checkpoint is the \\target checkpoint for an attestation};

\end{tikzpicture}
    \caption{This figure serves as a summary of the signification of the main diagrams of other figures.}
    \label{fig:diagramSummary}
\end{figure}
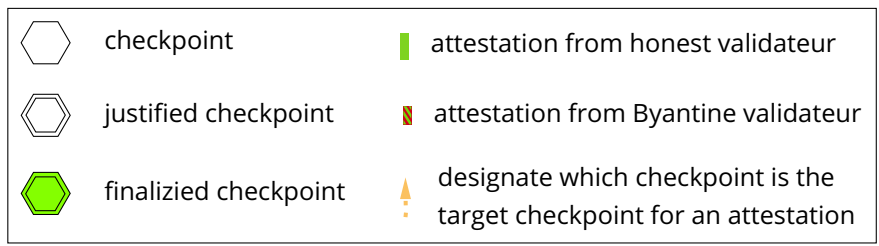

The \emph{Bouncing Attack} \cite{nakamura_analysis_2019} describes a liveness attack where the suffix of the chain changes repetitively between two canonical chains, thus preventing the chain from finalizing any checkpoint. The Bouncing Attack exploits the fact that the canonical chains should start from the justified checkpoint with the highest epoch. It is possible for Byzantine validators to divide honest validators' opinions by justifying a new checkpoint once some honest validators have already cast their vote (made an attestation) during the asynchronous period before \texttt{GST}.

The bouncing attack becomes possible once there is a \emph{justifiable} checkpoint in a different branch from the one designated by the fork choice rule with a higher epoch than the current highest justified checkpoint. A \emph{justifiable checkpoint} is a checkpoint that can become justified only by adding the checkpoint votes of Byzantine validators. If this setup occurs, the Byzantine validators could make honest validators start voting for a different checkpoint on a different chain, leaving a justifiable checkpoint again for them to repeat their attack and thus making validators \emph{bounce} between two different chains and not finalizing any checkpoint. Hence the name Bouncing attack.

Let us illustrate the attack with a concrete case. In \autoref{fig:simplifiedBouncingAttack}, we show an oversimplified case with only 10 validators, among which 3 are Byzantine. To occur, the attack needs to have a justifiable checkpoint with a higher epoch than the last justified checkpoint. We reach this situation before \texttt{GST}, which is presented in the left part of the figure. After reaching \texttt{GST}, Byzantine validators wait for honest validators to make a new checkpoint justifiable. When a new checkpoint is justifiable, the Byzantine validators cast their votes to justify another checkpoint, as shown in the right part of the figure. This will lead honest validators to vote for the left branch, thus reaching a situation similar to the first step, allowing the bouncing attack to continue. The repetition of this behavior is the bouncing attack. We emphasize this example in more detail in \autoref{fig:complexBouncingAttack} by detailing the sequence of votes allowing a "bounce" to occur and leaving a justifiable checkpoint on the other branch.

\begin{figure}
    \centering
    \input{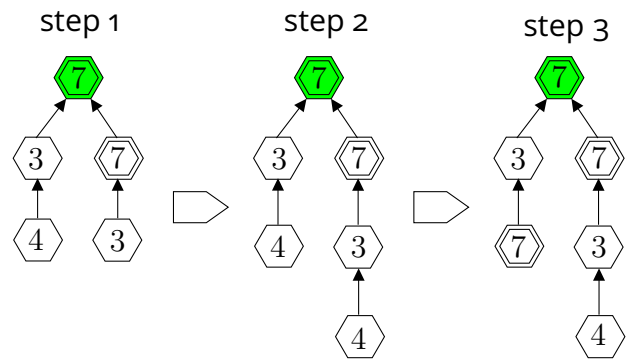}
    \caption{A bouncing attack presented in 3 steps. We have 10 validators, of which 3 are Byzantines. 
    The number inside each hexagon corresponds to the number of validators who made a checkpoint vote with this checkpoint as target.
    \textbf{1st step:} We start in a situation where there is a fork. A checkpoint is justified on one of the chains and a checkpoint of a higher epoch is \textit{justifiable} on the other. We are at the end of the third epoch in which honest validators have divided their vote on each side. \textbf{2nd step:} We have reached \texttt{GST} at the beginning of the fourth epoch and 4 honest validators have already voted (rightfully so). \textbf{3rd step:} Here is the moment Byzantine validators take action and release their checkpoint vote for the concurrent chain, thus justifying the previously forsaken checkpoint and thereby changing the highest justifying checkpoint. By repeating this process, the bouncing attack can continue indefinitely.}
    \label{fig:simplifiedBouncingAttack}
\end{figure}

\begin{figure}
    \centering
    \input{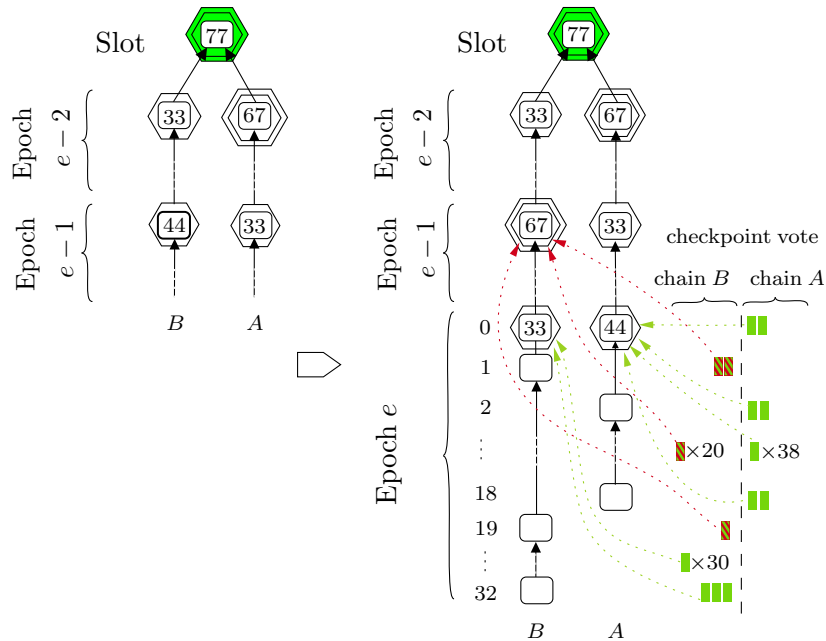}
    \caption{This figure presents a detailed version of the bouncing attack. In this example, we have a total of 100 validators, of which 23 are Byzantines.
    A block in a checkpoint corresponds to the block associated with that checkpoint.
    The number inside each hexagon (hovering a block) corresponds to the number of validators who made a checkpoint vote with this checkpoint as target. We distinguish between two sorts of checkpoint votes, the Byzantine ones, which are bi-color rectangles, 
    and the honest ones, which are uni-color rectangles.
    We compile the 3 steps of \autoref{fig:simplifiedBouncingAttack} in 2 with more information on how justification's turning point is accomplished because of the Byzantine agents. \textbf{First step:} We begin from a situation where epoch $e-1$ just ended and we now reach \texttt{GST}. Notice that the canonical chain is chain $A$ because the checkpoint with the highest epoch is on chain $A$ but not chain $B$. \textbf{Second step:} In this step, the checkpoint vote released during epoch $e$ can change the last justified checkpoint to change the canonical chain for chain $A$ to chain $B$. Byzantine validators released their checkpoint vote from the previous epoch during epoch $e$. They send their last checkpoint vote at slot 23 once the checkpoint of epoch $e$ on chain $A$ has reached 44, thus becoming justifiable (i.e., not yet justified but with enough votes so that Byzantine validators can justify it). This triggers the canonical chain to change from chain $A$ to chain $B$ starting the \emph{bounce}.}
    \label{fig:complexBouncingAttack}
\end{figure}

\subsection{Implemented Patch} \label{subsec:implementedPatch}
The explanation of the patch is described for the first time on the Ethereum research forum \cite{nakamura_prevention_2019}. The solution found to mitigate the bouncing attack is to engrave in the protocol the fact that validators cannot change their minds regarding justified checkpoints after a part of the epoch has passed.

The goal of the proposed solution is to prevent the possibility of justifiable checkpoints being left out by honest validators. To prevent honest validators from leaving a justifiable checkpoint, the patch must stop validators from changing their view of checkpoints before more than 1/3 of validators have cast their checkpoint vote. This condition stems from the fact that we reckon the proportion of Byzantine validators to be at most $1/3 - \epsilon$.
To apply this condition, the patch designates a number of slots after which honest validators cannot change their view of checkpoints. Since validators are scattered equally among the different slots to cast their vote (in attestations) within a specific time frame, stopping validators from changing their view after a certain number of slots is equivalent to stopping them from changing their view after a certain proportion of validators have voted. This does appear to be a solution to prevent Byzantine validators from influencing honest validators into forsaking a checkpoint that is now \emph{justifiable} for them.

To enforce this behavior, called the "fixation of view," the protocol has a constant $j$ called \texttt{SAFE\_SLOTS\_TO\_UPDATE\_JUSTIFIED} in the code (cf. \autoref{algo:syncBlock} in \autoref{subsec:code}). This constant is the number of slots\footnote{At the time of writing this manuscript, $j=8$ \cite{github_specs}.} during which validators can change their view of the justified checkpoints.
The patch introducing this constant $j$ mentions a possible attack called the \emph{splitting attack}. As they point out, the splitting attack relies on a "last minute delivery" strategy whereby releasing a message late enough causes some validators to consider it too late while others do not. This could split the validators into two different chains, unable to reconcile their views before the end of the epoch. After the beginning of the next epoch views can be reconciled during $j$ slots however the split can occur once again by another last minute delivery.
They consider the assumption that attackers can send a message at the right time to split honest validators too strong. In \autoref{subsubsec:probabilisticBouncingAttack}, we present a new attack inspired by the splitting attack with more realistic assumptions.

\subsubsection{Probabilistic Bouncing attack - why the patch is not enough}
\label{subsubsec:probabilisticBouncingAttack}

In this part, we present our novel attack against the protocol of Ethereum Proof-of-Stake. The attack is visually explained in \autoref{fig:probabilisticBouncingAttack3steps}.

\begin{figure}
    \centering
    \resizebox{\linewidth}{!}{
    \input{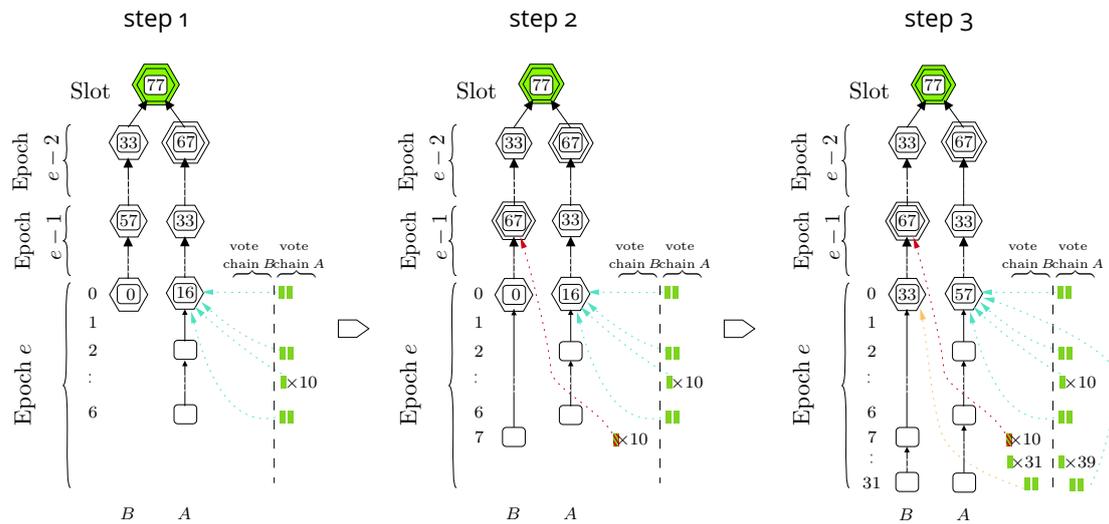}
    }
    \caption{This figure presents the probabilistic bouncing attack. In this example, we consider 100 validators, of which 10 are Byzantine. A block in a checkpoint corresponds to the block associated with that checkpoint. The number inside each hexagon (hovering a block) corresponds to the number of validators who made a checkpoint vote with this checkpoint as the target. The example starts at the slot before the attack in step 1. \texttt{GST} has been reached in epoch $e$ and honest validators have started to vote on chain $A$. This is the correct action because the justified checkpoint with the highest epoch is on chain $A$ (at epoch $e-2$). During the next slot in step 2, before reaching $limitSlot$, a Byzantine validator sends a block with withheld votes for the checkpoint at epoch $e-1$ on chain $B$. It is released just in time for a set of honest validators to consider it and too late for the remaining validators. The honest validators that see the block in time will update their view of the justified checkpoint with the highest epoch and consider chain $B$ as the canonical chain. We now show how the epoch continues with step 3. The block produced by a Byzantine, having been released just in time, causes $(1/3)$ of honest validators to change their view. This results in a situation where Byzantine validators can perform the same attack during the next epoch provided that at least one Byzantine validator is selected to be block proposer on chain $A$ for one of the first 8 slots.}
    \label{fig:probabilisticBouncingAttack3steps}
\end{figure}

\paragraph{Attack Condition.}

Our attack takes place during the synchronous period and uses the power of \textit{equivocation} by Byzantine processes. Equivocation is caused by a Byzantine process that sends a message only to a subset of validators at a given point in time and potentially another message or none to another subset of validators. The effect is that only a part of the validators will receive the message on time. More in detail, the bounded network delay is used by a Byzantine validator to convey a message to be read on a specific slot by some validators and read on the next slot by the other validators. Note that if a protocol is not tolerant to equivocation, then it is not BFT (Byzantine Fault Tolerant), since equivocation is the typical action possible for Byzantine validators.

\paragraph{Attack Description and Analysis}

Let $\beta \le f/n$ be the fraction of Byzantine validators in the system. The attack setup is the following. First, as in the traditional bouncing attack, we start in a situation where the network is still partially synchronous. A fork occurs and results in the highest justified checkpoint being on chain $A$ at epoch $e$, and a justifiable checkpoint at epoch $e+1$ on chain $B$. Assume now that \texttt{GST} is reached. The attack can proceed\footnote{Note that before \texttt{GST}, no algorithm can ensure liveness since communication delays may not be bounded.} as follows:
\begin{enumerate}
    \item Since \texttt{GST} is reached, the network is fully synchronous. Chain $A$ is the canonical chain for all validators.
    \item Just before validators must stop updating their view concerning the justified checkpoint (i.e., before reaching the limit of $j$ slots\footnote{At the time of writing, 8 slots.} in the epoch corresponding to the condition in line 6 of \autoref{algo:syncBlock}), a Byzantine proposer proposes a block (cf. \autoref{algo:prepareBlock}) on chain $B$. This block contains attestations with enough checkpoint votes to justify the justifiable checkpoint left by honest validators. The attestations included in the block are those of Byzantine validators that were not issued in the previous epoch when they were supposed to be. The block must be released just in time, that is, right before the end of slot $j$, so that $(1/3 - \beta)$ of the validators change their view of the canonical chain to be active on chain $B$ while the rest of the honest validators continue on chain $A$. This is possible due to the patch preventing validators from changing their mind after $j$ slots.
    \item Repeat the process.
\end{enumerate}

An important aspect to consider in the attack is the probability of Byzantine validators becoming proposers. This is crucial because, without the role of proposer, validators cannot propose blocks and add new attestations containing checkpoint votes on the concurrent chain.\footnote{Note that Byzantine validators cannot use their role as proposer during the previous epoch to release a block with the right attestations because it might not be the last block of the epoch. Indeed, because some honest validators are on the concurrent chain, they add blocks. 
The checkpoint votes contained in the Byzantine attestations must be on the same chain as the attestation to justify the justifiable checkpoint, making the checkpoint justifiable in the first place.} 
The probability of being selected to be a proposer directly impacts how long the probabilistic bouncing attack can continue. In the following theorem, we establish the probability of a probabilistic bouncing attack lasting for a specific number of epochs.

\begin{theorem}
The probabilistic bouncing attack occurs during $k$ epochs after \texttt{GST} and a favorable setting with probability:
\begin{equation}\label{eq:livenessProba}
    P(\textrm{bouncing }k \textrm{ times})=(1-\alpha^j)^k,
\end{equation}

with $\alpha \in [0,1]$ being the proportion of honest validators and $j$ the number of slots before locking a choice for justification.
\end{theorem}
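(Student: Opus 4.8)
The plan is to reduce the event ``the attack survives epoch $e$'' to a single combinatorial condition on that epoch's proposer assignment, and then to compute the probability of this condition using the stake-proportional, pseudo-random proposer selection of \autoref{algo:getProposerIndex}. The favorable pre-\texttt{GST} setting (a justified checkpoint on one chain and a justifiable checkpoint of higher epoch on the concurrent chain) is assumed throughout, so the analysis concerns only what is needed to chain one bounce to the next.

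First I would establish the reduction. By the attack description, a bounce can be triggered in epoch $e$ only if a Byzantine validator is the designated proposer in one of the first $j$ slots on the chain carrying the justifiable checkpoint: only such a proposer can insert a block bundling the withheld checkpoint votes before the view-fixation deadline enforced in line~6 of \autoref{algo:syncBlock}. Conversely, if all $j$ of those proposers are honest, then no block justifying the concurrent checkpoint is broadcast before the deadline; honest validators lock their view, the justifiable checkpoint can no longer flip the canonical chain, and finalization resumes, so the attack dies rather than continuing. Hence, conditioned on the favorable setting, the attack survives epoch $e$ \emph{if and only if} at least one of the first $j$ proposers is Byzantine.

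Next I would compute the per-epoch survival probability. Invoking the pseudo-randomness of RANDAO and \texttt{compute\_shuffled\_index} (\autoref{ssec:pseudo-randomness}), I model each proposer slot as an independent draw in which the selected validator is honest with probability $\alpha$ and Byzantine with probability $1-\alpha$, the selection being proportional to effective balance. The $j$ first-slot proposers being independent draws, the probability that all of them are honest is $\alpha^{j}$, so the probability that at least one is Byzantine, i.e.\ the per-epoch survival probability, is $1-\alpha^{j}$. Because the seed governing epoch $e$ is derived from the RANDAO mix of epoch $e-2$ (\autoref{algo:getSeed}), assignments in distinct epochs rest on independent randomness, so the $k$ survival events are independent and multiply:
\begin{equation*}
P(\textrm{bouncing } k \textrm{ times}) = (1-\alpha^{j})^{k}.
\end{equation*}

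The main obstacle is justifying the probabilistic model itself: treating the deterministic-yet-pseudo-random RANDAO output as genuine independent, stake-weighted sampling, both across the $j$ slots inside an epoch and across the $k$ epochs. I would argue this is the intended idealization of the shuffling primitive and note that an adversary with the assumed power cannot bias selection. A secondary delicate point is the ``only if'' direction of the reduction, namely arguing rigorously that an all-honest first-$j$-slot proposer sequence \emph{forces} finalization and terminates the attack rather than merely postponing it; here I would appeal to the safety lemmas (\autoref{firstLemma} and \autoref{secondLemma}) and the fork-choice behavior to show the locked honest majority justifies and finalizes the checkpoint on the prevailing chain.
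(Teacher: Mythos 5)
Your proposal is correct and takes essentially the same route as the paper's proof: both reduce each epoch's survival to the event that at least one of the first $j$ proposers on the concurrent chain is Byzantine, compute the per-epoch probability as $1-\alpha^{j}$ under stake-proportional pseudo-random selection, and multiply over $k$ independent epochs. Your additional remarks on idealizing RANDAO as independent sampling and on the ``only if'' direction of the reduction are elaborations the paper leaves implicit, not a different argument.
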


\begin{proof}
We denote by $\alpha$ the proportion of honest validators and $j$ the number of slots before locking the choice for justification. We want to know the probability of delaying the finality for $k$ epochs. Once we assume a setup condition sufficient to start a probabilistic bouncing attack, the attack continues until it becomes impossible for Byzantine validators to cast a vote to justify the justifiable checkpoint. To cast their vote, Byzantine validators need one of the $j$ first slots of the concurrent chain to have a Byzantine validator as proposer. Considering the probability of choosing between each validator, the chance for a Byzantine validator to be a proposer for one of the first $j$ slots is $(1-\alpha^j)$, with $\alpha$ being the proportion of honest validators. For $k$ epochs, we take this result to the power of $k$.
\end{proof}

\begin{figure}
    \centering
    \includegraphics[scale=.6]{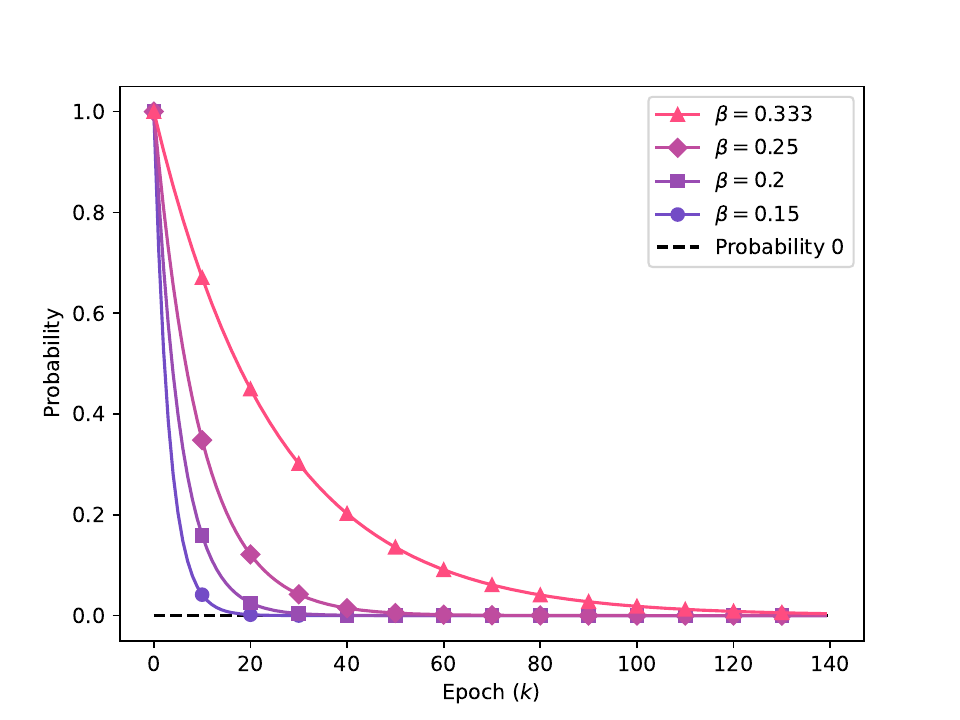}
    \caption{The figure presents the probability of the bouncing attack depending on the proportion of Byzantine validators $\beta$ and the number $k$ of epochs during which the bouncing attack lasts. The probability is computed based on \autoref{eq:livenessProba}, knowing that $\alpha = 1 - \beta$.}
    \label{fig:livenessProba}
\end{figure}

We depict the probability of the bouncing attack over time with several proportions of Byzantine validators $\beta$ in \autoref{fig:livenessProba}. The closer the proportion of Byzantine validators is to $1/3$, the higher the probability of the attack lasting for $k$ epochs (for any $k$).

The probability of the bouncing attack continuing for $k$ epochs depends on two factors: $\alpha$ ($=1-\beta$), the proportion of honest validators that cannot be controlled, and $j$, the number of slots before which validators are allowed to switch branches. Reducing $j$ to 0 would prevent the bouncing attack from happening (the probability falls to 0), but it would mean that validators are never allowed to change their view of the canonical chain. This naive solution would create irreconcilable choices among the set of validators and prevent any new checkpoint from being justified, which is a more severe threat to the liveness of Ethereum PoS.

Reducing the number of slots during which validators can change their view of the blockchain implies that different views cannot reconcile quickly. At the very least, the window of opportunity for doing so gets smaller. Theoretically, the proportion of Byzantine validators necessary to perform this attack is $1/n$. This is because we assume a favorable setup and that Byzantine validators can send messages so that only a desired portion of honest validators receive them on time. Our analysis focuses on the course of action of the attackers during the attack rather than the conditions necessary for it to occur.

This analysis highlights the delicate balance between fixing the view of validators and letting them change their view too much. There is a non-zero probability for the attack to last $k$ epochs for any $k$. However, the probability starts to plummet rapidly. With a proportion of $\beta = 0.3$, the probability of the attack lasting 100 epochs\footnote{100 epochs is about 10 hours.} is 0.02. Nonetheless, there is a non-negligible probability of delaying the finalization for several hours even with $\beta \approx 0.25$.

We conjecture that with the current design of Ethereum PoS, it is impossible to completely avoid such issues. Attempts to patch this probabilistic bouncing attack may not ensure a safe or live protocol; i.e., mitigating one attack might give rise to other vulnerabilities. For example, committee-based blockchains with single-shot finality employ complex systems to prevent conflicting and irreconcilable views \cite{amoussou_correctness_2018, amoussou_dissecting_2019, yin_hotstuff_2019}. These mechanisms typically require the exchange of messages from a quorum of validators to update one's perspective. However, such mechanisms are not feasible in Ethereum PoS since halting the blockchain's availability is not an option. More broadly, this issue aligns with existing literature \cite{neu_ebb_2021, lewispye_resource_2020} and their connections to the CAP theorem \cite{brewer_towards_2000}. Consequently, a possible straightforward solution could be to transition to a classical BFT consensus protocol with single-slot finality. Yet, such a change would fundamentally alter the protocol and should not be considered a mere mitigation.


\section{Conclusion}
In this study, we propose a novel distinction between the definitions of blockchain liveness and availability properties. This distinction is crucial for pinpointing differences between Nakamoto-style consensus and BFT (Byzantine Fault Tolerance) consensus, enabling a comparison between the two. We describe a framework for a high-level description of how the Ethereum PoS (Proof of Stake) protocol functions—a committee-based BFT protocol strongly inspired by Nakamoto-style consensus. Through this formalization, we demonstrate that the Ethereum PoS protocol satisfies the safety property and that no conflicting blocks can be finalized on different branches. We also present patches implemented to mitigate some attacks demonstrated on previous and preliminary versions of the protocol. However, we exhibit an attack on finalization, showing that the Ethereum PoS protocol exhibits probabilistic liveness. 

We supplement our analysis with an examination of the protocol's rewards and incentives, which were not considered here, in \autoref{chap:Eth-CModel-Penalties-Analysis}. We are interested in how the incentive mechanism can impact the safety of the protocol as well as the behavior of rational validators. We intend to implement the probabilistic attack and simulate its outcome while considering penalties.

Such analyses could be conducted empirically to closely monitor the actual behavior of the validators, especially in scenarios where an attack occurs.


    \chapter{Ethereum PoS Analysis under the Distributed Computing Model with Penalties}\label{chap:Eth-CModel-Penalties-Analysis}

\minitoc

\chapterLettrine{B}{eginning} with the consensus part of the protocol without the incentives, it now remains to consider them.
We knew from the beginning that taking into account the incentives would yield different results. However the complexity of the protocol was such that a first analysis focusing solely on the consensus part was necessary. 
We start this chapter by precising the network model for this analysis. It is very similar to the previous one but note that we define an \emph{initial} Byzantine proportion because it will vary during our analysis. We also redefine some properties and explain the protocol succinctly as a remainder. Then comes the analysis of the incentive mechanism called \emph{inactivity leak}.

\section{System Model}
The model for our analysis is defined in \autoref{chap:background}. For more clarity, there is a point of clarification regarding the network conditions.

During our analysis, we assume a network configuration where, during asynchronous periods, honest validators are split into two distinct partitions. Communication between these partitions is restricted, simulating a scenario where two regions are temporarily isolated from each other but maintain internal communication within each region. This setup emulates a situation where two regions of the world are temporarily unreachable from one another while maintaining unaffected communication within each region.

\section{Protocol and Properties}
In this section, we briefly remind the reader of the essential properties and definitions, as well as the necessary elements of the protocol previously described.

\paragraph*{Ethereum PoS Properties} Validators keep a local data structure in the form of a tree containing all the blocks perceived, and then a consensus protocol helps to choose a unique chain in the tree. Ethereum has a particular trait that consists of having a \emph{finalized} chain as the prefix of a chain vulnerable to forks. A metaphor for this is that the finalized chain is the trunk that possibly supports various branches, and as time passes, the trunk grows and branches are trimmed\footnote{We use the terms ``chain'' and ``branch'' interchangeably.}.

Intuitively, the Safety property of Ethereum states that the finalized chain is not forkable, while the Liveness property states that the finalized chain always grows. The nuance with respect to classical consensus protocols is the existence of an Availability property on the entire chain that guarantees constant growth of the chain despite failures and network partitions. The complete definition can be found in \autoref{sec:safetyAndLivenessProperties}.

Based on the definition of safety, we consider forks within the finalized chain as a loss of Safety. As explained in the subsequent section, forks occurring within the candidate chain suffix, which has not yet been finalized, are resolved by the fork choice rule of the protocol. This rule determines the chain upon which validators vote and build. However, this rule has not been explicitly designed to handle forks impacting the finalized chain.

The protocol is not intended to fork the finalized chain, as the finalization process depends on a super-majority vote, ensuring Safety when the Byzantine stake is less than one-third, i.e., $\beta_0 < 1/3$. We look at two types of Safety loss: (1) the finalization of two conflicting chains, and (2) the break of the Safety threshold, meaning the Byzantine stake proportion is more than one-third.







\section{Safety Attack}

The attestation contains two votes, a \textit{block vote} and a \textit{checkpoint vote}. The block vote is used in the \textit{fork choice rule}, which determines the chain to vote and build upon for validators. As its name suggests, the checkpoint vote points to checkpoints constituting the chain. It is used to justify and finalize blocks to grow the finalized chain. Justification is the step prior to finalization. If validators controlling over two-thirds of the stake make the same checkpoint vote, then the checkpoint target is justified. Finalization occurs when there are two consecutive justified checkpoints (one in epoch $e$ and the following one in epoch $e+1$).

Let us note that if justification occurs only every other epoch, finalization is not possible.

\subsection{Incentives}
The Ethereum PoS protocol provides validators with rewards and penalties to incentivize timely responses for reaching consensus. There are three different types of penalties: slashing, attestation penalties, and inactivity penalties. 

(i) \emph{Slashing penalties.} Validators face slashing if they provably violate specific protocol rules, resulting in a partial loss of their stake and expulsion from the validator set.

(ii) \emph{Attestation penalties.} To incentivize timely and correct attestations (votes), the protocol rewards validators for adhering to the protocol and penalizes those who do not. If an attestation is missing or belatedly incorporated into the chain, its validator gets penalized.

(iii) \textbf{\emph{Inactivity penalties.}} Each epoch a validator is deemed inactive, its \emph{inactivity score} increments. However, if the protocol is not in an inactivity leak, all inactivity scores are reduced.

When finalization occurs regularly, a validator that is deemed inactive only receives attestation penalties. This changes when there is no finalization for four consecutive epochs: the inactivity leak begins. During the inactivity leak, which starts when there is no finalization for four consecutive epochs, all validators will receive inactivity penalties directly linked to their stake and \emph{inactivity score}. The inactivity score varies with the validator's activity.

In addition to penalties, rewards are attributed for timely and correct attestations but not during the inactivity leak. Our analysis of the impact of the inactivity leak on the protocol takes into consideration the slashing and inactivity penalties across five different scenarios (cf. \autoref{sec:analysis}).

Having provided a comprehensive overview of the Ethereum PoS consensus mechanism, we are now well-positioned to delve into the specifics of the \textit{inactivity leak}.

\section{Inactivity Leak}\label{sec:inactivityLeak}

The Ethereum PoS blockchain strives for the continuous growth of the finalized chain. Consequently, the protocol incentivizes validators to actively finalize blocks. In the absence of finalization, validators incur penalties.

The inactivity leak, introduced in \cite{buterin_casper_2017}, serves as a mechanism to regain finality. Specifically, if a chain has not undergone finalization for four consecutive epochs, the inactivity leak is initiated. During the inactivity leak, the stakes of \textit{inactive} validators are drained until active validators amount to two-thirds of the stake. A validator is labeled as inactive for a particular epoch if it fails to send an attestation or sends one with a wrong target checkpoint.

During the inactivity leak, there are no more rewards given to attesters\footnote{Actually, the only rewards that remain are for the block producers and the sync committees.}, and additional penalties are imposed on inactive validators.

\subsection{Inactivity Score}
The \emph{inactivity score} is a dynamic variable that adjusts based on a validator's activity. The inactivity score of a validator is determined based on the attestations contained in the chain. It is important to note that if there are multiple branches, a validator's inactivity score depends on the selected branch. Within an epoch, being active on one branch implies\footnote{This is true as long as the chains differ for at least one epoch.} inactivity on another (for honest validators).

More precisely, the inactivity score is updated every epoch: if validator $i$ is active, then its inactivity score is reduced by 1; otherwise, 4 is added to it. When the inactivity leak is not in place, the inactivity scores are decreased by 16 every epoch, which often nullifies low inactivity scores.

During an inactivity leak, at epoch $t$, the inactivity score, $I_i(t)$, of validator $i$ is:
\begin{equation}
    \begin{cases}
        I_i(t) = I_i(t-1) + 4, & \text{if $i$ is inactive at epoch $t$} \\
        I_i(t) = \max(I_i(t-1) - 1, 0), & \text{otherwise.}
    \end{cases}
\end{equation}

Each attester thus has an inactivity score that fluctuates depending on its (in)activity. In the protocol, the inactivity score is always greater than zero. A validator's inactivity for epoch $t$ is determined by whether it sent an attestation for this epoch and if the sent attestation contains a correct checkpoint vote. Here "correct" implies that the target of the checkpoint vote belongs to the considered chain.

\subsection{Inactivity penalties}
Validators that are deemed inactive incur penalties. Let $s_i(t)$ represent the stake of validator $i$ at epoch $t$, and let $I_i(t)$ denote its inactivity score. The penalty at each epoch $t$ is $I_i(t-1) \cdot s_i(t-1) / 2^{26}$. Therefore, the evolution of the stake is expressed by:
\begin{equation}\label{eq:discreteStake}
s_i(t) = s_i(t-1) - \frac{I_i(t-1) \cdot s_i(t-1)}{2^{26}}.
\end{equation}

\subsection{Stake's functions during an inactivity leak}\label{subsec:stakeFunctions}

In this work, we model the stake function $s$ (see \autoref{eq:discreteStake}) as a continuous and differentiable function, yielding the following differential equation:
\begin{equation}\label{eq:stakeFunction}
s'(t) = -\frac{I(t) \cdot s(t)}{2^{26}}.
\end{equation}

We then explore three distinct validator behaviors during an inactivity leak, each influencing their inactivity score and, consequently, their stake.

\begin{enumerate}[label=(\alph*)] 
    \item Active validators: they are always active. 
    \item Semi-active validators: they are active every two epochs. 
    \item Inactive validators: they are always inactive.
\end{enumerate}

Note that, in the case of a fork, this categorization depends on the specific branch under consideration as different branches may yield different evaluations of each validator's behavior.

This categorization is orthogonal to the Byzantine-Honest categorization.
For instance, an honest validator can appear inactive in one branch due to poor connectivity or an asynchronous period (due to network partition or congestion). On the other hand, a Byzantine validator intentionally chooses one of these behaviors (e.g., being semi-active) to execute the attacks.

\begin{figure}
    \centering
    \includegraphics[width=0.8\linewidth]{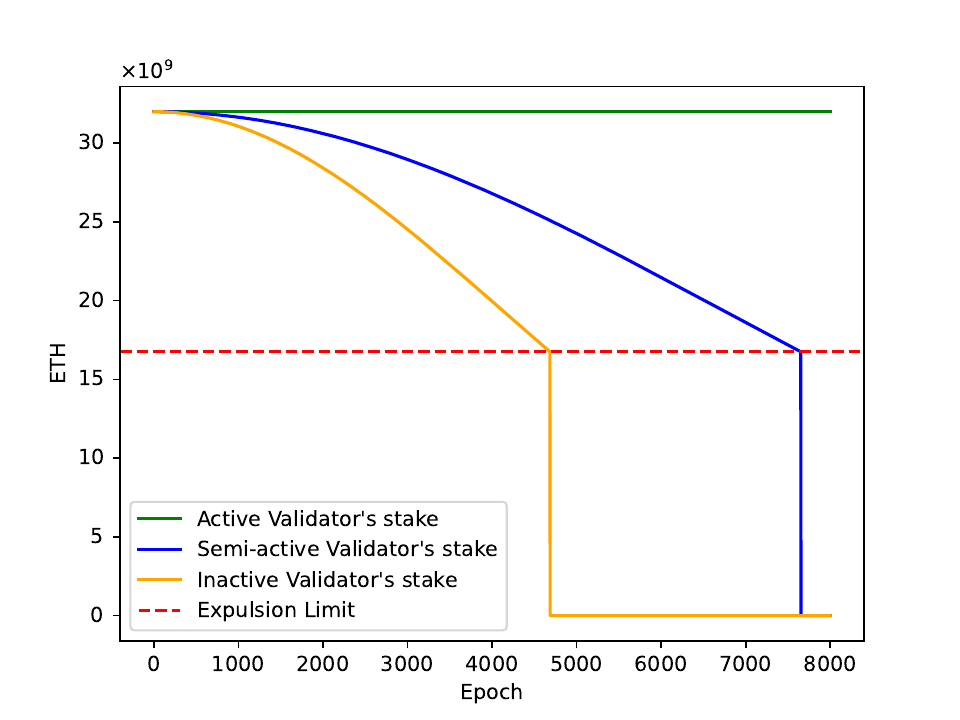}
    \caption{This figure shows three different stake trajectories in the event of an inactivity leak: the stake of a validator active every epoch, the stake of a validator active every other epoch (semi-active), and an inactive validator. The inactive validators get ejected at epoch $t=4685$. The semi-active validators get ejected at epoch $t=7652$. For reference, $5000$ epochs is about 3 weeks.}
    \label{fig:validatorsStake}
\end{figure}

We illustrate in \autoref{fig:validatorsStake} the evolution of the validators' stake depending on their behaviors. We also account for the ejection of validators with a stake lower than or equal to $16.75$.

Using these newly defined stake functions, we explore five scenarios in \autoref{sec:analysis}.

The first scenario, with only honest validators, serves as a baseline to assess the impact of Byzantine validators. Even in this seemingly straightforward setting, Safety is compromised.

In the second scenario, Byzantine validators come into play and aim to expedite the finalization of conflicting branches. They do so by performing slashable actions. Thus, they will get ejected from the set of validators once communication is restored among honest validators and evidence of their slashable offense is included in a block. We outline their impact based on their initial stake proportion. With an initial stake proportion of $\beta_0=0.2$, the finalization on conflicting chains occurs after $3107$ epochs. With $\beta_0=0.33$, the conflicting finalization occurs only after $503$ epochs.

In the third and fourth scenarios, Byzantine validators exhibit non-slashable behaviors. Specifically, Byzantine validators are semi-active, meaning they are active on both chains but in a non-slashable manner. In the third scenario, they aim to finalize conflicting branches as soon as possible, achieving conflicting finalization in $556$ epochs with an initial stake proportion of $\beta_0 = 0.33$. In the fourth scenario, their goal is to increase their stake proportion to exceed the $1/3$ threshold.

The last scenario delves into the effect of the probabilistic bouncing attack regarding the Byzantine stake proportion, considering the inactivity leak. In this attack, Byzantine validators initially aim to delay finality by being alternately active (bouncing) on both chains of a fork. This confuses honest validators, causing them to also bounce from one chain to the other. We detail how to find the distribution of honest validators' stakes in this setting, considering the inactivity penalties. We also cover how the Byzantine validators' stake proportion can exceed $1/3$ if their initial proportion is close to $1/3$.

The scenarios unfold within the context of a partially synchronous network while offering a meticulous examination of the property of Safety and the evolution of the proportion of Byzantine validators. Each scenario's initial conditions and outcomes are summarized in \autoref{tab:recapScenario}.

\begin{table}
\begin{center}
\begin{tabular}{|c|c|c||c|}
\hline
Scenario  & Outcomes \\ \hline \hline
\ref{subsec:HonDoubleFinalization} \hfill All honest \hfill\null &  2 finalized branches \\ 
\ref{subsubsec:withSlashing}  \hfill Slashable Byzantine \hfill\null & 2 finalized branches \\
\ref{subsubsec:withoutSlashing}  \hfill Non slashable Byzantine \hfill\null & 2 finalized branches \\
\ref{subsubsec:oneThirdByz}  \hfill Non slashable Byzantine \hfill\null & $\beta>1/3$ \\
\ref{subsec:revisitPBA}  \hfill Probabilistic Bouncing attack \hfill\null & $\beta>1/3$ probably \\ \hline
\end{tabular}
\end{center}
\caption{Analyzed scenarios associated with their outcomes. Initially the proportion of Byzantine's stake is smaller than $1/3$ and is zero for the first scenario. }
\label{tab:recapScenario}
\end{table}

\section{Analysis}\label{sec:analysis}

In this section, we study the robustness of the Safety property within the context of the inactivity leak. By construction, in the case of a prolonged partition, two different chains can potentially be finalized, leading to conflicting finalized blocks. We delineate scenarios that can produce such a predicament.

Considering the presence of Byzantine validators, we study how the proportion of Byzantine validators' stake evolves during an inactivity leak. Furthermore, we are interested in scenarios where the inactivity leak mechanism becomes the backbone of an attack strategy, potentially causing the proportion of Byzantine stakes to exceed the $1/3$ security threshold (cf. \autoref{subsubsec:oneThirdByz} and \autoref{subsec:revisitPBA}).

\subsection{GST upper bound for Safety}
\label{subsec:HonDoubleFinalization}
In this first subsection, we look for an upper bound on \texttt{GST} before which no finalization on conflicting chains can happen in case of a partition. We study the case of an inactivity leak with these conditions: (i) only honest validators, no Byzantine validators, and (ii) the network is asynchronous (before \texttt{GST}).

In case of catastrophic events, during an instance of a particularly disrupted network, an arbitrarily large set of honest validators might be unreachable before \texttt{GST}. During this asynchronous period, the subset of validators still communicating with each other will continue to try to finalize new blocks. We assume that, within each partition, the message delay is bounded as in the synchronous period; however, communication between partitions is not restored before \texttt{GST}. As the system model mentions, Byzantine validators can communicate between partitions without restriction but cannot manipulate the message delay between honest validators. The active validators must represent more than two-thirds of the stake to be able to finalize. After $4$ epochs without finalization, the inactivity leak starts.

All the validators deemed inactive will have their stake progressively reduced. This will continue until the active validators constitute at least two-thirds of the stake and can finalize anew.

\paragraph{Two finalized chains} A noteworthy scenario arises during asynchronous periods that can lead to a network partition and the creation of two distinct finalized chains. If this partition persists for an extended period, both chains independently drain the stakes of validators they consider inactive until they finalize again. Although the protocol permits this behavior by design, it results in finalizing two conflicting chains, thereby compromising the Safety property.

This outcome aligns with Ethereum PoS prioritizing Liveness over Safety. However, to the best of our knowledge, this corner case has not been discussed in detail.

We can theoretically assess the time required to finalize both branches of the fork. Suppose honest validators remain in their respective branches due to the partition. In this case, by understanding the distribution of these validators across the partitions, we can compute the time it takes for the proportion of active validators' stake to return to $2/3$ of the stake on each branch, permitting new finalization.

Let $n_{\rm H}$ and $n_{\rm B}$ denote the initial number of honest validators and Byzantine validators at the beginning of the inactivity leak ($n_{\rm H}+n_{\rm B}=n$). Additionally, $n_{\rm H_1}$ and $n_{\rm H_2}$ represent the number of honest validators active on branch 1 and on branch 2, respectively ($n_{\rm H_1}+n_{\rm H_2}=n_{\rm H})$.

We denote by $p_0=n_{\rm H_1}/n_{\rm H}$ the initial proportion of honest validators remaining active on branch 1, and $1-p_0=n_{\rm H_2}/n_{\rm H}$ represents the proportion of honest validators active on branch 2 (hence inactive on branch 1). In this first scenario, with only honest validators and no Byzantine validators, $p_0$ represents the proportion of all validators active on branch 1. Indeed, since $n_{\rm H}/n=1$, we have that $n_{\rm H_1}/n_{\rm H} \times n_{\rm H}/n=p_0$.

We have assessed how validators' stakes vary based on their level of activity. Consequently, we can express the ratio of active validators on branch 1 at time $t$ as:
\begin{equation}\label{eq:preliHonestActiveRatio}
\frac{n_{\rm H_1}s_{\rm H_1}(t)}{n_{\rm H_1}s_{\rm H_1}(t)+n_{\rm H_2}s_{\rm H_2}(t)} ,
\end{equation}
with $s_{\rm H_1}$ and $s_{\rm H_2}$ being the stake of honest active and inactive validators, respectively. We know the function of their stake according to time, and by dividing the numerator and the denominator by the total number of validators ($n=n_{\rm H}$), we can rewrite \autoref{eq:preliHonestActiveRatio} as:
\begin{equation}\label{eq:honestActiveRatio}
\frac{p_0}{p_0+(1-p_0)e^{-t^2/2^{25}}}.
\end{equation}
The initial stake value $s_0$ is factored out of the equation. This function is critical as the moment it reaches $2/3$ or more, finalization can occur\footnote{With a proportion of two-thirds of validators' stake active, justification and then finalization can occur in $2$ epochs.} on the branch.

To establish the upper bound on \texttt{GST} under which two conflicting branches finalize, we must find when finalization occurs on each branch for each initial proportion of active validators $p_0$ and inactive validators $1-p_0$. We simulate the evolution of the ratio of active validators (\autoref{eq:honestActiveRatio}) during an inactivity leak with different values of $p_0$ in \autoref{fig:honestActiveRatio}. The simulation starts with both active and inactive validators at $32$ ETH. At epoch $0$, the inactivity leak begins.

For $p_0=0.5$ or less, the ratio jumps to 1 at $t=4685$; this is due to the fact that validators with a stake below $16.75$ ETH are ejected from the set of validators. Conversely, for $p_0=0.6$, the proportion of active validators does not jump drastically as $2/3$ of active validators is regained before the ejection of inactive validators, permitting the active validators to finalize, hence ending the inactivity leak. Interestingly, with $p_0=0.6$ we can see that the ratio still increases several epochs after the proportion of $2/3$ of active validators' stake is reached. This is because the penalties for inactive validators take some time to return to zero.

\begin{figure}
    \centering
    \includegraphics[width=.8\linewidth]{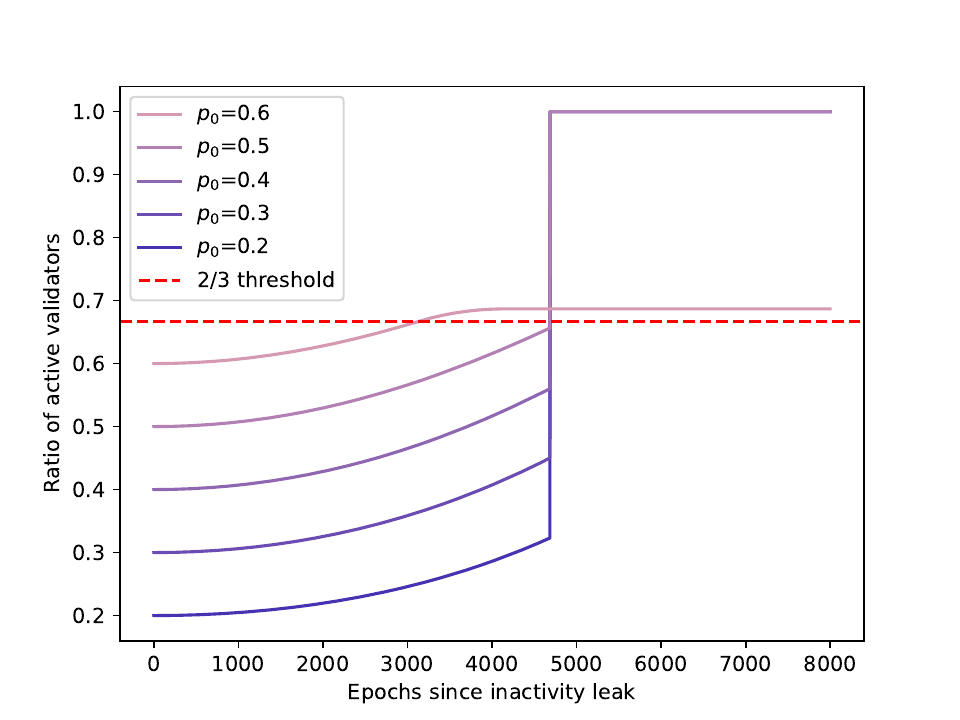}
    \caption{Evolution of the ratio of active validators depending on the proportion $p_0$ of active validators on the branch. This follows the ratio given in \autoref{eq:honestActiveRatio} before regaining $2/3$ of active validators or the expulsion of inactive validators at epoch $t=4685$.}
    \label{fig:honestActiveRatio}
\end{figure}

As expected and shown by \autoref{fig:honestActiveRatio}, a chain with more active validators will regain finality faster. To ascertain how quickly, we seek when the ratio is equal to 2/3. Taking into account the expulsion\footnote{We drew inspiration for this initial work from the insights presented in \cite{edington_technical_2023}.} of inactive validators at $t=4685$, we can find the value $t$ at which the $2/3$ threshold is reached:
\begin{equation}
      t = \min \left( \sqrt{2^{25}[\log(2(1-p_0))-\log(p_0)]}, 4685 \right).
\end{equation}

This calculation pertains to $0<p_0<2/3$ (when there are less than $2/3$ of active validators), ensuring that the epoch $t$ can be computed.

Conflicting finalization occurs once the slowest branch to finalize has regained finality. Our observation highlights that the lower the proportion of active validators, the slower the branch will regain finality. Hence, the fastest way to reach finality on both chains would be for honest validators to be evenly proportioned, with half of the validators active on one chain and the other half on the other chain ($p_0=1-p_0=0.5$). In this case, the ratio of active validators amounts to $2/3$ on both chains at $t=4685$ epochs (about 3 weeks). We can note here that even with the best configuration to finalize quickly on conflicting branches, it is impossible to be faster than $4685$ epochs. Thus, with only honest validators, whatever their proportion on each branch, the last chain to finalize will always finalize at $t=4685$.

Finality on both chains is achieved precisely at $4686$ epochs after the beginning of the inactivity leak. Adding an epoch is necessary after gaining $2/3$ of active stake to finalize the preceding justified checkpoint. This finalization ends the inactivity leak, which has lasted approximately 3 weeks. \textit{Any network partition lasting longer than $4686$ epochs will result in a loss of Safety because of conflicting finalization. This is an upper bound for Safety on the duration of the inactivity leak with only honest validators.}

\subsection{Upper bound decrease due to Byzantine validators}\label{subsec:ByzDoubleFinalization}

In a trivial setup with only honest validators, Safety does not hold if the inactivity leak is not resolved quickly. This prompts us to study the scenario in the presence of Byzantine validators to evaluate how much they will be able to hasten the conflicting finalization. We describe two possible outcomes: the first one violates Safety, but Byzantine validators get slashed; the second one violates Safety as well, but no validators get slashed. A slashing penalty entails an ejection from the validator set as well as a loss of part of the validator's stake. Both scenarios expedite the time $t$ at which Safety is breached, with different velocities depending on the chosen method.

We study the inactivity leak under these conditions: (i) at the beginning, less than one-third of the stake is held by Byzantine validators ($\beta_0=n_{\rm B}/n<1/3$), with the rest held by honest validators ($1-\beta_0=n_{\rm H}/n$); (ii) the network is asynchronous (before \texttt{GST}); and (iii) Byzantine validators are not affected by network partitions.\footnote{In a model without partitions, one needs to give Byzantine validators more power to recreate our scenario. They must be able to control the network delay to allow them to be active on both branches while preventing honest validators from even observing the branch on which they are not active. They can manipulate message delays between groups of honest validators to simulate a partition between them.}

The situation is as follows:
\begin{itemize}
    \item Honest validators are divided into branches 1 and 2; a proportion $p_0=n_{\rm H_1}/n_{\rm H}$ of the honest validators are active on branch 1, while a proportion $1-p_0=n_{\rm H_2}/n_{\rm H}$ are active on branch 2. This means that on branch 1, a proportion $n_{\rm H_1}/n_{\rm H} \times n_{\rm H}/n = p_0(1-\beta_0)$ are honest and active, and a proportion $n_{\rm H_2}/n_{\rm H} \times n_{\rm H}/n = (1-p_0)(1-\beta_0)$ are honest and inactive.
    \item Byzantine validators are not restricted to either partition; they are connected to both.
\end{itemize}

\subsubsection{With slashing}\label{subsubsec:withSlashing}

In the event of a fork during asynchronous times, Byzantine validators can be active on both branches (\autoref{fig:ByzSlashingSchema}). Being active on two branches means sending correct attestations on both every epoch. Such behavior is considered a slashable offense, incurring penalties, but only if detected by honest validators. The slashable offense is punished once proof of conflicting attestations during the same epoch has been included in a block. Thus, before \texttt{GST}, Byzantine validators could operate on both branches without facing punishment as long as honest validators are unaware of the conflicting attestations. Byzantine validators have control over the message delay before \texttt{GST}, making this behavior possible. They can thereby expedite the finalization on different branches.

\begin{figure}
    \centering
    \resizebox{.5\linewidth}{!}{
        \begin{tikzpicture}[x=0.75pt,y=0.75pt,yscale=-1,xscale=1]

\draw    (50,125) -- (100,100) ;
\draw    (50,125) -- (100,150) ;
\draw  [color={rgb, 255:red, 255; green, 0; blue, 0 }  ,draw opacity=1 ][fill={rgb, 255:red, 255; green, 52; blue, 52 }  ,fill opacity=0.86 ] (100,100) -- (300.8,100) -- (300.8,110) -- (100,110) -- cycle ;
\draw  [color={rgb, 255:red, 255; green, 0; blue, 0 }  ,draw opacity=1 ][fill={rgb, 255:red, 255; green, 52; blue, 52 }  ,fill opacity=0.86 ] (100,150) -- (300.8,150) -- (300.8,160) -- (100,160) -- cycle ;
\draw    (100,100) -- (300.8,100) ;
\draw    (100,150) -- (300.8,150) ;
\draw  [dash pattern={on 0.84pt off 2.51pt}]  (100,60) -- (100,170) ;
\draw  [dash pattern={on 0.84pt off 2.51pt}]  (150,60) -- (150,170) ;
\draw  [dash pattern={on 0.84pt off 2.51pt}]  (200,60) -- (200,170) ;
\draw  [dash pattern={on 0.84pt off 2.51pt}]  (250,60) -- (250,170) ;

\draw (128,70) node  [font=\small,xscale=1.25,yscale=1.25]  {$t_{1}$};
\draw (178,70) node  [font=\small,xscale=1.25,yscale=1.25]  {$t_{2}$};
\draw (228,70) node  [font=\small,xscale=1.25,yscale=1.25]  {$t_{3}$};
\draw (272,70) node  [font=\small,xscale=1.25,yscale=1.25]  {$t_{4}$};
\draw (78,70) node  [font=\small,xscale=1.25,yscale=1.25]  {$t_{0}$};

\end{tikzpicture}
    }
    \caption{Byzantine validators are active on both chains of a fork simultaneously during asynchronous times.}
    \label{fig:ByzSlashingSchema}
\end{figure}
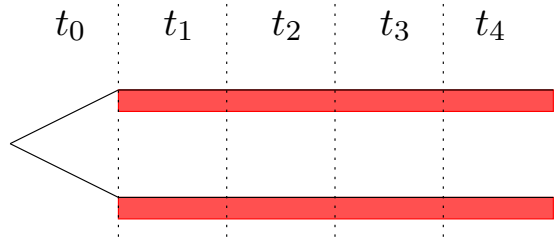

We study here the time needed for finalization to occur on conflicting branches depending on the proportion of Byzantine validators. The ratio of active validators at epoch $t$ is:
\begin{equation}
    \frac{n_{\rm H_1}s_{\rm H_1}(t)+n_{\rm B}s_{\rm B}(t)}{n_{\rm H_1}s_{\rm H_1}(t)+n_{\rm B}s_{\rm B}(t)+n_{\rm H_2}s_{\rm H_2}(t)},
\end{equation}
with $s_{\rm H_1}$, $s_{\rm B}$, and $s_{\rm H_2}$ being the stake of honest active, Byzantine active, and honest inactive validators, respectively. This can be rewritten as:
\begin{equation}
     \frac{p_0(1-\beta_0)+\beta_0}{p_0(1-\beta_0)+\beta_0+(1-p_0)(1-\beta_0)e^{-t^2/2^{25}}}.
\end{equation}
where $\beta_0$ represents the initial proportion of Byzantine validators, and $p_0$ denotes the initial proportion of honest active validators. In contrast to the analysis with only honest validators (cf. \autoref{eq:honestActiveRatio}), here, Byzantine validators are present and active on both chains. Nonetheless, as before, we can obtain the ratio of active validators on the other branch just by interchanging $p_0$ and $1-p_0$.
Finality on conflicting branches occurs when the last of the two branches finalizes. Similarly to the previous example, the branch with the fewer initial honest active validators ($p_0$) will finalize the latest. This happens $t$ epochs after the beginning of the inactivity leak, with
\begin{equation}\label{eq:byzTimeForFinalization}
      t = \min\left(\sqrt{2^{25}\left[\log(2(1-p_0))-\log(p_0+\frac{\beta_0}{1-\beta_0})\right]}, 4685 \right).
\end{equation}

Finality on conflicting branches is achieved the quickest when honest validators are evenly split between the branches of the fork, at $p_0=0.5$.

\begin{table}[htbp]
\begin{center}
\begin{tabular}{|c|c|}
\hline
\textbf{$\beta_0$} & \textbf{$t$} \\
\hline
$\mathbf{0}$ & $\mathbf{4685}$ \\
 $0.1$ & $4066$ \\ 
 $0.15$ & $3622$ \\ 
 $0.2$ & $3107$ \\
 $0.33$ & $502$ \\
\hline
\end{tabular}
\caption{Time before finalization on conflicting branches depending on the initial proportion of Byzantine validators $\beta_0$ for $p_0=0.5$ with slashing behaviour based on \autoref{eq:byzTimeForFinalization}}
\label{tab:finalizationSlashing}
\end{center}
\end{table}

\autoref{tab:finalizationSlashing} gives the epoch at which concurrent finalization occurs for $p_0=0.5$. This outlines the rapidity at which finality can be regained depending on the initial proportion $\beta_0$ of Byzantine validators' stake. The table shows that $503$ epochs (approximately 2 days) could suffice to finalize blocks on two different chains, but hypothetically it could be quicker than that. In fact, as $\beta_0$ gets closer to $1/3$, the number of epochs required before concurrent finalization occurs (\autoref{eq:byzTimeForFinalization}) approaches $0$.

The explanation is that if $\beta_0$ were to start at exactly 1/3, then with $p_0=0.5$, it would mean that on each branch we would start with $p_0(1-\beta_0)+\beta_0=2/3$ of active validators, hence finalizing immediately. This explains why if $\beta_0$ is very close to 1/3, the proportion of active validators reaches $2/3$ rapidly.
\textit{Hence, Byzantine validators can expedite the loss of Safety. If their initial proportion is $0.33$, they can make conflicting finalizations occur approximately ten times faster than scenarios involving only honest participants.}

\

One can notice that if Byzantine validators act in a slashable manner, they will be penalized after the asynchronous period ends. However, the harm is already done. Once the finalization on two branches has occurred, the branches are irreconcilable with the current protocol. Next, we demonstrate that Byzantine validators can employ more subtle strategies to break Safety without slashable actions.

\subsubsection{Without Slashing} \label{subsubsec:withoutSlashing}

Byzantine validators can hasten the violation of the Safety property without incurring a slashable offense. While not as rapid as being active on both branches simultaneously, they can be semi-active on both branches alternatively. Being semi-active on each branch means they are only active every other epoch. This approach diminishes their stake on each branch due to inactivity penalties. Nevertheless, at some point, they can finalize on two conflicting branches by being active two epochs in a row on one branch and then on the other (see \autoref{fig:ByzNoSlashingSchema}). Byzantine validators will be able to finalize when the proportion of their stake plus the proportion of the stake of honest active validators is above $2/3$ on the branch (cf. \autoref{eq:ByzSemiActiveRatio}).

\begin{figure}
    \centering
    \resizebox{.8\linewidth}{!}{
        \begin{tikzpicture}[x=0.75pt,y=0.75pt,yscale=-1,xscale=1,scale=0.6, every node/.style={scale=0.8}]

\draw    (50,125) -- (100,100) ;
\draw    (50,125) -- (100,150) ;
\draw  [color={rgb, 255:red, 255; green, 0; blue, 0 }  ,draw opacity=1 ][fill={rgb, 255:red, 255; green, 52; blue, 52 }  ,fill opacity=0.86 ] (100,100) -- (150,100) -- (150,110) -- (100,110) -- cycle ;
\draw  [color={rgb, 255:red, 255; green, 0; blue, 0 }  ,draw opacity=1 ][fill={rgb, 255:red, 255; green, 52; blue, 52 }  ,fill opacity=0.86 ] (150,150) -- (200,150) -- (200,160) -- (150,160) -- cycle ;
\draw  [color={rgb, 255:red, 255; green, 0; blue, 0 }  ,draw opacity=1 ][fill={rgb, 255:red, 255; green, 52; blue, 52 }  ,fill opacity=0.86 ] (200,100) -- (250,100) -- (250,110) -- (200,110) -- cycle ;
\draw  [color={rgb, 255:red, 255; green, 0; blue, 0 }  ,draw opacity=1 ][fill={rgb, 255:red, 255; green, 52; blue, 52 }  ,fill opacity=0.86 ] (250,150) -- (300,150) -- (300,160) -- (250,160) -- cycle ;
\draw  [dash pattern={on 0.84pt off 2.51pt}]  (100,60) -- (100,170) ;
\draw  [dash pattern={on 0.84pt off 2.51pt}]  (150,60) -- (150,170) ;
\draw  [dash pattern={on 0.84pt off 2.51pt}]  (200,60) -- (200,170) ;
\draw  [dash pattern={on 0.84pt off 2.51pt}]  (250,60) -- (250,170) ;
\draw  [color={rgb, 255:red, 255; green, 0; blue, 0 }  ,draw opacity=1 ][fill={rgb, 255:red, 255; green, 52; blue, 52 }  ,fill opacity=0.86 ] (351,100) -- (451,100) -- (451,110) -- (351,110) -- cycle ;
\draw  [color={rgb, 255:red, 255; green, 0; blue, 0 }  ,draw opacity=1 ][fill={rgb, 255:red, 255; green, 52; blue, 52 }  ,fill opacity=0.86 ] (450,150) -- (550,150) -- (550,160) -- (450,160) -- cycle ;
\draw    (100,100) -- (310,100) ;
\draw  [dash pattern={on 0.84pt off 2.51pt}]  (350,60) -- (350,170) ;
\draw  [dash pattern={on 0.84pt off 2.51pt}]  (400,60) -- (400,170) ;
\draw  [dash pattern={on 0.84pt off 2.51pt}]  (450,60) -- (450,170) ;
\draw  [dash pattern={on 0.84pt off 2.51pt}]  (500,60) -- (500,170) ;
\draw  [dash pattern={on 0.84pt off 2.51pt}]  (300,60) -- (300,170) ;
\draw    (340,100) -- (550,100) ;
\draw    (100,150) -- (310,150) ;
\draw    (340,150) -- (550,150) ;
\draw  [dash pattern={on 4.5pt off 4.5pt}]  (310,100) -- (340,100) ;
\draw  [dash pattern={on 4.5pt off 4.5pt}]  (310,150) -- (340,150) ;

\draw (128,70) node  [font=\small,xscale=1.25,yscale=1.25]  {$t_{1}$};
\draw (178,70) node  [font=\small,xscale=1.25,yscale=1.25]  {$t_{2}$};
\draw (228,70) node  [font=\small,xscale=1.25,yscale=1.25]  {$t_{3}$};
\draw (272,70) node  [font=\small,xscale=1.25,yscale=1.25]  {$t_{4}$};
\draw (78,70) node  [font=\small,xscale=1.25,yscale=1.25]  {$t_{0}$};
\draw (378,70) node  [font=\small,xscale=1.25,yscale=1.25]  {$t_{n}$};
\draw (428,70) node  [font=\small,xscale=1.25,yscale=1.25]  {$t_{n+1}$};
\draw (478,70) node  [font=\small,xscale=1.25,yscale=1.25]  {$t_{n+2}$};
\draw (522,70) node  [font=\small,xscale=1.25,yscale=1.25]  {$t_{n+3}$};
\draw (329.5,70) node  [font=\small,xscale=1.25,yscale=1.25]  {$.\ .\ .$};

\end{tikzpicture}
    }
    \caption{Byzantine validators active on both branches of a fork alternatively during asynchronous times.}
    \label{fig:ByzNoSlashingSchema}
\end{figure}
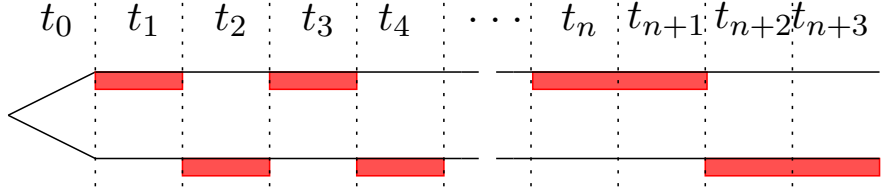

At that point, Byzantine validators must remain active for two consecutive epochs on each branch to finalize them both. If they are only semi-active, they can alternate justifications for checkpoints on each branch but will not achieve finalization. However, by maintaining activity for two consecutive epochs, first on one branch and then on the other, they ensure two sequential justifications, leading to the finalization of a checkpoint.

\

We gave the different evolution of stakes depending on the activity of validators (\autoref{subsec:stakeFunctions}). Now that Byzantine validators are semi-active, their stake follows the curve $s_0e^{-3t^2/2^{28}}$. We simplify the ratio as previously and we get that finalization occurs on the branch when the ratio
\begin{equation}\label{eq:ByzSemiActiveRatio}
    \frac{p_0(1-\beta_0)+\beta_0 e^{-3t^2/2^{28}}}{p_0(1-\beta_0)+\beta_0 e^{-3t^2/2^{28}}+(1-p_0)(1-\beta_0)e^{-t^2/2^{25}}}
\end{equation}
goes over 2/3, with $\beta_0$ and $p_0$ being the initial proportion of Byzantine validators and the proportion of honest active validators on the branch, respectively.

In contrast to the previous scenario, obtaining an analytic solution for $t$ to determine the epoch when the ratio hits $2/3$ is not straightforward. Therefore, we apply numerical methods on \autoref{eq:ByzSemiActiveRatio} with initial parameters $p_0=0.5$ and $\beta_0=0.33$, resulting in a calculated $t$ value of 555.65. This means it will take $556$ epochs to finalize, about 2 days and a half.

As previously, the proximity of $\beta_0$ to $1/3$ significantly influences the speed of finalization, as outlined in \autoref{tab:finalizationNoSlashing} and \autoref{fig:byzSlashANDbyzNoSlash}. \autoref{fig:byzSlashANDbyzNoSlash} shows how the proportion of Byzantine validators affects the time of conflicting finalization. Notice that although the acceleration is not as pronounced as in the previous scenario, it remains noteworthy that Byzantine validators still exert a substantial impact on breaching Safety, while not committing any slashable offense.

\textit{Hence, Byzantine validators can expedite the loss of Safety without committing any slashable action. If their initial proportion is $0.33$, they can make conflicting finalizations occur approximately eight times faster than scenarios involving only honest participants.}

\begin{table}[htbp]
\begin{center}
\begin{tabular}{|c|c|}
\hline
\textbf{$\beta_0$} & \textbf{$t$} \\
\hline
$\mathbf{0}$ & $\mathbf{4685}$ \\
 $0.1$ & $4221$ \\
 $0.15$ & $3819$\\
 $0.2$ & $3328$\\
 $0.33$ & $556$ \\
\hline
\end{tabular}
\caption{Time before finalization on conflicting branches depending on the initial proportion of Byzantine validators $\beta_0$ for $p_0=0.5$ without slashing behavior based on \autoref{eq:ByzSemiActiveRatio}.}
\label{tab:finalizationNoSlashing}
\end{center}
\end{table}

\begin{figure}
    \centering
    \includegraphics[width=.8\linewidth]{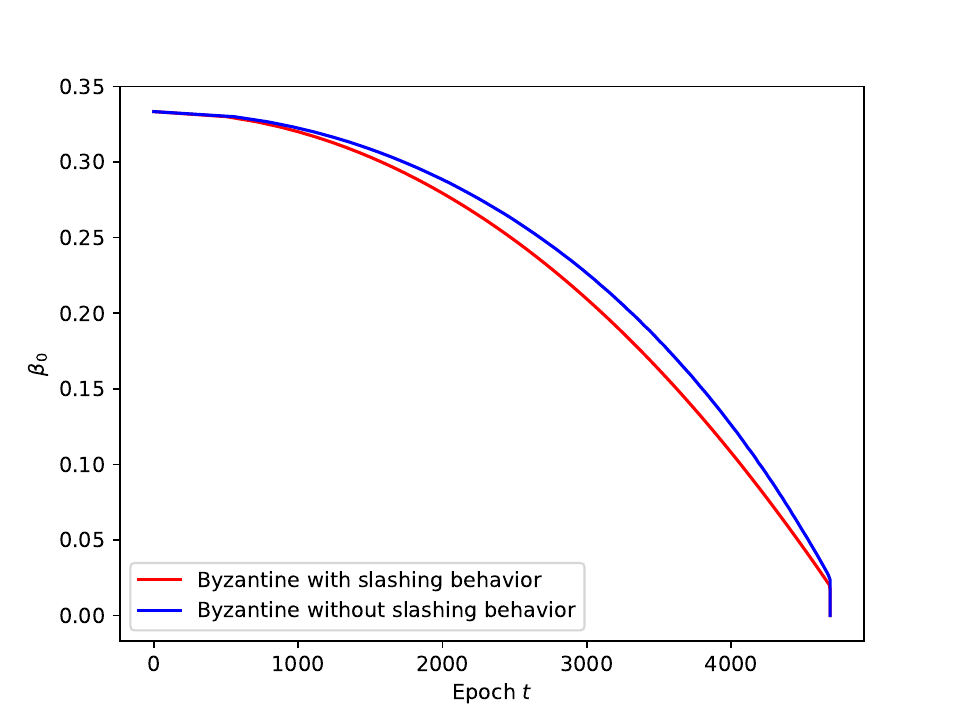}
    \caption{Time before finalization on conflicting branches, depending on the initial proportion of Byzantine validators $\beta_0$ and whether they engage in slashable actions.}
    \label{fig:byzSlashANDbyzNoSlash}
\end{figure}

Another consequence of being "semi-active" on both branches is that Byzantine validators can decide when to finalize on each branch. Indeed, even when the proportion of their stake plus the proportion of honest active validators' stake is above $2/3$, finalization only occurs when the Byzantine validators stay active for two consecutive epochs on the same chain. Being active for two epochs will justify the two consecutive epochs, thus finalizing an epoch.

There exists a scenario in which the Byzantine validators might delay finalization intentionally, aiming to increase their stake's proportion beyond the threshold of $1/3$ without incurring slashing afterward.

\subsubsection{More than one third of Byzantine validators}\label{subsubsec:oneThirdByz}

One may ask, why would Byzantine validators aim at going over the $1/3$ threshold? Indeed, we have just shown that Safety can be broken regardless of $\beta_0$; is it not the ultimate goal of Byzantine validators? It is not obvious to determine what behavior will harm the blockchain the most. We briefly discuss the impact Byzantine validators can have when they go over the $1/3$ threshold in \autoref{subsec:revisitPBA}. We now examine the necessary conditions on $\beta_0$ and $p_0$ that permit the Byzantine validators' stake to go over the one-third threshold.

The key ratio that translates into what we are looking for is the proportion of Byzantine validators' stake $\beta(t,p_0,\beta_0)$ over time:
\begin{equation}
    \frac{\beta_0 e^{-3t^2/2^{28}} }{p_0(1-\beta_0) +(1-p_0)(1-\beta_0)e^{-t^2/2^{25}} + \beta_0 e^{-3t^2/2^{28}}}
\end{equation}

As expected, at time $t=0$, $\beta(0,p_0,\beta_0)=\beta_0$. Now, let us investigate when this ratio is above the threshold of 1/3, i.e.:
\begin{equation}\label{eq:ByzOverOneThird}
\beta(t,p_0,\beta_0) \geq 1/3  
\end{equation}

The main difference with the previous scenario is that Byzantine validators seek to go over the $1/3$ threshold, not to finalize quickly. This means that even after the proportion of honest active validators' stake and semi-active Byzantine validators' stake represents more than two-thirds of the stake on the branch, they do not finalize. Byzantine validators could finalize by staying active for two epochs in a row, yet they do not do so in order to reach a higher stake proportion.

\

We construct a set containing the pairs $(p_0,\beta_0)$ that can lead $\beta$ to go over $1/3$ (\autoref{eq:ByzOverOneThird}). To do so, we take the point reached by the ratio when the validators deemed inactive are ejected. This point gives the highest value reachable\footnote{There exist more values that can lead to going over one-third when considering a special corner case. If the Byzantine validators strategically finalize just before the expulsion of honest inactive validators, the decrease in inactivity penalties might not occur quickly enough to prevent the ejection of honest inactive validators. In this particular scenario, Byzantine validators could potentially eject honest inactive participants while incurring fewer penalties themselves. This subtlety underscores the intricate dynamics at play during the inactivity leak.} for a particular $(p_0,\beta_0)$. For an intuition as to why this is the case, \autoref{fig:validatorsStake} allows us to visualize that the biggest gap between semi-active Byzantine stake and honest inactive stake is at the moment of expulsion of the honest inactive validators. We have seen that inactive validators are ejected from the chain after $4685$ epochs. We can thus evaluate the maximum ratio reachable $\beta_{\max}$ at time $t=4685$ when the inactive validators are ejected:
\begin{equation}
    \beta_{\max}(p_0,\beta_0) = \frac{\beta_0e^{-3\times (4685)^2/2^{28}}}{p_0(1-\beta_0) + \beta_0e^{-3\times (4685)^2/2^{28}}}.
\end{equation}

When this ratio is greater than 1/3, Byzantine validators have reached their goal. We show with \autoref{fig:pBetaShades} that Byzantine validators can actually go beyond the threshold of $1/3$ on both branches simultaneously. The lower bound $\beta_0$ before this becomes possible is for $p_0=0.5$ when $\beta_0=1/1+4e^{-3\times (4685)^2/2^{28}}=0.2421$.

\textit{When the initial proportion of Byzantine validators is at least $0.2421$, their proportion can eventually increase to more than $1/3$ of validators on both branches, exceeding the critical Safety threshold of voting power in each branch.}

\begin{figure}
    \centering
    \includegraphics[width=.8\linewidth]{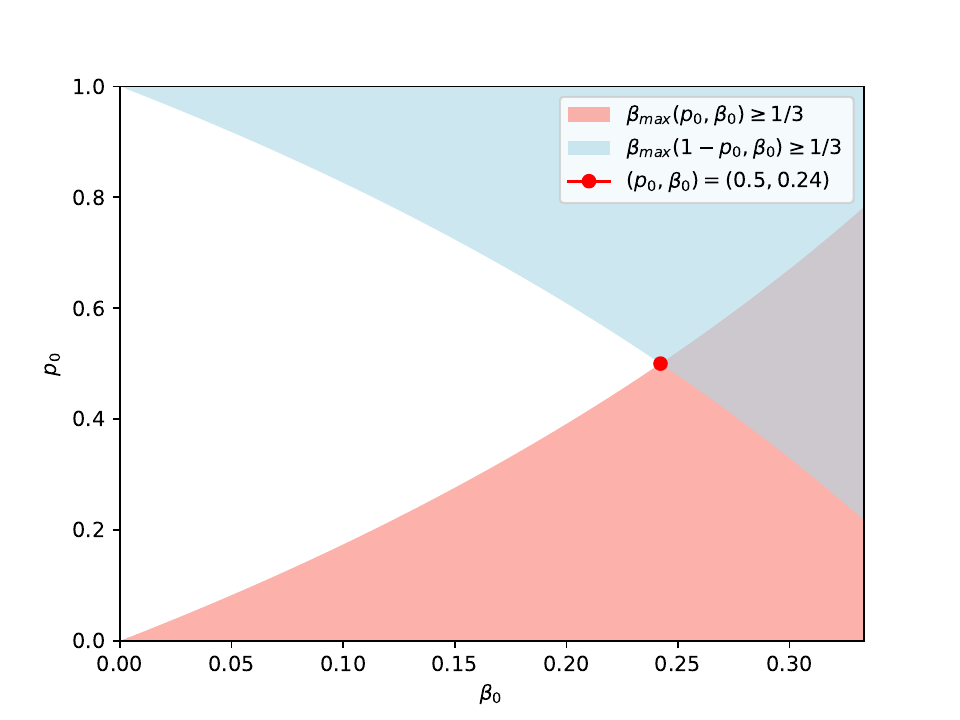}
     \caption{Pairs $(p_0,\beta_0)$ such that $\beta_{\max}(p_0,\beta_0)\geq 1/3$. This figure gives a lower bound for which $(p_0,\beta_0)$ can result in the proportion of Byzantine validators exceeding $1/3$ on both branches. } 
    \label{fig:pBetaShades} 
\end{figure}

Having explored scenarios in which protocol vulnerabilities manifest exclusively before \texttt{GST}, we now focus on potential threats posed by Byzantine validators after \texttt{GST}. Given the acknowledged impact of the \emph{Probabilistic Bouncing Attack} on Liveness (cf. \autoref{subsubsec:probabilisticBouncingAttack}), our study extends to take the inactivity leak into account.

\subsection{Revisiting the Probabilistic Bouncing Attack}\label{subsec:revisitPBA}

This subsection revisits the Probabilistic Bouncing Attack \autoref{subsubsec:probabilisticBouncingAttack}, showing that Byzantine validators could exceed the Safety threshold even during the synchronous period. Contrary to the previous scenarios, this one starts in the asynchronous period but unfolds in the synchronous period. This demonstrates that the inactivity leak poses significant challenges even within the synchronous period, revealing its broader implication for blockchain security.

As mentioned, while analyzing the probabilistic bouncing attack, we did not consider the penalties. Here, we fill this gap.

Let us note that there is no problem with conflicting finalization as the attack is progressing after \texttt{GST} in the synchronous period. In synchronous time, there is not enough delay for honest validators to miss a finalization on another branch. There would need to be more than two-thirds of the active stakes owned by Byzantine validators to break Safety in the synchronous period.

We briefly discussed the differences in gravity between conflicting finalization and having more than $1/3$ of the stake owned by Byzantine validators. We left the actual comparison and the in-depth analysis of the gravity of going beyond the infamous threshold as future work.

We primarily focus on identifying specific scenarios that would disrupt the network.
Thus, we give a detailed explanation of a scenario that could lead to Byzantine validators breaking the $1/3$ threshold even during synchronous period (after \texttt{GST}).

Let us remind how the attack takes place for self-containment.

\paragraph{Probabilistic Bouncing Attack Summary}

The attack can be summarized as follows: (1) A favorable setup that partitions honest validators into two different views of the blockchain occurs. (2) At each epoch, Byzantine validators withhold their messages from honest validators, releasing them at the opportune time to make some honest validators change their view. (3) This attack continues as long as at least one Byzantine validator is proposer in the $j^{th}$ first slots of the epoch, where $j$ is a parameter of the protocol. The probability of the attack continuing for $k$ epochs with a proportion of $(1-\beta_0)$ honest validators is $(1-(1-\beta_0)^j)^k$.

We start by analyzing the outcome of a fork where a proportion $p_0$ of the \emph{honest} validators start on chain $A$ and $1-p_0$ of the honest validators start on chain $B$.

We consider how a \textit{Probabilistic Bouncing Attack} would unfold, taking the inactivity leak into account. A probabilistic bouncing attack lasting more than $4$ epochs will necessarily cause an inactivity leak. Knowing this, we analyze the stakes of honest and Byzantine validators in this setting.

For this attack to continue, at each epoch, Byzantine validators cast their vote with a different chain as their candidate chain. They are active on each chain alternatively. Due to their inactivity every $2$ epochs, they will get ejected from the chain after a total of $7653$ epochs (4 weeks and 6 days). Byzantine validators are active on each chain to ensure that justification only happens every two epochs, preventing finalization from occurring.

For this attack to continue indefinitely, Byzantine validators must ensure honest validators are split into two branches according to two conditions:
(a) the honest validators are not enough to justify a chain on their own, and
(b) the Byzantine validators can justify it afterwards with their withheld votes.
This means that (a) $p_0$ must not represent more than $2/3$ of the stake, and (b) the proportions $p_0$ of honest validators and $\beta_0$ of Byzantine validators must represent more than two-thirds of the total stake.
The two necessary conditions are that (a) $p_0(1-\beta_0)<2/3$ and (b) $p_0(1-\beta_0)+\beta_0 > 2/3$. 
For the attack to function, we get that:
\begin{equation}\label{eq:p0conditions}
    \frac{2-3\beta_0}{3(1-\beta_0)} < p_0 < \frac{2}{3(1-\beta_0)} .
\end{equation}
We can see that the closer $\beta_0$ is to 0, the closer $p_0$ has to be from 2/3. This is to be expected as otherwise the Byzantine validators would be unable to justify the checkpoint with withheld votes.

An illustration of an ongoing attack with the probability for honest validators to be on one chain or the other is depicted in \autoref{fig:schemaAttackProba}. At each epoch, a proportion $p_0$ of honest validators is on one branch, whereas a proportion $1-p_0$ is on the other.

\begin{figure}
    \centering
    \begin{tikzpicture}[->,>=stealth,shorten >=1pt,node distance=2.5cm and 3cm]
  \tikzstyle{block} = [rectangle, draw, minimum width=1cm, minimum height=1cm]

  \node[block] (A) {$A$};
  \node[block, above right of=A] (B) {$B$};
  \node[block, below right of=A] (B') {$B'$};
  \node[block, right of= B] (C) {$C$};
  \node[block, right of= B'] (C') {$C'$};
  \node[block, right of= C] (D) {$D$};
  \node[block, right of= C'] (D') {$D'$};
  
  \draw (A) -- (B) node[midway, above, sloped]{$p_0$};
  \draw (A) -- (B') node[midway, below, sloped]{$1-p_0$}; 
  \draw (B) -- (C)  node[midway, above]{$1-p_0$} ;
  \draw (B') -- (C')  node[midway, below]{$p_0$} ;
  \draw (B) -- (C')  node[midway, below left, sloped]{$p_0$} ;
  \draw (B') -- (C)  node[midway, below right, sloped]{$1-p_0$}  ;
  
  \draw (C) -- (D)  node[midway, above]{$p_0$} ;
  \draw (C') -- (D')  node[midway, below]{$1-p_0$} ;
  \draw (C) -- (D')  node[midway, below left, sloped]{$1-p_0$} ;
  \draw (C') -- (D)  node[midway, below right, sloped]{$p_0$}  ;
  
\end{tikzpicture}
    \caption{This figure represents, using a Markov chain, the probability of an honest validator changing branches or not every epoch. During the attack, the Byzantine validators ensure that a proportion $p_0$ of honest validators remains on one branch so they can justify this branch later with their withheld votes (\autoref{eq:p0conditions}).}
    \label{fig:schemaAttackProba}
\end{figure}
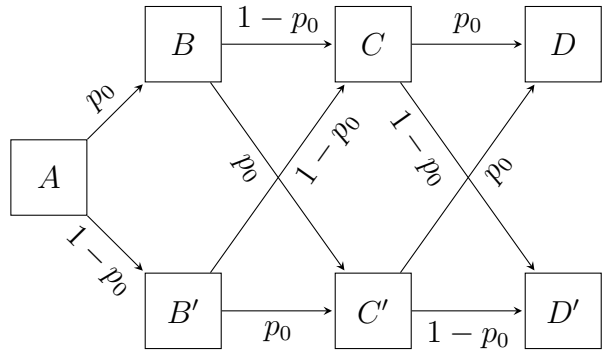

\paragraph*{Analytical Evaluation}

We are interested in the evolution of the proportion of Byzantine validators' stake $\beta$ during the attack. To examine this, we analyze the evolution of the inactivity score over time for an honest validator randomly placed at each epoch. Referring to \autoref{fig:schemaAttackProba}, we observe that after two epochs, there is a probability $p_0(1-p_0)$ of having been on branch $B$ for both epochs or on branch $A$ for both epochs. The probability of being on both branches, regardless of the order, is $p_0^2+(1-p_0)^2$. From the perspective of a chain, validators will be deemed inactive if they are active on the other chain. The probability of the inactivity score evolution after two epochs is the following:
\begin{equation}
    \left\{
    \begin{array}{ll}
        +8 :& p_0(1-p_0) \\
        +3 :& p_0^2+(1-p_0)^2 \\
        -2 :& p_0(1-p_0) \\
    \end{array}
\right.
\end{equation}

We can notice that the time-dependent probability of the inactivity score is the convolution of two random walks. The first random walk moves +4 with probability $p_0$ and -1 with probability $(1-p_0)$. The second is the opposite, moving +4 with probability $(1-p_0)$ and -1 with probability $p_0$.
We place ourselves in the continuous case to be able to continue our analysis and find the stake of validators with the inactivity score distribution over time (see \autoref{appendix:sec:discInaSco} for details on the discrete and continuous case). To do so, we use the fact that a random walk follows a Gaussian distribution when time is large, using the central limit theorem. 
The expectation of the two random walks are $(5p_0-4)t$ and $(1-5p_0)t$, respectively, with both having a standard deviation of $25p_0(1-p_0)$. We disregard here the fact that the actual inactivity score is bounded by zero for analytical tractability. Allowing for negative values in the inactivity score can result in a reward instead of a penalty, which leads to a scenario conservatively estimating the loss of stake.
The convolution of these two random walks is the probability of the inactivity score $I$:
\begin{equation}
\phi(I,t)=\frac{1}{\sqrt{4\pi Dt}}\exp \left(-\frac{(I-Vt)^2}{4Dt}\right),
\end{equation}
with $D=25p_0(1-p_0)$ and $V=3/2$.
It now remains to find the distribution function of the stake $s$.  
We rewrite here the differential equation of the stake depending on $I$ previously described in \autoref{eq:stakeDerivative}:
\begin{equation}\label{eq:stakeDerivative}
    \frac{d s}{d t} = -\frac{I(t) s}{2^{26}}.
\end{equation}

Using this (details in \autoref{appendix:sec:lognormal}) we find the distribution function of the stake $s$ to be:
\begin{equation}
    P(s,t)=\frac{2^{26}}{s\sqrt{\frac{4}{3}\pi Dt^3}}\exp\left(-\frac{(2^{26}\ln(s/32)+Vt^2/2)^2}{\frac{4}{3}Dt^3}\right),
\end{equation}
with $D$ and $V$, the diffusion and the velocity. In our case $V=3/2$ and $D=25p_0(1-p_0)$. The stake follows a log normal distribution for which the cumulative function is:
\begin{equation}
    F(s,t)=\frac{1}{2}+\frac{1}{2} \operatorname{erf}\left[\frac{2^{26}\ln (s/32)+Vt^2/2}{\sqrt{\frac{4}{3}Dt^3}}\right].
\end{equation}

Currently, the probability $P$ does not reflect the actual stake according to time since validators get ejected at $16.75$ ETH and are stuck at $32$ ETH. 
To emulate this mechanism, since we know the cumulative distribution function, we can compute the new probability law $\mathcal{P}$:
\begin{equation}    
\mathcal{P}(x,t) = \begin{cases}
F(a,t) & \text{if } x = 0, \\ 
P(x,t) & \text{if } a < x < b, \\
1-F(b,t) & \text{if } x = b,  
\end{cases}
\end{equation}
with $a=16.75$ and $b=32$. This new probability law takes into account the fact that if the stake is lower than $16.75$ ETH, it becomes 0, and it is capped at $32$ ETH.
The explicit expression of $\mathcal{P}$ reads:
\begin{equation}
\begin{split}
\mathcal{P}(x,t) = &\delta(x)\cdot F(a,t) + \delta(x-b)\cdot (1-F(b,t)) \\ 
&+ [H(x-a)\times H(b-x)] \cdot P(x,t),
\end{split}
\end{equation}
where $\delta$ is the Dirac distribution, and $H$ is the Heaviside function. 
\autoref{fig:mathcalP} shows a visual representation of the function $\mathcal{P}$.

\begin{figure}
    \centering
    \includegraphics[width=.8\linewidth]{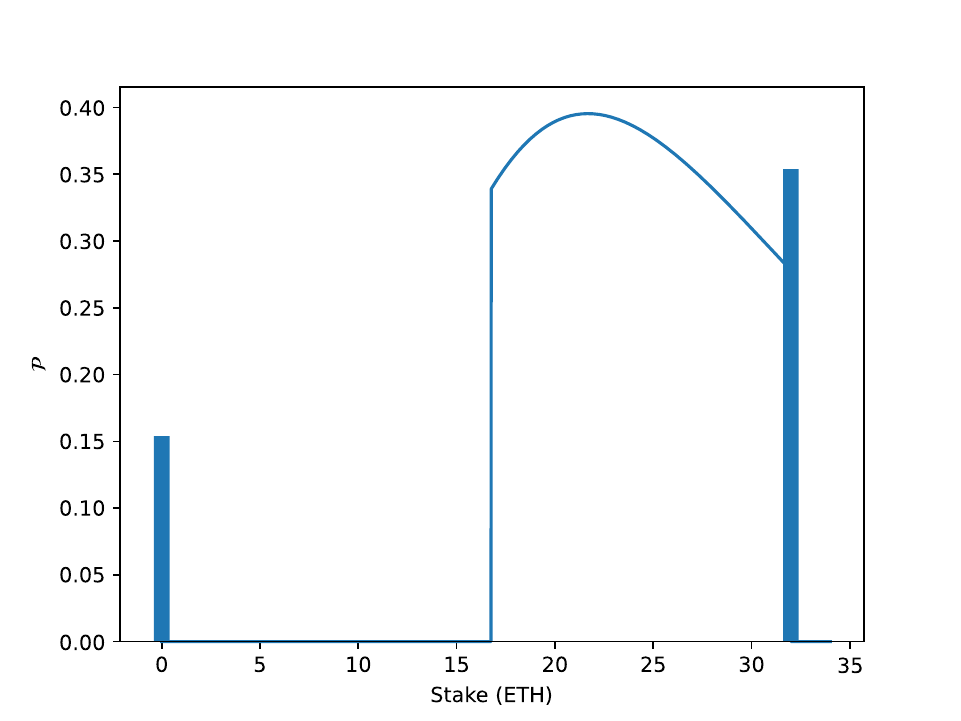}
    \caption{This is a representation of the distribution $\mathcal{P}$ at $t=4024$ with an exaggerated standard deviation to provide a better intuition of the distribution behavior.}
    \label{fig:mathcalP}
\end{figure}

The associated cumulative distribution function $\mathcal{F}$ of $\mathcal{P}$ is:
\begin{equation}
\begin{split}
    \mathcal{F}(x,t) = \; &\int_{0}^{x} \mathcal{P}(s,t) \, ds \\
    = \; &F(a,t) + H(x-a)[F(x,t)-F(a,t)] \\
    &+ H(x-b)[1-F(x,t)] .
\end{split}
\end{equation}

With this, we can evaluate the ratio of Byzantine validators and determine with what probability it will go beyond $1/3$. 
We denote by $s_{\rm B}(t)$ the stake of Byzantine validators and $s_{\rm H}(t)$ the stake of an honest validator. We are looking for the probability such that
\begin{equation}
    \beta(t) = \frac{\beta_0 s_{\rm B}(t)}{\beta_0 s_{\rm B}(t) + (1-\beta_0) s_{\rm H}(t)} > \frac{1}{3},
\end{equation}
depending on the probability of $s_{\rm H}$ that we now know. This translates into:
\begin{equation}\label{eq:CDFratioByzOverOneThird}
    \mathcal{F}\left(\frac{2\beta_0}{1-\beta_0} s_{\rm B}(t), t\right),
\end{equation}
where $s_{\rm B}(t)$, the stake of a Byzantine validator, follows the stake of a semi-active validator.

We provide a representation of \autoref{eq:CDFratioByzOverOneThird} for several values of $\beta_0$ with $p=0.5$ (note that $p_0$ has a minimal impact on the curve as it only slightly changes the variance) in \autoref{fig:CDFratioByzOverOneThird}.

The figure illustrates how the proximity of $\beta_0$ to $1/3$ can be detrimental.
This phenomenon occurs because the mean of the log-normal distribution approximates $s_{\rm B}$ when $t$ is not too large. Referring to \autoref{eq:CDFratioByzOverOneThird}, we observe that if $\beta_0=1/3$, we are examining $\mathcal{F}(s_{\rm B}(t),t)$, which explains why the probability is $0.5$.

The probability increases sharply just before the expulsion of Byzantine validators; however, it is unlikely that the probabilistic bouncing attack would persist for that long.
As an estimate, we can use the probability mentioned in the previous chapter (\autoref{eq:livenessProba}) to provide an upper bound on the probability of reaching epoch $7000$: $(1-(1-\beta_0)^8)^{7000}$ is equal to $1.01\times 10^{-121}$ for $\beta_0=1/3$. This essentially negates any strategy by Byzantine validators that would require the probabilistic bouncing attack to last that long.

However, as \autoref{fig:CDFratioByzOverOneThird} shows, with $\beta_0$ nearing 1/3, Byzantine validators realistically have a high probability of quickly exceeding $1/3$ of the stake, especially considering the significant factor of the attack occurring on two branches. This means that if a validator is active during an epoch on one branch, it is inactive on the other. Hence, the probability can be doubled for each curve. 

We can comprehend this by considering the case of $\beta_0=1/3$: after two epochs, the Byzantine validators have been active on each branch once. If one branch has more validators that have been active on it for two epochs, the other branch will have honest validators incurring, on average, more penalties than the Byzantine validators. On this latter branch, the Byzantine stake will represent more than one-third of the total stake.

\textit{These results imply that, theoretically, within the synchronous period and with a proportion of Byzantine stake sufficiently close to $1/3$ as well as a favorable initial setup, the probabilistic bouncing attack can pose a threat to the blockchain by allowing Byzantine validators to exceed the safety threshold of $1/3$.}

\begin{figure}
    \centering
    \includegraphics[width=.8\linewidth]{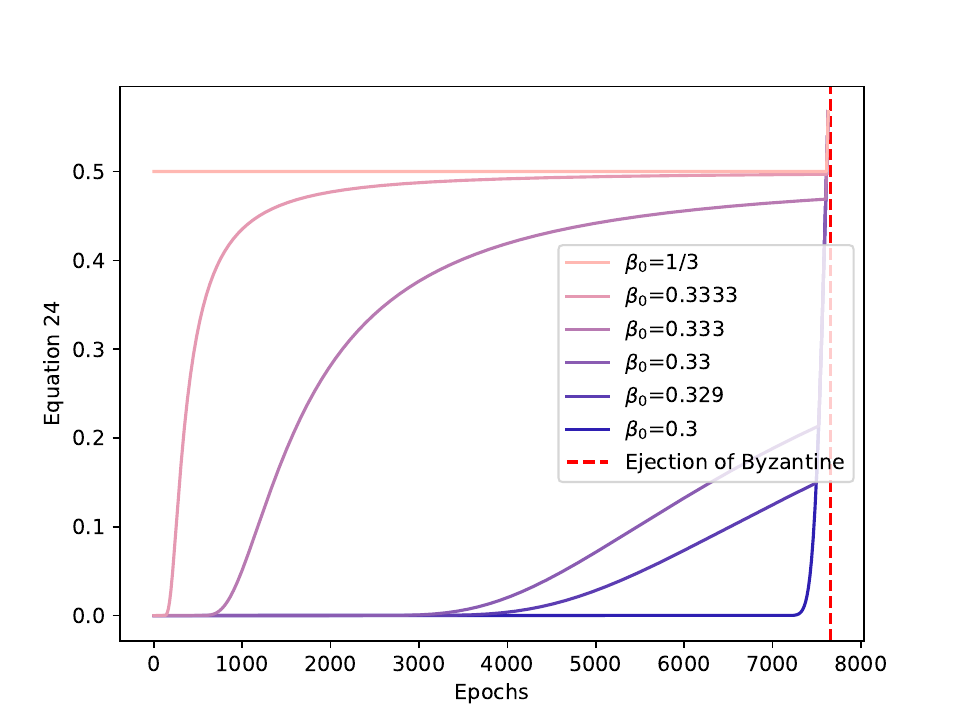}
    \caption{We represent \autoref{eq:CDFratioByzOverOneThird} according to time with various $\beta_0$.}
    \label{fig:CDFratioByzOverOneThird}
\end{figure}

\vspace{2cm}

\section{Discussion \& Conclusion}
Our work presents the first theoretical analysis of the inactivity leak, designed to restore finalization during catastrophic network failure. We highlight situations where \textit{Byzantine actions expedite the loss of Safety, either through conflicting finalization or by increasing the Byzantine proportion over the one-third Safety threshold}. Notably, we demonstrate the possibility of Byzantine validators exceeding the one-third Safety threshold even during synchronous periods.

Our findings underscore the critical role of penalty mechanisms in BFT analysis. By illuminating potential issues in the protocol design, we offer insights for future improvement and provide tools to investigate them.

    \chapter{Ethereum PoS Protocol Analysis under the Game Theoretical Model} \label{chap:Eth-GTModel-Analysis}

\minitoc

\chapterLettrine{S}{o} far, we have analyzed the protocol through the lens of distributed systems, where agents were either honest or Byzantine. Although the second analysis did take into account an important part of the incentives, the model did not consider rational agents actually driven by incentives. In this final analysis, our goal is to study the strategies and equilibrium that can arise if all agents are rational. We aim to model agents with utility functions directly linked to the protocol rewards.

We start by presenting the necessary elements of the protocol required for our study. We then proceed with the definition of our \emph{game} and its analysis.

\section{Ethereum protocol}\label{sec:ethProtocol}

We have simplified the complex functioning of the Ethereum Proof of Stake (PoS) protocol into the essential components for our analysis. At a high level, the set of participants, called validators, locally maintain a tree of blocks, denoted as $\mathcal{T}$. At any given moment, validators can evaluate the block tree to determine the branch that constitutes the current canonical chain, using a function known as the fork choice rule. The protocol requires validators to add new blocks to their local tree of blocks and broadcast these blocks to other validators.

Upon receiving a new block, validators assess whether it extends their canonical chain; if it does, a portion of them vote for it (attest to the block).

Thanks to a finalization protocol, a growing prefix of the canonical chain is maintained. This growing prefix cannot be forked, while the part of the chain beyond this prefix is forkable. In this study, we are interested in the protocol responsible for building the forkable part of the chain before finalization.

For our analysis, the elements of the protocol that we need are the following:

\begin{itemize}
    \item \textbf{Slot.} Slots are the time frames dictated by the protocol for \textit{proposers} and \textit{attesters} to perform certain actions.
    \item \textbf{Proposer.} There is one proposer per slot. The proposer's role is to propose a block during a specific slot.
    \item \textbf{Attester.} There are $a$ attesters per slot. The attester's role is to produce an attestation, which is a vote for a specific block. Attestations determine the weights of the blocks, which are used by the fork choice rule.
    \item \textbf{Fork choice rule.} The fork choice rule is the protocol's rule that determines, in case of a fork, which block is the head of the chain.
    \item \textbf{Canonical chain.} The canonical chain is the chain designated by the fork choice rule.
\end{itemize}

Proposers and attesters are assigned to slots in a deterministic and verifiable manner using a pseudo-random function included in the Ethereum protocol.

\paragraph*{Fork Choice Rule}

As mentioned, despite their name, blockchains are closer to block trees. Forks can occur, causing the blockchain to have several branches rather than a single chain. 
To address this, the protocol defines a function called the fork choice rule, $\mathcal{F}$, which indicates, at each slot $k$, on which block to build or attest based on the tree of all blocks, $\mathcal{T}_k$, and the set of attestations, $\mathcal{A}_k$. This block is called the head of the canonical chain. To determine the head of the canonical chain, the fork choice rule follows these steps:
\begin{enumerate}
    \item Traverse the set of all attestations $\mathcal{A}_k = \cup_{i=0}^k a_i$, where $a_i$ is the set of attestations sent during slot $i$. Keep only the last attestation from each attester.
    \item For each attestation, add a weight\footnote{In the protocol, the weight added is proportional to the stake of the corresponding validator. For simplicity, each validator is assumed to have the same stake in our analysis. Thus, without loss of generality, we choose the stake to be one, so that counting the attestation weight is equivalent to counting the number of attestations for this block or its descendants.} to each block attested to, as well as to all of its parents. 
    This process gives an \textit{attestation weight} to each block using the tree of blocks $\mathcal{T}_k$ and the set of attestations $\mathcal{A}_k$ at slot $k$. 
    \item Start from the genesis block and continue along the chain by following the block with the highest attestation weight at each fork. Return the block that has no children. This block is the head of the candidate chain.\footnote{We assume that the block with the highest attestation weight always starts from the genesis block. In practice, the fork choice rule begins from the justified checkpoint with the highest epoch, but we have simplified the protocol for our analysis.}
\end{enumerate}

During the execution of the protocol, it is prescribed that the block proposer of slot $k$ executes the fork choice rule $\mathcal{F}(\mathcal{T}_{k-1}, \mathcal{A}_{k-1})$ to determine the parent of its block. During the same slot, the attester should use the fork choice rule $\mathcal{F}(\mathcal{T}_k, \mathcal{A}_{k-1} + \rho a)$ to determine which block to vote for. The notation $\mathcal{A}_{s-1}+\rho a$ indicates that an additional attestation weight of $\rho a$ is added on the block of the current round. The addition of the attestation with $\rho a$ is called the proposer boost and is explained below.

It is important to note that during periods of good network conditions, all validators are likely to be aware of every block and attestation within the corresponding slot. Therefore, all validators will have the same view of the attestation weights, implying that $\mathcal{F}$ will consistently return the same head when executed by different validators. In our model, we assume 'perfect' network conditions, meaning that any message sent is assumed to be received immediately.

\paragraph*{Attestation Weight}
We previously introduced the concept of attestation weight, which is crucial throughout our analysis. Let us now briefly expand on it. The attestation weight of a block is the sum of all attestations sent for this block, as well as all attestations sent for the descendants of this block. This means that an attestation not only supports a single block but also the entire chain of blocks leading to it.

We refer to the \textit{total} attestation weight of a branch of blocks as the sum of all attestations for that branch. It should be clear that the total attestation weight of a branch is equivalent to the attestation weight of the first block in the branch.

\paragraph*{Proposer Boost}
To mitigate the balancing attack \cite{neu_ebb_2021}, in which malicious validators withhold votes and release them at an opportune time to maintain a fork indefinitely, the \textit{proposer boost} was created\footnote{See \href{https://github.com/ethereum/consensus-specs/pull/2730}{ethereum/consensus-specs/pull/2730}}. 
However, this modification of the protocol was found to be susceptible to new attacks involving equivocation \cite{neu_two_2022}, leading to an update where the fork choice rule no longer considers conflicting attestations from the same attester.

The proposer boost, denoted as $\rho \in [0,1)$, temporarily assigns $\rho a$ artificial attestations to a \emph{timely} block, where $a$ represents the total attestation weight per slot. This mechanism adds additional attestation weight to a block exclusively during the slot in which it is proposed. Specifically, if a block is received early in slot $k$ (within the first four seconds), $\rho a$ artificial attestations are temporarily added to it. 
Currently, the proposer boost is set at 0.4, effectively adding $0.4 \times a$ attestations to the block's current weight. This adjustment influences the attestation weights so that during slot $k$, the timely block $B_k$ carries additional weight, thereby affecting the fork choice rule.

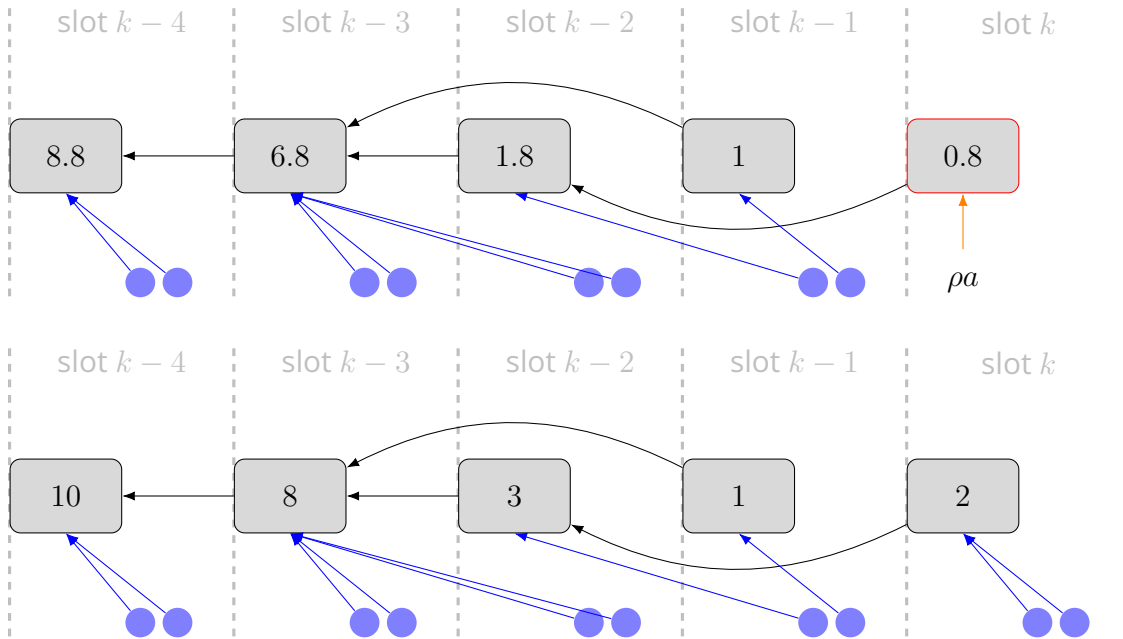
\begin{figure}[ht]
    \centering
    \resizebox{\columnwidth}{!}{%
    \begin{tikzpicture}[
    node distance=5cm and 1.5cm, 
    block/.style={rectangle, rounded corners, draw, fill=gray!30, minimum width=1.5cm, minimum height=1cm, align=center}, 
    blockOrange/.style={rectangle, rounded corners, draw, fill=orange!30, minimum width=1.5cm, minimum height=1cm, align=center},
    dashed line/.style={dashed, very thick, draw=gray!50},
    label/.style={text=gray!50, font=\normalsize}, 
    attestation/.style={Latex-, draw=blue}, 
    attester/.style={circle, fill=blue!50, minimum size=4mm, inner sep=0pt, font=\normalsize, text=white} 
]

\node[block] (block1) {$8.8$};
\node[block, right=of block1] (block2) {$6.8$};
\node[block, right=of block2] (block3) {$1.8$};
\node[block, right=of block3] (block4) {$1$};
\node[block,draw=red,right=of block4] (block5) {$0.8$};

\begin{pgfonlayer}{background}
\draw[dashed line] ($(block1.west)$) ++(0,2) -- ++(0,-4);
\draw[dashed line] ($(block2.west)$) ++(0,2) -- ++(0,-4);
\draw[dashed line] ($(block3.west)$) ++(0,2) -- ++(0,-4);
\draw[dashed line] ($(block4.west)$) ++ (0,2) -- ++(0,-4);
\draw[dashed line] ($(block5.west)$) ++ (0,2) -- ++(0,-4);
\draw[dashed line,opacity=0] ($(block5.east)$) ++ (1.5,2) -- ++(0,-4);
\end{pgfonlayer}

\node[label, above=1cm of block1, xshift=0.75cm] {slot $k-4$};
\node[label, above=1cm of block2, xshift=0.75cm] {slot $k-3$};
\node[label, above=1cm of block3, xshift=0.75cm] {slot $k-2$};
\node[label, above=1cm of block4, xshift=0.75cm] {slot $k-1$};
\node[label, above=1cm of block5, xshift=0.75cm] {slot $k$};

\draw[-Latex] (block2) -- (block1);
\draw[-Latex] (block3) -- (block2);
\draw[-Latex] (block4) to[bend right=27] (block2) ;
\draw[-Latex] (block5) to[bend left=27] (block3) ;

\node[attester, below=1cm of block1, xshift=1cm] (attest0) {};
\node[attester, below=1cm of block1, xshift=1.5cm] (attest1) {};
\node[attester, below=1cm of block2, xshift=1cm] (attest2) {};
\node[attester, below=1cm of block2, xshift=1.5cm] (attest3) {};
\node[attester, below=1cm of block3, xshift=1cm] (attest4) {};
\node[attester, below=1cm of block3, xshift=1.5cm] (attest5) {};
\node[attester, below=1cm of block4, xshift=1.5cm] (attest6) {};
\node[attester, below=1cm of block4, xshift=1cm] (attest7) {};
\node[below=1.2cm of block5, anchor=center, minimum size=4mm, circle]  (attest10) {$\rho a$};


\draw[attestation] (block1.south) -- (attest0);
\draw[attestation] (block1.south) -- (attest1);
\draw[attestation] (block2.south) -- node[sloped, above] {} (attest2) ;
\draw[attestation] (block2.south) -- node[sloped, below] {} (attest3) ;
\draw[attestation] (block2.south) -- node[sloped, below] {} (attest4) ;
\draw[attestation] (block2.south) -- node[sloped, below] {} (attest5) ;
\draw[attestation] (block4.south) -- node[sloped, below] {} (attest6) ;
\draw[attestation] (block3.south) -- node[sloped, below] {} (attest7) ;

\draw[Latex-, draw=orange] (block5.south) -- node[sloped, below] {} (attest10) ;

\end{tikzpicture}  }

\vspace{.3cm}

\resizebox{\columnwidth}{!}{%
\begin{tikzpicture}[
    node distance=5cm and 1.5cm, 
    block/.style={rectangle, rounded corners, draw, fill=gray!30, minimum width=1.5cm, minimum height=1cm, align=center}, 
    blockOrange/.style={rectangle, rounded corners, draw, fill=orange!30, minimum width=1.5cm, minimum height=1cm, align=center},
    dashed line/.style={dashed, very thick, draw=gray!50},
    label/.style={text=gray!50, font=\normalsize}, 
    attestation/.style={Latex-, draw=blue}, 
    attester/.style={circle, fill=blue!50, minimum size=4mm, inner sep=0pt, font=\normalsize, text=white} 
]

\node[block] (block1) {$10$};
\node[block, right=of block1] (block2) {$8$};
\node[block, right=of block2] (block3) {$3$};
\node[block, right=of block3] (block4) {$1$};
\node[block, right=of block4] (block5) {$2$};

\begin{pgfonlayer}{background}
\draw[dashed line] ($(block1.west)$) ++(0,2) -- ++(0,-4);
\draw[dashed line] ($(block2.west)$) ++(0,2) -- ++(0,-4);
\draw[dashed line] ($(block3.west)$) ++(0,2) -- ++(0,-4);
\draw[dashed line] ($(block4.west)$) ++ (0,2) -- ++(0,-4);
\draw[dashed line] ($(block5.west)$) ++ (0,2) -- ++(0,-4);
\draw[dashed line] ($(block5.east)$) ++ (1.5,2) -- ++(0,-4);
\end{pgfonlayer}

\node[label, above=1cm of block1, xshift=0.75cm] {slot $k-4$};
\node[label, above=1cm of block2, xshift=0.75cm] {slot $k-3$};
\node[label, above=1cm of block3, xshift=0.75cm] {slot $k-2$};
\node[label, above=1cm of block4, xshift=0.75cm] {slot $k-1$};
\node[label, above=1cm of block5, xshift=0.75cm] {slot $k$};

\draw[-Latex] (block2) -- (block1);
\draw[-Latex] (block3) -- (block2);
\draw[-Latex] (block4) to[bend right=27] (block2) ;
\draw[-Latex] (block5) to[bend left=27] (block3) ;

\node[attester, below=1cm of block1, xshift=1cm] (attest0) {};
\node[attester, below=1cm of block1, xshift=1.5cm] (attest1) {};
\node[attester, below=1cm of block2, xshift=1cm] (attest2) {};
\node[attester, below=1cm of block2, xshift=1.5cm] (attest3) {};
\node[attester, below=1cm of block3, xshift=1cm] (attest4) {};
\node[attester, below=1cm of block3, xshift=1.5cm] (attest5) {};
\node[attester, below=1cm of block4, xshift=1.5cm] (attest6) {};
\node[attester, below=1cm of block4, xshift=1cm] (attest7) {};
\node[attester, below=1cm of block5, xshift=1.5cm] (attest8) {};
\node[attester, below=1cm of block5, xshift=1cm] (attest9) {};

\draw[attestation] (block1.south) -- (attest0);
\draw[attestation] (block1.south) -- (attest1);
\draw[attestation] (block2.south) -- (attest2) ;
\draw[attestation] (block2.south) -- (attest3) ;
\draw[attestation] (block2.south) -- (attest4) ;
\draw[attestation] (block2.south) -- (attest5) ;
\draw[attestation] (block4.south) -- (attest6) ;
\draw[attestation] (block3.south) -- (attest7) ;
\draw[attestation] (block5.south) -- (attest9) ;
\draw[attestation] (block5.south) -- (attest8) ;


\end{tikzpicture} 
 }

    \caption{This figure illustrates an evaluation of the fork choice rule executed after a timely block proposal in slot $k$. In this simplified representation, there are 2 attesters per slot, so two attestations are sent per slot, each represented by a circle pointing to the block being attested. Artificial attestations are created in the current slot $k$ and add a weight of $\rho a$ for the fork choice rule.
    At the beginning of slot $k$, a timely block is proposed, and the proposer boost of $\rho a=0.8(=0.4\times 2)$ is applied. When slot $k$ ends, the proposer boost is cleared, and we observe the two attestations sent during the slot that followed the fork choice rule.
    }

    \label{fig:forkChoiceRule}
\end{figure}

An example of the fork choice rule with the proposer boost in action is illustrated in \autoref{fig:forkChoiceRule}. The figure captures the chain at two different times during slot $k$: right after the timely block proposal and at the end of the slot. Each time, we show the weight of the blocks as computed by the fork choice rule. In our study, we remain agnostic about the value of $\rho$ to examine its effects across different values.

\paragraph*{Rewards}
The rewards for proposers and attesters are computed in a verifiable manner. Based on the content of a block in the canonical chain, which includes attestations and transactions, we can determine the rewards for the attesters responsible for these attestations and for the proposer who included them. The proposer also earns rewards from transaction fees.

Validators are incentivized to participate in the process of adding new blocks and attesting them through endogenous rewards inscribed in the protocol. A crucial factor in determining rewards is identifying the canonical chain. For instance, a block proposed but not included in the canonical chain will result in zero rewards for the proposer. All rewards for block proposers and attesters can be determined by examining the content of the blocks that form the canonical chain.

Because the proposer’s rewards depend on the attesters' rewards, we first introduce the rewards for attesters.

\textbf{Attester Rewards.} An attester is rewarded for its attestation based on two factors: the \emph{timeliness} and the \emph{correctness} of its vote. \autoref{tab:attRewards} indicates the reward for an attester depending on these two factors.

\begin{table}[ht]
    \centering
    \begin{tabular}{|l|c|c|c|}
        
         \hline \text { Timeliness } & $1$ \text { slot } & $\leq 5$ \text { slots } & $\leq 64$ \text { slots } \\
\hline \text { Incorrect attestation vote} & $20x/27$ & $20x/27$ & $6x/27$ \\
\hline \text { Correct attestation vote } & $x$ & $20x/27$ & $6x/27$ \\
\hline
    \end{tabular}
    \caption{Attester's rewards based on the inclusion of the attestation in the chain and its blockvote.}
    \label{tab:attRewards}
\end{table}

Timeliness refers to the number of slots between the expected time for sending an attestation and its actual inclusion in a block. The fastest possible inclusion is 1 slot, meaning the attestation is included in the block of the subsequent slot. An attestation is considered correct if its vote points to the most recent block at the time of its slot that belongs to the canonical chain. Thus, both timeliness and correctness depend on the actions and votes of other validators. Timeliness depends on whether subsequent blocks include the attestation and eventually belong to the canonical chain. Correctness is affected by the possibility that an attester might vote for a block that is initially in the canonical chain but is later not.

As the finalized chain grows, it will eventually determine which blocks belong to the canonical chain. However, as shown in \autoref{tab:attRewards}, correctness only significantly impacts the attester's reward if the attestation is included in the following slot. This may incentivize attesters to align their votes with the proposer of the next slot, regardless of whether the block ultimately becomes part of the canonical chain.

\textbf{Proposer Rewards.} For the proposer, there are two types of rewards: rewards based on attestations and rewards based on transactions.\footnote{These rewards differ in that attestation rewards are received on the consensus layer, while transaction fees are received on the execution layer, but this distinction does not impact our analysis.} Importantly, unlike in Bitcoin, there is no coinbase transaction in each block guaranteeing a minimum reward for proposing a block. The proposer receives a proportion of the reward generated for each attestation it includes. Additionally, the proposer receives a reward for each transaction included in its block in the form of transaction fees. These rewards are formalized in \autoref{subsec:payoff}, where the utility functions are defined.

\section{Model \& Game}\label{sec:modelGame}

We model the Ethereum PoS consensus protocol as a game where each player\footnote{We use the terms \emph{players} and \emph{validators} interchangeably.} is either a \emph{proposer} or an \emph{attester}. Ideally, in the prescribed behavior, proposers propose blocks, and attesters broadcast attestations. The game evolves over $s$ sequential \emph{slots}. There is one proposer and $a \in \mathbb{N}$ attesters per slot, resulting in a total of $s$ proposers and $as$ attesters. The value of $s$ is unknown to the players.

As described in \autoref{sec:gameModel} we have two main assumptions: (i) The game occurs during a synchronous period, where the network is considered fully synchronous with no latency, and (ii) The synchronous period follows an asynchronous period, during which there may have been delays in information transmission. Therefore, our game is set in the synchronous period, but the initial state is influenced by events that occurred during the preceding asynchronous period. In this initial asynchronous period, blocks may have an uneven distribution of attestations across slots, unlike in a permanently synchronous scenario.

Under these assumptions, the broadcast of the block proposal and attestations in our game are treated as atomic events.
Thus, in each slot, there are three distinct events:
\begin{enumerate}
    \item \textit{Block proposal.} The designated proposer for the slot proposes a new block, selecting a previously existing block in the observed blockchain as its parent. 
    When a proposer prepares a block, they add all available transactions and attestations to the block—i.e., all the transactions and attestations that are not yet part of the blockchain. Once the transactions and attestations are included, the block is proposed, meaning it is sent to the network. 
    \item \textit{Generation of transactions.} Transactions are sent by users and observed by proposers and attesters. 
    \item \textit{Attestations.} After the block proposal for the slot, all attesters of the slot choose which previously proposed block to attest and send their attestations simultaneously. 
\end{enumerate}
As detailed in \autoref{sec:ethProtocol}, the protocol prescribes that proposers (and attesters) should select as the parent (or as the block to attest) the block identified by the fork choice rule, i.e., the head of the canonical chain.

These three events occur in sequence and are depicted in \autoref{fig:slotEvents}. Proposers and attesters are financially motivated to participate in the protocol. It remains to be seen whether the protocol is incentive compatible; it is the case if following the protocol maximize their gains.

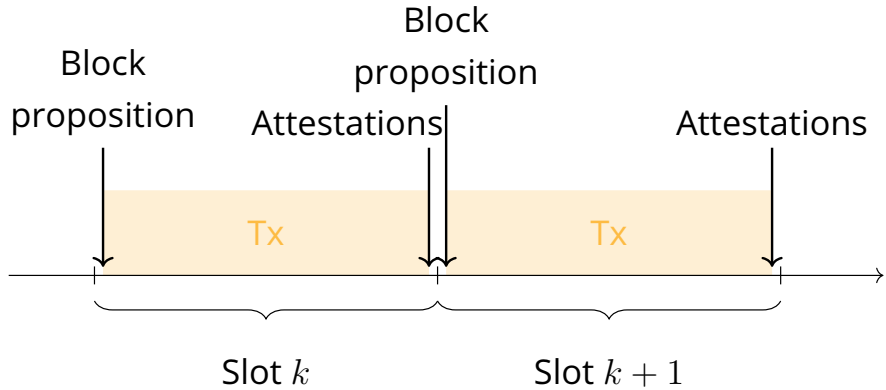
\begin{figure}[ht]
    \centering
    
    \resizebox{.8\textwidth}{!}{
\begin{tikzpicture}
    
\coordinate (O) at (-1,0); 
\coordinate (S0) at (0,-.3);
\coordinate (S1) at (4,-.3);
\coordinate (S2) at (8,-.3);
\coordinate (TX1) at (0.1,0);
\coordinate (TX2) at (3.9,0);
\coordinate (TX3) at (4.1,0);
\coordinate (TX4) at (7.9,0);
\coordinate (BP) at (.1,0);
\coordinate (BP2) at (4.1,0);
\coordinate (ATT) at (3.9,0);
\coordinate (ATT2) at (7.9,0);
\coordinate (F) at (9.2,0); 

\fill[color=Dandelion!20] rectangle (TX1) -- (TX2) -- ($(TX2)+(0,1)$) -- ($(TX1)+(0,1)$); 
\fill[color=Dandelion!20] rectangle (TX3) -- (TX4) -- ($(TX4)+(0,1)$) -- ($(TX3)+(0,1)$); 

\draw ($.5*(TX1)+.5*(TX2)+(0,0.5)$) node[Dandelion] {Tx};

\draw ($.5*(TX3)+.5*(TX4)+(0,0.5)$) node[Dandelion] {Tx};

\draw[<-,thick,color=black] ($(BP)+(0,0.1)$) -- ($(BP)+(0,1.5)$) node [above=0pt,align=center,black] {Block\\proposition};
\draw[<-,thick,color=black] ($(BP2)+(0,0.1)$) -- ($(BP2)+(0,2)$) node [above=0pt,align=center,black] {Block\\proposition};
\draw[<-,thick,color=black] ($(ATT)+(0,0.1)$) -- ($(ATT)+(0,1.5)$) node [above=0pt,align=center,black,xshift=-0.95cm] {Attestations};
\draw[<-,thick,color=black] ($(ATT2)+(0,0.1)$) -- ($(ATT2)+(0,1.5)$) node [above=0pt,align=center,black] {Attestations};
\draw [decorate,decoration={brace,amplitude=6pt,mirror}]($(S0)$) -- ($(S1)$) node [black,midway,below=15pt] {Slot $k$};
\draw [decorate,decoration={brace,amplitude=6pt,mirror}]($(S1)$) -- ($(S2)$) node [black,midway,below=15pt] {Slot $k+1$};

\draw[->] (O) -- (F);
\foreach \x in {0,4,8}
\draw(\x cm,3pt) -- (\x cm,-3pt);

\end{tikzpicture}
    }
    \caption{This schema represents the atomicity of the block proposal and the attestation broadcast. In each slot, three phases occur in order: first, the block proposal; then, the generation of transactions (which will be available for the proposer of the next slot); and finally, all attestations are sent simultaneously.}
    \label{fig:slotEvents}
\end{figure}

\subsection*{The game}\label{subsec:theGame}

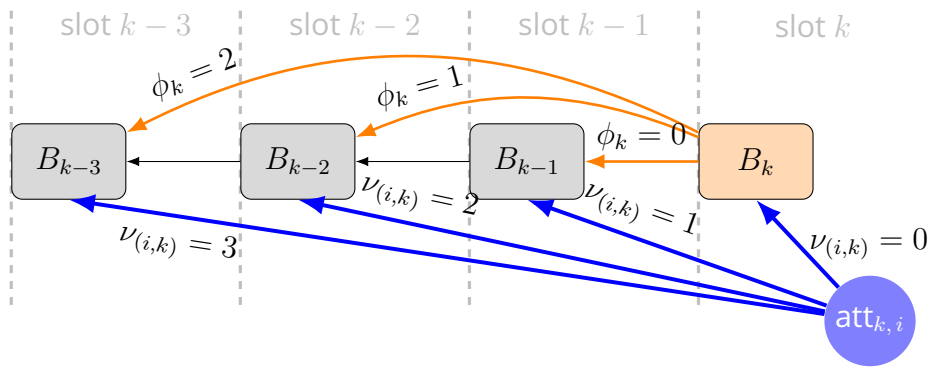
\begin{figure}[ht]
    \centering
    \begin{tikzpicture}[
    node distance=5cm and 1.5cm, 
    block/.style={rectangle, rounded corners, draw, fill=gray!30, minimum width=1.5cm, minimum height=1cm, align=center}, 
    blockOrange/.style={rectangle, rounded corners, draw, fill=orange!30, minimum width=1.5cm, minimum height=1cm, align=center},
    dashed line/.style={dashed, very thick, draw=gray!50},
    label/.style={text=gray!50, font=\normalsize}, 
    attestation/.style={Latex-, thick, draw=blue, line width=1.5pt}, 
    attester/.style={circle, fill=blue!50, minimum size=12mm, inner sep=0pt, font=\normalsize, text=white} 
]

\node[block] (block1) {$B_{k-3}$};
\node[block, right=of block1] (block2) {$B_{k-2}$};
\node[block, right=of block2] (block3) {$B_{k-1}$};
\node[blockOrange, right=of block3] (block4) {$B_{k}$};

\node[label, above=1cm of block1, xshift=0.75cm] {slot $k-3$};
\node[label, above=1cm of block2, xshift=0.75cm] {slot $k-2$};
\node[label, above=1cm of block3, xshift=0.75cm] {slot $k-1$};
\node[label, above=1cm of block4, xshift=0.75cm] {slot $k$};

\draw[-Latex] (block2) -- (block1);
\draw[-Latex] (block3) -- (block2);
\draw[-Latex, draw=orange, line width=1pt] (block4) to[bend right=27] node[sloped,above,pos=0.87] {$\phi_{k}=2$} (block1) ;
\draw[-Latex, draw=orange, line width=1pt] (block4) to[bend right=23] node[sloped,above,pos=0.8] {$\phi_{k}=1$} (block2) ;
\draw[-Latex, draw=orange, line width=1pt] (block4) -- node[above] {$\phi_{k}=0$} (block3);

\begin{pgfonlayer}{background}
\draw[dashed line] ($(block1.west)$) ++(0,2) -- ++(0,-4);
\draw[dashed line] ($(block2.west)$) ++(0,2) -- ++(0,-4);
\draw[dashed line] ($(block3.west)$) ++(0,2) -- ++(0,-4);
\draw[dashed line] ($(block4.west)$) ++ (0,2) -- ++(0,-4);
\end{pgfonlayer}

\node[attester, below=1cm of block4, xshift=1.5cm] (attest) {att\textsubscript{$k,i$}};
\draw[attestation] (block4.south) -- (attest) node[midway, right] {$\nu_{(i,k)}=0$};
\draw[attestation] (block3.south) -- node[sloped, above, pos=0.35] {$\nu_{(i,k)}=1$} (attest) ;
\draw[attestation] (block2.south) -- node[sloped, above, pos=0.22] {$\nu_{(i,k)}=2$} (attest) ;
\draw[attestation] (block1.south) -- node[sloped, below, pos=0.15] {$\nu_{(i,k)}=3$} (attest) ;

\end{tikzpicture}

    \caption{The actions available to an attester (blue) and a proposer (orange). The attester $i$ selects which block to attest with $\nu_{(i,k)}$. The proposer selects the parent of its block with $\phi_{k}$. An example is shown where $\phi_{k} = \phi_{k-1} = \phi_{k-2} = 0$.}
    \label{fig:gameActions}
\end{figure}

We denote the set of players (the validators) as $\mathcal{V} = \{P, A\}$, consisting of a set of proposers $P$ and a set of attesters $A$. Per slot, following the Ethereum protocol, there is exactly one proposer and $a \in \mathbb{N}$ attesters. Hence, the number of proposers is $|P| = s$, and the number of attesters is $|A| = as$, with $a, s \in \mathbb{N}$.

We model the interactions between proposers and attesters during $s$ slots in Ethereum PoS as a game. In each slot, the timeline of events is as follows: (i) a block is proposed at the beginning of the slot,\footnote{Every block thus receives the proposer boost in our model.} (ii) new transactions are proposed, and (iii) all the attesters of the slot send their attestations simultaneously. 
Therefore, the game is dynamic, with each stage corresponding to a slot. In each slot, the attesters play a simultaneous game following the proposal by the slot's proposer.
Our interest lies in the actions that the proposers and attesters have, which we now describe.


\textbf{Actions.}

When it is their turn (recall that each proposer/attester is uniquely assigned to a slot, and this information is verifiable, allowing players to take action only in their corresponding slot), a proposer must choose which block to extend, and an attester must select which block to attest. 
The action will take the form of a variable that indicates how many slots prior a proposer attaches its block to, or an attester attests.
More formally, the action of the proposer in slot $k$ is to assign a value to its variable $\phi_k \in \mathbb{N}$, corresponding to the difference between the current slot and the slot of the block selected as the parent. 
Similarly, after the block proposal in slot $k$, each attester $i$ of slot $k$ must assign a value to its variable $\nu_{i, k} \in \mathbb{N}$ that represents the difference between the current slot and the slot of the block being attested. 
We depict a subset of the action space in \autoref{fig:gameActions}. In more detail:
\begin{itemize}

    \item At the beginning of slot $k$, a proposer $p$ chooses the parent of its block $B_{k}$. We denote this action by $\phi_{k}\in \mathbb{N}$. $\phi_{k}=l$ means that $B_{k}$'s parent is $B_{k-1-l}$. Thus, if $B_{k}$'s parent is the block from the previous slot $k-1$, then $\phi_{k}=0$.

    The blocks contain two types of data: attestations and transactions. There is no limitation on the number of transactions and attestations a block can contain.\footnote{This simplification is similar to the one made in \cite{carlsten_instability_2016}.} A proposer always includes all available transactions and attestations. A transaction/attestation is considered available if it is not included in any of the predecessors of $B_{k}$. 

    \item After the block proposal in slot $k$, all attesters of slot $k$ simultaneously choose which block to attest. The attestation of attester $i$ in slot $k$ points to a specific block determined by $\nu_{(i,k)}\in\mathbb{N}$, i.e., the age of the block attested. $\nu_{(i,k)}=l$ means that the block $B_{k-l}$ is the one attested by attester $i$ in slot $k$. Thus, if validator $i$ attests for the block in the current slot $k$, then $\nu_{(i,k)}=0$.

\end{itemize}
These actions are repeated in each slot. Note that not proposing or attesting to a block is not an available action.

The last piece of data needed for our study is to determine whether a block $B_k$ eventually belongs to the canonical chain. In our model, this information is represented by $\chi_k \in \{0,1\}$, where $\chi_k = 1$ if the block from slot $k$ eventually belongs to the canonical chain, and $\chi_k = 0$ otherwise. This information becomes known at the end of slot $s$, which marks the conclusion of our game.

It is important to note that for any slot $k$, always assigning a value of $0$ to $\phi_k$ as a proposer (or a value of $0$ to $\nu_{(i,k)}$ as an attester) is not the prescribed action. The prescribed action is to follow the fork choice rule, as illustrated in \autoref{fig:forkChoiceRule}.

\textbf{Strategies.} 
A strategy of a player $i$ is a function $\sigma_i$, which takes as input the entire tree of blocks in the blockchain, as well as the attestations sent so far, and produces as output a number, say $s \in \mathbb{N}$. Since the only information available are the tree of blocks and the attestations, the signature of a player’s strategy is $\mathcal{T} \times \mathcal{A} \to \mathbb{N}$, where $\mathcal{T}$ is the set of blocks and $\mathcal{A}$ is the set of available attestations.

For the proposer of slot $k$, the prescribed strategy is $\sigma_{(0,k)} (\mathcal{T}_{k-1},\mathcal{A}_{k-1}) = l$, where $\mathcal{F}(\mathcal{T}_{k-1},\mathcal{A}_{k-1})=B_{k-1-l}$. The prescribed strategy for an attester $i$ at slot $k$ is $\sigma_{(i,k)} (\mathcal{T}_{k},\mathcal{A}_{k-1}) = l$, where $\mathcal{F}(\mathcal{T}_{k},\mathcal{A}_{k-1}+\rho a)=B_{k-l}$. We say that a player deviates from the prescribed protocol when their strategy produces a number different from the slot of the block resulting from the fork choice rule.

A strategy profile $\sigma = (\sigma_{0,1}, \dots, \sigma_{a,1}, \sigma_{0,2}, \dots , \sigma_{a,2}, \dots, \sigma_{0,s}, \dots, \sigma_{a, s})$ is a vector where each component is a strategy of the corresponding player. We denote by $\mathcal{S}$ the set of all strategy profiles and by $\mathcal{S}_{(i,k)}$ the set of strategies for the player of component $(i,k)$. 
In this notation, players with indices $(i,k)$ where $i=0$ are proposers, while players with indices $(i,k)$ where $1 \leq i \leq a$ are attesters. For clarity, we denote by $(\sigma_{-i},\sigma'_i)$ the strategy profile $\sigma$ where, instead of playing with strategy $\sigma_i$, player $i$ deviates and uses strategy $\sigma'_i$ instead. This applies to both attesters and proposers.

It remains to define the reward of the players at the end of the game. 
At the end of slot $s$, the payoff of all players is computed and given by the function $u: \mathcal{S} \to \mathbb{R}^{n+an}$ (defined in \autoref{subsec:payoff}). The payoff of each player is given by its component in the reward vector, which depends on its type and is determined by a reward function. In the remainder of the chapter, for clarity, for any strategy profile $\sigma$, we write $u_{i,j}(\sigma)$ instead of $u(\sigma)_{(i,k)}$, where $u_{i,j}(\sigma)$ represents the payoff of player $(i,j)$, and player $(0,j)$ is the proposer of slot $j$.

\subsection{Payoff}\label{subsec:payoff}
Attesters' rewards vary depending on when their attestations are included in a block and which block they attest to. This can incentivize them to align their attestations with the behavior of future block proposers. 
Block proposers have a clear incentive to accumulate the maximum transaction fees and lucrative attestations to maximize their rewards. 
One strategy to achieve this is to fork the chain and include in the new block all the attestations and transactions that do not belong to the new chain.
However, if the block does not end up in the canonical chain, the block proposer will not receive any rewards. This incentivizes the proposer to consider other behaviors, as we will see.

Given a strategy profile $\sigma$, the reward of attester $i$ in slot $k$, player $(i,k)$, depends on a variable $x > 0$ set by the protocol and the slot in which the attestation is subsequently included in a block: 
\begin{equation}\label{eq:utilityAttester}
    u_{(i,k)}(\sigma) = \begin{cases}
                             x & \text{if } \sigma_{(i,k)} \text{ sets } \nu_{(i,k)}=\phi_{k+1} \text{ and } \chi_{k+1}=1,\\
                             20x/27 & \text{if } \chi_{k+2 \geq \dots \geq k+5}=1 \text{, included in 2 to 5 slots following the attestation,} \\
                             6x/27=2x/9 & \text{otherwise (if } \chi_{\geq k+6}=1\text{).}
                        \end{cases}
\end{equation}

Here, $x > 0$ and $\chi_{n+1}$ denote the fact that the block of slot $s+1$ belongs to the canonical chain. 
The rewards for the attester can be understood as follows: they are maximized when the attestation votes for the parent of the block in the subsequent slot, and this block in the subsequent slot ends up in the canonical chain.

The actual rewards for an attester are detailed in \autoref{tab:attRewards}. Note that the reward is influenced by the correctness of the attestation only if it is included in the block of the next slot. Additionally, if the block of slot $k+1$ does not end up in the canonical chain, the attesters of slot $k$ can never receive the maximum reward.

For the proposer of slot $k$, the reward function is given by:

\begin{equation}\label{eq:utilityProposer}
     u_{(0,k)}(\sigma) = \chi_{k} \sum_{j=n-\phi_{k}}^n  \left(\frac{1}{7}\sum_{i=1}^a u_{(i,j)}(\sigma)  + f_{j-1}  \right),
\end{equation}

The reward is the sum of attestation rewards and transaction fees over the slots separating the block from its parent, multiplied by the factor that indicates the block belongs to the canonical chain. 
Here, $f_{n-1} > 0$ represents the random value of transaction fees generated during slot $s-1$. The transaction fees are the incentives that motivate proposers to include transactions in their block.
The proposer receives $1/7$ of what the attesters receive for their attestations being included in a block.
The factor $\chi_s$, indicating whether the block ends up in the canonical chain, applies to the entire reward since, if the block is not included in the finalized chain, it does not yield any rewards.

\section{Analysis}\label{sec:Analysis}

In this section, we explore a set of possible strategies for proposers and attesters. Each can either follow the obedient strategy or adopt a cunning strategy. The obedient strategy is the one prescribed by the protocol. In contrast, the cunning strategy may deviate from the protocol while exploiting the proposer boost as a means to remain part of the canonical chain. We begin by introducing the game-theoretic preliminaries necessary for our analysis.

\subsection{Preliminaries}\label{subsec:preliminaries}
To ensure clarity and self-containment, we redefine well-known game-theoretic concepts. These concepts are useful for categorizing strategy equilibria and exploring possible states of the game.

\begin{definition}[Best response]\label{def:BestRep}
A strategy $\sigma_i^*$ is a best response for player $i$ to the strategy profile $\sigma_{-i}$ of the other players if:
\begin{equation}
u_i(\sigma_{-i},\sigma_i^*) \geq u_i( \sigma_{-i}, \sigma_i), \quad \forall \sigma_i \in \mathcal{S}_i,
\end{equation}
where $u_i$ is the payoff function for player $i$, $\sigma_{-i}$ is the strategy profile of all other players, and $\mathcal{S}_i$ is the set of all possible strategies for player $i$.
\end{definition}


\begin{definition}[Nash equilibrium]\label{def:NashEq}
A strategy profile $\sigma^* = (\sigma_1^*, \sigma_2^*, \dots, \sigma_n^*)$ is a Nash equilibrium if each player's strategy $\sigma_i^*$ is a best response to the strategies $\sigma_{-i}^*$ of the other players. Formally,
\begin{equation}
u_i(\sigma_{-i}^*,\sigma_i^*) \geq u_i( \sigma_{-i}^*,\sigma_i), \quad \forall \sigma_i \in \mathcal{S}_i \text{ and for all players } i,
\end{equation}

where $u_i$ is the payoff function for player $i$, $\sigma_{-i}^*$ is the strategy profile of all other players in the equilibrium, and $\mathcal{S}_i$ is the set of all possible strategies for player $i$.
\end{definition}

In summary, the concept of a best response helps identify the optimal strategy for a player given the strategies of the other players. 
Nash equilibrium defines a state where each player's strategy is a best response to the strategies of the other players, ensuring no player can benefit from unilaterally changing their strategy.

\subsection{Obedient}\label{subsec:obedientValidator}
As Carlsten et al. \cite{carlsten_instability_2016}, we first describe the strategy of proposers and attesters that act as prescribed by the protocol, we refer to them as obedient. However in the case of Ethereum, the actions prescribed by the fork choice rule are more complex compared to those described by Carlsten et al.

\begin{stratProposer}
\textbf{Obedient Proposer} ($\sigma^{O}_{(0,k)}$):\\
\textit{Action:} \quad $\phi_k = l$, where $\mathcal{F}(\mathcal{T}_{k-1},\mathcal{A}_{k-1}) \rightarrow B_{k-1-l}$.\\

The strategy of an obedient proposer at slot $k$ is to propose a block $B_k$ linked to the block designated by the fork choice rule $\mathcal{F}(\mathcal{T}_{k-1},\mathcal{A}_{k-1}) \rightarrow B_{k-1-l}$.
\end{stratProposer}

\begin{stratAttester}
\textbf{Obedient Attester} ($\sigma^{O}_{(i,k)}$):\\
\textit{Action:} \quad $\nu_{(i,k)} = l$ (Block attested is $B_{k-l}$.)\\

The obedient attester strategy of attester $i$ is to attest to the block designated by the fork choice rule $\mathcal{F}(\mathcal{T}_{k},\mathcal{A}_{k-1}+\rho a) \rightarrow B_{k-l}$.
\end{stratAttester}

We denote by $\sigma_{(i,j)}^O$ the obedient strategy of player $(i,j)$ and by $\sigma^O$ the strategy profile where all players act obediently.

When proposers and attesters follow the obedient strategy, we can evaluate the rewards each of them will receive. Since they will all follow the fork choice rule and there are no delays, no forks will occur, and attesters will attest to the block of their slot. Moreover, each attestation will be included in the following slot and will be correct. For proposers and attesters following the actions prescribed by the protocol, the rewards are as follows:
\begin{itemize}
    \item For each attester $i$ following the obedient strategy, the reward is: $u_{(i,k)}(\sigma^{O}) = x$, where $\sigma^{O}$ is the strategy profile in which all proposers and attesters are obedient.
    \item For the proposer of slot $k$, the reward is:
    $u_{(0,k)}(\sigma^{O}) = \frac{ax}{7} + f_{k-1}$.
\end{itemize}
With this strategy profile, attesters obtain the maximum reward attainable (\autoref{eq:utilityAttester}). However, there is no maximum reward for a block proposer, as their rewards increase the more ancient their block's parent is.

\subsection{Cunning Strategy}\label{subsec:cunningValidator}

We now examine a strategy that could yield more rewards for validators than simply following the protocol. In some situations, deviating from the protocol can allow validators to accumulate more rewards without incurring penalties. We refer to this as the \textit{cunning} behavior. For a proposer, the strategy involves choosing a block parent for its proposal that maximizes its rewards.

As the block parent's slot is further away from the new block, the proposer can include more transactions and attestations to increase its rewards. The ideal block parent, in theory, would be the genesis block. However, for the block to actually yield rewards, it must become part of the canonical chain. The cunning proposer will always propose a block that is considered the head of the canonical chain during its slot.

For instance, a cunning proposer will not strictly follow the fork choice rule to determine its block's parent. Instead, it will subtly test whether it can choose an older block as the parent while still having its block become the head of the canonical chain. The block that maximizes rewards—typically the oldest possible block—will be selected as the parent by the cunning proposer.

\begin{stratProposer}
\textbf{Cunning Proposer} ($\sigma^{C}_{(0,k)}$):\\
\textit{Action:} \quad $\phi_k = \max \{ x \in \mathbb{N} : \mathcal{F}(\mathcal{T}_{x},\mathcal{A}_{x-1}+\rho a) = B_k \}$ \\

The cunning proposer's block extends the block that leaves the most available transactions and attestations while still being selected as the head of the canonical chain by the fork choice rule (for attesters in the same slot) due to the proposer boost $\rho a$.
\end{stratProposer}

We denote by $\sigma_{(0,k)}^C$ the cunning strategy of the proposer in slot $k$.

\begin{remark}
    The obedient and the cunning strategy can result in the same action.
\end{remark}

It is important to note that while the cunning and obedient strategies are distinct, the actions resulting from them can sometimes be identical. Indeed, the action taken by a cunning proposer is to propose a block with the oldest possible parent while still ensuring the block is designated by the fork choice rule for the attesters of the slot. However, if the oldest possible parent is the same block initially designated by the fork choice rule, the action will align with the protocol, just as it would under the obedient strategy. In this sense, we say that a cunning player \textit{acts} obediently if their action is the one prescribed by the protocol. Conversely, we say they act cunningly if the action taken differs from what is prescribed by the protocol, making the cunning strategy truly distinct from the obedient strategy.

\begin{observation}[Cunning condition] \label{obs:cunningCondition}
The divergence between cunning and obedient proposer behavior occurs when the branch containing the block designated by the fork choice rule, with a total attestation weight $w_f$, has a concurrent branch with a total attestation weight $w_g$ such that:
\begin{equation} \label{eq:2branchesClose}
    w_f - w_g \leq \rho a. 
\end{equation}
We call this inequality the \textbf{cunning condition}.
\end{observation}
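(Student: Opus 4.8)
The plan is to read ``divergence between cunning and obedient proposer behavior'' as the event that the cunning proposer of slot $k$ attaches $B_k$ to a different (strictly older) parent than the head returned by $\mathcal{F}(\mathcal{T}_{k-1},\mathcal{A}_{k-1})$, and to show that this event is governed exactly by the stated weight gap. First I would record the driving incentive: by \autoref{eq:utilityProposer}, for a block that ends up in the canonical chain ($\chi_k=1$) the proposer's reward is strictly increasing in $\phi_k$, since a more ancient parent leaves strictly more available attestations and (because $f_{j-1}>0$) strictly more transaction fees to include. Consequently a cunning proposer strictly prefers the oldest \emph{admissible} parent, where a parent is admissible when, after appending $B_k$ and applying the proposer boost, the fork choice rule $\mathcal{F}(\mathcal{T}_{k},\mathcal{A}_{k-1}+\rho a)$ still returns $B_k$. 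Divergence therefore occurs precisely when some parent strictly older than the head is admissible.

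The second step is to turn admissibility into the weight inequality. I would use the property recalled in the text that the total attestation weight of a branch equals the attestation weight of its first block; this lets the GHOST-style traversal in $\mathcal{F}$ be decided, at the fork point separating the head's branch from the candidate branch, by comparing total branch weights. Writing $g$ for the concurrent branch whose tip the cunning proposer extends, appending $B_k$ and applying the boost gives branch $g$ an effective weight $w_g+\rho a$, while the head's branch retains $w_f$. The rule returns $B_k$ when $w_g+\rho a\geq w_f$, that is, when
\begin{equation*}
    w_f-w_g\leq\rho a ,
\end{equation*}
which is the cunning condition; the boundary case of equality is governed by the hash-based tie-break and must be treated as a convention in favour of the boosted block.

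Finally I would close both implications: if $w_f-w_g\leq\rho a$ the deviation onto $g$ is admissible and, by the monotonicity above, strictly more rewarding, so the cunning and obedient actions differ; whereas if $w_f-w_g>\rho a$ the boost cannot lift $g$ above $f$, no strictly-older parent is admissible, and the two actions coincide. The hard part will be the reduction of the global fork choice computation to a single local comparison: I must argue that the boost attached to $B_k$ contributes only to the ancestors of $B_k$, so it alters the weight balance at exactly one fork, and that the traversal of $\mathcal{F}$ is therefore decided at that fork independently of the rest of the tree. Handling the equality case cleanly (weak versus strict inequality together with the hash tie-break) and confirming that extending $g$'s tip, rather than a still-earlier block, is what maximizes the reward subject to admissibility are the remaining technical points.
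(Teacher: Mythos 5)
Your proposal is correct and takes essentially the same route as the paper, which states this as an observation justified only informally (via the illustrations in \autoref{fig:cunningProposer} and \autoref{fig:cunningProposer2}): its reasoning is exactly your central step, namely that the boost temporarily adds $\rho a$ to the boosted block's side of the single fork it affects, so the head returned by the fork choice rule flips precisely when $w_g + \rho a \geq w_f$. Your extra care about the equality/tie-break case, the degenerate $w_g = 0$ branch, and the strict monotonicity of the proposer's reward in $\phi_k$ merely makes explicit what the paper leaves implicit.
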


First, it is clear that $w_f$ is always greater than $w_g$, as the block designated by the fork choice rule is on the branch with a total attestation weight of $w_f$. To understand the cunning condition, we consider two illustrations:
\begin{enumerate}
    \item The first, and less intuitive, case is presented in \autoref{fig:cunningProposer}. This showcases the scenario where $w_g = 0$. A branch can consist of many blocks, a single block, or, in this case, no block at all.
    
    The newly proposed block can become the head of the canonical chain by attaching itself to the first block with more than $\rho a$ attestation weight. If the block designated by the fork choice rule is on a branch with a total attestation weight less than $\rho a$, the cunning behavior differs from the obedient behavior.
    \item Another representation of the cunning condition is shown in \autoref{fig:cunningProposer2}. Here, there are two distinct branches, each with one block. 
    The branch of the block designated by the fork choice rule has an attestation weight of $w_f = 3$, while the concurrent branch has $w_g = 2$. In this case, where $\rho a = 1.2$, the condition is met, allowing the proposer to act cunningly.
\end{enumerate}

\begin{figure}[ht]
    \centering
    
\begin{tikzpicture}[
    node distance=5cm and 1.5cm, 
    block/.style={rectangle, rounded corners, draw, fill=gray!30, minimum width=1.5cm, minimum height=1cm, align=center}, 
    blockOrange/.style={rectangle, rounded corners, draw, fill=orange!30, minimum width=1.5cm, minimum height=1cm, align=center},
    dashed line/.style={dashed, very thick, draw=gray!50},
    label/.style={text=gray!50, font=\normalsize}, 
    attestation/.style={Latex-, thick, draw=blue}, 
    attester/.style={circle, fill=blue!50, minimum size=4mm, inner sep=0pt, font=\normalsize, text=white} 
]

\node[block] (block1) {$9$};
\node[block, right=of block1] (block2) {$5$};
\node[block, right=of block2] (block3) {$1$};

\begin{pgfonlayer}{background}
\draw[dashed line] ($(block1.west)$) ++(0,2) -- ++(0,-4);
\draw[dashed line] ($(block2.west)$) ++(0,2) -- ++(0,-4);
\draw[dashed line] ($(block3.west)$) ++(0,2) -- ++(0,-4);
\draw[dashed line] ($(block4.west)$) ++ (0,2) -- ++(0,-4);
\end{pgfonlayer}

\node[label, above=1cm of block1, xshift=0.75cm] {slot $k-3$};
\node[label, above=1cm of block2, xshift=0.75cm] {slot $k-2$};
\node[label, above=1cm of block3, xshift=0.75cm] {slot $k-1$};
\node[label, above=1cm of block4, xshift=0.75cm] {slot $k$};

\draw[-Latex] (block2) -- (block1);
\draw[-Latex] (block3) -- (block2);

\node[attester, below=1cm of block1, xshift=1cm] (attest0) {};
\node[attester, below=1cm of block1, xshift=1.5cm] (attest1) {};
\node[attester, below=1cm of block1, xshift=2cm] (attest2) {};
\node[attester, below=1cm of block2, xshift=1cm] (attest3) {};
\node[attester, below=1cm of block2, xshift=1.5cm] (attest4) {};
\node[attester, below=1cm of block2, xshift=2cm] (attest5) {};
\node[attester, below=1cm of block3, xshift=1cm] (attest6) {};
\node[attester, below=1cm of block3, xshift=1.5cm] (attest7) {};
\node[attester, below=1cm of block3, xshift=2cm] (attest8) {};

\draw[attestation] (block1.south) -- (attest0);
\draw[attestation] (block1.south) -- (attest1);
\draw[attestation] (block1.south) -- (attest2);
\draw[attestation] (block1.south) -- (attest3);
\draw[attestation] (block2.south) -- (attest4);
\draw[attestation] (block2.south) -- (attest5);
\draw[attestation] (block2.south) -- (attest6);
\draw[attestation] (block2.south) -- (attest7);
\draw[attestation] (block3.south) -- (attest8);
\draw[attestation] (block3.south) -- (attest8);

\end{tikzpicture} 

\vspace{.3cm}

\begin{tikzpicture}[
    node distance=5cm and 1.5cm, 
    block/.style={rectangle, rounded corners, draw, fill=gray!30, minimum width=1.5cm, minimum height=1cm, align=center}, 
    blockOrange/.style={rectangle, rounded corners, draw, fill=orange!30, minimum width=1.5cm, minimum height=1cm, align=center},
    dashed line/.style={dashed, very thick, draw=gray!50},
    label/.style={text=gray!50, font=\normalsize}, 
    attestation/.style={Latex-, thick, draw=blue}, 
    attester/.style={circle, fill=blue!50, minimum size=4mm, inner sep=0pt, font=\normalsize, text=white} 
]

\node[block] (block1) {$10.2$};
\node[block, right=of block1] (block2) {$6.2$};
\node[block, right=of block2] (block3) {$1$};
\node[blockOrange, right=of block3] (block4) {$1.2$};

\begin{pgfonlayer}{background}
\draw[dashed line] ($(block1.west)$) ++(0,2) -- ++(0,-4);
\draw[dashed line] ($(block2.west)$) ++(0,2) -- ++(0,-4);
\draw[dashed line] ($(block3.west)$) ++(0,2) -- ++(0,-4);
\draw[dashed line] ($(block4.west)$) ++ (0,2) -- ++(0,-4);
\end{pgfonlayer}

\node[label, above=1cm of block1, xshift=0.75cm] {slot $k-3$};
\node[label, above=1cm of block2, xshift=0.75cm] {slot $k-2$};
\node[label, above=1cm of block3, xshift=0.75cm] {slot $k-1$};
\node[label, above=1cm of block4, xshift=0.75cm] {slot $k$};

\draw[-Latex] (block2) -- (block1);
\draw[-Latex] (block3) -- (block2);
\draw[-Latex, draw=orange, line width=1pt] (block4) to[bend right=23] node[sloped,above,pos=0.8] {$\phi_{k}=1$} (block2) ;

\node[attester, below=1cm of block1, xshift=1cm] (attest0) {};
\node[attester, below=1cm of block1, xshift=1.5cm] (attest1) {};
\node[attester, below=1cm of block1, xshift=2cm] (attest2) {};
\node[attester, below=1cm of block2, xshift=1cm] (attest3) {};
\node[attester, below=1cm of block2, xshift=1.5cm] (attest4) {};
\node[attester, below=1cm of block2, xshift=2cm] (attest5) {};
\node[attester, below=1cm of block3, xshift=1cm] (attest6) {};
\node[attester, below=1cm of block3, xshift=1.5cm] (attest7) {};
\node[attester, below=1cm of block3, xshift=2cm] (attest8) {};
\node[below=1cm of block4, minimum size=4mm, circle] (attest9) {$\rho a$};

\draw[attestation] (block1.south) -- (attest0);
\draw[attestation] (block1.south) -- (attest1);
\draw[attestation] (block1.south) -- (attest2);
\draw[attestation] (block1.south) -- (attest3);
\draw[attestation] (block2.south) -- (attest4);
\draw[attestation] (block2.south) -- (attest5);
\draw[attestation] (block2.south) -- (attest6);
\draw[attestation] (block2.south) -- (attest7);
\draw[attestation] (block3.south) -- (attest8);
\draw[attestation] (block3.south) -- (attest8);
\draw[Latex-, draw=orange] (block4.south) -- (attest9) ;


\end{tikzpicture}
    \caption{Cunning proposer $(0,k)$ deviating from the prescribed protocol with $\rho a = 1.2 (= 0.4 \times 3)$ when the block designated by the fork choice rule has an attestation weight less than $\rho a$.}
    \label{fig:cunningProposer}
\end{figure}
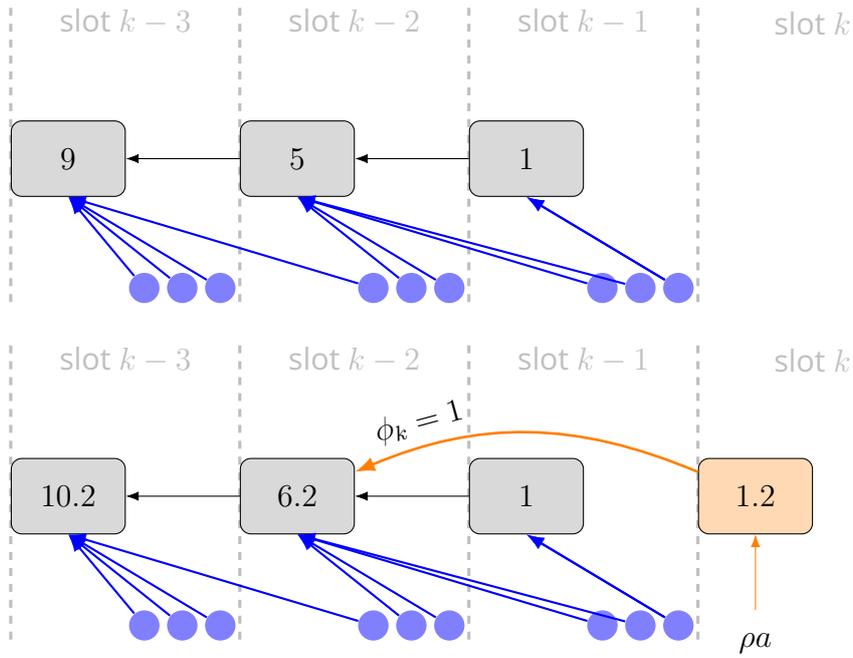

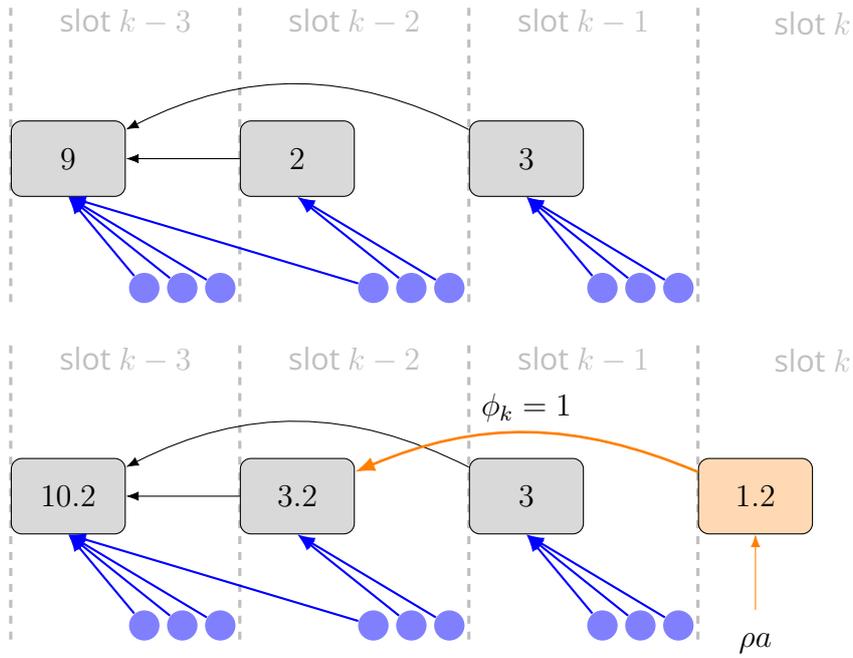
\begin{figure}[ht]
    \centering
    
\begin{tikzpicture}[
    node distance=5cm and 1.5cm, 
    block/.style={rectangle, rounded corners, draw, fill=gray!30, minimum width=1.5cm, minimum height=1cm, align=center}, 
    blockOrange/.style={rectangle, rounded corners, draw, fill=orange!30, minimum width=1.5cm, minimum height=1cm, align=center},
    dashed line/.style={dashed, very thick, draw=gray!50},
    label/.style={text=gray!50, font=\normalsize}, 
    attestation/.style={Latex-, thick, draw=blue}, 
    attester/.style={circle, fill=blue!50, minimum size=4mm, inner sep=0pt, font=\normalsize, text=white} 
]

\node[block] (block1) {$9$};
\node[block, right=of block1] (block2) {$2$};
\node[block, right=of block2] (block3) {$3$};

\begin{pgfonlayer}{background}
\draw[dashed line] ($(block1.west)$) ++(0,2) -- ++(0,-4);
\draw[dashed line] ($(block2.west)$) ++(0,2) -- ++(0,-4);
\draw[dashed line] ($(block3.west)$) ++(0,2) -- ++(0,-4);
\draw[dashed line] ($(block4.west)$) ++ (0,2) -- ++(0,-4);
\end{pgfonlayer}

\node[label, above=1cm of block1, xshift=0.75cm] {slot $k-3$};
\node[label, above=1cm of block2, xshift=0.75cm] {slot $k-2$};
\node[label, above=1cm of block3, xshift=0.75cm] {slot $k-1$};
\node[label, above=1cm of block4, xshift=0.75cm] {slot $k$};

\draw[-Latex] (block2) -- (block1);
\draw[-Latex] (block3) to[bend right=27] (block1);

\node[attester, below=1cm of block1, xshift=1cm] (attest0) {};
\node[attester, below=1cm of block1, xshift=1.5cm] (attest1) {};
\node[attester, below=1cm of block1, xshift=2cm] (attest2) {};
\node[attester, below=1cm of block2, xshift=1cm] (attest3) {};
\node[attester, below=1cm of block2, xshift=1.5cm] (attest4) {};
\node[attester, below=1cm of block2, xshift=2cm] (attest5) {};
\node[attester, below=1cm of block3, xshift=1cm] (attest6) {};
\node[attester, below=1cm of block3, xshift=1.5cm] (attest7) {};
\node[attester, below=1cm of block3, xshift=2cm] (attest8) {};

\draw[attestation] (block1.south) -- (attest0);
\draw[attestation] (block1.south) -- (attest1);
\draw[attestation] (block1.south) -- (attest2);
\draw[attestation] (block1.south) -- (attest3);
\draw[attestation] (block2.south) -- (attest4);
\draw[attestation] (block2.south) -- (attest5);
\draw[attestation] (block3.south) -- (attest6);
\draw[attestation] (block3.south) -- (attest7);
\draw[attestation] (block3.south) -- (attest8);

\end{tikzpicture} 

\vspace{.3cm}

\begin{tikzpicture}[
    node distance=5cm and 1.5cm, 
    block/.style={rectangle, rounded corners, draw, fill=gray!30, minimum width=1.5cm, minimum height=1cm, align=center}, 
    blockOrange/.style={rectangle, rounded corners, draw, fill=orange!30, minimum width=1.5cm, minimum height=1cm, align=center},
    dashed line/.style={dashed, very thick, draw=gray!50},
    label/.style={text=gray!50, font=\normalsize}, 
    attestation/.style={Latex-, thick, draw=blue}, 
    attester/.style={circle, fill=blue!50, minimum size=4mm, inner sep=0pt, font=\normalsize, text=white} 
]

\node[block] (block1) {$10.2$};
\node[block, right=of block1] (block2) {$3.2$};
\node[block, right=of block2] (block3) {$3$};
\node[blockOrange, right=of block3] (block4) {$1.2$};

\begin{pgfonlayer}{background}
\draw[dashed line] ($(block1.west)$) ++(0,2) -- ++(0,-4);
\draw[dashed line] ($(block2.west)$) ++(0,2) -- ++(0,-4);
\draw[dashed line] ($(block3.west)$) ++(0,2) -- ++(0,-4);
\draw[dashed line] ($(block4.west)$) ++ (0,2) -- ++(0,-4);
\end{pgfonlayer}

\node[label, above=1cm of block1, xshift=0.75cm] {slot $k-3$};
\node[label, above=1cm of block2, xshift=0.75cm] {slot $k-2$};
\node[label, above=1cm of block3, xshift=0.75cm] {slot $k-1$};
\node[label, above=1cm of block4, xshift=0.75cm] {slot $k$};

\draw[-Latex] (block2) -- (block1);
\draw[-Latex] (block3) to[bend right=27] (block1);
\draw[-Latex, draw=orange, line width=1pt] (block4) to[bend right=23] node[sloped,above,pos=0.5] {$\phi_{k}=1$} (block2) ;

\node[attester, below=1cm of block1, xshift=1cm] (attest0) {};
\node[attester, below=1cm of block1, xshift=1.5cm] (attest1) {};
\node[attester, below=1cm of block1, xshift=2cm] (attest2) {};
\node[attester, below=1cm of block2, xshift=1cm] (attest3) {};
\node[attester, below=1cm of block2, xshift=1.5cm] (attest4) {};
\node[attester, below=1cm of block2, xshift=2cm] (attest5) {};
\node[attester, below=1cm of block3, xshift=1cm] (attest6) {};
\node[attester, below=1cm of block3, xshift=1.5cm] (attest7) {};
\node[attester, below=1cm of block3, xshift=2cm] (attest8) {};
\node[below=1cm of block4, minimum size=4mm, circle] (attest9) {$\rho a$};

\draw[attestation] (block1.south) -- (attest0);
\draw[attestation] (block1.south) -- (attest1);
\draw[attestation] (block1.south) -- (attest2);
\draw[attestation] (block1.south) -- (attest3);
\draw[attestation] (block2.south) -- (attest4);
\draw[attestation] (block2.south) -- (attest5);
\draw[attestation] (block3.south) -- (attest6);
\draw[attestation] (block3.south) -- (attest7);
\draw[attestation] (block3.south) -- (attest8);
\draw[Latex-, draw=orange] (block4.south) -- (attest9) ;


\end{tikzpicture}
    \caption{Cunning proposer $(0,k)$ deviating from the prescribed protocol with $\rho a = 1.2 (= 0.4 \times 3)$ when two blocks have an attestation weight difference of less than $\rho a$, and one of them is designated by the canonical chain.}
    \label{fig:cunningProposer2}
\end{figure}

It should be noted that without the proposer boost, the cunning proposer strategy would never differ from the obedient proposer strategy. This strategy relies on the advantage provided by the proposer boost to ensure that its block becomes the head according to the fork choice rule. 

Conversely, and intuitively, as the proposer boost increases, the opportunity for the cunning block proposer to act cunningly arises more frequently in the case all attesters are obedient. 

\paragraph*{Best response of a proposer among $s-1$ obedient proposers and $as$ obedient attesters.}

We first study the behavior of one proposer when all others are obedient with respect to their designated slot. 
In the case in which the proposer is associated to the first slot of the game, for this proposer the cunning strategy is the best response and the proposer deviates from the protocol if the cunning condition holds (\autoref{lem:bestRespCun}). 
In the case the proposer is associated to a subsequent slot of the game, then the two strategies obedient and cunning are equivalent; this means that the proposer will follow the protocol (\autoref{lem:firstCunning}).

\begin{lemma}\label{lem:bestRespCun}
    When all other validators are obedient, the cunning proposer strategy is a \textit{best response}. 
\end{lemma}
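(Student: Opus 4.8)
The plan is to reduce the proposer's decision to the single choice of the parent offset $\phi_k$ and then show that the cunning rule selects the payoff-maximizing value of $\phi_k$. Since a proposer always includes every available transaction and attestation, a proposer strategy at slot $k$ is, in effect, just an assignment of $\phi_k \in \mathbb{N}$; hence maximizing over $\phi_k$ is exactly maximizing over the proposer's whole strategy set, which is what \autoref{def:BestRep} requires. Writing the payoff from \autoref{eq:utilityProposer} as $u_{(0,k)}(\sigma)=\chi_k\,R(\phi_k)$, where $R(\phi_k)$ collects the attestation rewards and transaction fees over the slots spanned by the block and its chosen parent, I would first record that $R$ is strictly increasing in $\phi_k$: each additional slot reached back adds a block's worth of fees $f_{j-1}>0$ and attestation rewards (with $x>0$), all non-negative terms. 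Consequently the payoff equals $R(\phi_k)$ when $\chi_k=1$ and $0$ when $\chi_k=0$, so the proposer's problem collapses to picking the largest $\phi_k$ that still yields $\chi_k=1$.

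The heart of the argument is the equivalence, under the hypothesis that every other validator is obedient, between ``$B_k$ eventually lies on the canonical chain'' ($\chi_k=1$) and ``$B_k$ is returned by the boosted fork choice rule during its own slot'', i.e. $\mathcal{F}(\mathcal{T}_k,\mathcal{A}_{k-1}+\rho a)=B_k$. For the forward direction I would argue inductively: if $B_k$ is the head during slot $k$, the $a$ obedient attesters of slot $k$ all attest to it, converting the temporary boost $\rho a$ into $a$ genuine attestations on $B_k$'s branch; because $\rho<1$ gives $a>\rho a$, the branch's lead over every competitor is at least preserved once the boost expires, so $B_k$ remains the head, every later obedient proposer extends it, and every later obedient attester votes for it, cementing $B_k$ permanently. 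For the converse, if $B_k$ is not the head during slot $k$, the obedient slot-$k$ attesters vote for the competing head instead, $B_k$'s branch receives no fresh weight, and obedient play thereafter keeps building on the winning branch, so $B_k$ is abandoned and $\chi_k=0$.

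Combining the two directions, the set of offsets achieving $\chi_k=1$ is precisely $\{x\in\mathbb{N}:\mathcal{F}(\mathcal{T}_x,\mathcal{A}_{x-1}+\rho a)=B_k\}$, and the cunning proposer strategy $\sigma_{(0,k)}^{C}$ is defined to return its maximum. Since $R$ is increasing in $\phi_k$, this maximal feasible offset maximizes $R(\phi_k)$ subject to $\chi_k=1$, and therefore maximizes $\chi_k R(\phi_k)=u_{(0,k)}$ over all proposer strategies; by \autoref{def:BestRep}, the cunning strategy is a best response.

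I expect the main obstacle to be the forward direction of the equivalence, specifically verifying that the $a$ real attestations gained in slot $k$ suffice to sustain $B_k$ as the head after the proposer boost disappears, and that this property then propagates through every subsequent slot under obedient play. The clean quantitative fact $a>\rho a$ (from $\rho<1$) is what makes the swap from transient boost to permanent weight safe, and it should be stated explicitly, together with the use of \autoref{obs:cunningCondition} to pin down exactly when a concurrent branch of weight $w_g$ can be overtaken ($w_f-w_g\le\rho a$); the induction over the remaining slots is routine once this base step is secured.
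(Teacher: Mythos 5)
Your proof is correct and takes essentially the same route as the paper's: monotonicity of the proposer's reward in $\phi_k$, combined with the observation that, against obedient others, a block that is head of the chain during its own slot remains canonical ($\chi_k=1$). You are in fact more thorough than the paper's own proof, which only compares the cunning strategy against the obedient one and asserts the persistence of the head without the $a>\rho a$ induction; your explicit characterization of $\chi_k$ additionally rules out offsets beyond the cunning maximum (which yield $\chi_k=0$ and zero payoff), thereby closing the quantification over \emph{all} strategies that \autoref{def:BestRep} requires.
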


\begin{proof}
    Let us denote by $\phi_k^C$ and $\phi_k^O$ the actions taken by proposer $(0,0)$ under the cunning strategy and the obedient strategy, respectively. The cunning proposer strategy differs from the obedient strategy when $\phi_k^C > \phi_k^O$.
    Considering that the rest of the validators follow the obedient strategy, a proposed block that becomes the head of the chain at slot $k$ will end up in the canonical chain ($\chi_k=1$). 
    By construction, $\phi_k^C \geq \phi_k^O$, and in both cases, the proposed block will be the head of the canonical chain and thus belong to the canonical chain ($\chi_k=1$). 
    Based on the definition of $u_{(0,k)}$ (cf. \autoref{eq:utilityProposer}), the reward increases as the sum increases. This implies that $u_{(0,k)}(\sigma_{-(0,k)}^O, \sigma_{(0,k)}^C) \geq u_{(0,k)}(\sigma_{-(0,k)}^O, \sigma_{(0,k)}^O)$.
\end{proof}

Let us note that when everyone else is obedient, the cunning strategy can only differ from the obedient strategy for the first proposer. All subsequent proposers will follow the protocol regardless of whether they follow the cunning or obedient strategy because the cunning condition is never satisfied. We formalize this in the following observation:

\begin{lemma}\label{lem:firstCunning}
For the strategy profile $\sigma_{-(0,i)}^O$ where all other validators follow the obedient strategy, $\phi_{i}^C \neq \phi_{i}^O$ only if $i=0$, where $\phi_{i}$ is the action of proposer $(0,i)$.
\end{lemma}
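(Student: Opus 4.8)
The plan is to track how attestation weight accumulates on the canonical chain once the obedient prefix of the game has been played, and then to invoke the inequality $\rho a < a$ (which holds because $\rho \in [0,1)$) to show that at any slot $i \geq 1$ no block proposed with a parent older than the current head can be selected by the fork choice rule. For $i = 0$ there is nothing to prove, since the statement only constrains the case $i \geq 1$.

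First I would establish, by induction on the slot index $j \in \{0, \dots, i-1\}$, that under $\sigma_{-(0,i)}^O$ the blocks $B_0, \dots, B_{i-1}$ form the canonical chain, that each $B_j$ receives the $a$ attestations of its slot, and that no new concurrent branch is ever created after the start of the game. The base case uses that the obedient proposer of slot $0$ builds $B_0$ on the fork-choice head of the initial (possibly uneven) tree $\mathcal{T}_{-1}$; with the proposer boost $B_0$ becomes the head, so the $a$ obedient attesters of slot $0$ all attest to $B_0$ by the obedient-attester rule, and once the boost is cleared $B_0$ remains the head because its $a$ real attestations exceed any competing weight. The inductive step is identical: each obedient proposer extends the current head, the new block inherits the $a$ attestations of its slot, and since obedient attesters always vote for the head, every concurrent branch inherited from the asynchronous period stays frozen at its initial weight.

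Then I would quantify the dominance of the canonical chain. Every concurrent branch forks off at some ancestor $B_m$ of $B_0$ (no new forks are created) and receives no attestations after the game starts, while the canonical chain gains $a$ at the fork $B_m$ in each of the slots $0, \dots, i-1$; hence the canonical branch at every fork exceeds the competing branch by at least $a$. Now consider the cunning proposer of slot $i$ choosing a parent $B_{i-1-l}$ with $l \geq 1$. The branch it would create has weight at most $w + \rho a$ at the divergence point, where $w$ is the weight of the competing branch there (a frozen concurrent branch, or the truncated remainder of the main chain); but the canonical branch at that point has weight at least $w + a > w + \rho a$. Therefore $\mathcal{F}(\mathcal{T}_i,\mathcal{A}_{i-1}+\rho a) \neq B_i$ for every $l \geq 1$, so the largest $l$ for which the fork choice rule returns $B_i$ as head is $l = 0$, giving $\phi_i^C = 0 = \phi_i^O$.

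The main obstacle I anticipate is ruling out the subtle case in which a heavy branch left over from the asynchronous period sits within $\rho a$ of the canonical chain and could therefore be extended by the cunning proposer to overtake the head. This is exactly the scenario where the cunning condition of Observation \ref{obs:cunningCondition} might seem to persist past slot $0$; the resolution is the monotone accumulation step above, which shows that the gap to any concurrent branch grows by $a$ per obedient slot and is thus already at least $a > \rho a$ from slot $1$ onward, so no such branch can be leveraged.
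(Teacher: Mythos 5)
Your proof is correct and follows essentially the same route as the paper's: obedient attesters all vote for the slot's head, so each canonical block accumulates weight $a$, making the gap to any concurrent branch at least $a > \rho a$ from slot $1$ onward and thus falsifying the cunning condition of \autoref{obs:cunningCondition} for every proposer after the first, while the arbitrary pre-game state leaves slot $0$ unconstrained. Your version merely makes explicit (via the induction and the divergence-point case analysis) what the paper's shorter argument states directly with the inequality $w_f + a - w_g \geq a$.
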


\begin{proof}
    
This can be explained because our model makes strong assumptions about synchronous network conditions. When considering that all attesters are obedient, this implies that in each slot, all attesters will send the same attestation. 
As a result, every new block will have an attestation weight that is a multiple of $a$. If the first proposer is obedient, it attaches its block to the branch with the highest attestation weight $w_f$. The obedient attesters will attest to it, adding a weight of $a$. 
This makes the cunning condition (\autoref{obs:cunningCondition}) impossible for subsequent proposers, as $w_f + a - w_g \geq a$. 
Nevertheless, the first proposer can act cunningly since the network's state before the first slot is not predetermined, leaving any arrangement of blocks and attestation weights possible.
\end{proof}

\paragraph*{Best response of a proposer among $s-1$ cunning proposers and $as$ obedient attesters.}

We study the behavior of one proposer when all other proposers are cunning and attesters are obedient. We have two cases: 
\begin{itemize}
    \item $\rho < 1/2$. If the proposer is associated with the first slot of the game, the cunning strategy is the best response, and the proposer will deviate from the protocol if the cunning condition holds (\autoref{lem:cuningBRifRHOinfHALF}).

    If the proposer is associated with any subsequent slot in the game, the cunning condition will never hold. In this case, the obedient and cunning strategies are equivalent, leading the proposer to act as prescribed by the protocol (\autoref{lem:actObedientlyAfterSlot}).

    \item $\rho \geq 1/2$. In this case, the cunning condition can apply to multiple consecutive proposers. If the cunning condition does not hold for the second proposer, then the cunning strategy is the best response for the first proposer (\autoref{lem:propBRtxfee}), causing the first proposer to deviate from the protocol.

    If the cunning condition holds for more than just the first proposer, the cunning strategy becomes the best response if, and only if, the expected rewards gained from acting cunningly and hoarding the rewards over the two previous slots exceed the rewards from the most recent slot alone. (\autoref{lem:propBRtxfee2}).
    In this scenario, each proposer deviates from the prescribed protocol.
\end{itemize}

\begin{lemma} \label{lem:cuningBRifRHOinfHALF}
    The cunning proposer strategy is a best response for a proposer when all other proposers are cunning and attesters are obedient, provided $\rho < 1/2$. 
     If the cunning condition holds, it will only do so for the first proposer, causing this proposer to deviate from the protocol.
\end{lemma}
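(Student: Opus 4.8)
The plan is to decompose the statement into its two assertions and establish them in a mutually reinforcing order: first the structural claim that, along the all-cunning profile, the cunning condition of \autoref{obs:cunningCondition} can be satisfied only at the first slot, and then the best-response claim, which leans on that structure. Throughout, the central quantity to track is the gap $w_f - w_g$ between the total attestation weight of the canonical branch and that of the heaviest concurrent branch, since \autoref{eq:2branchesClose} says the two strategies diverge precisely when this gap drops to $\rho a$ or below.

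For the structural claim I would exploit the fact that all attesters are obedient: in every slot each of the $a$ attesters votes for the single block returned by the fork choice rule (with proposer boost), so each slot deposits a weight of exactly $a$ on one branch and nothing on the others. The first proposer may face an arbitrary pre-\texttt{GST} configuration, so the cunning condition may genuinely hold for it. I would then examine the state just after the first slot. If the first proposer acted obediently, the canonical branch gains $a$ and the gap becomes at least $a$. If it acted cunningly, attaching its block to a concurrent branch of weight $w_g$ with $w_f - w_g \le \rho a$, then after the $a$ attestations land, the new canonical branch has weight $w_g + a$ while every competing branch has weight at most $w_f$, so the new gap is at least $(w_g + a) - w_f = a - (w_f - w_g) \ge a(1-\rho)$. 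The hypothesis $\rho < 1/2$ is exactly what makes $a(1-\rho) > \rho a$, so in either case the gap strictly exceeds $\rho a$ after the first slot; a short induction — each subsequent slot adds $a$ to the canonical branch and nothing to the others, so the gap only grows — shows the cunning condition never recurs.

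For the best-response claim I would mirror the argument of \autoref{lem:bestRespCun}: by \autoref{eq:utilityProposer} the proposer's payoff is increasing in $\phi_k$ as long as $\chi_k = 1$, and the cunning strategy selects the largest $\phi_k$ for which its block is still returned as the head by the fork choice rule during its slot. I would argue that any more aggressive deviation (a strictly older parent) fails to become the head, so the obedient attesters vote elsewhere, the block collects no weight and is orphaned, giving $\chi_k = 0$ and zero reward; any less aggressive choice keeps $\chi_k = 1$ but strictly shrinks the sum in \autoref{eq:utilityProposer}. The one point needing care is showing that a block which is the head at slot $k$ indeed satisfies $\chi_k = 1$: this is precisely the gap computation above, which guarantees the branch of $B_k$ leads the nearest competitor by more than $\rho a$, so every later cunning proposer, finding the cunning condition false, extends $B_k$'s branch.

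I expect the main obstacle to be the weight bookkeeping in the cunning case — in particular, verifying that the bound $a(1-\rho)$ holds against \emph{every} concurrent branch (not merely the previously canonical one) and that attaching the block to an even older parent cannot do better, since the oldest admissible parent still has base weight at least $w_f - \rho a$ by the head constraint. Once this uniform gap bound is pinned down, the threshold $\rho < 1/2$ and the remainder of the argument follow mechanically.
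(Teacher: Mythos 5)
Your proof is correct and takes essentially the same route as the paper's: the identical worst-case gap computation (starting gap at most $\rho a$, hence gap at least $a - \rho a > \rho a$ after the obedient attesters of the first slot vote, which is exactly the threshold $\rho < 1/2$), followed by the payoff-monotonicity argument already used in \autoref{lem:bestRespCun}. Your extra care — the induction over all later slots and the uniform bound against every concurrent branch — merely makes explicit what the paper's proof leaves implicit after checking the second proposer.
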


\begin{proof}
    With $\rho < 1/2$, only the first proposer can act cunningly. Let's assume the first proposer acts cunningly, meaning that the cunning condition is satisfied. The maximum gap between $w_f$ and $w_g$ for the first proposer to act cunningly is $\rho a$, such that $w_f = w_g + \rho a$. After the attestations sent by the obedient attesters in the first slot, the attestation weight of the branch designated by the fork choice rule becomes $w_g + a$. For the second proposer to act cunningly, it must hold that $w_g + a - w_f \leq \rho a$ (cunning condition for the second proposer). Substituting $w_f$ with the maximum possible gap from $w_g$, this condition implies that the second proposer can act cunningly if and only if:
    \begin{equation}
    \begin{split}
        w_g + a - (w_g + \rho a) &\leq \rho a \\
        \frac{1}{2} &\leq \rho.
    \end{split}
    \end{equation}
    Thus, with cunning proposers and obedient attesters, only the first proposer can act cunningly, deviating from the obedient action. For the first proposer, acting cunningly will yield the maximum rewards.
\end{proof}

In fact, since only the first proposer can act cunningly when $\rho < 1/2$, the obedient strategy remains a best response for the rest of the proposers, even when all other proposers are cunning and attesters are obedient.

\begin{lemma} \label{lem:propBRtxfee}
    The cunning proposer strategy is a best response for a proposer when all other proposers are cunning and attesters are obedient, if $\rho \geq 1/2$ and the cunning condition does not hold for the second proposer.
    If the cunning condition holds, it will only do so for the first proposer, causing this proposer to deviate from the protocol.
\end{lemma}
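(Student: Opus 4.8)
The plan is to reduce this $\rho \geq 1/2$ situation to the single‑deviating‑proposer picture already established in \autoref{lem:cuningBRifRHOinfHALF}, using the extra hypothesis that the cunning condition fails for the second proposer to rule out every later deviation. First I would argue that, under this hypothesis, no proposer after the first can act cunningly. Since the attesters are obedient, in each slot they collectively add attestation weight $a$ to the block designated by the fork choice rule, so the total weight $w_f$ of the canonical branch strictly increases slot by slot, while any concurrent branch receives no new attestations and its weight $w_g$ stagnates. The hypothesis states that already after the first slot the gap $w_f - w_g$ exceeds $\rho a$; because this gap only widens as obedient attesters keep reinforcing the canonical branch, the cunning condition (\autoref{obs:cunningCondition}) can never be satisfied again, and by the earlier remark that the cunning and obedient strategies can coincide, the cunning action of every subsequent proposer equals its obedient action.

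The second step pins down which proposer can deviate at all. Exactly as in the proof of \autoref{lem:cuningBRifRHOinfHALF}, only the first proposer inherits an unconstrained arrangement of blocks and attestation weights from the preceding asynchronous period, so only it can meet a configuration satisfying the cunning condition; all later slots start from the synchronous, obedient evolution described above. Combined with the first step, this establishes the second assertion of the lemma: if the cunning condition holds at all, it holds only for the first proposer, which therefore deviates from the protocol.

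It remains to show the cunning strategy is a best response for the proposer under study, which I would split into two cases. If this proposer is not the first, its cunning action equals its obedient action by the previous steps, and I would check optimality directly: any strictly older parent would place the block on a branch trailing the canonical one by more than $\rho a$, so even with the proposer boost it would not be selected by the obedient attesters' fork choice rule, would receive no attestations, would be abandoned by the following (effectively obedient) proposer, and would yield $\chi = 0$ and zero reward. If this proposer is the first, I would reuse the monotonicity argument of \autoref{lem:bestRespCun}: the cunning action chooses the largest $\phi_k$ for which the block is still designated as head, so the block collects the slot's obedient attestations and is extended downstream, giving $\chi_k = 1$; since the reward $u_{(0,k)}$ of \autoref{eq:utilityProposer} is increasing in $\phi_k$ (more spanned slots means more attestation rewards together with the strictly positive transaction fees $f_{j-1}$), while any more aggressive deviation forfeits headship and gives $\chi_k = 0$, the cunning action dominates every alternative strategy.

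The main obstacle I anticipate is the bookkeeping in the "not first" case: one must be certain that a deviating block which failed to win headship in its own slot cannot resurface in the canonical chain through some later fork. This rests on the synchronous‑network assumption and on the fact, made explicit in the first step, that the lead of the canonical branch over every competitor keeps growing, so no proposer boost in a future slot can ever reverse the fork choice in favour of the abandoned block; I would state this monotone‑lead invariant carefully before concluding $\chi = 0$ for such deviations.
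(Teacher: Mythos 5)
Your proof is correct and follows essentially the same route as the paper's: you use the failure of the cunning condition for the second proposer, together with the fact that obedient attesters add weight $a$ to the canonical head each slot, to show the gap can never again shrink below $\rho a$, so only the first proposer can deviate, and then you invoke the monotonicity of the proposer's reward in $\phi_k$ (as in \autoref{lem:bestRespCun}) to conclude that the cunning strategy is a best response. The paper's proof is terser and leaves implicit both the inductive widening of the lead and the explicit payoff comparison for non-first proposers, so your version is simply a more carefully spelled-out rendering of the same argument (modulo a harmless notational swap of $w_f$ and $w_g$ after the first proposer's deviation).
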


\begin{proof}
    If the attestation weight of the branch $f$ designated by the fork choice rule, $w_f$, and the attestation weight $w_g$ of a concurrent branch $g$ are such that $w_g + a - w_f > \rho a$, the second proposer cannot act cunningly with obedient attesters (cunning condition false for the second proposer). We previously showed that if the first proposer is cunning and attaches its block to the concurrent chain $g$, the obedient attesters will follow, increasing the weight of $g$ to $w_g + a$. By ensuring that $w_g + a - w_f > \rho a$, we prevent the second proposer from changing the canonical chain with the proposer boost. Thus, this condition ensures that the first proposer can be the only one to act cunningly, and in this case, the best response is the cunning strategy.
\end{proof}

\begin{lemma} \label{lem:propBRtxfee2}
    When all proposers are cunning, attesters are obedient, and $\rho \geq 1/2$, the cunning strategy is a best response if the cunning condition holds for the second proposer and:
    \begin{equation}
        \frac{f_{k-2} - f_{k-1}}{2} \geq \frac{ax}{27},
    \end{equation}
    where $f_{k}$ denotes the transaction fees emitted at slot $k$.
\end{lemma}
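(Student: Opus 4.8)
The plan is to reduce the best-response question to a single expected-reward comparison between the cunning action and the obedient action for the proposer of slot $k$, holding all other proposers cunning and all attesters obedient. First I would show that under $\rho \geq 1/2$ the two competing branches settle into a self-sustaining \emph{bounce}: writing $w_f$ and $w_g$ for the weights of the heavy and light branches with $w_f - w_g = d$, a cunning proposer switches to the light branch, the obedient attesters then add weight $a$ to it, and at the start of the next slot (once the proposer boost is cleared) the weight gap is again $d$ but with the branches' roles reversed. The hypothesis that the cunning condition (\autoref{obs:cunningCondition}) holds for the second proposer is exactly $(w_g + a) - w_f \leq \rho a$, which combined with $w_f - w_g \leq \rho a$ forces $(1-\rho)a \leq d \leq \rho a$; this interval is non-empty precisely because $\rho \geq 1/2$, so the bounce can indeed persist, each proposer forking out its predecessor.

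The second step is to pin down the survival probability of a cunning block. In the bounce the blocks split alternately between the two branches, and at the end of the game (at the unknown slot $s$) only one branch is canonical. By the symmetry of the two branches --- identical weight gap and strictly alternating leadership --- together with the fact that $s$ is unknown to the players, each branch is equally likely to be the final canonical chain, so a cunning block has $\mathbb{E}[\chi_k] = 1/2$. In contrast, if the proposer of slot $k$ deviates to the obedient action it extends the heavy branch, raising its weight to $w_f + a$; the gap to the light branch becomes $d + a > \rho a$, so no subsequent proposer boost can overturn it and the obedient block survives with certainty. This is the same mechanism already used in \autoref{lem:propBRtxfee}, now applied one slot deeper.

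Next I would evaluate both payoffs through \autoref{eq:utilityProposer}. The obedient block spans one slot ($\phi_k = 0$): it collects the fee $f_{k-1}$ and, since the chain it settles is stable, the timely attestation reward, whose proposer share is $\tfrac{1}{7} a x$. The cunning block spans two slots ($\phi_k = 1$): it collects $f_{k-2} + f_{k-1}$, but the attestations it re-includes from the orphaned slot are incorporated two to five slots late, so by \autoref{tab:attRewards} (and \autoref{eq:utilityAttester}) each such attestation yields only $20x/27$, giving a proposer share of $\tfrac{1}{7}\cdot\tfrac{20}{27}ax$ per spanned slot. Weighting the cunning payoff by $1/2$ and requiring it to beat the obedient payoff gives
\begin{equation}
\frac{1}{2}\left( \frac{2}{7}\cdot\frac{20ax}{27} + f_{k-2} + f_{k-1}\right) \;\geq\; \frac{ax}{7} + f_{k-1},
\end{equation}
which, after moving the attestation terms to the right and using $\frac{ax}{7} - \frac{20ax}{189} = \frac{ax}{27}$, rearranges exactly to $\frac{f_{k-2}-f_{k-1}}{2} \geq \frac{ax}{27}$. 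I would finish by arguing that no intermediate forking depth and no ``do not become head'' strategy can do better: shallower forks forgo fees while enjoying no greater survival advantage than the obedient action, and any block that is not the head during its slot has $\chi_k = 0$, so the comparison against the obedient action is the binding one.

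The hard part will be the second step --- justifying the exact survival probability $1/2$ and, hand in hand with it, the precise attestation bookkeeping that produces the clean constant $\frac{ax}{27}$ on the right-hand side. Both depend on a careful account of which attestations are re-collected on each branch during the bounce and of how the unknown endpoint $s$ interacts with the strictly alternating leadership of the two branches; getting the timeliness classification of every re-included attestation right (full rate versus the $20x/27$ late rate) is what turns an inequality ``of the right shape'' into the stated identity. Once that accounting is fixed, the algebra of the final step is routine.
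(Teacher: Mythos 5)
Your proposal follows essentially the same route as the paper's proof: the cunning condition for the second proposer produces a self-sustaining bounce between the two branches, each block is canonical with probability $1/2$, the re-included attestations earn only the late rate $20x/27$, and comparing the expected cunning payoff $\frac{a}{7}\cdot\frac{20x}{27}+\frac{f_{k-2}+f_{k-1}}{2}$ with the obedient payoff $\frac{ax}{7}+f_{k-1}$ yields exactly the stated threshold $\frac{f_{k-2}-f_{k-1}}{2}\geq\frac{ax}{27}$. (One tiny imprecision: after a single slot of the bounce the weight gap is $a-d$ rather than $d$, alternating thereafter, but since $d\in[(1-\rho)a,\rho a]$ implies $a-d$ lies in the same interval, your persistence argument goes through unchanged.)
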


\begin{proof} 
If the cunning condition is true for the second proposer ($w_g + a - w_f \leq \rho a$ ) it can attach its block to the branch with attestation weight $w_f$ since the gap with the concurrent chain of weight $w_g + a$ is less than the proposer boost $\rho a$. The obedient attesters of the second slot will add an attestation weight of $a$ to $w_f$. Following this, the gap between the attestation weights of the two concurrent branches will always remain less than $\rho a$, leading all cunning proposers to attach their blocks two slots prior.
    
We illustrate in \autoref{fig:cunningBouncing} the "bouncing" that will unfold due to cunning proposers. Their resulting reward will be affected, as the repeated bouncing of the canonical chain between the two branches will cause the blocks from each chain to become canonical with a probability of $1/2$. No attesters will receive the maximum reward since they would never attest in accordance with the following proposer. The reward of the proposer $(0,k)$ following the cunning strategy will thus be:
    \begin{equation}
    \begin{split}
        u_{(0,k)}(\sigma_{-(0,k)},\sigma_{(0,k)}^C) &= \frac{1}{2} \left(\frac{a}{7} \cdot \frac{20x}{27} + f_{k-2} + \frac{a}{7} \cdot \frac{20x}{27} + f_{k-1} \right)  \\
        &= \frac{a}{7} \cdot \frac{20x}{27} + \frac{f_{k-2} + f_{k-1}}{2},
    \end{split}
    \end{equation}
    with $\sigma_{-(0,k)}$ being the strategy profile in which every proposer is cunning and every attester is obedient. 
    
    Being cunning is a best response when $w_f + \rho a \geq a$, if and only if:
    \begin{equation}
    \begin{split}
        u_{(0,k)}(\sigma_{-(0,k)},\sigma_{(0,k)}^C) &\geq u_{(0,k)}(\sigma_{-(0,k)},\sigma_{(0,k)}^O) \\
        \Leftrightarrow \quad \frac{a}{7} \cdot \frac{20x}{27} + \frac{f_{k-2} + f_{k-1}}{2} &\geq \frac{ax}{7} + f_{k-1} \\
        \Leftrightarrow \quad \frac{f_{k-2} - f_{k-1}}{2} &\geq \frac{ax}{27},
    \end{split}
    \end{equation}
    where $\sigma_{-(0,k)}$ is the strategy profile in which every proposer is cunning and every attester is obedient, and $\sigma_{(0,k)}^O$ is the obedient strategy. Therefore, if $f_{k-2}$ is not sufficiently greater than $f_{k-1}$, the best response is the obedient strategy; otherwise, the best response is the cunning strategy.
\end{proof}

Since transaction fees are positive, they cannot continue to decrease indefinitely with each slot. This implies that when the cunning condition holds for the second proposer, eventually one proposer will follow the obedient strategy, thereby stopping the fork.

\begin{figure}[ht]
    \centering
    \resizebox{\columnwidth}{!}{%
    \begin{tikzpicture}[
    node distance=5cm and 1.5cm, 
    block/.style={rectangle, rounded corners, draw, fill=gray!30, minimum width=1.5cm, minimum height=1cm, align=center}, 
    blockOrange/.style={rectangle, rounded corners, draw, fill=orange!30, minimum width=1.5cm, minimum height=1cm, align=center},
    dashed line/.style={dashed, very thick, draw=gray!50},
    label/.style={text=gray!50, font=\normalsize}, 
    attestation/.style={Latex-, draw=blue}, 
    attester/.style={circle, fill=blue!50, minimum size=4mm, inner sep=0pt, font=\normalsize, text=white} 
]

\node[block] (block1) {$X$};
\node[block, right=of block1] (block2) {$\rho a$};
\node[block,draw=orange,right=of block2] (block3) {$\color{orange} \rho a$};

\begin{pgfonlayer}{background}
\draw[dashed line] ($(block1.west)$) ++(0,2) -- ++(0,-3);
\draw[dashed line] ($(block2.west)$) ++(0,2) -- ++(0,-3);
\draw[dashed line] ($(block3.west)$) ++(0,2) -- ++(0,-3);
\draw[dashed line, opacity=0] ($(block4.west)$) ++ (0,2) -- ++(0,-3);
\draw[dashed line,opacity=0] ($(block4.east)$) ++ (1.5,2) -- ++(0,-3);
\end{pgfonlayer}

\node[label, above=1cm of block1, xshift=0.75cm] {slot $k-2$};
\node[label, above=1cm of block2, xshift=0.75cm] {slot $k-1$};
\node[label, above=1cm of block3, xshift=0.75cm] {slot $k$};

\draw[-Latex] (block2) -- (block1);
\draw[-Latex] (block3) to[bend right=27] (block1) ;

\end{tikzpicture}  }

\vspace{.3cm}

\resizebox{\columnwidth}{!}{%
\begin{tikzpicture}[
    node distance=5cm and 1.5cm, 
    block/.style={rectangle, rounded corners, draw, fill=gray!30, minimum width=1.5cm, minimum height=1cm, align=center}, 
    blockOrange/.style={rectangle, rounded corners, draw, fill=orange!30, minimum width=1.5cm, minimum height=1cm, align=center},
    dashed line/.style={dashed, very thick, draw=gray!50},
    label/.style={text=gray!50, font=\normalsize}, 
    attestation/.style={Latex-, draw=blue}, 
    attester/.style={circle, fill=blue!50, minimum size=4mm, inner sep=0pt, font=\normalsize, text=white} 
]

\node[block] (block1) {$X$};
\node[block, right=of block1] (block2) {$\rho a \color{orange} + \rho a$};
\node[block, right=of block2] (block3) {$a$};
\node[block,draw=orange,right=of block3] (block4) {$\color{orange} \rho a$};

\begin{pgfonlayer}{background}
\draw[dashed line] ($(block1.west)$) ++(0,2) -- ++(0,-3);
\draw[dashed line] ($(block2.west)$) ++(0,2) -- ++(0,-3);
\draw[dashed line] ($(block3.west)$) ++(0,2) -- ++(0,-3);
\draw[dashed line] ($(block4.west)$) ++ (0,2) -- ++(0,-3);
\draw[dashed line,opacity=0] ($(block4.east)$) ++ (1.5,2) -- ++(0,-3);
\end{pgfonlayer}

\node[label, above=1cm of block1, xshift=0.75cm] {slot $k-2$};
\node[label, above=1cm of block2, xshift=0.75cm] {slot $k-1$};
\node[label, above=1cm of block3, xshift=0.75cm] {slot $k$};
\node[label, above=1cm of block4, xshift=0.75cm] {slot $k+1$};

\draw[-Latex] (block2) -- (block1);
\draw[-Latex] (block3) to[bend right=27] (block1) ;
\draw[-Latex] (block4) to[bend left=27] (block2) ;

\end{tikzpicture}  
 }

    \caption{$X$ indicates that the value of the block is irrelevant. As a reminder, the proposer boost is equivalent to an attestation weight of $\rho a$ for a new block (in orange). 
    In this scenario, the proposer of slot $k$ is cunning and all the attesters are obedient. The proposer of slot $k$ takes advantage of the proposer boost to become the head of the canonical chain.
    The proposer of slot $k+1$ can cunningly become the head of the canonical chain by attaching its block to the block from slot $k-1$ and can become the head of the chain only if $2\rho a \geq a$, which means $\rho \geq 1/2$. In conclusion, a proposer boost greater than $1/2$ can create a situation in which multiple forks occur in the presence of cunning proposers and obedient attesters.}
    \label{fig:cunningBouncing}
\end{figure}
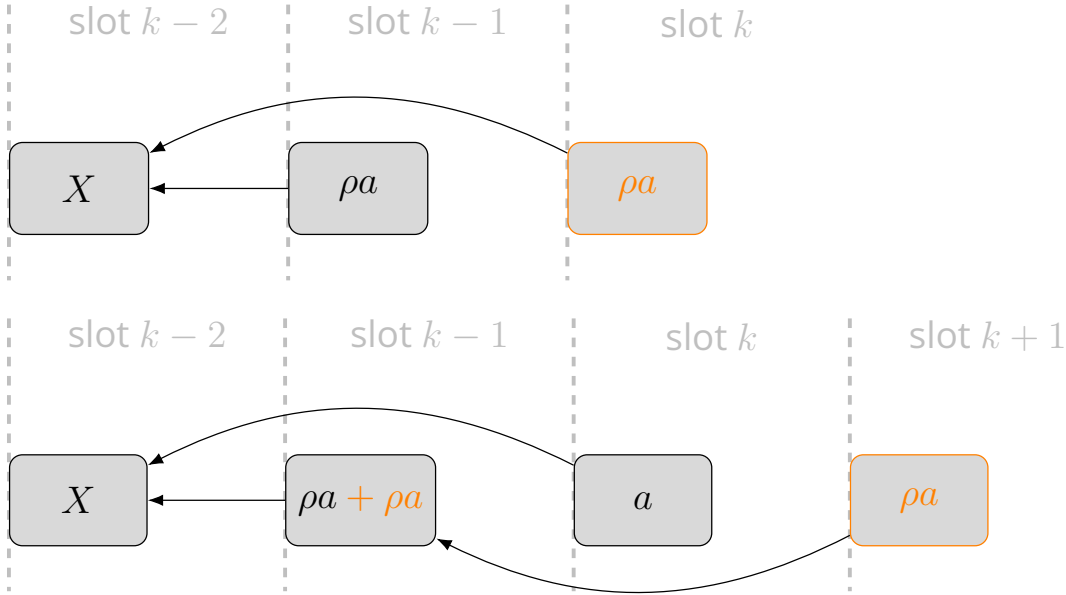

\paragraph*{Best response of an attester among $s$ cunning proposers and $an-1$ obedient attesters.}

Until now, we have described a scenario where all proposers follow the cunning strategy, and attesters follow the obedient strategy, leading to a potentially long fork in which attesters pay the price for the cunning behavior of proposers and do not receive the maximum reward. Let us now introduce the cunning attester strategy, which takes advantage of knowing when the cunning proposer strategy is the best response to act accordingly and secure a higher reward.

\begin{stratAttester}
\textbf{Cunning Attester} ($\sigma^{C}_{(i,k)}$): \\
\textit{Action:} \quad $\nu_{(i,k)} = \sigma^C_{0,k+1}(\mathcal{T}_k,\mathcal{A}_{k-1}\cup A^O_k)$. \\

The cunning attester $(i,k)$ attests to the parent of the block in slot $k+1$, assuming that all other attesters in slot $k$ will act obediently ($A^O_k$) and that the proposer of slot $k+1$ will act cunningly. The action is the same as the following cunning proposer's, i.e., $\nu_{(i,k)} = \phi^C_{k+1}$.
\end{stratAttester}

We found that when $\rho < 1/2$, the obedient attester strategy and the cunning attester strategy are equivalent. This implies that attesters will follow the protocol (\autoref{lem:obedStratIScunStrat_attester}) when $\rho<1/2$. 
Moreover, if $\rho \geq 1/2$, when all proposers are cunning and attesters are obedient, the best response is the cunning attester strategy. When the cunning condition holds for the second proposer the cunning attester strategy will deviate from the protocol. Otherwise, all attesters will follow the protocol (\autoref{lem:7}).

\begin{lemma} \label{lem:obedStratIScunStrat_attester}
    When $\rho < 1/2$, the obedient attester strategy and the cunning attester strategy are equivalent.
\end{lemma}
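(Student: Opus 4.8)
The plan is to show that in every reachable game state the two strategies prescribe the same value of $\nu_{(i,k)}$, so that they are pointwise equal as functions $\mathcal{T}\times\mathcal{A}\to\mathbb{N}$. Fix an attester $(i,k)$ together with a state $(\mathcal{T}_k,\mathcal{A}_{k-1})$. The obedient attester votes for the block $X:=\mathcal{F}(\mathcal{T}_k,\mathcal{A}_{k-1}+\rho a)$, i.e. the head of the canonical chain when the proposer boost of weight $\rho a$ sits on $B_k$. First I would record that every other attester of slot $k$, being obedient in the cunning attester's prediction ($A^O_k$), also votes for $X$, so the $a-1$ predicted attestations all land on the branch of $X$.

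The heart of the argument is to compare $X$ with the parent that the proposer of slot $k+1$ would choose, since by construction the cunning attester sets $\nu_{(i,k)}=\phi^C_{k+1}$. Because that proposer is never the very first one (slot $k$ precedes it), \autoref{lem:cuningBRifRHOinfHALF} applies and, as $\rho<1/2$, this cunning proposer acts obediently: it builds $B_{k+1}$ on the head given by $\mathcal{F}(\mathcal{T}_k,\mathcal{A}_{k-1}\cup A^O_k)$, that is, on the fork-choice head once the predicted obedient attestations count as real weight and the transient boost is gone. The remaining step is to verify that this head is again $X$. When $X=B_k$, the boost $\rho a$ is replaced by the $a-1$ genuine attestations of $A^O_k$, and since $\rho<1/2$ gives $\rho a<a/2\le a-1$, the branch of $X$ only gains relative to any competitor, so $X$ stays the head. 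When $X\neq B_k$, the boost sat on the losing block $B_k$, hence removing it while reinforcing $X$ trivially keeps $X$ ahead. In both cases the proposer of slot $k+1$ attaches $B_{k+1}$ to $X$, so the cunning attester's target, the parent of $B_{k+1}$, is exactly $X$, matching the obedient attester.

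The step I expect to be the main obstacle is precisely this reconciliation between the attester's boosted view $\mathcal{F}(\mathcal{T}_k,\mathcal{A}_{k-1}+\rho a)$ and the next proposer's unboosted-but-reinforced view $\mathcal{F}(\mathcal{T}_k,\mathcal{A}_{k-1}\cup A^O_k)$: one must argue carefully that swapping the temporary weight $\rho a$ on $B_k$ for the real weight of the slot-$k$ attestations never changes which leaf the fork choice rule returns. This is exactly where the hypothesis $\rho<1/2$ enters, through the inequality $\rho a<a-1$, and where \autoref{lem:cuningBRifRHOinfHALF} rules out any bouncing of the head. I would also dispatch the boundary conventions, namely slots with $a=1$, where $A^O_k$ is empty, and the final slot $s$, where the ``next proposer'' is the virtual slot $s+1$, and check that the claimed equality persists there, but these are routine once the interior case is settled.
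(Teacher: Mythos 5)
Your proof is correct and follows essentially the same route as the paper's: both rest on \autoref{lem:cuningBRifRHOinfHALF} to conclude that for $\rho<1/2$ no proposer after the first can satisfy the cunning condition, so the next proposer acts obediently and the cunning attester's target $\phi^C_{k+1}$ collapses to the fork-choice head that the obedient attester votes for. Your only addition is to spell out the reconciliation between the boosted view $\mathcal{F}(\mathcal{T}_k,\mathcal{A}_{k-1}+\rho a)$ and the next proposer's view after the slot-$k$ attestations replace the boost — a step the paper's two-sentence proof leaves implicit, and which you handle correctly via $\rho a < a-1$.
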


\begin{proof}
    This result stems from the fact that when $\rho < 1/2$, only the first proposer can act conspicuously cunningly, meaning they deviate from the protocol (cf. proof of \autoref{lem:cuningBRifRHOinfHALF}). Therefore, if all subsequent proposers act similarly to obedient proposers and follow the protocol, attesters will never have the opportunity to act cunningly and will follow the protocol as well.
\end{proof}
    
\begin{lemma} \label{lem:7}
    When $\rho \geq 1/2$ and all proposers are cunning while all other attesters are obedient, the cunning attester strategy is a best response. If the cunning condition holds for the second proposer, the cunning attester strategy will lead the attester to deviate from the protocol. Otherwise, all attesters will follow the protocol.
\end{lemma}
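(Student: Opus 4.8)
The plan is to reduce the attester's optimization to a single question—whether its block vote matches the parent that the \emph{next} proposer will choose—and then show the cunning attester strategy answers it optimally. First I would observe that, since every (cunning) proposer includes all available attestations, the attestation of an attester in slot $k$ is always incorporated into block $B_{k+1}$, i.e.\ with timeliness exactly one slot, independently of the value of $\nu_{(i,k)}$. Consulting \autoref{eq:utilityAttester} (equivalently \autoref{tab:attRewards}), the only lever the attester controls is \emph{correctness}: the reward equals $x$ precisely when $\nu_{(i,k)}=\phi_{k+1}$ and $\chi_{k+1}=1$, and is at most $20x/27$ in every other case. Hence, by \autoref{def:BestRep}, a best response is any strategy that sets $\nu_{(i,k)}$ equal to $\phi_{k+1}$ whenever $\chi_{k+1}=1$ is attainable.

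Next I would argue that the cunning attester's prediction is both correct and robust. Since all other attesters of slot $k$ are obedient and the proposer of slot $k+1$ is cunning, the next proposer's parent choice is exactly $\phi^{C}_{k+1}$ evaluated on $\mathcal{A}_{k-1}\cup A^{O}_k$, which is precisely what the cunning attester targets by construction. The only gap is that this lone attester deviates, perturbing the attestation weights by a single unit; I would close it by noting that its cunning vote is cast on the very branch that $B_{k+1}$ extends, so it can only increase that branch's weight and therefore cannot flip the next proposer's action away from $\phi^{C}_{k+1}$. Consequently $\nu_{(i,k)}=\phi_{k+1}$ holds, and the cunning attester obtains the maximal reward achievable given the value of $\chi_{k+1}$ (which, not depending on this single vote, is the same under any of its choices). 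This establishes the best-response claim.

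I would then split into the two regimes dictated by the proposer lemmas. If the cunning condition of \autoref{obs:cunningCondition} fails for the second proposer, then by \autoref{lem:propBRtxfee} only the first proposer ever deviates and every later proposer acts obediently; hence $B_{k+1}$ extends the fork-choice head, the obedient vote already satisfies $\nu_{(i,k)}=\phi_{k+1}$, and $\phi^{C}_{k+1}=\phi^{O}_{k+1}$, so the cunning action coincides with the obedient one—mirroring \autoref{lem:obedStratIScunStrat_attester}—and the attester follows the protocol. If instead the cunning condition holds for the second proposer, then by \autoref{lem:propBRtxfee2} the canonical chain \emph{bounces}: $B_{k+1}$ extends $B_{k-1}$ rather than the fork-choice head $B_k$. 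An obedient attester then votes incorrectly and earns at most $20x/27$, whereas the cunning attester votes for $B_{k-1}$, the parent of $B_{k+1}$, and earns $x$ whenever $\chi_{k+1}=1$; since the inclusion slot and the value of $\chi_{k+1}$ are identical under both strategies, the cunning strategy strictly dominates and the attester deviates.

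The main obstacle I anticipate is the robustness argument in the second paragraph: justifying that a single attester switching its vote neither alters the next cunning proposer's fork-choice computation nor changes $\chi_{k+1}$. Making this airtight requires either the monotonicity observation above (the cunning vote only reinforces the predicted branch) or an explicit appeal to the attestation weight being a multiple of $a$ up to the single perturbation, combined with the proposer-boost margin established in \autoref{lem:propBRtxfee2}; verifying that this off-by-one never lands exactly on the decision boundary is the delicate point.
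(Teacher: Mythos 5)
Your proof is correct and follows essentially the same route as the paper: the same case split on whether the cunning condition holds for the second proposer, the same reliance on the bouncing dynamics established for cunning proposers (\autoref{lem:propBRtxfee2}), and the same reward comparison via \autoref{eq:utilityAttester}. The only difference is presentational — the paper computes expected rewards explicitly, $\frac{1}{2}\left(x+\frac{20x}{27}\right)=\frac{47x}{54}$ for the cunning attester versus $\frac{20x}{27}$ for the obedient one, using the $1/2$ probability that each bouncing branch becomes canonical, whereas you argue pointwise dominance conditional on $\chi_{k+1}$; the single-vote perturbation you flag as delicate is a point the paper silently assumes away rather than resolves.
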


\begin{proof}
    For attesters to exhibit cunning behavior, more than just the first proposer must act cunningly. This occurs if and only if the cunning condition holds for the second attester.
    
    The reward for attester $(i,k)$ following the cunning attester strategy $\sigma_{(i,k)}^C$, while all other attesters are obedient and proposers are cunning $\sigma_{-(i,k)}$, is:
    \begin{equation}
        u_{(i,k)}(\sigma_{-(i,k)},\sigma_{(i,k)}^C) = \frac{47 x}{54}.
    \end{equation}
    
    Since the attester's reward depends on when their attestation is included in a block, it also depends on whether the blocks belong to the canonical chain. Each attestation is included in the next two blocks, which are on different chains, each having a 1/2 probability of being in the canonical chain. This gives the cunning attester a reward of $\frac{1}{2}(x + \frac{20x}{27})$. Following the obedient attester strategy leads to a reward of $\frac{20x}{27}$, as in both blocks, the attestation will either attest to the wrong block or be included too late. Thus, $u_{(i,k)}(\sigma_{-(i,k)},\sigma_{(i,k)}^C) \geq u_{(i,k)}(\sigma_{-(i,k)},\sigma_{(i,k)}^O)$.
\end{proof}

\paragraph*{Best response of an attester among $s$ cunning proposers and $an-1$ cunning attesters.}

We study the behavior of an attester when all proposers are cunning and other attesters cunning. In this case the best response is the cunning attester strategy. This strategy only deviates from the protocol for the attesters of the first slot if the cunning condition holds for the second proposer. However this deviation will prevent the cunning condition to hold for the third proposer, effectively making all subsequent validators to follow the protocol.

\begin{lemma} \label{lem:BRcunningAttester} 
    The cunning attester strategy is a best response for an attester when all validators are cunning.
    If the cunning condition holds for the second proposer, the cunning attester strategy will lead the attesters of the first slot to deviate from the protocol. Otherwise, all attesters will follow the protocol.
\end{lemma}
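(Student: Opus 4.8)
The plan is to prove the statement by a case analysis on the proposer boost value $\rho$ and on whether the cunning condition (\autoref{obs:cunningCondition}) holds for the second proposer, reusing the dynamics already established for the proposer-side lemmas. Since an attester's reward depends only on the timeliness and correctness of its vote (\autoref{eq:utilityAttester}), throughout the proof I would compare, for a fixed first-slot attester, the payoff obtained under the cunning attester strategy against the payoff of the only meaningful alternative, namely deviating to the obedient action, while keeping every other validator cunning.

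First I would dispose of the easy regimes. When $\rho < 1/2$, \autoref{lem:obedStratIScunStrat_attester} shows that the cunning and obedient attester actions coincide, and \autoref{lem:cuningBRifRHOinfHALF} shows that only the first proposer can ever deviate; hence the chain is stable from the second slot onward, every attestation aligns with the next block and is correct and timely, each attester collects the maximal reward $x$, and following the protocol is a best response. When $\rho \ge 1/2$ but the cunning condition fails for the second proposer, \autoref{lem:propBRtxfee} again confines deviation to the first proposer, so the head never leaves its branch; the cunning attester, which points to the parent of the next (obediently built) block, plays exactly the obedient action and earns $x$, which is optimal. In both regimes all attesters follow the protocol, matching the \emph{otherwise} clause.

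The substantive case is $\rho \ge 1/2$ together with the cunning condition holding for the second proposer, i.e. $(1-\rho)a \le w_f - w_g \le \rho a$, which is precisely the bouncing regime of \autoref{lem:propBRtxfee2}. Here I would track the attestation weights through the fork choice rule. Each first-slot cunning attester computes the second proposer's cunning move under the assumption that the others attest obediently; by symmetry they all obtain the same target block, namely the parent that the cunning second proposer selects on the branch of weight $w_f$. Since all first-slot attesters actually follow this prescription, they collectively add a weight $a$ to that branch, raising its total to $w_f + a$ while the first proposer's block on the $w_g$ branch is left unsupported. The gap becomes $w_f + a - w_g \ge a > \rho a$, so a downstream cunning proposer can no longer satisfy the cunning condition and must extend the now-dominant branch obediently; the bounce is broken and every subsequent validator follows the protocol. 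It then remains to compare rewards: the cunning attester's vote points to the head of the surviving canonical branch and is included, correctly and in the very next slot, in a canonical block, yielding the maximal payoff $x$, whereas deviating to the obedient action would instead vote for the orphaned first block, giving at most $20x/27$ by \autoref{eq:utilityAttester}. Hence the cunning attester strategy is a best response, and it is exactly the first-slot attesters who deviate from the protocol.

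I expect the weight bookkeeping of this last case to be the main obstacle, for two reasons. First, the cunning attester strategy is defined through an anticipation of the next proposer's behaviour computed under a counterfactual (``others obedient'') that differs from the realized all-cunning profile, so I must verify that the predicted target block and the realized canonical block coincide and that all first-slot attesters genuinely converge on the same vote. Second, I must pin down exactly which downstream proposer is blocked: one has to confirm that the extra weight $a$ injected on the winning branch is enough to push the gap above $\rho a$ for the relevant proposer, establishing that the deviation is self-limiting and does not merely shift the bounce by one slot. This is where careful accounting of the proposer boost, which is temporary and cleared at the end of each slot, against the permanent attestation weight is essential.
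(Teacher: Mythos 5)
Your proof is correct and takes essentially the same route as the paper's: in the substantive case ($\rho \geq 1/2$ with the cunning condition holding for the second proposer), the first-slot cunning attesters coordinate on the second proposer's anticipated parent on the branch of weight $w_f$, injecting weight $a$ so that the gap $w_f + a - w_g \geq a > \rho a$ orphans the first proposer's block, forecloses any further cunning action, and yields each attester the maximal payoff $x$ (which, being the global maximum of \autoref{eq:utilityAttester}, settles the best-response claim against all alternatives, not just the obedient one). Your explicit handling of the easy regimes and the anticipation-versus-realization check are sound elaborations of what the paper treats implicitly via its earlier lemmas.
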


\begin{proof}  
    For attesters to exhibit cunning behavior, more than just the first proposer $(0,0)$ must act cunningly. This occurs if the cunning condition holds for the second proposer $(0,1)$.

    In this case, all attesters of the first slot will expect proposer $(0,1)$ to act cunningly and attach itself to the block designated by the fork choice rule at the beginning of the game ($\mathcal{F}(\mathcal{T}_{0},\mathcal{A}_{-1})$), leading them to attest to the head of the branch with total attestation weight $w_f$. This scenario is represented in \autoref{fig:cunningAttester}. 
    As a result, the block proposed by the first proposer $(0,0)$ will not be attested by the attesters. 
    The block proposed by $(0,1)$ will belong to the canonical chain since no other fork is possible for the subsequent proposers. The gap between $w_f + a$ and $w_g$ is too large for the proposer boost to enable further cunning actions, the cunning condition cannot hold anymore.
    The attesters $(i,0)$ of the first slot, who act in accordance with proposer $(0,1)$, receive the maximum reward. 
    Proposers $(0,2)$ and beyond will not have the opportunity to act cunningly, nor will the remaining attesters, resulting in all attesters receiving the maximum reward.
\end{proof}

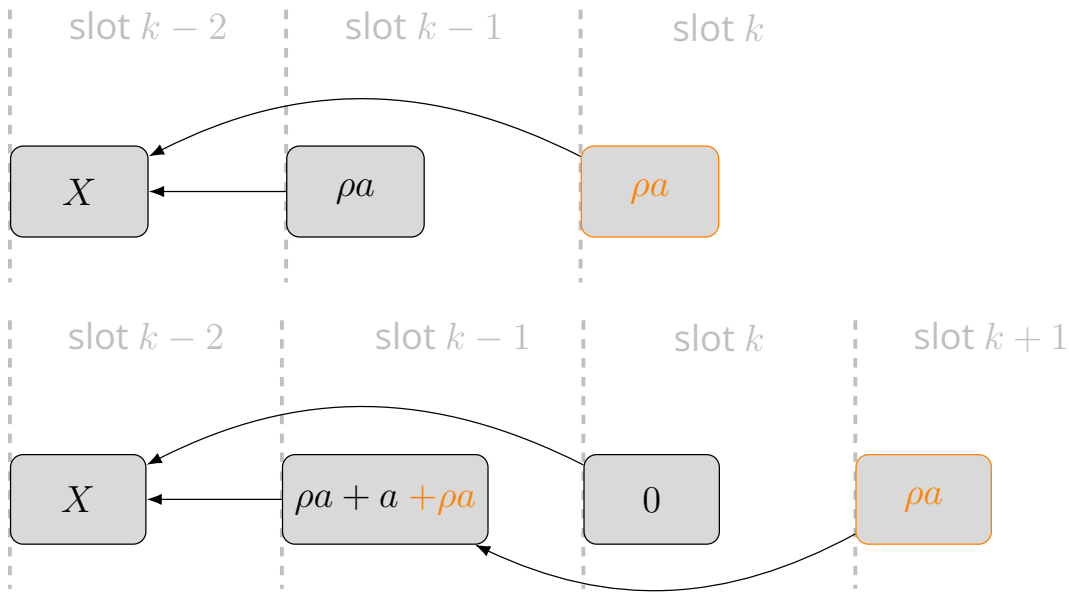
\begin{figure}[ht]
    \centering
    \resizebox{\columnwidth}{!}{%
    \begin{tikzpicture}[
    node distance=5cm and 1.5cm, 
    block/.style={rectangle, rounded corners, draw, fill=gray!30, minimum width=1.5cm, minimum height=1cm, align=center}, 
    blockOrange/.style={rectangle, rounded corners, draw, fill=orange!30, minimum width=1.5cm, minimum height=1cm, align=center},
    dashed line/.style={dashed, very thick, draw=gray!50},
    label/.style={text=gray!50, font=\normalsize}, 
    attestation/.style={Latex-, draw=blue}, 
    attester/.style={circle, fill=blue!50, minimum size=4mm, inner sep=0pt, font=\normalsize, text=white} 
]

\node[block] (block1) {$X$};
\node[block, right=of block1] (block2) {$\rho a$};
\node[block,draw=orange,right=of block2, xshift=0.2cm] (block3) {$\color{orange} \rho a$};

\begin{pgfonlayer}{background}
\draw[dashed line] ($(block1.west)$) ++(0,2) -- ++(0,-3);
\draw[dashed line] ($(block2.west)$) ++(0,2) -- ++(0,-3);
\draw[dashed line] ($(block3.west)$) ++(0,2) -- ++(0,-3);
\draw[dashed line, opacity=0] ($(block4.west)$) ++ (0,2) -- ++(0,-3);
\draw[dashed line,opacity=0] ($(block4.east)$) ++ (1.5,2) -- ++(0,-3);
\end{pgfonlayer}

\node[label, above=1cm of block1, xshift=0.75cm] {slot $k-2$};
\node[label, above=1cm of block2, xshift=0.75cm] {slot $k-1$};
\node[label, above=1cm of block3, xshift=0.75cm] {slot $k$};

\draw[-Latex] (block2) -- (block1);
\draw[-Latex] (block3) to[bend right=27] (block1) ;

\end{tikzpicture}  }

\vspace{.3cm}

\resizebox{\columnwidth}{!}{%
\begin{tikzpicture}[
    node distance=5cm and 1.5cm, 
    block/.style={rectangle, rounded corners, draw, fill=gray!30, minimum width=1.5cm, minimum height=1cm, align=center}, 
    blockOrange/.style={rectangle, rounded corners, draw, fill=orange!30, minimum width=1.5cm, minimum height=1cm, align=center},
    dashed line/.style={dashed, very thick, draw=gray!50},
    label/.style={text=gray!50, font=\normalsize}, 
    attestation/.style={Latex-, draw=blue}, 
    attester/.style={circle, fill=blue!50, minimum size=4mm, inner sep=0pt, font=\normalsize, text=white} 
]

\node[block] (block1) {$X$};
\node[block, right=of block1] (block2) {$\rho a+a$ $\color{orange} + \rho a$};
\node[block, right=of block2, xshift=-0.45cm] (block3) {$0$};
\node[block,draw=orange,right=of block3] (block4) {$\color{orange} \rho a$};

\begin{pgfonlayer}{background}
\draw[dashed line] ($(block1.west)$) ++(0,2) -- ++(0,-3);
\draw[dashed line] ($(block2.west)$) ++(0,2) -- ++(0,-3);
\draw[dashed line] ($(block3.west)$) ++(0,2) -- ++(0,-3);
\draw[dashed line] ($(block4.west)$) ++ (0,2) -- ++(0,-3);
\draw[dashed line,opacity=0] ($(block4.east)$) ++ (1.5,2) -- ++(0,-3);
\end{pgfonlayer}

\node[label, above=1cm of block1, xshift=0.75cm] {slot $k-2$};
\node[label, above=1cm of block2, xshift=0.75cm] {slot $k-1$};
\node[label, above=1cm of block3, xshift=0.75cm] {slot $k$};
\node[label, above=1cm of block4, xshift=0.75cm] {slot $k+1$};

\draw[-Latex] (block2) -- (block1);
\draw[-Latex] (block3) to[bend right=27] (block1) ;
\draw[-Latex] (block4) to[bend left=27] (block2) ;

\end{tikzpicture}  
 }

    \caption{$X$ indicates that the value of the block is irrelevant. In this scenario, all validators act cunningly. The proposer of slot $k$ attaches its block to the block from two slots prior. The cunning attesters in slot $k$ attest to the block from slot $k-1$ to align with the following proposer's strategy. The proposer of slot $k+1$ is then compelled to attach its block to the block from slot $k-1$. 
    As a reminder, the proposer boost is equivalent to an attestation weight of $\rho a$ for a new block (in orange). This results in the proposer of slot $k$ forking alone and receiving no rewards.
    }
    \label{fig:cunningAttester}
\end{figure}

\paragraph*{Best response of a proposer among $s-1$ cunning proposers and $as$ cunning attesters.}

Now that the attester can also act cunningly, let us evaluate the best response of proposers. We found that for $\rho < 1/2$, as usual, only the first proposers can deviate from the prescribed protocol. Doing so will yield more reward hence the cunning strategy is the best response for the first proposer.

When $\rho\geq 1/2$, there are two cases. If the cunning condition does not hold for the second proposer, the best response is the cunning proposer strategy and only the first proposer has the opportunity to deviate from the protocol. 
On the other hand, if the cunning condition holds for the second proposer, all attesters of the first slot will deviate from the protocol and not attest the block of the first proposer. This makes the obedient proposer strategy the best response for the first proposer as well as all other proposers that will follow the protocol regardless of the first proposer strategy when attesters are cunning.

\begin{lemma}
    If $\rho < 1/2$ and all validators are cunning, the best response for the first proposer is the cunning strategy. If the cunning condition holds, it will only do so for the first proposer, causing this proposer to deviate from the protocol.
\end{lemma}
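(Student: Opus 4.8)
The plan is to reduce this setting to one that has already been resolved, by exploiting the fact that when $\rho < 1/2$ a cunning attester is indistinguishable from an obedient one. Concretely, \autoref{lem:obedStratIScunStrat_attester} states that for $\rho < 1/2$ the cunning attester strategy and the obedient attester strategy are equivalent, i.e.\ they prescribe the same action in every slot and for every history of the game. First I would make this equivalence do the heavy lifting: since the two attester strategies coincide pointwise, replacing the $as$ cunning attesters by $as$ obedient attesters leaves the realized block tree $\mathcal{T}_k$, the attestation weights at each slot, and hence the entire payoff vector $u$ unchanged, for any fixed profile of the proposers. Consequently, the best-response problem faced by the first proposer against $s-1$ cunning proposers and $as$ cunning attesters is \emph{identical} to the one faced against $s-1$ cunning proposers and $as$ obedient attesters.

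Once the reduction is in place, the claim follows by transport from \autoref{lem:cuningBRifRHOinfHALF}, which treats precisely the reduced setting ($\rho < 1/2$, all other proposers cunning, attesters obedient). That lemma establishes that the cunning proposer strategy is a best response, that the cunning condition (\autoref{obs:cunningCondition}) can hold only for the first proposer, and that therefore only the first proposer deviates while all later proposers act obediently. Because the payoff landscape is unchanged by the equivalence above, the best-response inequality $u_{(0,0)}(\sigma_{-(0,0)},\sigma^C_{(0,0)}) \ge u_{(0,0)}(\sigma_{-(0,0)},\sigma_{(0,0)})$ holds verbatim against cunning attesters, which is exactly the statement to be proved.

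The only place where care is needed is justifying that the transport of the best-response inequality is legitimate, and this is where I would be most explicit. The argument is that, with obedient-equivalent attesters and $\rho<1/2$, a first-proposer deviation attaching its block to a branch satisfying the cunning condition still becomes the head of the canonical chain via the proposer boost $\rho a$ and then stays canonical forever, since the reasoning behind \autoref{lem:firstCunning} forbids any subsequent proposer from re-forking once the attesters add weight $a$ to the chosen branch; the cunning choice then strictly increases the hoarded sum in the reward function \autoref{eq:utilityProposer}, so deviating maximizes the first proposer's reward. The label ``cunning'' attached to the attesters is immaterial to this computation because their realized attestations are identical to the obedient ones, so no new subtlety about attester incentives can enter. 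I expect no genuine obstacle here beyond stating the equivalence cleanly; the main conceptual step is simply recognizing that $\rho<1/2$ collapses the attester distinction and that the proposer-side analysis is already furnished by \autoref{lem:cuningBRifRHOinfHALF}.
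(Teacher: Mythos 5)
Your proof is correct and follows the same route as the paper, whose entire argument is that the claim ``follows directly from \autoref{lem:obedStratIScunStrat_attester} and \autoref{lem:cuningBRifRHOinfHALF}.'' You merely spell out the transport step (that the pointwise equivalence of cunning and obedient attesters for $\rho < 1/2$ leaves the payoff landscape unchanged) that the paper leaves implicit.
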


\begin{proof}
    This follows directly from \autoref{lem:obedStratIScunStrat_attester} and \autoref{lem:cuningBRifRHOinfHALF}.
\end{proof}

\begin{lemma}
    If $\rho \geq 1/2$ and all validators are cunning, the best response for a proposer is cunning, if the cunning condition does not hold for the second proposer. Otherwise, the obedient proposer strategy is the best response.
\end{lemma}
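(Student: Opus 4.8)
The plan is to leverage the attester best-response characterization of \autoref{lem:BRcunningAttester} to reduce the analysis to two regimes, and then in each regime fall back on the proposer best-response lemmas already established. First I would observe that, since every other validator is cunning, \autoref{lem:BRcunningAttester} tells us exactly how the cunning attesters of the first slot behave: if the cunning condition fails for the second proposer, all attesters act obediently; whereas if it holds for the second proposer, the attesters of the first slot deviate and, anticipating that the second cunning proposer will ``bounce back'' to the branch of total attestation weight $w_f$, cast their votes on the head of that branch rather than on the first proposer's block.

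In the first case (cunning condition false for the second proposer), the attesters are \emph{effectively} obedient, so the resulting play is identical to that of \autoref{lem:propBRtxfee}: with $\rho \ge 1/2$, cunning proposers, and obediently-acting attesters, only the first proposer can profitably fork, its block becomes the head and is attested, and hoarding the additional transactions and attestations strictly increases its payoff by \autoref{eq:utilityProposer}. Hence the cunning strategy is a best response, coinciding with the obedient action for every proposer except the first. I would note explicitly that what \autoref{lem:propBRtxfee} requires is the \emph{play}, not the strategy label, so the substitution of cunning-but-obedient-acting attesters for genuinely obedient ones is legitimate.

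The delicate case, which I expect to be the main obstacle, is when the cunning condition holds for the second proposer. Here the key is to compare the two actions of the first proposer against the fixed cunning behavior of everyone else. If the first proposer acts cunningly, then by the anticipation above the attesters of its slot do not vote for its block, so it is orphaned, giving $\chi_0 = 0$ and zero reward. If instead the first proposer acts obediently and attaches $B_0$ to the head of the $w_f$ branch, I would show the cunning attesters now align \emph{with} it: once $B_0$ carries the obedient attestation weight $A^O_0$, the gap between the two branches satisfies $(w_f + a) - w_g \ge a > \rho a$ (using $w_f \ge w_g$ from the fork choice rule), so the second proposer can no longer act cunningly, its cunning action reduces to building on $B_0$, and therefore $\nu_{(i,0)} = \phi_1^C = 0$ makes the attesters vote for $B_0$. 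Thus $\chi_0 = 1$ and the first proposer earns a strictly positive reward, so the obedient strategy strictly dominates the cunning one for it.

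Finally, I would close by noting that once the first proposer has acted obediently the fork is resolved and the cunning condition can no longer be met by any subsequent proposer; consequently each later proposer's cunning action coincides with its obedient action, making the obedient strategy a best response for every proposer in this regime. The real difficulty throughout is the bookkeeping of attestation weights after each hypothetical action, which must be tracked precisely enough to verify exactly when the cunning condition toggles on or off between the first proposer's two available choices.
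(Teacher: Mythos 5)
Your proposal is correct and follows essentially the same route as the paper: a case split on whether the cunning condition holds for the second proposer, deferring the ``does not hold'' regime to the earlier proposer best-response analysis (\autoref{lem:propBRtxfee}) and using \autoref{lem:BRcunningAttester} to show that a cunning first proposer is orphaned by the anticipating cunning attesters (hence $\chi_0=0$ and zero reward), while acting obediently secures attestation weight $a$ and a positive payoff. The only difference is that you make explicit the weight bookkeeping $(w_f + a) - w_g \ge a > \rho a$ that the paper leaves implicit via \autoref{fig:cunningAttester}, which is a faithful elaboration rather than a different argument.
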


\begin{proof}
    If the cunning condition holds for the second proposer, \autoref{lem:BRcunningAttester} describes how the scenario would unfold. A possible outcome is represented in \autoref{fig:cunningAttester}. The result is that, for the first proposer, the best response is the cunning strategy only if the second proposer cannot act cunningly.
    
    Otherwise, as described in the proof of \autoref{lem:BRcunningAttester}, if the first proposer remains cunning, they will receive zero reward. All other proposers follow the protocol regardless of the strategy of the first proposer.
\end{proof}

A counterintuitive finding is that with a higher proposer boost $\rho > 1/2$, the obedient proposer strategy can be favored. One reason is that the transaction fees gained may not be enough to compensate for the probability of not belonging to the canonical chain. Another reason is that being cunning can backfire if the attesters are also cunning, leading to the cunning block not being attested at all. The assurance of belonging to the canonical chain and the available rewards can be sufficient to make the obedient strategy the favored response.



\paragraph*{Eventual Incentive compatibility}

We have seen that the strategy profile in which all validators are obedient and follow the prescribed protocol is not a Nash equilibria, participants can gain from changing strategy. In this sense the consensus protocol and more precisely the fork choice rule is not incentive compatible.

Nonetheless, we now show that in all equilibria, there exists a slot after which all validators follow the prescribed protocol (\autoref{the:eventualNashObedient}).



\begin{lemma} \label{lem:actObedientlyAfterSlot}
    Once a proposer follows the protocol, all subsequent validators do so.
\end{lemma}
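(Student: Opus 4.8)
The plan is to show that a single obedient proposer action permanently opens a total attestation weight gap larger than the proposer boost $\rho a$ between the heaviest branch and every competing branch, after which the cunning condition of \autoref{obs:cunningCondition} can never hold again, forcing the prescribed action on all later validators.

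First I would unpack what ``follows the protocol'' means for proposer $(0,k)$. By the definition of the prescribed strategy, $B_k$ is attached to $\mathcal{F}(\mathcal{T}_{k-1},\mathcal{A}_{k-1})$, the tip of the branch $f$ of maximal total attestation weight $w_f$; in particular $w_f \geq w_g$ for every concurrent branch $g$. Since $B_k$ extends the heaviest branch and additionally receives the proposer boost, $B_k$ is the head of the canonical chain throughout slot $k$. This is the obedient action regardless of whether $(0,k)$ runs the obedient or the cunning strategy, so it covers the case in which a cunning proposer merely \emph{acts} obediently.

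Second I would control the attestations of slot $k$, which is the delicate part. For obedient attesters this is immediate: they attest to $B_k$, adding weight $a$ to branch $f$. For cunning attesters I would invoke the definition of the cunning attester strategy, which computes its action under the assumption that all other attesters of the slot act obediently ($A^O_k$). Under that assumption the $a-1$ remaining attesters attest to $B_k$, so branch $f$ leads every competitor by at least $a-1$ at the end of slot $k$; since $a$ is the large per-slot attester count and $\rho\in[0,1)$, we have $a-1 > \rho a$, hence proposer $(0,k+1)$ cannot satisfy the cunning condition and builds on $B_k$ with $\phi_{k+1}=0$. By the reward function of \autoref{eq:utilityAttester}, a cunning attester's best response is then $\nu_{(i,k)}=\phi_{k+1}=0$, i.e.\ it also attests to $B_k$. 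Consequently every attester of slot $k$ attests to $B_k$, and branch $f$ ends slot $k$ ahead of every competing branch by at least $a > \rho a$.

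Third I would close by a monotonicity argument over the subsequent slots. Once the gap exceeds $\rho a$, the cunning condition fails for proposer $(0,k+1)$, so by the reasoning of \autoref{lem:firstCunning}, \autoref{lem:cuningBRifRHOinfHALF} and \autoref{lem:propBRtxfee} its cunning and obedient actions coincide and it extends branch $f$; the attesters of slot $k+1$ then add a further $a$ to branch $f$ by the same argument as above, while every competing branch receives no new block or attestation, so the gap only grows. Iterating this step shows every proposer and attester after slot $k$ takes the prescribed action. The main obstacle is precisely the second step: breaking the apparent circularity between ``cunning attesters keep the gap small so the next proposer can fork'' and ``the next proposer is obedient so cunning attesters align with it.'' The resolution is that the cunning attester strategy is defined relative to the assumption that all \emph{other} attesters of its slot are obedient, which pins down the post-slot weights independently of that attester's own choice and forces the gap above $\rho a$ as soon as $a(1-\rho) > 1$.
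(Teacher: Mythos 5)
Your proof is correct and takes essentially the same route as the paper's: an obedient proposal at slot $k$ receives the slot's attestations, opening a total-weight gap exceeding $\rho a$ so that the cunning condition of \autoref{obs:cunningCondition} can never hold again, making cunning and obedient actions coincide for every subsequent proposer and attester. Your second step is in fact more explicit than the paper's, which resolves the attester circularity by simply asserting that ``knowing this, the attesters $(i,j)$ of slot $j$ will follow the protocol as well''; the one blemish is your $a-1$ accounting inside the cunning attester's hypothetical, which introduces the side condition $a(1-\rho)>1$ absent from the lemma statement, whereas the paper credits the full per-slot attestation weight $a$ (as in its proofs of \autoref{lem:7} and \autoref{lem:BRcunningAttester}), yielding the unconditional gap $a>\rho a$ since $\rho<1$.
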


\begin{proof} 
    A proposer $(0,j)$ following the protocol implies that if all attesters of its slot also follow the protocol, the next proposer cannot deviate. This is because all obedient attesters in that slot would give an attestation weight of $a$ to the block proposed by $(0,j)$. Since it extends the branch designated by the fork choice rule, with an attestation weight $w_f \geq w_g$, where $w_g$ is the attestation weight of any concurrent branch, adding $a$ to $w_f$ ensures that no proposer deviate from the protocol (cf. \autoref{obs:cunningCondition}).

    Knowing this, the attesters $(i,j)$ of slot $j$ will follow the protocol as well. No validators can deviate from the protocol after a block is attached to the head of the canonical chain.
\end{proof}

\begin{theorem} \label{the:eventualNashObedient}
    In all Nash equilibria, there is a slot after which all validators follow the protocol.
\end{theorem}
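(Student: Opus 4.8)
The plan is to reduce the statement, via \autoref{lem:actObedientlyAfterSlot}, to the single claim that in every Nash equilibrium at least one proposer eventually acts obediently. Indeed, \autoref{lem:actObedientlyAfterSlot} guarantees that once some proposer of a slot $j$ follows the prescribed protocol, every validator of every slot $k \geq j$ follows it as well; hence it suffices to exhibit, in an arbitrary equilibrium $\sigma^*$, one slot whose proposer acts obediently, and take that slot as the $k^*$ claimed by the theorem. So the whole proof amounts to showing that a run of consecutive protocol-deviating (cunning) proposers cannot extend over all slots.

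I would establish this by splitting on the value of the proposer boost $\rho$. For $\rho < 1/2$, \autoref{lem:obedStratIScunStrat_attester} shows that the cunning and obedient attester actions coincide, so the attesters effectively follow the protocol; then \autoref{lem:cuningBRifRHOinfHALF} together with \autoref{lem:firstCunning} shows that the cunning condition of \autoref{obs:cunningCondition} can hold only for the first proposer. Consequently every proposer from the second slot onward acts obediently, and \autoref{lem:actObedientlyAfterSlot} propagates obedience from there.

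For $\rho \geq 1/2$ I would distinguish according to whether the cunning condition holds for the second proposer. If it fails, \autoref{lem:propBRtxfee} again confines deviation to the first proposer. If it holds and the equilibrium attesters are cunning, \autoref{lem:BRcunningAttester} shows that the attesters of the first slot withhold support from the lone cunning proposer, widening the weight gap beyond $\rho a$, which prevents the cunning condition from holding for the third proposer; the deviation therefore self-terminates after a bounded number of slots. The remaining case is $\rho \geq 1/2$ with obedient attesters and a persistent bounce: here \autoref{lem:propBRtxfee2} tells us the cunning strategy is a best response only while $\frac{f_{k-2}-f_{k-1}}{2} \geq \frac{ax}{27}$, i.e. only while the transaction fees strictly decrease by at least the fixed positive amount $2ax/27$ at each step. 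Since fees are nonnegative, such a strictly decreasing sequence can continue only finitely often, forcing some proposer to switch to the obedient strategy and stop the fork.

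The main obstacle is exactly this last subcase: unlike the others its termination is not immediate but rests on the economic fact that a nonnegative sequence decreasing by a fixed positive gap at each step is finite. I would make this quantitative by bounding the number of bouncing slots by $f_{\max}/(2ax/27)$, where $f_{\max}$ bounds the initial fees, so that the deviation segment has length independent of $s$. In every case some proposer eventually acts obediently, after which \autoref{lem:actObedientlyAfterSlot} forces all later validators to follow the protocol, which proves the theorem.
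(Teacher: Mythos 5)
Your proof is correct and follows essentially the same route as the paper's: reduce via \autoref{lem:actObedientlyAfterSlot} to exhibiting one obedient proposer, then observe that a persistent bounce would force the realized fees to satisfy $f_{k-2}-f_{k-1}\geq 2ax/27$ at every step, which a nonnegative fee sequence cannot sustain. Your explicit case split on $\rho$ and on cunning versus obedient attesters, and the quantitative bound $27f_{\max}/(2ax)$ on the bounce length, merely spell out what the paper leaves implicit (and your use of the strictly positive gap is in fact tighter than the paper's ``even taking $\alpha=0$'' remark), so no gap to report.
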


\begin{proof}
We know that once a proposer acts obediently, all subsequent validators do (\autoref{lem:actObedientlyAfterSlot}).

If there is an equilibrium in which one proposer follows the obedient strategy and extends the head of the canonical chain, the theorem is valid. We now look at proposers all following the cunning strategy. When all proposers follow the cunning strategy, to have more than the first to effectively act cunningly we need to have $w_{g}+a-w_{f}\leq\rho a$ otherwise the second proposer will extend the head of the canonical chain, validating the theorem. Then in the case of the fork continuing with each proposer thus attaching their block two slots prior, this makes each of their blocks have an expectation of 1/2 to belong to the canonical chain ($\chi=1/2$). As computed in \autoref{lem:propBRtxfee}, their reward will thus be in the form of: $f_{k-2}-f_{k-1} \geq \alpha$, where $\alpha$ is a positive number that depends on the attestation included and their reward associated. No matter the value of $\alpha$, even taking $\alpha=0$, a proposer will have as best response to be cunning only if the transaction fees gained with a probability 1/2 by being cunning are at least superior to the transaction fees obtained with certainty otherwise.

This condition cannot be true for all proposers as the transaction fees are positive and discrete. Eventually, a proposer $(0,k)$ will see previous transaction fees where $f_{k-2}<f_{k-1}$. The best response of proposer $(0,k)$ is to act obediently.
\end{proof}

We can conclude that under perfect network conditions, regardless of the proposer boost, the obedient strategy will eventually prevail.



\section{Conclusion}

In this chapter, we have analyzed the Ethereum PoS protocol through a game-theoretic lens, particularly focusing on the incentive mechanisms that influence the behavior of proposers and attesters. Our findings reveal that the current design leads rational validators to all eventually adhere to the protocol. Specifically the proposer boost mechanism does not permit prolonged forks while having good network conditions. Surprisingly, a high proposer boost (superior to $1/2$) can even prevent cunning behavior. 

This initial analysis focused on two strategies: cunning and obedient. Future research will expand to include a broader range of strategies. Additionally, exploring different assumptions about network conditions, such as communication with more realistic delays, is expected to significantly impact the results.

    \chapter{Conclusion and Perspectives}\label{chap:Conclusion}

\chapterLettrine{D}{riven} by the need to evaluate the impact of incentive mechanisms on blockchain robustness, we embarked on this thesis.
Traditional distributed systems approaches, which consider honest and Byzantine participants, often overlook incentives, while game theory focuses on incentives but typically disregards blockchain robustness. Our work aims to bridge this gap by examining the case study of Ethereum PoS.

Ethereum PoS is unique for two reasons: it has a very active research community, and its protocol is a hybrid of Nakamoto-style and BFT-like consensus. These features make it an ideal subject for studying complex ideas that apply broadly to other blockchain systems, as most blockchains incorporate at least one of these fundamental components.

\

Our analysis began with a detailed examination of the Ethereum PoS protocol. The complexities of this protocol necessitated a dedicated paper \cite{pavloff_ethereum_2023} focused on its description. In this work, we redefined the crucial properties of safety and liveness, which are fundamental to blockchain robustness and serve as the foundation for our entire study. We formalized the protocol based on its code which we transcribed in pseudo-code. We then revealed that the protocol was safe and the liveness was probabilistic due to a possible attack we identified. 

We then shifted our focus to the incentive mechanisms within the Ethereum PoS protocol. While our first analysis was in line with traditional distributed systems approaches and did not consider incentives, the second part of our work \cite{pavloff_byzantine_2024} explored the inactivity leak, an incentive mechanism that directly impacts the protocol’s robustness. Our results show that the mechanism penalizing seemingly inactive validators to restore liveness in times of partitions could be subverted by Byzantine validators to break the safety. 

The final part of our work addresses the behavior of rational participants within the protocol, bridging the gap between distributed system and game theoretic analysis. We investigated whether proposers and attesters could financially benefit from deviating from the protocol, specifically by exploiting the fork choice rule to their advantage. The strategy exploiting the fork choice rule is called cunning while the strategy that adheres to the protocol is called obedient. Our findings state that for any equilibrium, eventually all validators will behave obediently. There are two reasons for this. First we assume perfect network conditions during which messages sent are instantaneously received by all participants. Secondly, not following the obedient strategy can be less rewarding as it can imply that blocks do not eventually end up in the canonical chain, which would yield zero rewards in this case.

\

Our contributions highlight the critical role of incentive mechanisms in blockchains. Rewards and penalties can either prevent misbehavior, as shown in our game-theoretic analysis, or be detrimental if exploited by Byzantine participants, as illustrated in our analysis of the inactivity leak. This work contributes to the effort of making blockchain protocols more comprehensible. The complexity of the Ethereum protocol is undeniable, and we hope to serve as a resource for a detailed, formalized explanation of the protocol. 
While we have striven to provide an accurate description of the consensus mechanisms, the examination of how transactions are processed and executed remains. This constitutes an entire work on its own.

\

Although our work focuses on a single blockchain, the insights gained offer valuable guidance for the design of future protocols. Addressing the intersection of distributed systems and game theory is both recent and challenging. While Abraham et al. \cite{abraham_distributed_2011} introduced the idea of combining these fields in 2011, the complexity involved has led to limited research in this area. One of our goals was to study the protocol with three types of agents: honest, Byzantine, and rational. However, the complexity of the protocol made adding this refinement to the participant model too daunting and, we believe, unfeasible.

\

Future work should aim to develop simpler protocols that remain robust in the presence of honest, Byzantine, and rational players. Our work provides a framework for evaluating protocols. However, a limitation of this thesis is the absence of a proposed solution to the problems identified.

Blockchain technology is still in its early stages, and we hope that our contribution will aid its continued development.


\bibliographystyle{alpha}
\bibliography{main}

\newcommand{\etalchar}[1]{$^{#1}$}
\begin{thebibliography}{CKWN16}

\bibitem[AAH11]{abraham_distributed_2011}
Ittai Abraham, Lorenzo Alvisi, and Joseph~Y. Halpern.
\newblock Distributed computing meets game theory: combining insights from two fields.
\newblock {\em {SIGACT} News}, 42(2):69--76, 2011.

\bibitem[ABPT20]{amoussou_rational_2020}
Yackolley Amoussou{-}Guenou, Bruno Biais, Maria Potop{-}Butucaru, and Sara {Tucci Piergiovanni}.
\newblock Rational vs byzantine players in consensus-based blockchains.
\newblock In {\em Proceedings of the 19th International Conference on Autonomous Agents and Multiagent Systems, {AAMAS} '20, Auckland, New Zealand, May 9-13, 2020}, pages 43--51. International Foundation for Autonomous Agents and Multiagent Systems, 2020.

\bibitem[ACL{\etalchar{+}}19]{alturki_towards_2019}
Musab~A. Alturki, Jing Chen, Victor Luchangco, Brandon~M. Moore, Karl Palmskog, Lucas Pe{\~{n}}a, and Grigore Rosu.
\newblock Towards a verified model of the algorand consensus protocol in coq.
\newblock In {\em Formal Methods. {FM} 2019 International Workshops - Porto, Portugal, October 7-11, 2019, Revised Selected Papers, Part {I}}, volume 12232 of {\em Lecture Notes in Computer Science}, pages 362--367. Springer, 2019.

\bibitem[ACM20]{amores_security_2020}
Ignacio Amores{-}Sesar, Christian Cachin, and Jovana Micic.
\newblock Security analysis of ripple consensus.
\newblock In {\em 24th International Conference on Principles of Distributed Systems, {OPODIS} 2020, December 14-16, 2020, Strasbourg, France (Virtual Conference)}, volume 184 of {\em LIPIcs}, pages 10:1--10:16. Schloss Dagstuhl - Leibniz-Zentrum f{\"{u}}r Informatik, 2020.

\bibitem[ACP{\etalchar{+}}21]{astefanoaei_ternderbake_2021}
Lacramioara Astefanoaei, Pierre Chambart, Antonella~Del Pozzo, Thibault Rieutord, Sara {Tucci Piergiovanni}, and Eugen Zalinescu.
\newblock Tenderbake - {A} solution to dynamic repeated consensus for blockchains.
\newblock In {\em 4th International Symposium on Foundations and Applications of Blockchain 2021, {FAB} 2021, May 7, 2021, University of California, Davis, California, {USA} (Virtual Conference)}, volume~92 of {\em OASIcs}, pages 1:1--1:23. Schloss Dagstuhl - Leibniz-Zentrum f{\"{u}}r Informatik, 2021.

\bibitem[ADL{\etalchar{+}}19]{anceaume_abstract_2019}
Emmanuelle Anceaume, Antonella {Del Pozzo}, Romaric Ludinard, Maria Potop{-}Butucaru, and Sara {Tucci{-}Piergiovanni}.
\newblock Blockchain abstract data type.
\newblock In {\em The 31st {ACM} on Symposium on Parallelism in Algorithms and Architectures, {SPAA} 2019, Phoenix, AZ, USA, June 22-24, 2019}, pages 349--358. {ACM}, 2019.

\bibitem[ADPT18]{amoussou_correctness_2018}
Yackolley Amoussou{-}Guenou, Antonella {Del Pozzo}, Maria Potop{-}Butucaru, and Sara Tucci{-}Piergiovanni.
\newblock Correctness of tendermint-core blockchains.
\newblock In {\em 22nd International Conference on Principles of Distributed Systems, {OPODIS} 2018, December 17-19, 2018, Hong Kong, China}, pages 16:1--16:16, 2018.

\bibitem[ADRT21]{anceaume_finality_2021}
Emmanuelle Anceaume, Antonella {Del Pozzo}, Thibault Rieutord, and Sara Tucci{-}Piergiovanni.
\newblock On finality in blockchains.
\newblock In {\em 25th International Conference on Principles of Distributed Systems, {OPODIS} 2021, December 13-15, 2021, Strasbourg, France}, pages 6:1--6:19, 2021.

\bibitem[APM{\etalchar{+}}23]{attiya_synchronization_2023}
Hagit Attiya, Antonella~Del Pozzo, Alessia Milani, Ulysse Pavloff, and Alexandre Rapetti.
\newblock The synchronization power of auditable registers.
\newblock In {\em 27th International Conference on Principles of Distributed Systems, {OPODIS} 2023, December 6-8, 2023, Tokyo, Japan}, volume 286 of {\em LIPIcs}, pages 4:1--4:23. Schloss Dagstuhl - Leibniz-Zentrum f{\"{u}}r Informatik, 2023.

\bibitem[APPT19]{amoussou_dissecting_2019}
Yackolley Amoussou{-}Guenou, Antonella~Del Pozzo, Maria Potop{-}Butucaru, and Sara {Tucci{-}Piergiovanni}.
\newblock Dissecting tendermint.
\newblock In {\em Networked Systems - 7th International Conference, {NETYS} 2019, Marrakech, Morocco, June 19-21, 2019, Revised Selected Papers}, volume 11704 of {\em Lecture Notes in Computer Science}, pages 166--182. Springer, 2019.

\bibitem[Bac97]{back_hashcash_1997}
Adam Back.
\newblock A partial hash collision based postage scheme, 1997.

\bibitem[Bac02]{back_hashcash_2002}
Adam Back.
\newblock Hashcash - a denial of service counter-measure, 2002.

\bibitem[BBBC19]{biais_blockchain_2019}
Bruno Biais, Christophe Bisière, Matthieu Bouvard, and Catherine Casamatta.
\newblock The blockchain folk theorem.
\newblock {\em Review of Financial Studies}, 32:1662--1715, 05 2019.

\bibitem[BCC{\etalchar{+}}23]{bhudia_game_2023}
Alpesh Bhudia, Anna Cartwright, Edward~J. Cartwright, Darren Hurley{-}Smith, and Julio~C. Hernandez{-}Castro.
\newblock Game theoretic modelling of a ransom and extortion attack on ethereum validators.
\newblock In {\em Proceedings of the 18th International Conference on Availability, Reliability and Security, {ARES} 2023, Benevento, Italy, 29 August 2023- 1 September 2023}, pages 105:1--105:11. {ACM}, 2023.

\bibitem[BG17]{buterin_casper_2017}
Vitalik Buterin and Virgil Griffith.
\newblock Casper the friendly finality gadget.
\newblock {\em CoRR}, abs/1710.09437, 2017.

\bibitem[BHK{\etalchar{+}}20]{buterin_combining_2020}
Vitalik Buterin, Diego Hernandez, Thor Kamphefner, Khiem Pham, Zhi Qiao, Danny Ryan, Juhyeok Sin, Ying Wang, and Yan~X. Zhang.
\newblock Combining {GHOST} and casper.
\newblock {\em CoRR}, abs/2003.03052, 2020.

\bibitem[BKM18]{buchman_latest_2018}
Ethan Buchman, Jae Kwon, and Zarko Milosevic.
\newblock The latest gossip on {BFT} consensus.
\newblock {\em CoRR}, abs/1807.04938, 2018.

\bibitem[Bre00]{brewer_towards_2000}
Eric~A. Brewer.
\newblock Towards robust distributed systems (abstract).
\newblock In Gil Neiger, editor, {\em Proceedings of the Nineteenth Annual {ACM} Symposium on Principles of Distributed Computing, July 16-19, 2000, Portland, Oregon, {USA}}, page~7. {ACM}, 2000.

\bibitem[BRLP20]{buterin_incentives_2020}
Vitalik Buterin, Dani{\"{e}}l Reijsbergen, Stefanos Leonardos, and Georgios Piliouras.
\newblock Incentives in ethereum's hybrid casper protocol.
\newblock {\em Int. J. Netw. Manag.}, 30(5), 2020.

\bibitem[But14]{buterin_next_2014}
Vitalik Buterin.
\newblock Ethereum: A next-generation smart contract and decentralized application platform, 2014.

\bibitem[CFN88]{chaum_untraceable_1988}
David Chaum, Amos Fiat, and Moni Naor.
\newblock Untraceable electronic cash.
\newblock In {\em Advances in Cryptology - {CRYPTO} '88, 8th Annual International Cryptology Conference, Santa Barbara, California, USA, August 21-25, 1988, Proceedings}, volume 403 of {\em Lecture Notes in Computer Science}, pages 319--327. Springer, 1988.

\bibitem[Cha82]{chaum_blind_1982}
David Chaum.
\newblock Blind signatures for untraceable payments.
\newblock In David Chaum, Ronald~L. Rivest, and Alan~T. Sherman, editors, {\em Advances in Cryptology: Proceedings of {CRYPTO} '82, Santa Barbara, California, USA, August 23-25, 1982}, pages 199--203. Plenum Press, New York, 1982.

\bibitem[CKWN16]{carlsten_instability_2016}
Miles Carlsten, Harry~A. Kalodner, S.~Matthew Weinberg, and Arvind Narayanan.
\newblock On the instability of bitcoin without the block reward.
\newblock In {\em Proceedings of the 2016 {ACM} {SIGSAC} Conference on Computer and Communications Security, Vienna, Austria, October 24-28, 2016}, pages 154--167. {ACM}, 2016.

\bibitem[CL99]{castro_practical_1999}
Miguel Castro and Barbara Liskov.
\newblock Practical byzantine fault tolerance.
\newblock In {\em Proceedings of the Third {USENIX} Symposium on Operating Systems Design and Implementation (OSDI), New Orleans, Louisiana, USA, February 22-25, 1999}, pages 173--186. {USENIX} Association, 1999.

\bibitem[CM19]{chen_algorand_2019}
Jing Chen and Silvio Micali.
\newblock Algorand: {A} secure and efficient distributed ledger.
\newblock {\em Theor. Comput. Sci.}, 777:155--183, 2019.

\bibitem[Con22]{teku_code}
Consensys.
\newblock Teku consensus client, 2022.

\bibitem[Dai98]{wei_b-money_1998}
Wei Dai.
\newblock B-money, 1998.

\bibitem[DDS87]{dolev_minimal_1987}
Danny Dolev, Cynthia Dwork, and Larry~J. Stockmeyer.
\newblock On the minimal synchronism needed for distributed consensus.
\newblock {\em J. {ACM}}, 34(1):77--97, 1987.

\bibitem[DLS88]{dwork_consensus_1988}
Cynthia Dwork, Nancy~A. Lynch, and Larry~J. Stockmeyer.
\newblock Consensus in the presence of partial synchrony.
\newblock {\em J. {ACM}}, pages 288--323, 1988.

\bibitem[DN92]{dwork_pricing_1992}
Cynthia Dwork and Moni Naor.
\newblock Pricing via processing or combatting junk mail.
\newblock In {\em Advances in Cryptology - {CRYPTO} '92, 12th Annual International Cryptology Conference, Santa Barbara, California, USA, August 16-20, 1992, Proceedings}, volume 740 of {\em Lecture Notes in Computer Science}, pages 139--147. Springer, 1992.

\bibitem[Edg3 ]{edington_technical_2023}
Ben Edgington.
\newblock {\em A technical handbook on Ethereum's move to proof of stake and beyond}.
\newblock {ETH2 Book}, 2023-.

\bibitem[ES14]{eyal_majority_2014}
Ittay Eyal and Emin~G{\"{u}}n Sirer.
\newblock Majority is not enough: Bitcoin mining is vulnerable.
\newblock In {\em Financial Cryptography and Data Security - 18th International Conference, {FC} 2014, Christ Church, Barbados, March 3-7, 2014, Revised Selected Papers}, volume 8437 of {\em Lecture Notes in Computer Science}, pages 436--454. Springer, 2014.

\bibitem[ES18]{eyal_majority_2018}
Ittay Eyal and Emin~G{\"{u}}n Sirer.
\newblock Majority is not enough: bitcoin mining is vulnerable.
\newblock {\em Commun. {ACM}}, 61(7):95--102, 2018.

\bibitem[Fin08]{finney_fork_2008}
Hal Finney.
\newblock Mail from finney to, 2008.

\bibitem[FMJR20]{fooladgar_incentive_2020}
Mehdi Fooladgar, Mohammad~Hossein Manshaei, Murtuza Jadliwala, and Mohammad~Ashiqur Rahman.
\newblock On incentive compatible role-based reward distribution in algorand.
\newblock In {\em 50th Annual {IEEE/IFIP} International Conference on Dependable Systems and Networks, {DSN} 2020, Valencia, Spain, June 29 - July 2, 2020}, pages 452--463. {IEEE}, 2020.

\bibitem[Fou24]{github_specs}
Ethereum Foundation.
\newblock Consensus specifications github.
\newblock \url{https://github.com/ethereum/consensus-specs/tree/a42d6706d89c414764eda7e2d0103e19f1e23761/specs}, 2024.

\bibitem[GKL15]{garay_bitcoin_2015}
Juan~A. Garay, Aggelos Kiayias, and Nikos Leonardos.
\newblock The bitcoin backbone protocol: Analysis and applications.
\newblock In {\em Advances in Cryptology - {EUROCRYPT} 2015 - 34th Annual International Conference on the Theory and Applications of Cryptographic Techniques, Sofia, Bulgaria, April 26-30, 2015, Proceedings, Part {II}}, volume 9057 of {\em Lecture Notes in Computer Science}, pages 281--310. Springer, 2015.

\bibitem[GLMV23]{galletta_resilience_2023}
Letterio Galletta, Cosimo Laneve, Ivan Mercanti, and Adele Veschetti.
\newblock Resilience of hybrid casper under varying values of parameters.
\newblock {\em Distributed Ledger Technol. Res. Pract.}, 2(1):5:1--5:25, 2023.

\bibitem[Goo14]{goodman_tezos_2014}
L.M. Goodman.
\newblock Tezos — a self-amending crypto-ledger, 2014.

\bibitem[GP20]{grunspan_selfish_2020}
Cyril Grunspan and Ricardo P{\'{e}}rez{-}Marco.
\newblock Selfish mining in ethereum.
\newblock In {\em 2nd International Conference on Mathematical Research for Blockchain Economy, {MARBLE} 2020, online, August 24, 2020}, Springer Proceedings in Business and Economics, pages 65--90. Springer, 2020.

\bibitem[GS19]{garcia_deconstructing_2019}
{\'{A}}lvaro Garc{\'{\i}}a{-}P{\'{e}}rez and Maria~Anna Schett.
\newblock Deconstructing stellar consensus.
\newblock In {\em 23rd International Conference on Principles of Distributed Systems, {OPODIS} 2019, December 17-19, 2019, Neuch{\^{a}}tel, Switzerland}, volume 153 of {\em LIPIcs}, pages 5:1--5:16. Schloss Dagstuhl - Leibniz-Zentrum f{\"{u}}r Informatik, 2019.

\bibitem[HMR12]{hoang_enciphering_2012}
Viet~Tung Hoang, Ben Morris, and Phillip Rogaway.
\newblock An enciphering scheme based on a card shuffle.
\newblock In {\em Advances in Cryptology -- CRYPTO 2012}. Springer Berlin Heidelberg, 2012.

\bibitem[HV16]{halpern_rational_2016}
Joseph~Y. Halpern and Xavier Vila{\c{c}}a.
\newblock Rational consensus: Extended abstract.
\newblock In {\em Proceedings of the 2016 {ACM} Symposium on Principles of Distributed Computing, {PODC} 2016, Chicago, IL, USA, July 25-28, 2016}, pages 137--146. {ACM}, 2016.

\bibitem[LPR20]{lewispye_resource_2020}
Andrew Lewis-Pye and Tim Roughgarden.
\newblock Resource pools and the cap theorem, 2020.

\bibitem[LSP82]{lamport_byzantine_1982}
Leslie Lamport, Robert~E. Shostak, and Marshall~C. Pease.
\newblock The byzantine generals problem.
\newblock {\em {ACM} Trans. Program. Lang. Syst.}, 4(3):382--401, 1982.

\bibitem[Nak08]{nakamoto_peer_2008}
Satoshi Nakamoto.
\newblock Bitcoin: A peer-to-peer electronic cash system, 2008.

\bibitem[Nak19a]{nakamura_analysis_2019}
Ryuya Nakamura.
\newblock Analysis of bouncing attack on ffg, 2019.

\bibitem[Nak19b]{nakamura_prevention_2019}
Ryuya Nakamura.
\newblock Prevention of bouncing attack on ffg, 2019.

\bibitem[NKMS16]{nayak_stubborn_2016}
Kartik Nayak, Srijan Kumar, Andrew Miller, and Elaine Shi.
\newblock Stubborn mining: Generalizing selfish mining and combining with an eclipse attack.
\newblock In {\em {IEEE} European Symposium on Security and Privacy, EuroS{\&}P 2016, Saarbr{\"{u}}cken, Germany, March 21-24, 2016}, pages 305--320. {IEEE}, 2016.

\bibitem[NMRP20]{neuder_defending_2020}
Michael Neuder, Daniel~J. Moroz, Rithvik Rao, and David~C. Parkes.
\newblock Defending against malicious reorgs in tezos proof-of-stake.
\newblock In {\em {AFT} '20: 2nd {ACM} Conference on Advances in Financial Technologies, New York, NY, USA, October 21-23, 2020}, pages 46--58, 2020.

\bibitem[NTT21]{neu_ebb_2021}
Joachim Neu, Ertem~Nusret Tas, and David Tse.
\newblock Ebb-and-flow protocols: {A} resolution of the availability-finality dilemma.
\newblock In {\em 42nd {IEEE} Symposium on Security and Privacy, {SP} 2021, San Francisco, CA, USA, 24-27 May 2021}, pages 446--465. {IEEE}, 2021.

\bibitem[NTT22]{neu_two_2022}
Joachim Neu, Ertem~Nusret Tas, and David Tse.
\newblock Two more attacks on proof-of-stake ghost/ethereum.
\newblock In {\em Proceedings of the 2022 {ACM} Workshop on Developments in Consensus, ConsensusDay 2022, Los Angeles, CA, USA, 7 November 2022}, pages 43--52. {ACM}, 2022.

\bibitem[PAT23]{pavloff_ethereum_2023}
Ulysse Pavloff, Yackolley Amoussou{-}Guenou, and Sara {Tucci{-}Piergiovanni}.
\newblock Ethereum proof-of-stake under scrutiny.
\newblock In {\em Proceedings of the 38th {ACM/SIGAPP} Symposium on Applied Computing, {SAC} 2023, Tallinn, Estonia, March 27-31, 2023}, pages 212--221. {ACM}, 2023.

\bibitem[PAT24a]{pavloff_byzantine_2024}
Ulysse Pavloff, Yackolley Amoussou{-}Guenou, and Sara {Tucci{-}Piergiovanni}.
\newblock Byzantine attacks exploiting penalties in ethereum pos.
\newblock In {\em 54th Annual {IEEE/IFIP} International Conference on Dependable Systems and Networks, {DSN} 2024, Brisbane, Australia, June 24-27, 2024}, pages 53--65. {IEEE}, 2024.

\bibitem[PAT24b]{pavloff_incentive_2024}
Ulysse Pavloff, Yackolley Amoussou{-}Guenou, and Sara Tucci{-}Piergiovanni.
\newblock Incentive compatibility of ethereum's pos consensus protocol.
\newblock In {\em 28th International Conference on Principles of Distributed Systems, {OPODIS} 2024, December 11-13, 2024, Lucca, Italy}, volume 287 of {\em LIPIcs}. Schloss Dagstuhl - Leibniz-Zentrum f{\"{u}}r Informatik, 2024.

\bibitem[Pry22]{prysm_code}
Prysm.
\newblock Code consensus client, 2022.

\bibitem[Qua11]{quantum_proof_2011}
QuantumMechanic.
\newblock Proof of stake instead of proof of work, 2011.

\bibitem[Req19]{pullRequest_bouncing_2022}
Specification~Pull Request.
\newblock Bouncing attack patch, 2019.

\bibitem[Rou20]{roughgarden_transaction_2020}
Tim Roughgarden.
\newblock Transaction fee mechanism design for the ethereum blockchain: An economic analysis of {EIP-1559}.
\newblock {\em CoRR}, abs/2012.00854, 2020.

\bibitem[Sal20]{saleh_blockchain_2020}
Fahad Saleh.
\newblock {Blockchain without Waste: Proof-of-Stake}.
\newblock {\em The Review of Financial Studies}, 34(3):1156--1190, 2020.

\bibitem[SNM{\etalchar{+}}22]{schwarz_three_2021}
Caspar Schwarz{-}Schilling, Joachim Neu, Barnab{\'{e}} Monnot, Aditya Asgaonkar, Ertem~Nusret Tas, and David Tse.
\newblock Three attacks on proof-of-stake ethereum.
\newblock In {\em Financial Cryptography and Data Security - 26th International Conference, {FC} 2022, Grenada, May 2-6, 2022, Revised Selected Papers}, volume 13411 of {\em Lecture Notes in Computer Science}, pages 560--576. Springer, 2022.

\bibitem[SSZ16]{sapirshtein_optimal_2016}
Ayelet Sapirshtein, Yonatan Sompolinsky, and Aviv Zohar.
\newblock Optimal selfish mining strategies in bitcoin.
\newblock In {\em Financial Cryptography and Data Security - 20th International Conference, {FC} 2016, Christ Church, Barbados, February 22-26, 2016, Revised Selected Papers}, volume 9603 of {\em Lecture Notes in Computer Science}, pages 515--532. Springer, 2016.

\bibitem[SZ15]{sompolinsky_secure_2015}
Yonatan Sompolinsky and Aviv Zohar.
\newblock Secure high-rate transaction processing in bitcoin.
\newblock In {\em Financial Cryptography and Data Security - 19th International Conference, {FC} 2015, San Juan, Puerto Rico, January 26-30, 2015, Revised Selected Papers}, 2015.

\bibitem[TE18]{tsabary_eyal_2018}
Itay Tsabary and Ittay Eyal.
\newblock The gap game.
\newblock In {\em Proceedings of the 2018 {ACM} {SIGSAC} Conference on Computer and Communications Security, {CCS} 2018, Toronto, ON, Canada, October 15-19, 2018}, pages 713--728. {ACM}, 2018.

\bibitem[Woo16]{wood_polkadot_2016}
Gavin Wood.
\newblock Polkadot: Vision for a heterogeneous multi-chain framework.
\newblock {\em White paper}, 21(2327):4662, 2016.

\bibitem[YMR{\etalchar{+}}19]{yin_hotstuff_2019}
Maofan Yin, Dahlia Malkhi, Michael~K. Reiter, Guy Golan{-}Gueta, and Ittai Abraham.
\newblock Hotstuff: {BFT} consensus with linearity and responsiveness.
\newblock In {\em Proceedings of the 2019 {ACM} Symposium on Principles of Distributed Computing, {PODC} 2019, Toronto, ON, Canada, July 29 - August 2, 2019}, pages 347--356. {ACM}, 2019.

\bibitem[ZET20]{bar_efficient_2020}
Roi~Bar Zur, Ittay Eyal, and Aviv Tamar.
\newblock Efficient {MDP} analysis for selfish-mining in blockchains.
\newblock In {\em {AFT} '20: 2nd {ACM} Conference on Advances in Financial Technologies, New York, NY, USA, October 21-23, 2020}, pages 113--131. {ACM}, 2020.

\bibitem[ZLD23]{zhang_attestation_2023}
Mingfei Zhang, Rujia Li, and Sisi Duan.
\newblock Max attestation matters: Making honest parties lose their incentives in ethereum pos.
\newblock {\em {IACR} Cryptol. ePrint Arch.}, page 1622, 2023.

\end{thebibliography}

\titleformat{\chapter}[display]{\bfseries\Large\color{Prune}}
{   
    \vspace{-1cm}
    \centering 
    \begin{tikzpicture}
    \node (A) at (0,0) {};
    \node (B) at (2.5,0) {Appendix \thechapter };
    \node (C) at (5,0) {};
    \draw (A) -- (B);
    \draw (B) -- (C);
    \end{tikzpicture}
}
{.1ex}
{
    \centering
    \vspace{.7em}
    \Large
}
[\vspace{1em}]
\titlespacing{\chapter}{0pc}{0ex}{0.5pc}

\appendix

\chapter{Mathematical Elaborations}\label{chap:Calculus}

\section{Discrete case inacivity score during a probabilistic bouncinc attack.}\label{appendix:sec:discInaSco}

This gives two Bernoulli laws where the probability to have $k$ moves to the left 
at time $n$ are respectively:

\begin{equation}
    P_X(k,n) = \binom{n}{k}p_0^k(1-p_0)^{n-k}
\end{equation}

\begin{equation}
    P_Y(k,n) = \binom{n}{k}(1-p_0)^kp_0^{n-k}
\end{equation}

We can then compute the convolution that will give us the probability law to be:

\begin{align}
    &P_{X+Y}(s,2n) = [P_X+P_Y](s) = \sum_{k=0}^n P_X(k)\ast P_Y(s-k) \\
    = &\sum_{k=0}^n \binom{n}{k}p^{k}(1-p_0)^{n-k} \binom{n}{s-k}(1-p_0)^{k-s}p_0^{n-s+k} \\
    = &\sum_{k=0}^n \binom{n}{k} \binom{n}{s-k}p_0^{n+2k-s}(1-p_0)^{n+s-2k}
\end{align}

We are interested in the inactivity score in an attempt to study the evolution of stake of honest validators. To determine the stake we have to give a continuous function for the probability of the inactivity score. 

\paragraph{Continuous case} There are several ways to approach the continuous case. We use the same technique as before using a convolution. Starting by saying  that a random walk follows a Gaussian's distribution when time is big using the Central limit theorem. 
Knowing that the expectation of the two laws $P_X$ and $P_Y$ are $(5p_0-4)t$ and $(1-5p_0)t$ and their standard deviation is both $25p_0(1-p_0)t$, we have:

\begin{equation}
    P_X(x,t) = \frac{1}{\sqrt{\pi 50p_0(1-p_0)t}} e^{-\frac{(x- (5p_0-4)t)^2}{50p_0(1-p_0)t}}
\end{equation}
\begin{equation}
    P_Y(x,t) = \frac{1}{\sqrt{\pi 50p_0(1-p_0)t}} e^{-\frac{(x- (1-5p_0)t)^2}{50p_0(1-p)t}}
\end{equation}
\begin{equation}
    P_{X+Y}(s,t) = \int  P_X(x,t)P_Y(s-x,t)dx
\end{equation}

Which gives:
\begin{equation}
    P_{X+Y}(x,t) = \frac{1}{\sqrt{\pi 100p(1-p_0)t}} e^{-\frac{(x-3t/2)^2}{100p_0(1-p_0)t}}
\end{equation}

\section{From Gaussian white noise to log-normal distribution}\label{appendix:sec:lognormal}

In order to be able to find the probability of $s$, we need to change referential to stop $I$ from drifting with time. 
To do so we start by noticing that with the change of variables $u=-2^{26}\ln|s|$ this implies $\frac{d u}{d t} = -\frac{2^{26}}{s}\frac{ds}{dt} = I$. We now simplify what we looking for by introducing $\Tilde{u}$ and $\Tilde{I}$ the functions resolving these equations:

\begin{equation}
    \left\{
    \begin{array}{ll}
        &I = \Tilde{I} + Vt \\
        &u = \Tilde{u}+\frac{1}{2}Vt^2-2^{26}\ln(s_0) \\
    \end{array}
\right.
\end{equation}
Where $s_0=32$, for the initial stake. We can write the probability of $\Tilde{I}$ as:
\begin{equation}
    \phi(\Tilde{I},t)=\frac{1}{\sqrt{4\pi Dt}}e^{-\frac{\Tilde{I}^2}{4Dt}}
\end{equation}
Looking at the derivative of $\Tilde{u}$ we get:
\begin{equation}
    \frac{d\Tilde{u}}{dt} = \frac{du}{dt} - Vt = I - Vt = \Tilde{I}.
\end{equation}
Hence we find $d\Tilde{u}/dt = \Tilde{I}$. 
$\Tilde{I}$ being a Brownian motion, $\Tilde{u}$ is called an integrated Brownian motion. It is a well-known problem and this leads to :

\begin{equation}
    P(\Tilde{u},t)=\frac{1}{\sqrt{\frac{4}{3}\pi Dt^3}}\exp\left(-\frac{\Tilde{u}^2}{\frac{4}{3}Dt^3}\right).
\end{equation}



Where :
\begin{equation}
    \Tilde{u}=u-\frac{Vt^2}{2}+2^{26}\ln(s_0)
\end{equation}

We have that $d\Tilde{u}=-2^{26}\frac{ds}{s}$,
then the only remaining step is using the fact $P(s)=P(\Tilde{u})|\frac{d\Tilde{u}}{ds}|$, hence:

\begin{equation}
P(s)=\frac{2^{26}}{s}P(\Tilde{u}=-2^{26}ln(s/s_0)-Vt^2). 
\end{equation}

Thus, the probability of finding a stake $s$ at time $t$ for an honest validator during the probabilistic bouncing attack is:
\begin{equation}
    P(s,t)=\frac{2^{26}}{s\sqrt{\frac{4}{3}\pi Dt^3}}\exp\left(-\frac{(2^{26}\ln(s/32)+Vt^2/2)^2}{\frac{4}{3}Dt^3}\right)
\end{equation}
With $D$ and $V$, the diffusion and the velocity. In our case $V=3/2$ and $D=25p_0(1-p_0)$. The stake follows a log normal distribution.

The density of log normal distribution is:
\begin{equation}
    f_X(x ; \mu, \sigma)=\frac{1}{x \sigma \sqrt{2 \pi}} \exp \left(-\frac{(\ln x-\mu)^2}{2 \sigma^2}\right).
\end{equation}

The cumulative distribution function of the log-normal distribution is the following:
\begin{equation}
    F_X(x ; \mu, \sigma)=\frac{1}{2}+\frac{1}{2} \operatorname{erf}\left[\frac{\ln (x)-\mu}{\sigma \sqrt{2}}\right].
\end{equation}

\end{document}